\newtheorem{theorem}{Theorem}
\newtheorem{proposition}{Proposition}
\newtheorem{definition}{Definition}
\definecolor{MagentaXD}{RGB}{204, 48, 152}
\definecolor{MagentaXDdetail}{RGB}{150, 79, 126}
\definecolor{GreenMAF}{RGB}{28, 112, 46}
\definecolor{GreenMAFdetail}{RGB}{80, 117, 88}
\definecolor{detail}{RGB}{110,110,110}
\definecolor{quantumviolet}{HTML}{53257F} %Quantum violet
\definecolor{quantumgray}{HTML}{555555} %Quantum gray
\definecolor{quantumgreen}{HTML}{007474} %Quantum green
\definecolor{quantumblue}{HTML}{002366} %Quantum gray
\definecolor{quantumpurple}{HTML}{66023C} %Quantum purple
\definecolor{quantumdarkviolet}{HTML}{5D3954} %Quantum dark violet
\newcommand{\FA}{\mathfrak{A}}
\newcommand{\FB}{\mathfrak{B}}
\newcommand{\FF}{\mathfrak{F}}
\newcommand{\FM}{\mathfrak{M}}
\newcommand{\FN}{\mathfrak{N}}
\newcommand\vPsi {|\Psi\rangle}
\newcommand{\be}{\begin{equation}}
	\newcommand{\ee}{\end{equation}}
\def\bea#1\eea{\begin{align}#1\end{align}}
\newcommand{\sslash}{\mathrm{/\mkern-6mu/}}    %% double slash for quotient Hopf algebra
\newif\ifcomments
\newif\ifdetails
\newcommand{\orcid}[1]{\href{https://orcid.org/#1}{\includegraphics[width=8pt]{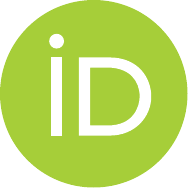}}}
\theoremstyle{definition}
\newtheorem{example}[definition]{Example}
\theoremstyle{remark}
\newtheorem{remark}{Remark}[section]
\newcommand\EB{\EuScript{B}}
\newcommand\EC{\EuScript{C}}
\newcommand\ED{\EuScript{D}}
\newcommand\EM{\EuScript{M}}
\newcommand\EW{\EuScript{W}}
\newcommand\Fun{\textsf{Fun}}
\newcommand\Vect{\textsf{Vect}}
\newcommand\Obj {\mathrm{Obj}}
\newcommand\id {\mathrm{id}}
\newcommand\Hom {\mathrm{Hom}}
\newcommand{\one}{\mathds{1}}
\newcommand\Mod{\textsf{Mod}}
\newcommand{\FPdim}{\operatorname{FPdim}}
\newcommand\Hhat {\hat{H}}
\newcommand\biopencrossl{%
	\mathrel{\scalerel*{>\kern-.4\LMpt\joinrel\blacktriangleleft}{x}}}
\newcommand\biopencrossr{%
	\mathrel{\scalerel*{\blacktriangleright\joinrel\kern-.4\LMpt<}{x}}}
\title{\color{black} Boundary and domain wall theories of 2d generalized quantum double model
}
\author[a,b]{Zhian Jia\orcid{0000-0001-8588-173X},}
\author[a,b]{Dagomir Kaszlikowski,}
\author[c]{and Sheng Tan}
\affiliation[a]{Centre for Quantum Technologies, National University of Singapore, Singapore 117543, Singapore}
\affiliation[b]{Department of Physics, National University of Singapore, Singapore 117543, Singapore}
\affiliation[c]{Department of Mathematics, Purdue University, West Lafayette, IN 47907, USA}
\emailAdd{giannjia@foxmail.com}
\emailAdd{phykd@nus.edu.sg}
\emailAdd{tan296@purdue.edu}
\abstract{
	
	The generalized quantum double lattice realization of $2d$ topological orders based on Hopf algebras is discussed in this work. Both left-module and right-module constructions are investigated. The ribbon operators and the classification of topological excitations based on the representations of the quantum double of Hopf algebras are discussed.
	To generalize the model to a $2d$ surface with boundaries and surface defects, we present a systematic
	construction of the boundary Hamiltonian and domain wall Hamiltonian.
	The algebraic data behind the gapped boundary and domain wall are comodule algebras and bicomodule algebras. The topological excitations in the boundary and domain wall are classified by bimodules over these algebras.
	The ribbon operator realization of boundary-bulk duality is also discussed.
	Finally, via the Hopf tensor network representation of the quantum many-body states, we solve the ground state of the model in the presence of the boundary and domain wall.
	
}
\begin{document}
	
	\maketitle
	\flushbottom

	\section{Introduction}
	\label{sec:intro}

Topologically ordered phases extended our understanding of the notion of phases of matter beyond the Landau-Ginzburg symmetry-breaking picture \cite{Wen1990,Wen2004}. Besides their foundational importance, these exotic quantum phases of matter also found their applications in quantum information processing, such as robust topological quantum error-correction code (QECC) \cite{Dennis2002topological,Terhal2015quantum} and topological quantum computation (TQC) \cite{Kitaev2003,freedman2002modular,Nayak2008}.
	The mathematical structures behind the usual symmetry-breaking phases are symmetry groups $(G_{H}, G_{\Psi})$ with $G_{H}$ the symmetry group of Hamiltonian and $G_{\Psi}$ the symmetry group of the wavefunction, while the mathematical frameworks behind topological orders are tensor categories. 
	More precisely, a gapped $2d$ topological order is characterized by a unitary modular tensor category (UMTC) $\ED$; for the gapless edges, another data, central charge $c$, is needed, namely, $(\ED,c)$ fully characterizes the topologically ordered phase.
	
	A gapped topological phase is an equivalence class of gapped Hamiltonians together with their corresponding ground state spaces $\{(H,\mathcal{H})\}$, which realizes some topological quantum field theory (TQFT) at low energy. 
	The excitations are characterized by a quantum group constructed from the gauge group of the theory, and the quasi-particle types are given by irreducible representations of the quantum group.
	A crucial class of such kind of  $(2+1)D$ model is the twisted quantum double model based on a finite group algebra $\mathbb{C}[G]$ \cite{Mesaros2013classification,Bullivant2017twisted, Hu2012twisted}, which is a Hamiltonian realization of  the Dijkgraaf-Witten TQFT  \cite{dijkgraaf1990topological}. Their topological excitations are characterized by the representation category of the twisted quantum double $D^{\alpha}(G)$ with $\alpha$ a 3-cocycle over $G$.
	When the 3-cocycle is trivial, the model becomes Kitaev's quantum double model \cite{Kitaev2003}, which corresponds to BF theory and to a special case of the Kuperberg invariant \cite{kup1996noninvolutary} as well.
	The Levin-Wen's string-net model \cite{Levin2005} has a more general setting, and realizes the Turaev-Viro-Barrett-Westbury TQFT \cite{turaev1992state,barrett1996invariants}. For arbitrary unitary fusion category (UFC) $\EC$, the topological excitations of the Levin-Wen model based on $\EC$ are given by the Drinfeld center $\mathcal{Z} (\EC)$.

	Various generalizations of the quantum double model have been studied (e.g., \cite{Hu2012twisted,Buerschaper2013a,chang2014kitaev}), among which, Hopf algebraic quantum double model turns out to be a crucial class and attracts extensive studies from both lattice model perspective \cite{Buerschaper2013a, buerschaper2013electric,chen2021ribbon} and Hopf algebraic gauge theory perspective \cite{bais2003hopf,meusburger2017kitaev,meusburger2021hopf}.
	A quantum double model can be mapped into a string-net model \cite{Buerschaper2009mapping, Hu2018full,wang2020electric}; the reverse direction has also been studied \cite{Kitaev2012a,chang2014kitaev}.
	The physical properties (including topological excitation, ribbon operator, electric-magnetic duality, etc.) of the quantum double model for the finite group algebra case has been extensively studied 
	\cite{Kitaev2003,bravyi1998quantum,Bombin2008family,freedman2001projective,Beigi2011the, Kitaev2012a,Levin2013protected,Cong2017,wang2020electric,etingof2003finite,ostrik2003module,andruskiewitsch2007module,natale2017equivalence}.
	However, for the general Hopf algebra case, only some special examples are discussed \cite{Buerschaper2013a}. In this work, we will present a systematic investigation of this generalized quantum double model based on $C^*$ Hopf algebras.
	
    On the other hand, although the topological phases on closed surfaces seem to be natural from a mathematical perspective, real samples of topologically ordered material usually have boundaries and the boundary modes are easier to measure experimentally. Thus the topological phases on surfaces with boundaries are of more practical and theoretical importance.
    Another crucial reason to investigate the boundary theory and domain wall theory is that there is a boundary-bulk duality, with which the boundary phase is obtained from the bulk phase using anyon condensation, and the bulk phase is recovered from the boundary phase by taking the Drinfeld center \cite{kong2014braided,kong2017boundary}.
    The study of the gapped boundary theory of topological phases has garnered significant attention over the past few decades \cite{bravyi1998quantum,Bombin2008family,freedman2001projective,Beigi2011the,Kitaev2012a,Levin2013protected,Cong2017,wang2020electric,Haldane1995stability,Kane1997quantized,kapustin2011topological,Barkeshli2013theory,kong2014braided,Kong2014,Wang2015boundary,Lan2015gapped,seiberg2016gapped,kong2017boundary,hu2018boundary,Lan2020Gapped,freed2021gapped}.
	
	Not all topological phases allow the gapped boundaries (with a lattice realization). One of the crucial observations of the existence of a gapped boundary is that the chiral central charge $c_-=c_L-c_R$ must vanish \cite{Haldane1995stability,Kane1997quantized, Levin2013protected,Wang2015boundary,ganeshan2021ungappable}.  Even for the $c_-=0$ case, there exist some ungappable boundaries \cite{Levin2013protected,Wang2015boundary,ganeshan2021ungappable}. Therefore, a deep and comprehensive understanding of the boundary theory is of great importance.
	For the quantum double phase, the boundary is gappable, and the gapped boundary theories for the finite group algebra case have been extensively explored \cite{bravyi1998quantum,Bombin2008family,freedman2001projective,Beigi2011the,Kitaev2012a,Levin2013protected,Cong2017,wang2020electric,etingof2003finite,ostrik2003module,andruskiewitsch2007module,natale2017equivalence}.
	However, for the general Hopf algebra case, the gapped boundary theory has not been systematically investigated yet.
	This is due, to some extent, to the mathematical difficulties when dealing with general Hopf algebras.
	
    In this work, we will investigate the generalized quantum double model in detail and present the Hamiltonian construction and algebraic theory of gapped boundaries and domain walls.

    In Sec.~\ref{sec:Kitaev}, we first review the generalized quantum double model on a closed surface and stress the problem of constructing the ribbon operators for this model and classifying the topological excitations. For the generalized quantum double model, there also exist electric charges, magnetic charges, and dyons, with all these charges can be created with proper ribbon operators.

    Sec.~\ref{sec:bd} establishes the boundary theory of the Hopf quantum double model.
    We show that gapped boundaries of the Hopf quantum double phase can be equivalently classified by an $H$-comodule algebra $\FA$ or an $H$-module algebra $\FM$.
    The boundary excitations are classified by bimodules over these algebras.
    The lattice model for the gapped boundary is constructed using the symmetric separability idempotent of the $H$-comodule algebra. We show that the boundary stabilizers generate a boundary local algebra whose  representation category is equivalent to the category of boundary excitations.
    The connection between the quantum double boundary theory and Levin-Wen string-net boundary theory is also discussed.
    After these preparations, we present the construction of boundary ribbon operators and discuss how to realize the anyon condensation via these ribbon operators.
    In Appendix~\ref{sec:bdII}, we also give another construction of a gapped boundary lattice model which is parameterized by a triple of Hopf algebras $(K,J,W)$ with some pairings among them. Each boundary site supports a representation of a generalized quantum double induced by the pairing.
    This indirect lattice construction is also of its own interest.

    Sec.~\ref{sec:domainwall} establishes the domain wall theory of the generalized quantum double model.
    Using the folding trick, a domain wall can be transformed into a gapped boundary; and a gapped boundary can be regarded as a special case of the domain wall which separates the quantum double phase with vacuum.
    We build the theory of the gapped domain wall based on the $H_1|H_2$-bicomodule algebra, both the algebraic theory and Hamiltonian theory are discussed.
    The domain wall is characterized by an $H_1|H_2$-bicomodule algebra $\FB$ and the wall excitations are classified by $\FB|\FB$-bimodules.
    In Appendix~\ref{sec:app-domain-wall}, an indirect construction based on the generalized quantum double is also discussed.    The lattice model of the domain wall between two quantum double phases is given by a quadruple of Hopf algebras $(K, J_1,J_2, W)$ with some pairings among them.
    The left and right boundary sites support different representations of quantum doubles induced by different pairings.

    In Sec.~\ref{sec:bdgs}, by utilizing the Hopf tensor network representation of quantum many-body states, we solve the ground state of the model with boundaries and domain walls and obtain the explicit ground states.
    Using this explicit exact ground state of the model, we can investigate various properties of the phase in the presence of boundaries and domain walls. Especially, the entanglement entropy can be calculated directly. This also paves the way for applications of generalized quantum double phase in QECC and TQC.

    The appendices collect some detailed discussions and calculations.

	\section{Generalized quantum double model}\label{sec:Kitaev}
	
	Let us start with a brief review of the Hopf algebraic quantum double model \cite{Buerschaper2013a,cowtan2021quantum,chen2021ribbon} on a closed $2d$ surface.
	Kitaev's original construction of the quantum double model is based on finite group algebra $\mathbb{C}[G]$, and he also pointed out that the model can be generalized to the finite-dimensional Hopf algebra equipped with a Hermitian inner product with certain properties \cite{Kitaev2003}. 
	The first explicit construction is given in \cite{Buerschaper2013a}, and the corresponding ribbon operators are discussed in detail in \cite{cowtan2021quantum,chen2021ribbon}. 
	In this section, we will discuss the construction and stress the chirality of the construction. From the model, we can obtain Turaev-Viro type topological invariant \cite{balsam2012kitaevs,meusburger2017kitaev}.

	\subsection{Generalized quantum double model}
	\label{sec:bulkQD}

	For a given 2$d$ closed manifold $\Sigma$, consider a lattice\,\footnote{Also known as a cellulation of $\Sigma$ or a ribbon graph on $\Sigma$. We assume that the graph corresponding to the lattice is a simple graph, namely, a graph that no edge starts and ends at the same vertex.} $C(\Sigma)=V(\Sigma)\cup E(\Sigma)\cup F(\Sigma)$ on $\Sigma$, where $V(\Sigma)$, $ E(\Sigma)$ and $F(\Sigma)$ are sets of vertices, edges and faces respectively.
	The dual lattice of $C(\Sigma)$ is a lattice $\tilde{C}(\Sigma)$ for which
	the vertices and faces of the original lattice are switched while the edge set remains unchanged.
	We assign a direction for each edge $e \in E(\Sigma)$, and the direction of the corresponding dual edge $\tilde{e}\in \tilde{E}(\Sigma)$ is obtained by rotating the direction of $e$ counterclockwise by $\pi/2$. As shown in Fig.~\ref{fig:sphere}, the original lattice is drawn with black solid lines, and the dual lattice is drawn with red dashed lines.
	A site $s=(v,f)$ is a pair of a vertex $v$ and an adjacent face $f$ (which is a dual vertex of the dual lattice). The site is drawn as a solid line connecting $v$ and $f$.
	Two sites $s,s'$ are called adjacent if they share a common vertex or face.

	\begin{figure}[t]
		\centering
		\includegraphics[width=12cm]{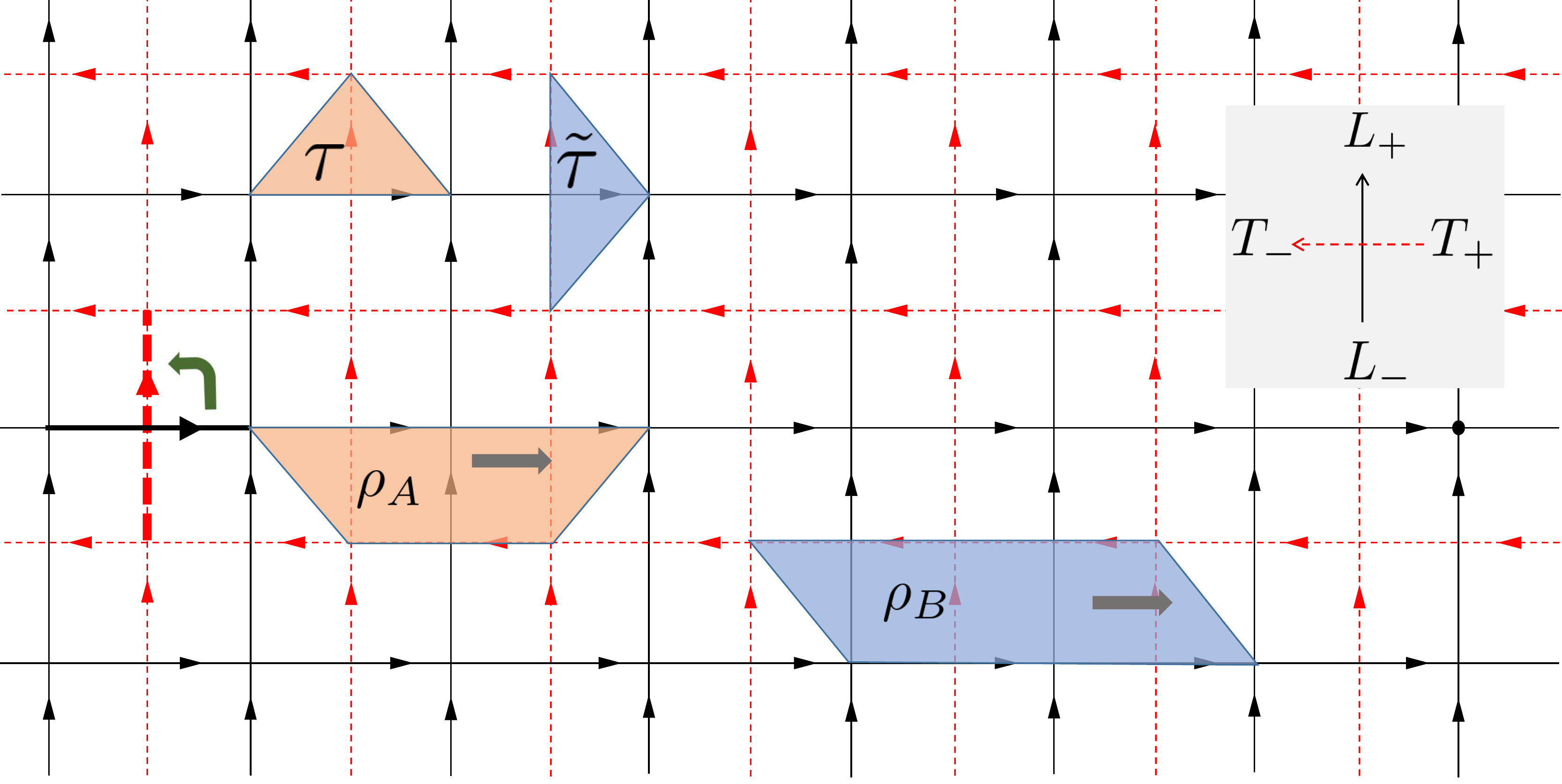}
		\caption{The depiction of geometric objects appeared in the definition of the generalized Kitaev model.  The solid black lines represent direct lattice, and the dashed red lines represent dual lattice. \label{fig:sphere}}
	\end{figure}

	Hereinafter, we assume that $H$ is a finite-dimensional $C^*$ Hopf algebra ($H$ has a Hilbert space structure given by Eq.~(\ref{eq:inner})); its dual Hopf algebra will be denoted as $\hat{H}$ or $H^{\vee}$. Note that such an $H$ is semisimple. For general facts of Hopf algebras, see Appendix \ref{sec:app_hopf}. To each edge of the lattice, we attach a copy of $H$, i.e., $\mathcal{H}_e=H$. The total Hilbert space is  $\mathcal{H}=\mathcal{H}_{tot}=\otimes_{e\in E(\Sigma)} \mathcal{H}_e$.
	To define the model, we need to introduce four types of edge operators $L_+^h,L_{-}^h$, $T_+^{\varphi},T_{-}^{\varphi}$ with $h\in H$ and $\varphi \in \hat{H}$, as follows: 
	\begin{align}
		&L_+^h |x\rangle =|h \triangleright  x\rangle=|hx\rangle,\\
		&L_-^h|x\rangle=|x \triangleleft S(h) \rangle=|xS(h)\rangle,\\
		&T_+^{\varphi}|x\rangle=|\varphi \rightharpoonup x\rangle=| \sum_{(x)}\langle \varphi, x^{(2)}\rangle x^{(1)}\rangle,\\
		&T_-^{\varphi}|x\rangle=|x\leftharpoonup \hat{S}(\varphi) \rangle
		=| \sum_{(x)}\langle \hat{S}(\varphi), x^{(1)}\rangle x^{(2)}\rangle
		=| \sum_{(x)}\langle \varphi, S(x^{(1)})\rangle x^{(2)}\rangle,
	\end{align}
	where we have adopted the Sweedler arrow notations.
	Notice that $H$ can be regarded as left $H$-modules with actions $h \triangleright  x$ and $ x\triangleleft S(h)$ (recall that $S(hg)=S(g)S(h)$ and $S(1_H)=1_H$), so that $L_+^h$ and $L_-^h$ are the corresponding operator representations.
	$H$ can also be regarded as left $\hat{H}$-modules with the actions $\varphi \rightharpoonup x$
	and $x\leftharpoonup \hat{S}(\varphi)$  (recall that $\hat{S}(\varphi \psi)=\hat{S}(\psi)\hat{S}(\varphi)$ and $\hat{S}(\hat{1})=\hat{1}$), so that $T_+^{\varphi}$ and $T_-^{\varphi}$ are the corresponding operator representations. 
	
	Since the antipode is involutive $S^2=\id$, the reverse of the edge direction is given by $x_e\mapsto \bar{x}_e=S(x_e)$. This is compatible with four edge actions, e.g.,  
	$S(L_-^h |x\rangle) = | S(xS(h)) \rangle = |hS(x)\rangle = L_{+}^h| S(x)\rangle $ and  $S(T_+^{\varphi}  |x\rangle) =\sum_{(x)}  \langle \varphi, x^{(2)} \rangle |S(x^{(1)}) \rangle=T_-^{\varphi}  |S(x)\rangle$. This means that all patterns of the edge directions are equivalent, hence the quantum double model can be constructed from arbitrary given pattern.
	
	Let $j$ be a directed edge and $v$ one of its endpoints. We define $L^h(j,v)$ as follows: if $v$ is the origin of $j$, set  $L^h(j,v)=L_-^h(j)$, otherwise, set $L^h(j,v)=L_+^h(j)$. Similarly, let $j$ be a directed dual edge and $f$ one of its endpoints; if $f$ is the origin of $j$, set $T^{\varphi}(j,f)=T_{+}^{\varphi}(j)$, otherwise, set $T^{\varphi}(j,f)=T_{-}^{\varphi}(j)$. See Fig.~\ref{fig:sphere} for an illustration of these choices. 
	For a given site $s=(v,f)$, we order the edges around the vertex  $v$ and around the face $f$ counterclockwise with the origin $s$.
	Using these conventions, the vertex operators and face operators on a site $s=(v,p)$ are defined as
	\begin{align}
		A^{h}(s)= \sum_{(h)} L^{h^{(1)}}(j_1,v)\otimes \cdots \otimes L^{h^{(n)}}(j_n,v),\\
		B^{\varphi}(s)= \sum_{(\varphi)} T^{\varphi^{(1)}}(j_1,f)\otimes \cdots \otimes T^{\varphi^{(n)}}(j_n,f).
	\end{align}
    Notice that hereinafter we assume that comultiplication $\Delta (\varphi)$ is taken in $\hat{H}$, thus the order around the face is counterclockwise; if we use comultiplication $\Delta$ of $\hat{H}^{\rm cop}$, the orientation around the face must be clockwise.

	Since $A^{h}(s)$ and $A^{h'}(s')$ can only share at most one edge (with opposite directions for $s\neq s'$), from the fact that $[L^{x}_+,L^{y}_{-}]=0$ for all $x,y \in H$,  we see $[A^{h}(s), A^{h'}(s')]=0$ for all sites $s\neq s'$.  Similarly, consider the dual lattice,  from the fact that $[T^{\psi}_+,T^{\zeta}_{-}]=0$ for all $\psi,\zeta \in \hat{H}$, we obtain $[	B^{\varphi}(s),B^{\varphi'}(s')]=0$ for all sites $s\neq s'$.
	For the non-adjacent sites $s,s'$, it is clear that $[A^h(s),B^{\varphi}(s')]=0$, but for the adjacent sites, $A^h(s)$ and $B^{\varphi}(s')$  are in general not commutative.
	It can be checked that, when $\varphi$ and $h$ are cocommutative elements, we have $[A^h(s),   B^{\varphi}(s')]=0$ for $s\neq s'$.
	
	Now consider a fixed site $s=(v,f)$, the corresponding vertex operator and face operator can only share two common edges. For example
	\begin{equation}
		\begin{aligned}
			\begin{tikzpicture}
				\draw[-latex,black] (-1,0) -- (0,0); 
				\draw[-latex,black] (0,0) -- (1,0); 
				\draw[-latex,black] (0,0) -- (0,1); 
				\draw[-latex,black] (0,-1) -- (0,0); 
				\draw[-latex,black] (0,1) -- (1,1);
				\draw[-latex,black] (1,0) -- (1,1);  
				\draw[line width=0.5pt, red] (0,0) -- (0.5,0.5);
				\draw [fill = black] (0,0) circle (1.2pt);
				\draw [fill = black] (0.5,0.5) circle (1.2pt);
				\node[ line width=0.2pt, dashed, draw opacity=0.5] (a) at (0.2,0.6){$s$};
				\node[ line width=0.2pt, dashed, draw opacity=0.5] (a) at (-0.2,-0.2){$v$};
				\node[ line width=0.2pt, dashed, draw opacity=0.5] (a) at (0.7,0.7){$f$};
				\node[ line width=0.2pt, dashed, draw opacity=0.5] (a) at (-1.2,0){$x_5$};
				\node[ line width=0.2pt, dashed, draw opacity=0.5] (a) at (0,-1.2){$x_6$};
				\node[ line width=0.2pt, dashed, draw opacity=0.5] (a) at (0.8,-0.3){$x_1$};
				\node[ line width=0.2pt, dashed, draw opacity=0.5] (a) at (1.3,0.5){$x_2$};
				\node[ line width=0.2pt, dashed, draw opacity=0.5] (a) at (0.5,1.2){$x_3$};
				\node[ line width=0.2pt, dashed, draw opacity=0.5] (a) at (-0.2,0.5){$x_4$};
			\end{tikzpicture}
		\end{aligned}
		\quad 
		\begin{aligned}
			&A^h(s)=\sum_{(h)} L_-^{h^{(1)}}(j_4)\otimes L_+^{h^{(2)}}(j_5)\otimes L_+^{h^{(3)}}(j_6) \otimes L_-^{h^{(4)}}(j_1) ,\\
			& B^{\varphi}(s)=\sum_{(\varphi)}T_{-}^{\varphi^{(1)}} (j_1) \otimes T_{-}^{\varphi^{(2)}} (j_2) \otimes T_{+}^{\varphi^{(3)}} (j_3) \otimes T_{+}^{\varphi^{(4)}} (j_4),\\
			&  A^h(s)B^{\varphi}(s)= \sum_{(h)} B^{\varphi (S^{-1} (h^{(3)} )    \bullet h^{(1)}  )}   (s)     A^{h^{(2)}}(s),
		\end{aligned}
	\end{equation}
	where ``$\bullet$'' means the argument of the function. In fact, this establishes an algebra homomorphism for arbitrary given site $s$:
	\begin{equation}
		\Phi : D(H) \to \operatorname{End} (\mathcal{H}_{tot} (s)), \quad \varphi \otimes h \mapsto D^{\varphi \otimes h}(s):=B^{\varphi}(s) A^h(s),
	\end{equation}
	where $D(H)=\hat{H}^{\rm cop}\Join H$ is the quantum double of $H$ (see Appendix \ref{sec:app_hopf}), and $\mathcal{H}_{tot} (s) =\otimes_{j\in \partial v, j\in \partial f} \mathcal{ H}_j$. When $h$ and $\varphi$ are cocommutative elements, we have $[A^h(s),B^{\varphi} (s)] =0$.
	We denote $\mathcal{D}(s) =\Phi (D(H))$. Then the mapping $h\mapsto \hat{1} \otimes h \mapsto B^{\hat{1} }  (s) A^{h}(s)$ provides a representation of $H$, and the mapping $\varphi \mapsto \varphi \otimes 1 \mapsto B^{\varphi} (s)A^1(s)$ provides a representation of $\hat{H}$. 
	Therefore $[A^{h}(s),A^{g}(s)]=A^{[h,g]} (s)$, and similarly, $[B^{\varphi}(s),B^{\psi}(s)]=B^{[\varphi,\psi]} (s)$. We see that if $h\in Z(H)$ and $\varphi \in Z(\hat{H})$ (by $Z(H)$ we mean the center of the Hopf algebra $H$), the commutators vanish.
	
	Using the $C^*$ structure, for involution invariant elements $h^*=h$ and $\varphi^*=\varphi$, the corresponding operators are Hermitian
	\begin{equation}
		(A^h(s))^{\dagger} = A^{h^*}(s) =A^{h}(s), \quad (B^{\varphi}(s'))^{\dagger} =B^{\varphi^*}(s')=B^{\varphi}(s').
	\end{equation}
	If we further require that $h, \varphi$ are idempotent $h^2=h$ and $\varphi^2=\varphi$, then $A^h(s)$ and $B^{\varphi}(s')$ become projectors.

	Now we are in a position to give the Hamiltonian construction for the Hopf algebraic quantum double model.
	The input data will be a finite-dimensional $C^*$ Hopf algebra $H$, which introduces the total space $\mathcal{H}_{tot}=\otimes_{e\in E(\Sigma)} \mathcal{H}_e$ with $\mathcal{H}_e = H$. The Haar integrals $h_H\in H$ and $\varphi_{\hat{H}}\in \hat{H}$ exist and are unique, involutive, idempotent, cocommutative, and in the center of $H$ and $\hat{H}$, respectively.
	The corresponding operator $A^{h_H}(s)$ only depends on vertex $v$ and is thus denoted as $A^H_v$, and similarly  $B^{\varphi_{ \hat{H}  }} ( s)$ only depends on face $f$ and is thus denoted as $B^{H}_f$. The local operators are projectors and commutative with each other. The  frustration-free Hamiltonian is  
	\begin{equation}\label{eq:kitaev}
		H_H(\Sigma)=\sum_{v\in V(\Sigma)} (I-A_v^{H})+\sum_{f\in F(\Sigma)} (I-B^{H}_f).
	\end{equation}
	This model will be called a generalized quantum double model.
	
	\begin{remark}
		Notice that in the above construction, both $A_v$ and $B_f$ are defined from the left-module structures of $H$. We can also introduce a right-module construction.
		To define the model, we need to introduce four types of right-module edge operators $\tilde{L}_{-}^h,\tilde{L}_{+}^h$, $\tilde{T}_{-}^{\varphi},\tilde{T}_{+}^{\varphi}$, as follows: 
		\begin{align}
			&\tilde{L}_-^h |x\rangle =|x \triangleleft  h\rangle=|xh\rangle,\\
			&\tilde{L}_+^h|x\rangle=|S(h) \triangleright x\rangle=|S(h) x\rangle,\\
			&\tilde{T}_-^{\varphi}|x\rangle = |x \leftharpoonup \varphi \rangle  
			=|\sum_{(x)}\langle \varphi, x^{(1)}\rangle x^{(2)}\rangle,\\
			&\tilde{T}_+^{\varphi}|x \rangle=| \hat{S}(\varphi) \rightharpoonup x\rangle
			=| \sum_{(x)}\langle \hat{S}(\varphi), x^{(2)}\rangle x^{(1)}\rangle
			=| \sum_{(x)}\langle \varphi, S(x^{(2)})\rangle x^{(1)}\rangle,
		\end{align}
		where $h\in H$ and $\varphi \in \hat{H}$.
		Here $H$ can be regarded as right $H$-modules via actions $x \triangleleft  h$ and $ S(h)\triangleright x$, with $\tilde{L}_-^h$ and $\tilde{L}_+^h$ being the corresponding operator representations.
		$H$ can also be regarded as right $\hat{H}$-modules via the actions $x \leftharpoonup \varphi$
		and $\hat{S}(\varphi) \rightharpoonup  x$,  with $\tilde{T}_-^{\varphi}$ and $\tilde{T}_+^{\varphi}$ being the corresponding operator representations.
		For site $s=(v,f)$, we can order the edges around vertex $v$ and around face $f$ clockwise starting from $s$. The convention for choosing ``$+$'' or ``$-$'' remains unchanged. In this way, for Haar integrals $h_H\in H$ and $\varphi_{\hat{H}} \in \hat{H}$, the vertex operator $\tilde{A}^H_v$ and the face operator $\tilde{B}^H_f$ can be constructed. A right-module Hamiltonian $\tilde{H}_H(\Sigma)=-\sum_{v\in V(\Sigma)} \tilde{A}^H_v -\sum_{f\in F(\Sigma)} \tilde{B}^H_f$ is thus obtained. 
	\end{remark}
	
	The ground state space of the model  (\ref{eq:kitaev}) is given by
	\begin{equation}
		\mathcal{H}_{GS}=	(\prod_{v\in V(\Sigma) } A^H_v \prod_{f\in F(\Sigma)} B_f^{H} )\mathcal{H}_{tot}.
	\end{equation}
	The ground state degeneracy depends on the topology of the surface $\Sigma$, and is independent of the choice of the cellulation $C(\Sigma)$ (since different cellulations can be related with Pachner moves),
	\begin{equation}
		\operatorname{GSD}=\operatorname{Tr} (\prod_{v\in V(\Sigma) } A^H_v \prod_{f\in F(\Sigma)} B_f^{H} ).
	\end{equation}
	On a sphere (or infinite plane), $\operatorname{GSD}=1$, i.e., there is a unique ground state $|\Psi_{GS}\rangle$.

\subsection{Ribbon operators}
	\label{sec:ribbon}
	
The ribbon operators are crucial for us to study the topological excitations.
In this subsection, following the work of \cite{chen2021ribbon}, we will construct ribbon operators and determine the corresponding ribbon operator algebra over a given ribbon.
There are two kinds of ribbons, called type-A and type-B here, classified by the chirality of the triangles composing them.
Their corresponding ribbon operator algebras are slightly different.
A detailed presentation of ribbon operator algebras and the properties of ribbon operators will be given in Appendices \ref{sec:AA} and \ref{sec:AB}.
	
To begin with, we introduce several geometric objects \cite{Kitaev2003,Bombin2008family,Cong2017,chen2021ribbon} (see Fig.~\ref{fig:sphere} for illustrations and see Appendix \ref{sec:AA} for a more comprehensive discussion):
	\begin{itemize}
		\item A direct triangle $\tau=(s_0,s_1,e)$ consists of two adjacent sites $s_0$ and $s_1$ connected via a directed edge $e$. 
		Similarly, the dual triangle $\tilde{\tau}=(s_0,s_1,\tilde{e})$ consists of two adjacent sites $s_0$ and $s_1$ connected via a dual edge $\tilde{e}$.
		The direction of the triangle is defined as the direction from $s_0$ to $s_1$. Notice that the direction of the triangle may or may not match the direction of the edge.  
		
		\item For a given (direct or dual) triangle, the chirality  (which is called local orientation in \cite{chen2021ribbon}) of the triangle is defined as follows: it is called a left-handed (right-handed) triangle if the edge of the triangle is on the left-hand side (right-hand side) when we pass through the triangle along its positive direction. Notice that the chirality is fixed when the direction of the triangle is fixed. We will denote left-handed (right-handed) direct triangle  and dual triangle as $\tau_L$ and $\tilde{\tau}_L$ ($\tau_R$ and $\tilde{\tau}_R$) respectively.
		
		\item A ribbon $\rho$ is a collection of triangles $\tau_1,\cdots,\tau_n$ with  a given direction such that $\partial_1\tau_j=\partial_0 \tau_{j+1}$ and there is no self-intersection. A closed ribbon is defined as a ribbon that does not have any open ends, \emph{viz}., $\partial_1 \tau_n=\partial_0 \tau_1$. For a given directed ribbon, the direct triangle and dual triangle in it must have different chirality.
		A ribbon consisting of left-handed direct triangles and right-handed dual triangles is called a type-A ribbon and will be denoted as $\rho_{A}$; similarly, a type-B ribbon consists of right-handed direct triangles and left-handed dual triangles and will be denoted as $\rho_{B}$. 
	\end{itemize}

	The ribbon operator can be defined recursively. First, we define the triangle operator, then by introducing recursive relation, the ribbon operator is determined.
	To define triangle operators, we need to consider different cases separately (the convention we use here is following Ref.~\cite{Kitaev2003}, which is slightly different from the one in Ref.~\cite{chen2021ribbon}).

	For right-handed direct triangles, we have
	\begin{align}
		&	\begin{aligned}
			\begin{tikzpicture}
				\draw[-latex,black] (1,0) -- (-1,0); 
				\node[ line width=0.2pt, dashed, draw opacity=0.5] (a) at (0,0.2){$x$};
				\draw[line width=0.5pt, red] (-1,0) -- (0,-1);
				\node[ line width=0.2pt, dashed, draw opacity=0.5] (a) at (-0.6,-0.7){$s_1$};
				\draw[line width=0.5pt, red] (1,0) -- (0,-1);
				\node[ line width=0.2pt, dashed, draw opacity=0.5] (a) at (0.6,-0.7){$s_0$};
				\draw[-stealth,gray, line width=3pt] (0.5,-0.4) -- (-0.5,-0.4); 
			\end{tikzpicture}
		\end{aligned}
		\quad
		\begin{aligned}
			F^{h,\varphi}( \tau_R)|x\rangle =  
			\varepsilon(h)  T_{-}^{\varphi} |x\rangle
			=  \varepsilon(h) | x \leftharpoonup \hat{S}( \varphi) \rangle,
		\end{aligned} \label{eq:tri1} \\
		&\begin{aligned}
			\begin{tikzpicture}
				\draw[-latex,black] (-1,0) -- (1,0); 
				\node[ line width=0.2pt, dashed, draw opacity=0.5] (a) at (0,0.2){$x$};
				\draw[line width=0.5pt, red] (-1,0) -- (0,-1);
				\node[ line width=0.2pt, dashed, draw opacity=0.5] (a) at (-0.6,-0.7){$s_1$};
				\draw[line width=0.5pt, red] (1,0) -- (0,-1);
				\node[ line width=0.2pt, dashed, draw opacity=0.5] (a) at (0.6,-0.7){$s_0$};
				\draw[-stealth,gray, line width=3pt] (0.5,-0.4) -- (-0.5,-0.4); 
			\end{tikzpicture}
		\end{aligned}
		\quad
		\begin{aligned}
			F^{h,\varphi}( \tau_R)|x\rangle = 	\varepsilon(h)  T_{+}^{\varphi} |x\rangle
			= \varepsilon(h) | \varphi \rightharpoonup x \rangle.
		\end{aligned}
	\end{align}
		For right-handed dual triangles, we have
	\begin{align}
		&	\begin{aligned}
			\begin{tikzpicture}
				\draw[-latex,dashed,black] (1,0) -- (-1,0); 
				\node[ line width=0.2pt, dashed, draw opacity=0.5] (a) at (0,0.2){$x$};
				\draw[line width=0.5pt, red] (-1,0) -- (0,-1);
				\node[ line width=0.2pt, dashed, draw opacity=0.5] (a) at (-0.6,-0.7){$s_1$};
				\draw[line width=0.5pt, red] (1,0) -- (0,-1);
				\node[ line width=0.2pt, dashed, draw opacity=0.5] (a) at (0.6,-0.7){$s_0$};
				\draw[-stealth,gray, line width=3pt] (0.5,-0.4) -- (-0.5,-0.4); 
			\end{tikzpicture}
		\end{aligned}
		\quad
		\begin{aligned}
			F^{h,\varphi}( \tilde{\tau}_R)|x\rangle =   \hat{\varepsilon}(\varphi) L^{h}_{-}|x\rangle
			= \hat{\varepsilon}(\varphi) |  x \triangleleft  S(h) \rangle  ,
		\end{aligned}\\
		&\begin{aligned}
			\begin{tikzpicture}
				\draw[-latex,dashed,black] (-1,0) -- (1,0); 
				\node[ line width=0.2pt, dashed, draw opacity=0.5] (a) at (0,0.2){$x$};
				\draw[line width=0.5pt, red] (-1,0) -- (0,-1);
				\node[ line width=0.2pt, dashed, draw opacity=0.5] (a) at (-0.6,-0.7){$s_1$};
				\draw[line width=0.5pt, red] (1,0) -- (0,-1);
				\node[ line width=0.2pt, dashed, draw opacity=0.5] (a) at (0.6,-0.7){$s_0$};
				\draw[-stealth,gray, line width=3pt] (0.5,-0.4) -- (-0.5,-0.4); 
			\end{tikzpicture}
		\end{aligned}
		\quad
		\begin{aligned}
			F^{h,\varphi}( \tilde{\tau}_R)|x\rangle =  
			\hat{\varepsilon}(\varphi) L^{h}_{+}|x\rangle= \hat{\varepsilon}(\varphi) | h\triangleright x \rangle.
		\end{aligned} \label{eq:tri8}
	\end{align}
	For left-handed dual triangles, we have
	\begin{align}
		&	\begin{aligned}
			\begin{tikzpicture}
				\draw[-latex,dashed,black] (1,0) -- (-1,0); 
				\node[ line width=0.2pt, dashed, draw opacity=0.5] (a) at (0,0.2){$x$};
				\draw[line width=0.5pt, red] (-1,0) -- (0,-1);
				\node[ line width=0.2pt, dashed, draw opacity=0.5] (a) at (-0.6,-0.7){$s_0$};
				\draw[line width=0.5pt, red] (1,0) -- (0,-1);
				\node[ line width=0.2pt, dashed, draw opacity=0.5] (a) at (0.6,-0.7){$s_1$};
				\draw[-stealth,gray, line width=3pt] (-0.5,-0.4) -- (0.5,-0.4); 
			\end{tikzpicture}
		\end{aligned}
		\quad
		\begin{aligned}
			F^{h,\varphi}( \tilde{\tau}_L)|x\rangle =  \hat{\varepsilon}(\varphi)   \tilde{L}_-^{h} |x\rangle = \hat{\varepsilon}(\varphi)  
			| x\triangleleft  h \rangle.
		\end{aligned}\\
		&\begin{aligned}
			\begin{tikzpicture}
				\draw[-latex,dashed,black] (-1,0) -- (1,0); 
				\node[ line width=0.2pt, dashed, draw opacity=0.5] (a) at (0,0.2){$x$};
				\draw[line width=0.5pt, red] (-1,0) -- (0,-1);
				\node[ line width=0.2pt, dashed, draw opacity=0.5] (a) at (-0.6,-0.7){$s_0$};
				\draw[line width=0.5pt, red] (1,0) -- (0,-1);
				\node[ line width=0.2pt, dashed, draw opacity=0.5] (a) at (0.6,-0.7){$s_1$};
				\draw[-stealth,gray, line width=3pt] (-0.5,-0.4) -- (0.5,-0.4); 
			\end{tikzpicture}
		\end{aligned}
		\quad
		\begin{aligned}
			F^{h,\varphi}( \tilde{\tau}_L)|x\rangle =   \hat{\varepsilon}(\varphi) \tilde{L}_{+}|x\rangle
			= \hat{\varepsilon}(\varphi)    	
			|S(h)\triangleright  x\rangle.
		\end{aligned}
	\end{align}
	For left-handed direct triangles, we have 
	\begin{align}
		&	\begin{aligned}
			\begin{tikzpicture}
				\draw[-latex,black] (1,0) -- (-1,0); 
				\node[ line width=0.2pt, dashed, draw opacity=0.5] (a) at (0,0.2){$x$};
				\draw[line width=0.5pt, red] (-1,0) -- (0,-1);
				\node[ line width=0.2pt, dashed, draw opacity=0.5] (a) at (-0.6,-0.7){$s_0$};
				\draw[line width=0.5pt, red] (1,0) -- (0,-1);
				\node[ line width=0.2pt, dashed, draw opacity=0.5] (a) at (0.6,-0.7){$s_1$};
				\draw[-stealth,gray, line width=3pt] (-0.5,-0.4) -- (0.5,-0.4); 
			\end{tikzpicture}
		\end{aligned}
		\quad
		\begin{aligned}
			F^{h,\varphi}( \tau_L)|x\rangle =  
			\varepsilon(h)  \tilde{T}_{-}^{\varphi} |x\rangle
			=  \varepsilon(h) 
			| x \leftharpoonup \varphi \rangle ,
		\end{aligned}\\
		&\begin{aligned}
			\begin{tikzpicture}
				\draw[-latex,black] (-1,0) -- (1,0); 
				\node[ line width=0.2pt, dashed, draw opacity=0.5] (a) at (0,0.2){$x$};
				\draw[line width=0.5pt, red] (-1,0) -- (0,-1);
				\node[ line width=0.2pt, dashed, draw opacity=0.5] (a) at (-0.6,-0.7){$s_0$};
				\draw[line width=0.5pt, red] (1,0) -- (0,-1);
				\node[ line width=0.2pt, dashed, draw opacity=0.5] (a) at (0.6,-0.7){$s_1$};
				\draw[-stealth,gray, line width=3pt] (-0.5,-0.4) -- (0.5,-0.4); 
			\end{tikzpicture}
		\end{aligned}
		\quad
		\begin{aligned}
			F^{h,\varphi}( \tau_L)|x\rangle = 
				\varepsilon(h)  \tilde{T}_{+}^{\varphi} |x\rangle
			= \varepsilon(h) 	|  \hat{S}( \varphi) \rightharpoonup  x \rangle.
		\end{aligned}  \label{eq:tri16}
	\end{align}
	The reason for the above choice of convention we have made is to make the vertex and face operators be special cases of ribbon operators.

For general ribbon $\rho$, the ribbon operator on it can be defined recursively as follows. We begin with the definition of type-B ribbon operators.
They are built from the dual Hopf algebra $D(H)^{\vee}=D_B(H)^{\vee}=H^{\rm op} \otimes \hat{H}$, with 
    \begin{align}
    	&	(h \otimes \varphi)(g \otimes \psi) =g h \otimes \varphi \psi, \\
    	&	\Delta_{D(H)^{\vee}}(h \otimes \alpha) =\sum_{k,(k),(h)} (h^{(1)}\otimes \hat{k})\otimes (S(k^{(3)} )  h^{(2)} k^{(1)} \otimes \alpha (k^{(2)}  \bullet)), \label{eq:coprod-dual}\\
    	&	1_{D(H)^{\vee}}=1_{H} \otimes \varepsilon_{H}, \quad \varepsilon_{D(H)^{\vee}}(h \otimes \alpha)=\varepsilon(h) \alpha(1), 
    \end{align}
    where $\{k\}$ is an orthogonal basis of $H$ with $\{\hat{k}\}$ its dual basis.
    For a ribbon $\rho=\rho_B$ and an element $h\otimes \varphi \in  D(H)^{\vee}$, we can define the ribbon operator $F^{h,\varphi}(\rho)=F^{h\otimes \varphi} (\rho_B)$.
    The operator acts non-trivially only on edges contained in $\rho_B$, and it commutes with all vertex and face operators except the ones that act on the ending sites of the ribbon.
    These operators form an algebra $\mathcal{A}_{\rho_B}=\{F^{h,\varphi}(\rho_B)~|~ h\in H^{\rm op},\varphi \in \hat{H} \}$ called the ribbon operator algebra. Consider the decomposition $\rho=\rho_B =\tau_1\cup  \tau_2$ where both $\tau_1$ and $\tau_2$ have the same direction with $\rho$ and $\partial_1 \tau_1 =\partial_0 \tau_2$.
	For $h\otimes \varphi \in  H^{\rm op}\otimes \hat{H}$, the ribbon operator on this composite ribbon can be defined as
	\begin{equation}\label{eq:ribb_op}
		\begin{aligned}
			F^{h,\varphi}(\rho) =&\sum_{(h\otimes \varphi)} F^{(h\otimes \varphi)^{(1)}} (\tau_1) F^{(h\otimes \varphi)^{(2)}} (\tau_2)\\
			=& \sum_{k} \sum_{(k), (h)} F^{h^{(1)}, \hat{k}}(\tau_1) F^{S(k^{(3)} )  h^{(2)} k^{(1)} , \varphi (k^{(2)}  \bullet) }(\tau_2). 
		\end{aligned}
	\end{equation}
From the co-associativity of the Hopf algebra $D(H)^{\vee}$, we see that this definition is independent of the decomposition $\rho=\tau_1\cup \tau_2$.
The construction for type-A ribbon is similar, but the ribbon operator is now built from $D(H)^{\vee, \rm op}=D_A(H)^{\vee}=H\otimes\hat{H}^{\rm op}$.

For a closed ribbon, there is only one end $\partial\rho=\partial_0 \rho =\partial_1 \rho$.
The vertex  and face operators can be regarded as special cases of closed ribbon operators. Specifically, $A_v=F^{h_H,\hat{1}}(\sigma_A)$ is a type-A dual closed ribbon operator, and $B_f=F^{1,\varphi_{\hat{H}}}(\sigma_B)$ is a type-B direct closed ribbon operator. Likewise,
$\tilde{A}_v=F^{h_H,\hat{1}}(\sigma_B)$ is a type-B dual closed  ribbon operator, and $\tilde{B}_f=F^{1,\varphi_{\hat{H}}}(\sigma_A)$ is a type-A direct closed ribbon operator.
A more comprehensive discussion of ribbon operators is given in Appendices \ref{sec:AA} and \ref{sec:AB}.

	\subsection{Topological excitations}

	The topological excitations of the model are point-like quasi-particles, and the corresponding state $|\Psi\rangle$ violates some of the stabilizer conditions $A_v^{h_H}|\Psi\rangle = |\Psi\rangle$ and  $B_f^{\varphi_{\hat{H}}}|\Psi\rangle = |\Psi\rangle$ for some local vertex and face operators.
	For a ribbon $\rho$ connecting sites $s_0$ and $s_1$, ribbon operator $F^{g, \psi}(\rho)$ commutes with all the vertex and face operators in the Hamiltonian of Eq.~(\ref{eq:kitaev}) except the ones at sites $s_0=\partial_0 \rho$ and $\partial_1 \rho=s_1$.
	Thus the ribbon operator creates excitations only at the ends of the ribbon.

	Before we discuss the topological excitations of the Hopf algebraic quantum double model, let us first recall the case that $H=\mathbb{C}[G]$ with $G$ a finite group \cite{Kitaev2003}.
	In this case, topological excitations are classified by $([g], \pi)$ where $[g]$ is a conjugacy class of the group $G$, and $\pi$ is an irreducible representation of the centralizer $C_{G}(g)$. Notice that there is a $\mathbb{C}C_{G}(g)$-module $M_{\pi}$ corresponding to $\pi$, thus a topological charge can also be expressed as 
	\begin{equation}\label{eq:anyon}
		a_{[g],\pi}= \mathbb{C}[G] \otimes _{\mathbb{C} C_G(g)} M_{\pi}.
	\end{equation} 
	The vacuum charge corresponds to $g=e_G$ and $\pi=\mathds{1}$ (the trivial representation).
	The antiparticle of the one in Eq.~(\ref{eq:anyon})  is given by (note that $C_G(g)=C_G(g^{-1})$)
	\begin{equation}
		a_{[g^{-1}],\pi^{\dagger}}= \mathbb{C}[G] \otimes _{\mathbb{C} C_G(g^{-1})} M_{\pi^{\dagger}}.
	\end{equation} 
	The conjugacy class $[g]$ of a topological charge is called magnetic charge and the irrep $\pi$ is called electric charge. When $g=e_G$,  $	a_{[e],\pi}$ is characterized by a representation of $G$ and is called a chargeon; when  $\pi=\mathds{1}$, $a_{[g],\mathds{1}}$ is called a fluxion; and  when both $g\neq e_G$ and $\pi\neq \mathds{1}$, $	a_{[g],\pi}$ is called a dyon.
	The quantum dimension of the topological excitation is given by
	\begin{equation}
		\FPdim a_{[g],\pi} =|[g]| \dim \pi.
	\end{equation}
	 These topological excitations form a UMTC $\mathsf{Rep} (D(G))$, the representation category of the quantum double of the finite group $G$.

	For a general Hopf algebra $H$, a similar picture for classifying topological excitations exists, but it is much more complicated (to our knowledge, this has not been discussed in physical literature so far).
	To introduce such a classification, let us first present several crucial mathematical notions.
	A fusion category $\EC$ is called $G$-graded if there exists a decomposition
	\begin{equation}
		\EC=\oplus_{g\in G} \EC_g,
	\end{equation}
	where $\EC_g$'s are some full Abelian subcategories, and the tensor product of $\EC$ maps $\EC_g\times \EC_h$ to $\EC_{gh}$ for all $g,h$ in the finite group $G$. $G$ is called a grading group of $\EC$; when $G$ is maximal in the sense that any other grading is obtained by a quotient group of $G$, it is called a universal grading group and we denote it as $G=U(\EC)$.
	It follows from \cite{gelaki2008nilpotent} that there is a universal grading group $G=U(\mathsf{Rep}(H))$ for any semisimple Hopf algebra $H$.
	
	Consider the largest central Hopf subalgebra $K(\hat{H}^{\rm cop})$ of $\hat{H}^{\rm cop}$, we have $K(\hat{H}^{\rm cop})=\mathbb{C}G^
	{\vee}$, which is commutative, and
	$\mathsf{Rep} (\hat{H}^{\rm cop}) =\oplus_{g\in G}	\mathsf{Rep} (\hat{H}^{\rm cop}) _g$.
	Suppose that $H_g$ is a Hopf subalgebra of $H$ such that 
	$\mathsf{Rep}(\hat{H}_g) =\oplus_{x\in C_G(g)}	\mathsf{Rep} (\hat{H}^{\rm cop}) _x$.
	It is proved that $K(\hat{H}^{\rm cop})=\mathbb{C}G^{\vee}$  is a normal Hopf subalgebra of $D(H)$ \cite{burciu2012irreducible,burciu2017grothendieck}, and $g\in G$ becomes an irreducible character of  $K(\hat{H}^{\rm cop})$.
	We denote $\mathcal{I}_g=\{M_g\}$ the set of all irreducible representations  of $\hat{H}^{\rm cop} \Join H_g$ (here ``$ \Join$'' denotes bicrossed product) such that the character $\chi_{M_g}$, when restricted on $K(\hat{H}^{\rm cop})$,  satisfies ${\chi_{M_g}}_{ |_{K(\hat{H}^{\rm cop})} }=g \dim M_g$.
	With the above preparation, we are now at a position to present the classification of topological excitations, namely 
	\begin{equation}
		a_{g, M_g}= H \otimes_{H_g} M_g,
	\end{equation}
	where $g\in G$ and $M_g\in \mathcal{I}_g$.
	This completely classifies the irreducible representations of the quantum double $D(H)$, see \cite{burciu2012irreducible,burciu2017grothendieck} for rigorous proof.
	The $g\in G$ can be regarded as a magnetic charge and $M_g$ an electric charge. 
	When $g=e_G$,  $	a_{e_G,M_{e_G}}$ is called a chargeon; when  $M_g=\mathds{1}$, $a_{g,\mathds{1}}$ is called a fluxion; and  when both $g\neq e$ and $M_g \neq \mathds{1}$, $	a_{g, M_g}$ is called a dyon.
	The quantum dimension of the topological excitation is
	\begin{equation}
		\FPdim 	a_{g, M_g} =\frac{|G|}{\dim H_g} \dim M_g.
	\end{equation}
	These topological excitations form a UMTC $\mathsf{Rep} (D(H))$, the representation category of the quantum double of the Hopf algebra $H$.

\section{Gapped boundary theory}
\label{sec:bd}
	
In this section, we will establish the theory of gapped boundaries for the Hopf quantum double model. While the boundary theories for the special case where $H=\mathbb{C}[G]$ have been investigated from different aspects in previous works \cite{bravyi1998quantum,Bombin2008family,freedman2001projective,Beigi2011the, Kitaev2012a,Levin2013protected,Cong2017,wang2020electric,etingof2003finite,ostrik2003module,andruskiewitsch2007module,natale2017equivalence}, the problem has not yet been systematically studied for the general Hopf algebra case. Therefore, we will present the lattice construction and algebraic theory of the gapped boundary for a general bulk Hopf algebra $H$, based on an $H$-comodule algebra $\mathfrak{A}$ (or equivalently, based on an $H$-module algebra $\mathfrak{M}$).
Furthermore, we will demonstrate that there exists a one-to-one correspondence between all the gapped boundaries of quantum double phase and $H$-comodule algebras. Specifically, this means that any gapped boundary of the quantum double phase with input bulk Hopf algebra $H$ can be realized by an $H$-comodule algebra.
	
Before we start, let us first recall the following fundamental definitions (see \cite{montgomery1993hopf,majid2000foundations} and Appendix \ref{sec:app_hopf} for more details).
	
  \begin{definition}
	Let $H$ be a Hopf algebra and $(\mathfrak{A},\mu_{\mathfrak{A}},\eta_{\mathfrak{A}})$ an algebra. If $\mathfrak{A}$ is a left $H$-comodule with left coaction $\beta_{\mathfrak{A}}:\mathfrak{A}\to H\otimes \mathfrak{A}$ such that $\beta_{\mathfrak{A}}(xy)=\beta_{\mathfrak{A}}(x)\beta_{\mathfrak{A}}(y)$ and $\beta_{\mathfrak{A}}(1_{\mathfrak{A}})=1_H\otimes 1_{\mathfrak{A}}$, then $\mathfrak{A}$ is called a left $H$-comodule algebra. A right $H$-comodule algebra can be defined similarly.
	\end{definition}

	\begin{definition}
	Let $H$ be a Hopf algebra and $\FM$ an algebra. If $\FM$ is a left $H$-module such that $h\triangleright (xy)=\sum_{(h)}(h^{(1)}\triangleright x)(h^{(2)}\triangleright y)$ and $h\triangleright 1_\mathfrak{M}=\varepsilon(h)1_\mathfrak{M}$, then $\FM$ is called a left $H$-module algebra. A right $H$-module algebra can be defined similarly.
	\end{definition}

\noindent
{\it Notation.} For left and right comodules, we shall adopt the Sweedler's notation for left and right coactions $\beta(x)=\sum_{(x)}x^{[-1]}\otimes x^{[0]}$ and $\beta(x)=\sum_{(x)}x^{[0]}\otimes x^{[1]}$ respectively.

By invoking the construction of Frobenius algebra in a rigid category in Ref.~\cite{fuchs2009frobenius}, we can equip an $H$-comodule algebra $\FA$ with a Frobenius algebra structure.
Consider the rigid category $\Vect_{\mathbb{C}}$, which contains $\FA$ as an object. We denote the dual object corresponding to $\FA$ in $\Vect_{\mathbb{C}}$ as $\FA^{\vee}$. Let $\{k,\hat{k}\}_k$ be dual basis
for $\FA$ and $\FA^{\vee}$.
The coevaluation map is a linear map $\operatorname{coev}_{\FA}:\mathbb{C}\to \FA\otimes \FA^{\vee}$ defined as $\operatorname{coev}_{\FA}(1)=\sum_k k\otimes \hat{k}$. The evaluation map is a linear map $\operatorname{ev}_{\FA}: \FA \otimes \FA^{\vee}\to \mathbb{C}$ defined as $\operatorname{ev}_{\FA}(a\otimes\psi)=\psi(a)$.
We introduce the counit map as $\varepsilon_{\FA}=\operatorname{ev}_{\FA}\comp (\mu_{\FA}\otimes \id_{\FA^{\vee}}) \comp (\id_{\FA} \otimes \operatorname{coev}_{\FA})$, then set $\kappa: \FA\otimes \FA \to \mathbb{C}$ as $\kappa=\varepsilon_{\FA}\comp \mu_{\FA}$, and $\Phi_{\kappa}:\mathfrak{A}\to \FA^{\vee}$ as $\Phi_{\kappa}=(\id_{\FA^{\vee}}\otimes \kappa)\comp (\operatorname{coev}_{\FA} \otimes \id_{\FA})$.
The comultiplication is defined as $\Delta_{\FA}=(\id_{\FA}\otimes \mu_{\FA})\comp (\id_{\FA}\otimes \Phi_{\kappa}^{-1}\otimes \id_{\FA})\comp (\operatorname{coev}_{\mathfrak{A}} \otimes \id_{\FA})$.
Based on Ref.~\cite[Proposition 8]{fuchs2009frobenius}, it is easy to check that $(\FA,\mu_{\FA},\eta_{\FA},\Delta_{\FA},\varepsilon_{\FA})$ is a normalized-special and symmetric Frobenius algebra (see Appendix \ref{sec:app_hopf} for definition).

\begin{remark}
It is worth mentioning that a left (right) $H$-action on $\mathfrak{A}$ canonically corresponds to a right (left) $\hat{H}$-coaction on $\mathfrak{A}$.
If $\mathfrak{A}$ is a left (right) $H$-comodule algebra, then it is a right (left) $\hat{H}$-module algebra.
Thus, the boundary can be equivalently described by $\hat{H}$-module algebras.
Since the quantum double model is self-dual when exchanging $H$ and $\hat{H}$ (known as an electric-magnetic duality), the boundary data can be chosen as $H$-module algebras or $H$-comodule algebras freely.
On the other hand, it is easy to verify that if $\FA$ is a module algebra then $\hat{\FA}$ is a module coalgebra and vice versa; if $\FA$ is a comodule algebra then $\hat{\FA}$ is a comodule coalgebra and vice versa. This means that we can freely choose one
of four kinds of structures as our input data for boundary model: module algebra, comodule algebra, module coalgebra, and comodule coalgebra. In this work, we will choose to interchangeably use the comodule algebra and module algebra structures.
\end{remark}	

In order to ensure the stability of our gapped boundary lattice model, we will require that the $H$-comodule algebra $\FA$ is $H$-indecomposable. This means that there are no non-trivial two-sided $H$-ideals $I$ and $J$ such that $\FA = I \oplus J$. By $H$-ideal we mean an ideal that is also an $H$-comodule. This stability condition is also presented in the Kitaev-Kong construction of the string-net gapped boundary model, and it has been shown \cite{andruskiewitsch2007module} that $\FA$ is $H$-indecomposable if and only if the module category ${_{\FA}}\Mod$ of left $\mathfrak{A}$-modules is indecomposable. We will provide further clarification on this topic in our subsequent discussion.
	
\subsection{Lattice construction} \label{sec:lattice-const}

We will start with a simple construction based on Hopf subalgebras, which are automatically $H$-comodule algebras. Later, we will explore a more general construction based on $H$-comodule algebras.

\subsubsection{Gapped boundary theory I}
\label{sec:bdI}
	
	\begin{figure}[t]
		\centering
		\includegraphics[width=13cm]{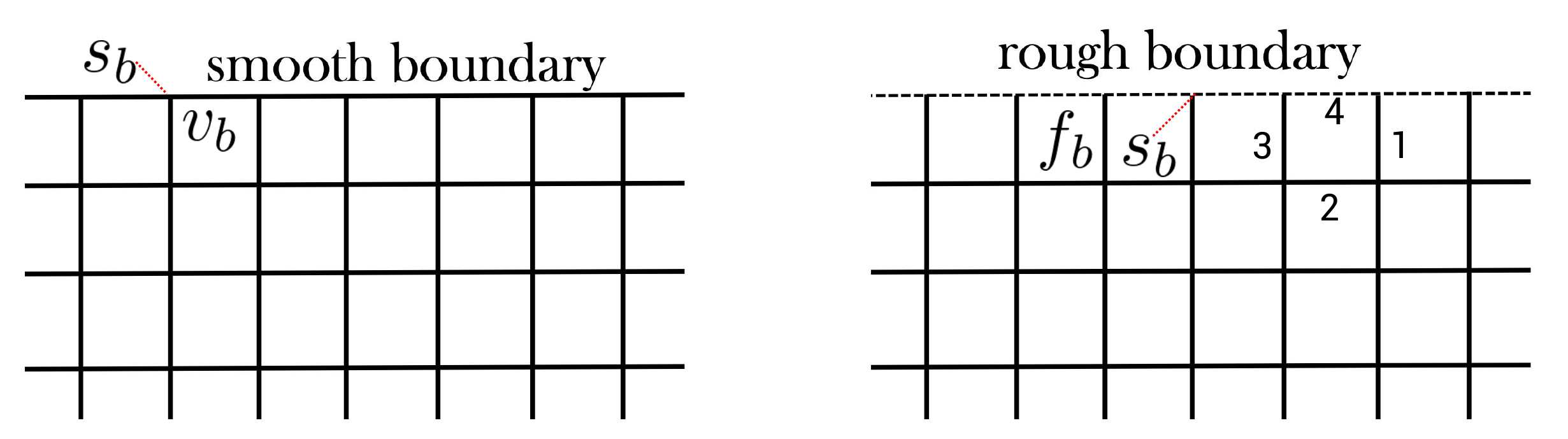}
		\caption{The depiction of smooth boundary and rough boundary. \label{fig:smoothbd}}
	\end{figure}

Consider a surface $\Sigma$ with a single boundary $\partial \Sigma$. For a given cellulation $C(\Sigma)=C(\Sigma\setminus \partial \Sigma) \cup C( \partial \Sigma) $, we need to investigate the vertices, edges, and faces in the vicinity of the boundary.	
Our first construction of the gapped boundary is based on a Hopf subalgebra $K\leq H$.
In this case, $\mathfrak{A}=K$ is naturally an $H$-comodule algebra.
The coaction is just the coproduct map $\beta_K=\Delta_K: K \to K\otimes K$, since $K\otimes K \subset H\otimes K$ and $K\otimes K \subset K\otimes H$. Thus $K$ can be regarded as either a left or a right $H$-comodule algebra. By definition, $\beta_K(ab)=\beta_K(a)\beta_K(b)$ and $\beta_K(1)=1\otimes 1$.

The boundary edges are projected to the subspace 
	\begin{equation} \label{eq:proj-K}
	    \mathcal{H}_{e_b}=\Pi_K H=K.
	\end{equation}
The boundary vertex operator is chosen as 
	\begin{equation}
	    A_{v_b}^K=A_{v_b}^{h_K},
	\end{equation}
where $h_K$ is the Haar integral of $K$.
It is obvious that $[A_{v_b}^K,A_{v'_b}^K]=0=[A_{v_b}^K,A_{v}^H]$ for all $v_b,v'_b,v$. Since $h_K$ and $\varphi_{\hat{H}}$ are Haar integrals, they are cocommutative, which implies that $[A^K_{v_b},B_f^H]=0$ for all $v_b$ and $f$. 
Notice that the face operator near the boundary has the same form as the bulk face operator, but it has a different meaning, since there is one component $\varphi^{(i)}_{\Hhat}$ acting on $K$. This point of view will become clearer in the next subsection for the lattice construction of general $H$-comodule algebra.

This construction is a natural generalization of the constructions for the group algebra case \cite{Bombin2008family, Beigi2011the}. When we choose $H=\mathbb{C}[G]$ and $K=\mathbb{C}[N]$ with $N$ a subgroup of $G$, our model reduces to the group-algebra boundary model. However, the viewpoint for constructing the boundary local operator algebra is different. In our construction, we do not treat the local edge projector $\Pi_K:H\to K$ as a stabilizer, and instead, we set the boundary edge space to be $K$. For the Hopf quantum double model, this approach is more natural, as the projector is not easily realized as a face operator with just one edge.
For the group algebra case, both constructions work, and their local operator algebra is related by a fusion-categorical Morita equivalence. This will be clarified in Proposition~\ref{thm:local-ribbon-alg}.

Let us take a closer look at two typical examples of this kind of construction. 
	
\begin{example}[Smooth boundary]\label{exp:smoothBd}
    For the smooth boundary, the corresponding Hopf subalgebra is $K=H$. 
    On a boundary site $s_b$, we have the local operator algebra $H$ generated by
\begin{equation}
    A^{k}(s_b)A^{l}(s_b)=A^{kl}(s_b).
\end{equation}
The boundary excitations are characterized by the UFC $\mathsf{Rep}(H)$.
\end{example}

\begin{example}[Rough boundary]\label{exp:roughBd}
For the rough boundary, the corresponding Hopf subalgebra is the trivial one $K=\mathbb{C}1_H$ and the Haar integral is $h_{K}=1_H$. In this case, boundary edges are fixed with label $1_H$ and $A^K_{v_b}=\id$.
Consider the boundary face operator as shown in Fig.~\ref{fig:smoothbd} acting on $x_1,x_2,x_3,x_4=1_H$ as 
\begin{equation}
    B^{\varphi}(s_b) |x_1x_2x_3x_4\rangle=  (B^{\varphi}(s_b)|x_1x_2x_3\rangle)\otimes |1_H\rangle.
\end{equation}
Thus the rough boundary can be obtained equivalently by removing all boundary edges and boundary vertices, and the boundary local operator algebra is generated by
\begin{equation}
    B^{\varphi}(s_b) B^{\psi}(s_b)=B^{\varphi \psi}(s_b).
\end{equation}
In this way, the boundary excitations are characterized by the UFC $\mathsf{Rep}(\hat{H})$.
\end{example}

The boundary topological phases for smooth and rough boundaries are Morita equivalent (their Drinfeld double is equivalent as UMTC).
To see this, let us recall the Kitaev-Kong construction of boundary theory of the Levin-Wen model for $2d$ topological phase \cite{Kitaev2012a}, for which the bulk phase is determined by a UFC $\EC$, and the boundary is characterized by an indecomposable $\EC$-module category $\EM$. The boundary excitation is given by the UFC $\mathsf{Fun}_{\EC}(\EM,\EM)$ of all $\EC$-module functors.
By transforming the basis of the Hopf algebraic quantum double model to the fusion basis, the quantum double model can be mapped to a Levin-Wen model \cite{Buerschaper2009mapping,buerschaper2013electric,balsam2012kitaevs}, see also next subsection. In this case, the input UFC for the bulk is $\EC=\mathsf{Rep}(H)$. 
For the smooth boundary, the module category is $\EM_s=\mathsf{Rep}(H)$, and for the rough boundary, the module category is $\EM_r=\mathsf{Vect}$.
It can be proved that there is a monoidal equivalence
\begin{equation}
    \mathsf{Fun}_{\mathsf{Rep}(H)}(\mathsf{Vect},\mathsf{Vect})\simeq \mathsf{Rep}(\hat{H}).
\end{equation}
This implies that the topological excitations of smooth and rough boundaries are Morita equivalent.

\subsubsection{Gapped boundary theory II} \label{sec:bdd-II}

We now consider how to construct the gapped boundary model for general $H$-comodule algebras.
An indirect construction based on the generalized quantum double is given in Appendix \ref{sec:bdII}, where we use a pair of Hopf algebras $J,K$ to realize the gapped boundary.
In this part, we discuss a direct construction proposed recently for a more general weak Hopf quantum double model in our work \cite{jia2023weak}.

From our previous construction for a Hopf subalgebra $K$, it was apparent that the crucial step was to use the Haar integral $h_K$ for the construction of local vertex stabilizers. However, for a general $H$-comodule algebra $\FA$, this approach is not feasible as it is not a Hopf algebra and hence lacks the concept of Haar integral. Therefore, it is necessary to generalize the notion of Haar integral to the general $H$-comodule algebra $\FA$.
As we have demonstrated in Ref.~\cite{jia2023weak}, the key concept for constructing local stabilizers for a general $H$-comodule algebra $\FA$ is the notion of a \emph{symmetric separability idempotent}, as defined in Refs.~\cite{aguiar2000note,koppen2020defects}.

\begin{definition}
A symmetric separability idempotent of an algebra $\mathfrak{A}$ is an element $\lambda=\sum_{\langle\lambda\rangle}\lambda^{\langle 1\rangle}\otimes \lambda^{\langle 2\rangle}\in \FA\otimes \FA$ that satisfies the following conditions:
\begin{itemize}
\item[(1)] $\sum_{\langle\lambda\rangle}x\lambda^{\langle 1\rangle}\otimes \lambda^{\langle 2\rangle} =\sum_{\langle\lambda\rangle} \lambda^{\langle 1\rangle}\otimes \lambda^{\langle 2\rangle} x$, for all $x\in \FA$.
\item[(2)] $\sum_{\langle\lambda\rangle}\lambda^{\langle 1\rangle} \lambda^{\langle 2\rangle} =1$.
\item[(3)] $\sum_{\langle\lambda\rangle}\lambda^{\langle 1\rangle}\otimes \lambda^{\langle 2\rangle}=\sum_{\langle\lambda\rangle}\lambda^{\langle 2\rangle}\otimes \lambda^{\langle 1\rangle}$.
\end{itemize}
\end{definition}

From the above definition we can derive that $\lambda$ is an idempotent of the enveloping algebra $\FA\otimes \FA^{\rm op}$. If $K$ is a Hopf algebra with Haar integral $h_K$, then it is straightforward to verify that $\lambda=\sum_{(h_K)} h_K^{(1)}\otimes S(h_K^{(2)})$ is a symmetric separability idempotent. 
In Ref.~\cite[Corollary 3.1]{aguiar2000note}, it is proved that the symmetric separability idempotent exists and is unique for a finite-dimensional semisimple algebra over an algebraically closed field of characteristic zero. 
Additionally, Ref.~\cite[Proposition 19]{koppen2020defects} shows that the following equation holds:
\begin{equation}\label{eq:lambda}
\sum \lambda^{\langle 1\rangle[0]}\otimes \lambda^{\langle 2\rangle[0]} \otimes \lambda^{\langle 1\rangle[1]}\lambda^{\langle 2\rangle[1]}=\sum \lambda^{\langle 1\rangle}\otimes \lambda^{\langle 2\rangle}\otimes 1,
\end{equation}
i.e., $\sum \beta(\lambda^{\langle 1\rangle})_{13}\beta(\lambda^{\langle 2\rangle})_{23} = \lambda_{12}$ in $\mathfrak{A}\otimes \mathfrak{A}\otimes H$ (the notation we adopt here is standard in Hopf algebra; the subscript indicates the tensor space in which the vectors lie).

If we designate the direction of all boundary edges such that the bulk is located to the right of the boundary edge\,\footnote{We choose this convention to simplify the notation of the coaction index as ``[1]'' instead of ``[-1]'' and to avoid cluttering of equations when doing calculations. However, we could have equivalently chosen the reverse direction. Henceforth, we will interchangeably use both choices whenever necessary.}, there exist only two configurations, depending on how the bulk edges are connected to the boundary vertex.
The choice of boundary edge directions depends on the choice of left or right $H$-comodule algebras for the input data of the gapped boundary.
For these two configurations, there are four corresponding sites.
We define the following operator $A^{a\otimes b}(s_b)$ for $a\otimes b\in \FA \otimes \FA^{\rm op}$ for four possible cases associated to a boundary vertex $v_b$:
\begin{align}
     \begin{aligned}
    \begin{tikzpicture}
				\draw[-latex,black,line width = 1.6pt] (0,0) -- (0,1);
    	        \draw[red,line width = 1pt] (0,1) -- (0.5,0.5); 
    			\draw[-latex,black,line width = 1.6pt] (0,1) -- (0,2); 
				\draw[-latex,black] (1,1) -- (0,1); 
				\node[ line width=0.2pt, dashed, draw opacity=0.5] (a) at (-0.4,0.5){$x$};
    		\node[ line width=0.2pt, dashed, draw opacity=0.5] (a) at (-0.4,1.5){$y$};
            \node[ line width=0.2pt, dashed, draw opacity=0.5] (a) at (0.5,1.3){$h$};
            \node[ line width=0.2pt, dashed, draw opacity=0.5] (a) at (0.7,0.5){$s_b$}; 
	\end{tikzpicture}  \end{aligned} & \quad \quad  A^{a\otimes b}(s_b)|x,y,h\rangle=\sum_{[b]} |ax,yb^{[0]},S(b^{[1]})h\rangle, \label{eq:bdd-sta-A1}\\
       \begin{aligned}
    \begin{tikzpicture}
				\draw[-latex,black,line width = 1.6pt] (0,0) -- (0,1);
    	        \draw[red,line width = 1pt] (0,1) -- (0.5,1.5); 
    			\draw[-latex,black,line width = 1.6pt] (0,1) -- (0,2); 
				\draw[-latex,black] (1,1) -- (0,1); 
				\node[ line width=0.2pt, dashed, draw opacity=0.5] (a) at (-0.4,0.5){$x$};
    		\node[ line width=0.2pt, dashed, draw opacity=0.5] (a) at (-0.4,1.5){$y$};
            \node[ line width=0.2pt, dashed, draw opacity=0.5] (a) at (0.7,1.4){$s_b$};
            \node[ line width=0.2pt, dashed, draw opacity=0.5] (a) at (0.5,0.75){$h$};
	\end{tikzpicture} \end{aligned} & \quad \quad A^{a\otimes b}(s_b)|x,y,h\rangle=\sum_{[a]} | a^{[0]}x, yb, a^{[1]}h\rangle,  \label{eq:bdd-sta-A2}\\
       \begin{aligned}
    \begin{tikzpicture}
				\draw[-latex,black,line width = 1.6pt] (0,0) -- (0,1); 
        	    \draw[red,line width = 1pt] (0,1) -- (0.5,0.5); 
    			\draw[-latex,black,line width = 1.6pt] (0,1) -- (0,2); 
				\draw[-latex,black] (0,1) -- (1,1); 
				\node[ line width=0.2pt, dashed, draw opacity=0.5] (a) at (-0.4,0.5){$x$};
    		\node[ line width=0.2pt, dashed, draw opacity=0.5] (a) at (-0.4,1.5){$y$};
            \node[ line width=0.2pt, dashed, draw opacity=0.5] (a) at (0.5,1.3){$h$};
            \node[ line width=0.2pt, dashed, draw opacity=0.5] (a) at (0.7,0.5){$s_b$};
	\end{tikzpicture}\end{aligned} & \quad\quad A^{a\otimes b} (s_b)|x,y,h\rangle=\sum_{[b]}|ax, yb^{[0]},hb^{[1]}\rangle, \label{eq:bdd-sta-A3} \\
      \begin{aligned} \begin{tikzpicture}
				\draw[-latex,black,line width = 1.6pt] (0,0) -- (0,1); 
        	    \draw[red,line width = 1pt] (0,1) -- (0.5,1.5); 
    			\draw[-latex,black,line width = 1.6pt] (0,1) -- (0,2); 
				\draw[-latex,black] (0,1) -- (1,1); 
				\node[ line width=0.2pt, dashed, draw opacity=0.5] (a) at (-0.4,0.5){$x$};
    		\node[ line width=0.2pt, dashed, draw opacity=0.5] (a) at (-0.4,1.5){$y$};
            \node[ line width=0.2pt, dashed, draw opacity=0.5] (a) at (0.5,0.75){$h$};
            \node[ line width=0.2pt, dashed, draw opacity=0.5] (a) at (0.7,1.4){$s_b$};
	\end{tikzpicture}  \end{aligned}  &\quad \quad A^{a\otimes b} (s_b)|x,y,h\rangle=\sum_{[a]}|a^{[0]}x, yb,hS(a^{[1]})\rangle. \label{eq:bdd-sta-A4} 
\end{align}
The convention here follows from that in Sec.~\ref{sec:bulkQD}.
Note that $x,y\in \FA$ and $a,b\in \FA$ are elements of the $H$-comodule algebra $\FA$. Since $\FA$ is an $H$-comodule algebra, $a$ and $b$ cannot directly act on $h\in H$ as there is no action from $\FA$ to $H$. However, by using the coaction map $\beta:\FA \to \FA \otimes H$, both $a^{[1]}$ and $b^{[1]}$ are elements of $H$ and hence can act on $h\in H$.
From the construction, it is clear that if the bulk is on the right-hand side of the boundary when moving along the positive direction of the boundary, the input $H$-comodule algebra must be chosen as a right $H$-comodule algebra.
If the input data is chosen as a left $H$-comodule algebra, we must change the direction of the boundary edges. This matches well with the construction of the Levin-Wen string-net boundary \cite{Kitaev2012a}, where the boundary direction determines the choice of left or right $\EC$-module structure $\EM$. Since $\EM$ is not the fusion category, there is no duality operation in $\EM$, and we cannot change the boundary direction arbitrarily. We will discuss this in detail later.

To introduce the boundary face operator, we need to introduce the action of $\hat{H}$ on the right $H$-comodule algebra $\FA$. This is given by ${\bar{T}}^{\varphi}_+x=\varphi\rightharpoonup x=\sum_{[x]}\varphi(x^{[1]}) x^{[0]}$, which is well-defined because $x^{[1]}\in H$ for $x\in \FA$. Then the boundary face operator $B^\varphi(s_b)$ for $\varphi\in\hat{H}$ is given as follows: 
\begin{equation}\label{eq:Bf}
	B^{\varphi}(s_b)
	\big{|}	\begin{aligned}
		\begin{tikzpicture}
			\draw[-latex,black] (-0.5,0.5) -- (0.5,0.5);
			\draw[-latex,line width = 1.6pt,black] (-0.5,-0.5) -- (-0.5,0.5); 
			\draw[-latex,black] (0.5,-0.5) -- (0.5,0.5); 
			\draw[-latex,black] (-0.5,-0.5) -- (0.5,-0.5); 
			\draw [fill = black] (0,0) circle (1.2pt);
			\draw[red,line width=1pt] (0,0) -- (-0.5,0.5);
			\node[ line width=0.2pt, dashed, draw opacity=0.5] (a) at (0.75,0){$k$};
			\node[ line width=0.2pt, dashed, draw opacity=0.5] (a) at (-0.75,0){$x$};
			\node[ line width=0.2pt, dashed, draw opacity=0.5] (a) at (0,-0.7){$h$};
			\node[ line width=0.2pt, dashed, draw opacity=0.5] (a) at (0,0.7){$l$};
		\end{tikzpicture}
	\end{aligned}   \big{ \rangle}     
	= 
	\sum_{(\varphi)}
	\big{|}	\begin{aligned}
	\begin{tikzpicture}
		\draw[-latex,black] (-0.5,0.5) -- (0.5,0.5);
		\draw[-latex,line width = 1.6pt,black] (-0.5,-0.5) -- (-0.5,0.5); 
		\draw[-latex,black] (0.5,-0.5) -- (0.5,0.5); 
		\draw[-latex,black] (-0.5,-0.5) -- (0.5,-0.5); 
		\draw [fill = black] (0,0) circle (1.2pt);
		\draw[red,line width=1pt] (0,0) -- (-0.5,0.5);
		\node[ line width=0.2pt, dashed, draw opacity=0.5] (a) at (1.3,0){$T^{\varphi^{(3)}}_{-}k$};
		\node[ line width=0.2pt, dashed, draw opacity=0.5] (a) at (-1.3,0){$\bar{T}^{\varphi^{(1)}}_{+}x$};
		\node[ line width=0.2pt, dashed, draw opacity=0.5] (a) at (0,-0.9){$T^{\varphi^{(2)}}_{-}h$};
		\node[ line width=0.2pt, dashed, draw opacity=0.5] (a) at (0,0.9){$T^{\varphi^{(4)}}_+l$};
	\end{tikzpicture}
\end{aligned}   \big{ \rangle},
\end{equation}
 according to the convention in Sec.~\ref{sec:bulkQD}.

We introduce the crossed product $(\FA\otimes \FA^{\rm op})\star	\hat{H}$ between $\FA\otimes \FA^{\rm op}$ and $\Hhat$ as follows: the underlying vector space is $(\mathfrak{A}\otimes \mathfrak{A})\otimes \hat{H}$, the multiplication is given by
\begin{equation}\label{eq:AHstar}
   ((a\otimes b)\star	\varphi )\cdot ((c\otimes d)\star	\psi)=  \sum (ac^{[0]}\otimes d^{[0]}b) \star	  \varphi(c^{[1]}\bullet d^{[1]})\psi,
\end{equation}
and the unit is $(1_\FA\otimes 1_\FA)\star\varepsilon$. Clearly, $((1_\FA\otimes 1_\FA)\star\varepsilon)\cdot ((a\otimes b)\star	\varphi) = ((a\otimes b)\star	\varphi)\cdot((1_\FA\otimes 1_\FA)\star	\varepsilon) = (a\otimes b)\star	\varphi$. The multiplication is associative: 
\begin{align}
    & \quad [((a\otimes b)\star	\varphi )\cdot ((c\otimes d)\star	\psi)] \cdot((e\otimes f)\star	\theta)  \nonumber \\
    & =  \sum ((ac^{[0]}\otimes d^{[0]}b) \star	\varphi(c^{[1]}\bullet d^{[1]})\psi)  \cdot((e\otimes f)\star	\theta) \nonumber  \\
    & = \sum  (ac^{[0]}e^{[0]}\otimes f^{[0]}d^{[0]}b) \star	  \varphi(c^{[1]} e^{[1]}\bullet f^{[1]}d^{[1]})\psi(e^{[2]}\bullet f^{[2]})\theta \\
    & = \sum ((a\otimes b)\star	\varphi)\cdot ((ce^{[0]}\otimes f^{[0]}d)\star	\psi(e^{[1]}\bullet f^{[1]})\theta)  \nonumber \\
    & = ((a\otimes b)\star	\varphi )\cdot [((c\otimes d)\star	\psi) \cdot((e\otimes f)\star	\theta)]. \nonumber 
\end{align}
Thus, $(\FA\otimes \FA^{\rm op})\star	\hat{H}$ is an algebra. 
By the identifications $\FA\otimes \FA^{\rm op} \cong (\FA\otimes \FA^{\rm op})\star	\varepsilon$ and $\hat{H}\cong (1_\FA\otimes 1_\FA)\star	\hat{H}$, the multiplication of $(\FA\otimes \FA^{\rm op})\star	\hat{H}$ is determined by the straightening formula 
\begin{equation}\label{eq:stra-a-1}
    \varphi(a\otimes b) = \sum (a^{[0]}\otimes b^{[0]})\varphi(a^{[1]}\bullet b^{[1]}). 
\end{equation}

\begin{proposition}\label{prop:bdLocalAlg}
    At a boundary site $s_b$, the boundary face and vertex operators generate the algebra $(\FA\otimes \FA^{\rm op})\star	 \Hhat$, with the straightening relation
    \begin{equation}\label{eq:stra-a-2}
        B^{\varphi}(s_b) A^{a\otimes b}(s_b) =\sum_{[a],[b]}  A^{a^{[0]}\otimes b^{[0]}}(s_b) B^{\varphi(a^{[1]}\bullet b^{[1]})}(s_b),
    \end{equation}
    where $a\otimes b \in \FA \otimes \FA^{\rm op}$ and $\varphi \in \hat{H}$. Therefore, the map 
    \begin{equation}
        \Psi_{s_b}:(\FA\otimes \FA^{\rm op})\star	\hat{H} \to \operatorname{End}(\mathcal{H}(s_b)),\quad (a\otimes b)\star	\varphi \mapsto A^{a\otimes b}(s_b)B^{\varphi}(s_b)
    \end{equation}
    is an algebra homomorphism. That is, every boundary site supports a representation of $(\FA\otimes\FA^{\rm op})\star	\Hhat$. 
\end{proposition}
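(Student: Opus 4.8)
The plan is to reduce the statement to three local checks by exploiting the structure of the crossed product. Every element of $(\FA\otimes\FA^{\rm op})\star\hat{H}$ is a linear combination of terms $(a\otimes b)\star\varphi=((a\otimes b)\star\varepsilon)\cdot((1_\FA\otimes 1_\FA)\star\varphi)$, and the product of any two such terms is returned to this normal form by the multiplication rule \eqref{eq:AHstar}, i.e. by the straightening formula \eqref{eq:stra-a-1}. Hence it suffices to verify: (i) $a\otimes b\mapsto A^{a\otimes b}(s_b)$ is an algebra homomorphism out of $\FA\otimes\FA^{\rm op}$; (ii) $\varphi\mapsto B^{\varphi}(s_b)$ is an algebra homomorphism out of $\hat{H}$; and (iii) the straightening relation \eqref{eq:stra-a-2} holds. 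Granting these, $\Psi_{s_b}$ extends uniquely to an algebra homomorphism; and since the boundary local operator algebra at $s_b$ is by definition generated by the $A^{a\otimes b}(s_b)$ and the $B^{\varphi}(s_b)$ --- that is, by $\Psi_{s_b}((\FA\otimes\FA^{\rm op})\star\varepsilon)$ and $\Psi_{s_b}((1_\FA\otimes1_\FA)\star\hat{H})$, whose union generates $(\FA\otimes\FA^{\rm op})\star\hat{H}$ --- it coincides with the image of $\Psi_{s_b}$. One may additionally check that $\Psi_{s_b}$ is faithful by evaluating on a basis vector whose relevant edges carry free copies of $\FA$ and $H$, but this is not needed for the statement as phrased.

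For (i) I would fix one configuration, say \eqref{eq:bdd-sta-A1}, and compose $A^{a\otimes b}(s_b)$ after $A^{c\otimes d}(s_b)$ on $|x,y,h\rangle$: the first $\FA$-edge sees $a(cx)=(ac)x$, the second sees $(yd^{[0]})b^{[0]}=y(d^{[0]}b^{[0]})$, and the bulk edge sees $S(b^{[1]})S(d^{[1]})h=S(d^{[1]}b^{[1]})h$ since $S$ is an antihomomorphism. Because $\beta_\FA$ is an algebra map, $\sum d^{[0]}b^{[0]}\otimes d^{[1]}b^{[1]}=\beta_\FA(db)$, so the composite is $A^{ac\otimes db}(s_b)$, and $ac\otimes db$ is exactly the product of $(a\otimes b)$ and $(c\otimes d)$ in $\FA\otimes\FA^{\rm op}$; the unit goes to the identity since $\beta_\FA(1_\FA)=1_H\otimes1_\FA$. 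The configurations \eqref{eq:bdd-sta-A2}--\eqref{eq:bdd-sta-A4} are handled identically, only the placement of $S$ and of the coaction legs $[0],[1]$ changing, and by the orientation-reversal remark in Sec.~\ref{sec:bdd-II} only one representative per boundary orientation is genuinely new. For (ii), as in Sec.~\ref{sec:bulkQD} the operator $B^{\varphi}(s_b)$ is obtained by distributing $\Delta_{\hat{H}}(\varphi)$ over the edges around the boundary face, acting on bulk edges by the $\hat{H}$-representations $T^{\bullet}_{\pm}$ and on the boundary $\FA$-edge by $\bar T^{\varphi}_{+}x=\sum_{[x]}\varphi(x^{[1]})x^{[0]}$. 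Coassociativity of the $H$-coaction on $\FA$ gives $\bar T^{\varphi}_{+}\bar T^{\psi}_{+}=\bar T^{\varphi\psi}_{+}$, and coassociativity of $\Delta_{\hat{H}}$ then yields $B^{\varphi}(s_b)B^{\psi}(s_b)=B^{\varphi\psi}(s_b)$ with $B^{\varepsilon}(s_b)=\id$, just as for the bulk relation $[B^{\varphi}(s),B^{\psi}(s)]=B^{[\varphi,\psi]}(s)$.

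The heart of the proposition is (iii), which I would establish by a direct computation on a basis vector labelling all edges around $s_b$ in one fixed configuration. The operators $A^{a\otimes b}(s_b)$ and $B^{\varphi}(s_b)$ overlap only on the boundary $\FA$-edge belonging to $s_b$ and on one adjacent bulk edge, and commute elsewhere, so the identity localizes. On the boundary edge one applies $\bar T^{\varphi^{(i)}}_{+}$ to $a^{[0]}x$ (or to $xb^{[0]}$) and uses the module-algebra compatibility $\varphi\rightharpoonup(ax)=\sum(\varphi^{(1)}\rightharpoonup a)(\varphi^{(2)}\rightharpoonup x)$ of the coaction-induced $\hat{H}$-action together with $\varphi\rightharpoonup a=\sum\varphi(a^{[1]})a^{[0]}$, so that scalar factors $\varphi^{(?)}(a^{[1]})$, $\varphi^{(?)}(b^{[1]})$ are peeled off and the surviving $\hat{H}$-component is redistributed; on the adjacent bulk edge one uses that the relevant $T^{\bullet}_{\pm}$ has shifted the label by $S(b^{[1]})$ (resp.\ $a^{[1]}$), together with the fact that $S$ is also a coalgebra antihomomorphism. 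Collecting terms with the help of coassociativity of $\beta_\FA$ and of $\Delta_{\hat{H}}$ produces exactly $\sum_{[a],[b]}A^{a^{[0]}\otimes b^{[0]}}(s_b)B^{\varphi(a^{[1]}\bullet b^{[1]})}(s_b)$ applied to the same vector --- the operator incarnation of \eqref{eq:stra-a-1}. I expect the only real difficulty to be organizational: keeping apart the two species of Sweedler indices, the $(1),(2),\dots$ from $\Delta_{\hat{H}}$ and the $[0],[1]$ from the $H$-coaction on $\FA$, and positioning the antipodes correctly in each of the four configurations; beyond that, nothing enters but the comodule-algebra axioms, coassociativity, and the (anti)homomorphism properties of $\beta_\FA$ and $S$.
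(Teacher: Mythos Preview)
Your proposal is correct and follows essentially the same route as the paper: fix one boundary configuration and verify the straightening relation \eqref{eq:stra-a-2} by a direct computation on a basis vector, then deduce that $\Psi_{s_b}$ is an algebra homomorphism from the matching of \eqref{eq:stra-a-1} and \eqref{eq:stra-a-2}. The paper's proof is more compressed --- it only writes out the computation for (iii) in a single configuration and leaves the checks (i) and (ii) implicit --- whereas you spell out the three-step reduction and the reasons (i) and (ii) hold; but the underlying argument is the same direct expansion using coassociativity of $\beta_\FA$ and $\Delta_{\hat{H}}$ together with the antihomomorphism property of $S$.
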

\begin{proof}
Let us consider the following boundary configuration (the others are similar)
\begin{equation*}
    \begin{aligned}
		\begin{tikzpicture}
			\draw[-latex,black] (0.5,0.5) -- (-0.5,0.5);
			\draw[-latex,line width = 1.6pt,black] (-0.5,-0.5) -- (-0.5,0.5); 
                \draw[-latex,line width = 1.6pt,black] (-0.5,0.5) -- (-0.5,1.5); 
			\draw[-latex,black] (0.5,-0.5) -- (0.5,0.5); 
			\draw[-latex,black] (-0.5,-0.5) -- (0.5,-0.5); 
			\draw [fill = black] (0,0) circle (1.2pt);
			\draw[red,line width=1pt] (0,0) -- (-0.5,0.5);
			\node[ line width=0.2pt, dashed, draw opacity=0.5] (a) at (0.75,0){$k$};
			\node[ line width=0.2pt, dashed, draw opacity=0.5] (a) at (-0.75,0){$x$};
      	  \node[ line width=0.2pt, dashed, draw opacity=0.5] (a) at (-0.75,1){$y$};       
			\node[ line width=0.2pt, dashed, draw opacity=0.5] (a) at (0,-0.7){$h$};
			\node[ line width=0.2pt, dashed, draw opacity=0.5] (a) at (0,0.7){$l$};
                \node[ line width=0.2pt, dashed, draw opacity=0.5] (a) at (0.2,-0.2){$s_b$};
		\end{tikzpicture}
	\end{aligned} 
\end{equation*}
It is straightforward to compute that  
\begin{align*}
    &\quad B^\varphi(s_b)A^{a\otimes b}(s_b)|x,y,h,k,l\rangle \\
    & = \sum B^\varphi(s_b)|ax,yb^{[0]},h,k,S(b^{[1]})l\rangle \\
    & = \sum \varphi(a^{[1]}x^{[1]}S(h^{(1)})S(k^{(1)})S(S(b^{[2]})l^{(1)})) \\
    &\quad \quad |a^{[0]}x^{[0]},yb^{[0]},h^{(2)},k^{(2)},S(b^{[1]})l^{(2)}\rangle \\
    & = \sum \varphi(a^{[1]}x^{[1]}S(h^{(1)})S(k^{(1)})S(l^{(1)})b^{[2]}) \\
    &\quad \quad |a^{[0]}x^{[0]},yb^{[0]},h^{(2)},k^{(2)},S(b^{[1]})l^{(2)}\rangle \\
    & = \sum A^{a^{[0]}\otimes b^{[0]}}(s_b)\varphi(a^{[1]}x^{[1]}S(h^{(1)})S(k^{(1)})S(l^{(1)})b^{[1]}) \\
    &\quad \quad |x^{[0]},y,h^{(2)},k^{(2)},l^{(2)}\rangle \\
    & = \sum A^{a^{[0]}\otimes b^{[0]}}(s_b)B^{\varphi(a^{[1]}\bullet b^{[1]})}(s_b)|x,y,h,k,l\rangle. 
\end{align*}
The second assertion follows immediately from Eqs.~\eqref{eq:stra-a-1} and \eqref{eq:stra-a-2}. 
\end{proof}

To construct the lattice model for a gapped boundary, we introduce the boundary vertex operator $A^{\lambda}(s_b)$, where $\lambda$ is the symmetric separability idempotent of the input $H$-comodule algebra $\FA$. It can be verified straightforwardly from Eqs.~\eqref{eq:lambda} and \eqref{eq:stra-a-2} that $[A^{\lambda}(s_b),B^{\varphi_{\Hhat}}(s_b)]=0$ for all boundary sites $s_b$, and that all boundary vertex operators commute with each other.
The boundary Hamiltonian is of the form
\begin{equation}
  H[C(\partial \Sigma)]=\sum_{s_b}   (I-A^{\lambda}(s_b))+\sum_{s_b}(I-B^{\varphi_{\Hhat}}(s_b)).
\end{equation}
Notice that here we have used the boundary site $s_b$ to stress that the local stabilizers are constructed on boundary sites.
A direct calculation shows that when the input algebra $\FA=K$ for some Hopf subalgebra $K\subset H$,  $A^{\lambda}(s_b)$ is the same as the vertex operators $A^{h_K}_{v_b}$ that we constructed in the last subsection.
This is because in this case,  the symmetric separability idempotent $\lambda$ and the Haar integral $h$ are related by $\lambda=\sum_{(h)}h^{(1)}\otimes S(h^{(2)})$. 
Thus the boundary model in the last subsection is a special case of the model presented here.

\subsection{Boundary topological excitation} \label{sec:bdd-top-exc}

We have presented a Hamiltonian theory for a gapped boundary in the previous subsection. Let us now provide a detailed characterization of the topological excitations for these boundary models.
We first give an algebraic theory of the topological excitations based on an $H$-comodule algebra $\FA$ and modules over $\FA$ (see Appendix \ref{sec:app_hopf} for some necessary details of mathematical concepts used here),
and then we show the topological nature of the boundary Hamiltonian by mapping the Hopf quantum double boundary lattice model to a Levin-Wen string-net boundary whose topological nature has been shown previously in Ref.~\cite{Kitaev2012a}.
The main results are summarized in Tables \ref{tab:bdTop1} and \ref{tab:bdTop2}.

\subsubsection{Algebraic theory}

There are several different but equivalent ways to understand the topological excitations of the gapped boundary for $2d$ non-chiral topological order (for which quantum double phase is a special example).
Roughly they can be divided into two types (see Refs.~\cite{Kitaev2012a,Kong2014,Cong2017,Jia2022electric}):
\begin{itemize}
    \item From anyon condensation point of view, the bulk phase is characterized by a UMTC $\ED$. The gapped boundary phase is determined by a Lagrangian algebra $L\in \ED$, which plays the role of boundary vacuum charge. When a bulk anyon moves to the boundary, it is surrounded by the boundary vacuum, which mathematically results in a right $L$-module structure of the condensed anyon. The boundary phase is thus given by the category of all right $L$-modules in $\ED$, denoted as $\EB_L=\mathsf{Mod}_L(\ED)$, which is a UFC (the boundary is a $1d$ phase, and there is no braiding structure).
    The condensation process is given by a monoidal functor $\mathbf{Cond}_{L}:\ED \to \EB_L, X\mapsto X\otimes L$. Another different but equivalent construction of anyon condensation from bulk to the boundary is based on a Frobenius algebra $\mathfrak{F}\in \ED$. 
    In this case, the condensation becomes a two-step procedure: first, take the quotient category $\ED/\mathfrak{F}$; then take idempotent completion (Karoubi envelope) of $\ED/\mathfrak{F}$ to obtain the boundary phase $\EB_{\mathfrak{F}}=\EuScript{Q}(\ED,\mathfrak{F})$. See our previous work \cite{Jia2022electric} for a detailed discussion of the connection and the difference for both of the approaches. Notice that the Lagrangian algebra $L$ naturally has a Frobenius algebra structure  \cite{Kong2014}.\footnote{For definitions and basic properties of monoidal category and monoidal functor, see, e.g., \cite{etingof2016tensor}. }
    
    \item Since the bulk phase $\ED$ is non-chiral, this means that there is a UFC $\EC$ such that $\ED=\mathcal{Z}(\EC)$. Thus the boundary theory can also be understood at the $\EC$ level.
    Kitaev and Kong show that the boundary is characterized by an indecomposable $\EC$-module category $\EM$ \cite{Kitaev2012a} (notice that $\EM$ is in general not a monoidal category, it is only required to be a finite semisimple Abelian category).
    The boundary excitations are given by the category of $\EC$-module functors $\EC_{\EM}^{\vee}=\Fun_{\EC}(\EM,\EM)$, which is a UFC\,\footnote{When dealing with the boundary defects, it is more natural to set the boundary phase as $\Fun_{\EC}(\EM,\EM)^{\otimes\rm op}$. In this work, to avoid the cluttering of equations, we will omit this opposite tensor.}.
    The boundary vacuum charge is just the identity functor $\id:\EM\to\EM$, which plays the same role as Lagrangian algebra in $\ED$.
    Since the category $\ED=\mathcal{Z}(\EC)$ is equivalent to the category $\Fun_{\EC|\EC}(\EC,\EC)$ (the category of all $\EC|\EC$-bimodule category functors from $\EC$ to $\EC$), the anyon condensation from bulk to the boundary is thus a monoidal functor $\mathbf{Cond}_{\EM}:\Fun_{\EC|\EC}(\EC,\EC)\to \Fun_{\EC}(\EM,\EM)$, whose explicit form is given by \cite{Kong2014} as follows: 
    \begin{equation}
        (\EC\xrightarrow{F} \EC)\mapsto (\EM \simeq \EC\boxtimes \EM \xrightarrow{F\boxtimes \id_{\EM}} \EC\boxtimes \EM\simeq \EM).
    \end{equation}
    The Lagrangian algebra determined by $\mathbf{Cond}_{\EM} $ can be obtained by $L_{\EM}=\mathbf{Cond}_{\EM}^{\vee} (\one_{\EC})$, \emph{viz.}, acting the right adjoint functor of $\mathbf{Cond}_{\EM} $ on the tensor unit of $\EC$.
\end{itemize}

For our model, the bulk is determined by the UFC $\EC=\mathsf{Rep}(H)$ and the bulk topological phase is given by $\ED=\mathcal{Z}(\EC)\simeq \mathsf{Rep}(D(H))$.
As we have presented above, our boundary model is determined by an $H$-comodule algebra $\mathfrak{A}$.
It is natural to ask what the $\EC$-module category corresponding to this boundary is; the answer is 
\begin{equation}
   \text{boundary module category:}\quad {_{\mathfrak{A}}}\EM={_\mathfrak{A}}\mathsf{Mod},
\end{equation}
where ${_\mathfrak{A}}\mathsf{Mod}$ is the category of all left $\FA$-modules (this is mathematically guaranteed by the result in \cite{andruskiewitsch2007module}).
To show that this is indeed a $\mathsf{Rep}(H)$-module, first notice that ${_\mathfrak{A}}\mathsf{Mod}$ is finite semisimple, where the simple objects are just simple $\mathfrak{A}$-modules. Since $\dim \mathfrak{A} < \infty$, there are finite simple objects up to equivalence.
The action of $\mathsf{Rep}(H)$ over ${_\mathfrak{A}}\mathsf{Mod}$ is given by the usual tensor product $X\otimes M$ for $X\in \mathsf{Rep}(H)$ and $M\in {_\mathfrak{A}}\mathsf{Mod}$.
We only need to check that there is a left $\mathfrak{A}$-module structure over $X\otimes M$, more explicitly, a map $\mu_{X\otimes M}:\mathfrak{A}\otimes (X\otimes M)\to X\otimes M$.
In fact, it is easy to verify that this structure is given by (in diagrammatic representation)
\begin{equation}
    \mu_{X\otimes M}= \begin{aligned}
        \begin{tikzpicture}
			\draw[black, line width=1.0pt] (-0.1,-0.8) arc (180:360:0.3);
			\draw[black, line width=1.0pt] (1,0) arc (180:90:0.5);
			\draw[black, line width=1.0pt] (-0.1,0) arc (180:90:0.6);
			\draw[black, line width=1.0pt] (-0.1,0)--(-0.1,-0.8);
			\braid[
				width=0.5cm,
				height=0.3cm,
				line width=1.0pt,
				style strands={1}{black},
				style strands={2}{black}] (Kevin)
				s_1^{-1} ;
	    	\draw[black, line width=1.0pt] (1.0,-0.8)--(1.0,-1.35);
	    	\draw[black, line width=1.0pt] (0.2,-1.1)--(0.2,-1.35);
	    	\draw[black, line width=1.0pt] (1.5,0.9)--(1.5,-1.35);
	    	 \draw[black, line width=1.0pt] (0.5,0)--(0.5,0.9);
	    	\node[ line width=0.2pt, dashed, draw opacity=0.5] (a) at (0.2,-1.6){$\mathfrak{A}$};
	    	\node[ line width=0.2pt, dashed, draw opacity=0.5] (a) at (1,-1.6){$X$};
	    	\node[ line width=0.2pt, dashed, draw opacity=0.5] (a) at (1.5,-1.6){$M$};
	    	\node[ line width=0.2pt, dashed, draw opacity=0.5] (a) at (0.55,1.1){$X$};
	    	\node[ line width=0.2pt, dashed, draw opacity=0.5] (a) at (1.5,1.1){$M$};
	     	\node[ line width=0.2pt, dashed, draw opacity=0.5] (a) at (0.7,-1){$\mathfrak{A}$};
			\node[ line width=0.2pt, dashed, draw opacity=0.5] (a) at (-0.4,-1){$H$};
			\node[ line width=0.2pt, dashed, draw opacity=0.5] (a) at (0.2,-1.1){$\bullet$};
			\node[ line width=0.2pt, dashed, draw opacity=0.5] (a) at (0.5,0.6){$\bullet$};
			\node[ line width=0.2pt, dashed, draw opacity=0.5] (a) at (1.5,0.5){$\bullet$};
			\end{tikzpicture}
    \end{aligned},
\end{equation}
\emph{viz.}, $\mu_{X\otimes M}=(\mu_X\otimes \mu_M)\comp(\id_H\otimes \tau_{\mathfrak{A},X} \otimes \id_{M}) \comp (\beta_{\mathfrak{A}} \otimes \id_X\otimes \id_M)$, where $\mu_X$ and $\mu_M$ are the $H$-module structure and $\mathfrak{A}$-module structure maps of $X$ and $M$ respectively, $\beta_{\mathfrak{A}}$ is the $H$-comodule structure map of $\mathfrak{A}$, and $\tau_{\mathfrak{A},X}$ is the swap map.
We say that $\mathfrak{A}$ is exact if ${_{\mathfrak{A}}}\mathsf{Mod}_{\mathfrak{A}}$ is exact; when $\mathfrak{A}$ is exact, the category ${_{\mathfrak{A}}}\mathsf{Mod}_{\mathfrak{A}}$ is semisimple.
When $\mathfrak{A}$ is $H$-indecomposable,  the category ${_{\mathfrak{A}}}\mathsf{Mod}_{\mathfrak{A}}$ is indecomposable.

Since the $\mathfrak{A}$-boundary excitation can be regarded as a point defect between two $\mathfrak{A}$-boundaries, the boundary anyon is thus described by a functor from the $\mathsf{Rep}(H)$-module category ${_\mathfrak{A}}\mathsf{Mod}$ to itself.
It can be proved \cite[Theorem 1.21]{andruskiewitsch2007module} that for each such functor $F$ there is an $\mathfrak{A}|\mathfrak{A}$-bimodule $Y$ such that $F$ is naturally isomorphic to $Y\otimes_{\mathfrak{A}} \bullet$.
Thus the boundary anyons are classified by $\mathfrak{A}|\mathfrak{A}$-bimodules $Y$.
A left $H$-covariant $\FA|\FA$-bimodule is an $\FA|\FA$-bimodule equipped with a left $H$-coaction such that the coaction is a morphism of $\FA|\FA$-bimodules.
To be more clear, for two $H$-comodule algebras $\FA$ and $\FB$, the $\FA|\FB$ biaction on $H\otimes M$ (with $M$ an $\FA|\FB$-bimodule) is defined as follows
\begin{equation}
    a\cdot(h\otimes m)\cdot b= \sum_{[a],[b]} a^{[-1]}hb^{[-1]} \otimes a^{[0]}mb^{[0]}.
\end{equation}
An $H$-covariant $\FA|\FB$-module $M$ is left $H$-comodule with coaction $\beta: M\to H\otimes M$ such that $\beta$ is an $\FA|\FB$-bimodule map.
This means that 
\begin{equation}
   \sum_{[a\cdot m \cdot b]} (a\cdot m \cdot b)^{[-1]} \otimes  (a\cdot m \cdot b)^{[0]}= \sum_{[a],[m],[b]} a^{[-1]}\cdot m^{[-1]} \cdot b^{[-1]} \otimes a^{[0]}\cdot m^{[0]} \cdot b^{[0]}.
\end{equation}
We denote the category of all $H$-covariant $\FA|\FB$-bimodules as ${^H_{\mathfrak{A}}}\mathsf{Mod}_{\mathfrak{B}}$\,\footnote{For right $H$-comodule algebras, we can similarly define ${_{\mathfrak{A}}}\mathsf{Mod}_{\mathfrak{B}}^H$.}.
It can be proved that the module functor category from ${_\mathfrak{A}}\mathsf{Mod}$ to itself is equivalent to ${^H_{\mathfrak{A}}}\mathsf{Mod}_{\mathfrak{A}}$ \cite{andruskiewitsch2007module}.
This implies that the $\mathfrak{A}$-boundary topological phase is given by 
\begin{equation}
 \text{boundary excitation:}\quad   \EB_{\mathfrak{A}}\simeq \mathsf{Fun}_{\mathsf{Rep}(H)}({_\mathfrak{A}}\mathsf{Mod},{_\mathfrak{A}}\mathsf{Mod})\simeq {^H_{\mathfrak{A}}}\mathsf{Mod}_{\mathfrak{A}}.
\end{equation}
This point of view can be naturally generalized to the boundary defect between two boundaries determined by two $H$-comodule algebras $\mathfrak{A}$ and $\mathfrak{B}$. The direction of the boundary now matters, an $\mathfrak{A}|\mathfrak{B}$-defect is different from a $\mathfrak{B}|\mathfrak{A}$-defect. 
For boundary defect from $\mathfrak{A}$-boundary to $\mathfrak{B}$-boundary, the boundary defects are classified by functors from the $\mathsf{Rep}(H)$-module category ${_{\mathfrak{A}}}\mathsf{Mod}$ to the $\mathsf{Rep}(H)$-module category ${_{\mathfrak{B}}}\mathsf{Mod}$. Equivalently, they are classified by $\mathfrak{B}|\mathfrak{A}$-bimodules $Y\in {_{\mathfrak{B}}}\mathsf{Mod}_{\mathfrak{A}}$ with the associated functor given by $Y\otimes_{\mathfrak{A}}\bullet$. Similar to the boundary excitations,  $Y$ needs to be $H$-covariant.
This result of a classification of boundary defects is a generalization of the Eilenberg-Watts theorem \cite{eilenberg1960abstract,watts1960intrinsic}.
To summarize, we have the following result:

\begin{proposition} \label{prop:bdd-top-exc1}
For a quantum double phase with bulk Hopf algebra $H$:
\begin{enumerate}
    \item For a gapped boundary determined by a left $H$-comodule algebra $\mathfrak{A}$, the boundary excitations are classified by left $H$-covariant $\FA|\FA$-bimodules.
\item For two boundaries determined by $H$-comodule algebras $\mathfrak{A}$ and $\FB$, the boundary defects from $\FA$-boundary to $\FB$-boundary are classified by $H$-covariant $\FB|\FA$-bimodules.
\end{enumerate}
\end{proposition}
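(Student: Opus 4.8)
The plan is to assemble the proposition from results that the paper has already invoked in the surrounding discussion, chiefly the structure theorems of Andruskiewitsch--Mombelli (cited as \cite{andruskiewitsch2007module}) together with the Kitaev--Kong picture \cite{Kitaev2012a} of gapped boundaries as module categories over the input fusion category $\EC=\mathsf{Rep}(H)$. The key identification to make precise is that the boundary determined by an $H$-comodule algebra $\FA$ is the $\EC$-module category ${_\FA}\mathsf{Mod}$, with the $\EC$-action $X\otimes M$ equipped with the left $\FA$-module structure $\mu_{X\otimes M}=(\mu_X\otimes\mu_M)\comp(\id_H\otimes\tau_{\FA,X}\otimes\id_M)\comp(\beta_\FA\otimes\id_X\otimes\id_M)$ displayed above. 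So the first step is to verify that this formula indeed defines an associative, unital left $\FA$-module structure making $\otimes$ into a $\EC$-module structure on ${_\FA}\mathsf{Mod}$ — a routine check using the comodule-algebra axiom $\beta_\FA(ab)=\beta_\FA(a)\beta_\FA(b)$, $\beta_\FA(1)=1\otimes1$, the $H$-module axioms on $X$, and naturality of the swap. Exactness/indecomposability follows from $\FA$ being $H$-indecomposable via the already-cited equivalence of conditions.

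Second, invoke the Kitaev--Kong classification: boundary excitations of the quantum double phase $\ED=\mathcal{Z}(\EC)$ sitting on the $\FA$-boundary are the objects of $\Fun_\EC({_\FA}\mathsf{Mod},{_\FA}\mathsf{Mod})$, and for a defect between the $\FA$-boundary and the $\FB$-boundary they are the objects of $\Fun_\EC({_\FA}\mathsf{Mod},{_\FB}\mathsf{Mod})$. The remaining work is the Eilenberg--Watts-type identification of these functor categories with (covariant) bimodule categories. For this I would cite \cite[Theorem 1.21]{andruskiewitsch2007module}: a $\EC$-module functor ${_\FA}\mathsf{Mod}\to{_\FB}\mathsf{Mod}$ is naturally isomorphic to $Y\otimes_\FA\bullet$ for an $\FB|\FA$-bimodule $Y$, and conversely such a $Y$ induces a module functor precisely when $Y$ carries a compatible left $H$-coaction making it $H$-covariant; that is, $\beta_Y:Y\to H\otimes Y$ is a morphism of $\FB|\FA$-bimodules with respect to the biaction $a\cdot(h\otimes m)\cdot b=\sum a^{[-1]}hb^{[-1]}\otimes a^{[0]}mb^{[0]}$ spelled out above. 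Tracking the $\EC$-module functoriality constraint $F(X\otimes M)\simeq X\otimes F(M)$ through $F=Y\otimes_\FA\bullet$ shows it is equivalent to exactly this covariance condition, so $\Fun_\EC({_\FA}\mathsf{Mod},{_\FB}\mathsf{Mod})\simeq {^H_\FB}\mathsf{Mod}_\FA$ (and ${^H_\FA}\mathsf{Mod}_\FA$ for the non-defect case $\FB=\FA$), giving both claims.

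The two assertions then follow by specialization: claim 1 is the $\FB=\FA$ case, where the boundary phase $\EB_\FA\simeq{^H_\FA}\mathsf{Mod}_\FA$ as already asserted in the text; claim 2 is the general case, with the directionality observation ($\FA|\FB$ versus $\FB|\FA$) matching the fact that ${_\FA}\mathsf{Mod}$ is a left $\EC$-module and module functors compose covariantly, so a defect running from $\FA$-boundary to $\FB$-boundary corresponds to a functor ${_\FA}\mathsf{Mod}\to{_\FB}\mathsf{Mod}$ and hence to an $\FB|\FA$-bimodule. I would also note the consistency check against the special cases in Examples~\ref{exp:smoothBd} and \ref{exp:roughBd}: for $\FA=H$ one gets ${^H_H}\mathsf{Mod}_H\simeq\mathsf{Rep}(H)$, and for $\FA=\mathbb{C}1_H$ one gets ${^H_{\mathbb{C}}}\mathsf{Mod}_{\mathbb{C}}\simeq\mathsf{Rep}(\hat H)$, matching the claimed boundary excitation categories.

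The main obstacle is the careful translation of the $\EC$-module functor constraint into the $H$-covariance condition on the bimodule — i.e., making precise that the structure isomorphism $Y\otimes_\FA(X\otimes M)\xrightarrow{\sim} X\otimes(Y\otimes_\FA M)$ of a module functor exists and satisfies the pentagon/unit axioms if and only if $Y$ admits the left $H$-coaction compatible with the $\FB|\FA$-biaction. This is precisely the content of the Andruskiewitsch--Mombelli theorem, so strictly speaking it can be quoted; but if one wants a self-contained argument, reconstructing the coaction $\beta_Y$ from the module-functor structure morphism evaluated on the regular comodule, and then checking $H$-comodule coassociativity and bimodule compatibility, is where the real work lies. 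Everything else — the $\EC$-module structure on ${_\FA}\mathsf{Mod}$, semisimplicity, indecomposability — is bookkeeping from cited facts.
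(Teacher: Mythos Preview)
Your proposal is correct and follows essentially the same approach as the paper: the proposition is stated as a summary of the discussion immediately preceding it, which---just as you outline---identifies the boundary with the $\mathsf{Rep}(H)$-module category ${_\FA}\mathsf{Mod}$ via the explicit action $\mu_{X\otimes M}$, invokes the Kitaev--Kong picture of boundary excitations/defects as $\EC$-module functors, and then cites \cite[Theorem 1.21]{andruskiewitsch2007module} for the Eilenberg--Watts-type equivalence $\Fun_{\mathsf{Rep}(H)}({_\FA}\mathsf{Mod},{_\FB}\mathsf{Mod})\simeq{^H_\FB}\mathsf{Mod}_\FA$. Your additional consistency checks against the smooth/rough boundary examples and your explicit flagging of the covariance translation as the substantive step are useful elaborations, but the core argument is identical.
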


\begin{table}[t]
\centering \small 
\begin{tabular} {|l|c|c|} 
\hline
   &Hopf quantum double model & String-net model   \\ \hline
 Bulk  & Hopf algebra $H$ & UFC $\EC=\mathsf{Rep}(H)$  \\ \hline
 Bulk phase  & 
 $\ED=\mathsf{Rep}(D(H))$ & $\mathsf{Fun}_{\mathsf{Rep}(H)|\mathsf{Rep}(H)}(\mathsf{Rep}(H),\mathsf{Rep}(H)) $  \\ \hline
 Boundary & $H$-comodule algebra $\mathfrak{A}$ & $\mathsf{Rep}(H)$-module category ${_{\mathfrak{A}}}\EM={_{\mathfrak{A}}}\mathsf{Mod}$ \\\hline
 Boundary phase & $\EB\simeq {^H_{\mathfrak{A}}}\mathsf{Mod}_{\mathfrak{A}}$ & $\mathsf{Fun}_{\mathsf{Rep}(H)}({_{\mathfrak{A}}}\EM,{_{\mathfrak{A}}}\EM)$ \\\hline
Boundary defect & ${^H_{\mathfrak{B}}}\mathsf{Mod}_{\mathfrak{A}}$ & $\mathsf{Fun}_{\mathsf{Rep}(H)}({_{\mathfrak{A}}}\EM,{_{\mathfrak{B}}}\EM)$ \\\hline
\end{tabular}
\caption{The dictionary between left $H$-comodule algebra description of Hopf quantum double boundary and string-net boundary.\label{tab:bdTop1}}
\end{table}

We have shown that from our boundary model, there is a corresponding module category.
Conversely, for a given $\mathsf{Rep}(H)$-module $\EM$, we can also find a corresponding $H$-comodule algebra $\mathfrak{A}$ such that $\EM\simeq {_{\mathfrak{A}}}\mathsf{Mod}$.
The main mathematical tool for doing this is the internal Hom proposed in Refs.~\cite{etingof2003finite,ostrik2003module}.
We say that $M\in \EM$ generates $\EM$ if for any $N\in \EM$, there exists $X\in \EC$ such that $\Hom(X\otimes M,N)\neq 0$.
For indecomposable $\EM$, all non-zero simple objects are generators.
For finite $\EC$-module category $\EM$, the functor $\Hom(\bullet\otimes M_1,M_2)$ from $\EC$ to $\mathsf{Vect}$ is a representable functor, and 
the object in $\EC$ that represents this functor is called the internal Hom and denoted as $\underline{\Hom}(M_1,M_2)$; more explicitly,
\begin{equation}\label{eq:HomInt1}
    \Hom(X\otimes M_1,M_2)\cong \Hom(X,\underline{\Hom}(M_1,M_2))
\end{equation}
for all $X\in \EC$ and $M_1,M_2\in \EM$.
In diagrammatic representation, this reads
\begin{equation} \label{eq:IntHom}
   \begin{aligned}
        \begin{tikzpicture}
        	\draw (-0.1,0) rectangle (0.4,0.4);
			\draw[black, line width=1.0pt] (0.3,-0.5)--(0.3,0);
			\draw[black, line width=1.0pt] (0,-0.5)--(0,0);
			\draw[black, line width=1.0pt] (0.15,0.4)--(.15,0.9);
   	   	    \node[ line width=0.2pt, dashed, draw opacity=0.5] (a) at (-0.3,-0.3){$X$};
   	        \node[ line width=0.2pt, dashed, draw opacity=0.5] (a) at (0.7,-0.3){$M_1$};
   	        \node[ line width=0.2pt, dashed, draw opacity=0.5] (a) at (0.7,0.7){$M_2$};
   	        \node[ line width=0.2pt, dashed, draw opacity=0.5] (a) at (0.15,0.2){$u$};
			\end{tikzpicture}
    \end{aligned}  
    \xleftrightarrow[]{~\simeq~~}\ \ 
       \begin{aligned}
        \begin{tikzpicture}
        	\draw (-0.1,0) rectangle (0.4,0.4);
			\draw[black, line width=1.0pt] (0.15,0.4)--(.15,0.9);
			\draw[black, line width=1.0pt] (0.15,-0.5)--(.15,0);
   	        \node[ line width=0.2pt, dashed, draw opacity=0.5] (a) at (0.7,-0.3){$X$};
   	        \node[ line width=0.2pt, dashed, draw opacity=0.5] (a) at (1.5,0.7){$\underline{\Hom}(M_1,M_2)$};
   	        \node[ line width=0.2pt, dashed, draw opacity=0.5] (a) at (0.15,0.2){$\tilde{u}$};
			\end{tikzpicture}
    \end{aligned}. 
\end{equation}
Choosing $M\in \EM$ which generates $\EM$, then 
the internal Hom $\mathfrak{M}:=\underline{\Hom}(M,M)$ is an algebra in $\EC$.
To see this, set $X=\mathfrak{M}\otimes \mathfrak{M}$ and $M_1=M_2=M$ in Eq.~\eqref{eq:HomInt1}, we obtain 
\begin{equation}\label{eq:IntHom2}
    \Hom(\mathfrak{M}\otimes \mathfrak{M},\mathfrak{M})\simeq \Hom ((\mathfrak{M}\otimes \mathfrak{M})\otimes M,M).
\end{equation}
And set $X=\mathfrak{M}$ in Eq.~\eqref{eq:IntHom} and $M_1=M_2=M$, we obtain a map $\mu_M:\mathfrak{M}\otimes M\to M$ which corresponds to $\id_{\mathfrak{M}}:\mathfrak{M}\to \mathfrak{M}$.
The multiplication map of the algebra $\mathfrak{M}$ is, under the isomorphism \eqref{eq:IntHom2}, given by 
\begin{equation}
   \mu_{\mathfrak{M}}= \mu_M\comp (\id_{\mathfrak{M}} \otimes \mu_M)\comp \alpha_{\mathfrak{M},\mathfrak{M},M},
\end{equation}
where $\alpha_{\mathfrak{M},\mathfrak{M},M}$ is the associator mapping $(\mathfrak{M}\otimes \mathfrak{M})\otimes M$ to $\mathfrak{M}\otimes( \mathfrak{M}\otimes M)$.
It can be proved that the $\EC$-module category $\EM$ is equivalent to the category of all right $\mathfrak{M}$-modules in $\EC$, $\EM\simeq \mathsf{Mod}_{\mathfrak{M}}(\EC)=:\EC_{\mathfrak{M}}$ (see \cite{ostrik2003module}). Here an $\FM$-module $M\in \EC_{\FM}$ is also a left $H$-module, and the $\mathfrak{M}$-action and $H$-action are compatible, thus we can also denote this category as ${_H}\Mod_{\mathfrak{M}}$.

The opposite algebra $\mathfrak{M}^{\rm op}$ is an $H^{\rm cop}$-module algebra. Taking the smash product of $\mathfrak{M}^{\rm op}$ and $H^{\rm cop}$ we obtain an $H$-comodule algebra (see Appendix \ref{sec:app_hopf} for the definition of smash product)
\begin{equation} \label{eq:mod-comod}
    \mathfrak{A}:=\mathfrak{M}^{\rm op} \# H^{\rm cop}.
\end{equation}
It is showed in \cite{andruskiewitsch2007module} that
${_{\mathfrak{A}}}\mathsf{Mod}\simeq \EC_{\FM} \simeq \EM$.
To summarize, for any $\EC$-module category $\EM$, we can find a corresponding $H$-comodule algebra $\mathfrak{A}$ such that this $\EM$-module boundary can be regarded as an $\mathfrak{A}$-boundary of the Hopf quantum double phase.
Thus the boundary can be equivalently characterized by an $H$-module algebra $\mathfrak{M}$ or an $H$-comodule algebra $\mathfrak{A}$. 
Note that for a right $\FM$-module $M$, the left $\FA$-action on $M$ is given by
\begin{equation}
    (x\# h)\triangleright m =(h\triangleright m) \triangleleft x
\end{equation}
for all $x\in \FM$, $h\in H$ and $m\in M$.

\begin{proposition} \label{prop:bdd-top-exc2}
For a quantum double phase with bulk Hopf algebra $H$: 
\begin{enumerate}
    \item For a gapped boundary determined by $\FM$, the boundary excitations are classified by $\FM|\FM$-bimodules in $\mathsf{Rep}(H)$.
    \item For two gapped boundaries determined by  $\FM$ and $\FN$, the boundary defects from $\FM$-boundary to $\FN$-boundary are classified by $\FN|\FM$-bimodules in $\mathsf{Rep}(H)$.
\end{enumerate}
\end{proposition}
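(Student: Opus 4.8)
The plan is to obtain Proposition~\ref{prop:bdd-top-exc2} as the ``module algebra'' reformulation of Proposition~\ref{prop:bdd-top-exc1}, transported along the comodule/module algebra correspondence of Eq.~\eqref{eq:mod-comod}. Given an $H$-module algebra $\FM$, recall that we set $\FA := \FM^{\rm op}\#H^{\rm cop}$ and that, by \cite{andruskiewitsch2007module,ostrik2003module}, there is an equivalence of $\mathsf{Rep}(H)$-module categories ${_\FA}\Mod \simeq \Mod_{\FM}(\mathsf{Rep}(H)) =: \EC_{\FM}$, sending a right $\FM$-module $M$ in $\mathsf{Rep}(H)$ to itself equipped with the left $\FA$-action $(x\#h)\triangleright m = (h\triangleright m)\triangleleft x$. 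Part~1 of Proposition~\ref{prop:bdd-top-exc1} already identifies the $\FM$-boundary phase with $\mathsf{Fun}_{\mathsf{Rep}(H)}(\EC_{\FM},\EC_{\FM})$, so the only thing left to prove is that this functor category is equivalent to the category ${_\FM}\Mod_{\FM}(\mathsf{Rep}(H))$ of $\FM$-bimodule objects in $\mathsf{Rep}(H)$.

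For that I would use the internal-Hom machinery already introduced around Eqs.~\eqref{eq:HomInt1} and \eqref{eq:IntHom2}: for an algebra $\FM$ in the finite tensor category $\EC=\mathsf{Rep}(H)$, every $\EC$-module endofunctor $F$ of $\Mod_{\FM}(\EC)$ is naturally isomorphic to $Y\otimes_{\FM}\bullet$ for a \emph{unique} $\FM$-bimodule $Y$ in $\EC$ --- namely $Y=F(\FM)$, which carries a right $\FM$-action because $F$ is a module functor and a left $\FM$-action inherited from the $\FM$-bimodule structure on $\FM$ itself --- with composition of functors corresponding to $\otimes_{\FM}$; conversely every $\FM$-bimodule in $\EC$ yields such a functor. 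This is the Eilenberg--Watts/internal-Hom theorem for module categories (see \cite{ostrik2003module,etingof2016tensor}), precisely parallel to the statement used in Proposition~\ref{prop:bdd-top-exc1} that a $\mathsf{Rep}(H)$-module endofunctor of ${_\FA}\Mod$ is $Y\otimes_{\FA}\bullet$ for a left $H$-covariant $\FA$-bimodule $Y$. Tracing the equivalence ${_\FA}\Mod\simeq\EC_{\FM}$ through, one checks that the left $H$-covariance condition on an $\FA$-bimodule translates exactly into the compatibility between the $H$-action and the two $\FM$-actions that makes $Y$ an object of ${_\FM}\Mod_{\FM}(\mathsf{Rep}(H))$; since both categories are thereby equivalent to the same $\mathsf{Fun}_{\mathsf{Rep}(H)}(\EM,\EM)$, part~1 follows.

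Part~2 is the relative (two-boundary) version: for $\FM$ and $\FN$ one has the module categories $\EC_{\FM}$ and $\EC_{\FN}$, a defect from the $\FM$-boundary to the $\FN$-boundary is a $\mathsf{Rep}(H)$-module functor $\EC_{\FM}\to\EC_{\FN}$, and the same argument --- now with $\underline{\Hom}$ taken between objects of the two module categories --- shows every such functor is $Y\otimes_{\FM}\bullet$ for a unique $(\FN,\FM)$-bimodule $Y$ in $\mathsf{Rep}(H)$, giving an equivalence $\mathsf{Fun}_{\mathsf{Rep}(H)}(\EC_{\FM},\EC_{\FN})\simeq{_\FN}\Mod_{\FM}(\mathsf{Rep}(H))$. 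One then fixes the orientation convention so that ``from $\FM$ to $\FN$'' means $\FN$ acts on the left and $\FM$ on the right, matching the $\FB|\FA$ convention of Proposition~\ref{prop:bdd-top-exc1}.

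The step I expect to be the actual work is not a single computation but the bookkeeping of conventions: verifying that the ${\rm op}$ and ${\rm cop}$ twists in $\FA=\FM^{\rm op}\#H^{\rm cop}$ interchange left and right structures as claimed, and that under ${_\FA}\Mod\simeq\EC_{\FM}$ a left $H$-covariant $\FA|\FA$-bimodule corresponds to an honest $\FM$-bimodule in $\mathsf{Rep}(H)$ (rather than, say, an $\FM^{\rm op}$-bimodule or a bimodule in $\mathsf{Rep}(H^{\rm cop})$). This is exactly the point settled in \cite{andruskiewitsch2007module,ostrik2003module}, so the main obstacle is merely to make that dictionary explicit enough to state Proposition~\ref{prop:bdd-top-exc2} cleanly; no new mathematics beyond Proposition~\ref{prop:bdd-top-exc1} is needed.
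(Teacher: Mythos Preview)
Your proposal is correct and follows essentially the same route as the paper: the proposition is stated there without a separate proof, as an immediate reformulation of the preceding discussion (the internal-Hom construction giving $\EM\simeq\EC_{\FM}$ together with the Eilenberg--Watts identification $\mathsf{Fun}_{\EC}(\EC_{\FM},\EC_{\FN})\simeq{_\FN}\EC_{\FM}$, parallel to what was just done for comodule algebras in Proposition~\ref{prop:bdd-top-exc1}). You have in fact made the bookkeeping of the ${\rm op}/{\rm cop}$ twists and the dictionary with \cite{andruskiewitsch2007module,ostrik2003module} more explicit than the paper does.
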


For the bulk Hopf algebra $H$, the boundary $H$-comodule algebra $\mathfrak{A}$ and the corresponding boundary $H$-module algebra $\mathfrak{M}$ are also correlated by the Yan-Zhu stabilizer \cite{yan1998stabilizer}. It is proved \cite{andruskiewitsch2007module} that,
as $\hat{H}$-comodule algebra (equivalently, as $H$-module algebra)
\begin{equation}
    \mathfrak{M}=\underline{\Hom}(M,M) \cong \operatorname{Stab}_{\mathfrak{A}}(M),
\end{equation}
where $M$ is a nonzero $\mathfrak{A}$-module and $\operatorname{Stab}_{\mathfrak{A}}(M)$ is the Yan-Zhu stabilizer.

\begin{table}[t]
\centering \small 
\begin{tabular} {|l|c|c|} 
\hline
   &Hopf quantum double model & String-net model   \\ \hline
 Bulk  & Hopf algebra $H$ & UFC $\EC=\mathsf{Rep}(H)$  \\ \hline
 Bulk phase  & 
 $\ED=\mathsf{Rep}(D(H))$ & $\mathsf{Fun}_{\mathsf{Rep}(H)|\mathsf{Rep}(H)}(\mathsf{Rep}(H),\mathsf{Rep}(H)) $  \\ \hline
 Boundary & $H$-module algebra $\mathfrak{M}$ & $\EC_{\FM}$ \\\hline
 Boundary phase & $\EB\simeq  {_{\FM}}\EC_{\FM}$ & $\mathsf{Fun}_{\EC}(\EC_{\FM},\EC_{\FM})$ \\\hline
Boundary defect & ${_{\FN}}\EC_{\FM}$ & $\mathsf{Fun}_{\EC}(\EC
_{\FM},\EC_{\FN})$ \\\hline
\end{tabular}
\caption{The dictionary between $H$-module algebra description of Hopf quantum double boundary and string-net boundary.\label{tab:bdTop2}}
\end{table}

We have established a complete theory about the gapped boundary of the Hopf quantum double phase.
However, it is possible that two different $H$-comodule algebras give the same topological boundary $\EM$.
This has been reflected in the above discussion of constructing the $H$-comodule algebra $\mathfrak{A}$ from a given $\EC$-module $\EM$: by choosing different non-zero simple objects $M\in\EM$, we obtain different $H$-module algebras $\mathfrak{M}=\underline{\Hom}(M,M)$, and they all give the same $\EC$-module category $\EM$.
In Hopf algebraic language, this is captured by the equivariant Morita equivalence of $H$-comodule algebras (see Appendix \ref{sec:app_hopf} for a detailed discussion).
If two $H$-comodule algebras $\mathfrak{A}$ and $\mathfrak{B}$ are equivariantly Morita equivalent, then ${_{\mathfrak{A}}}\mathsf{Mod} \simeq {_{\mathfrak{B}}}\mathsf{Mod} $, \emph{viz.}, they give the same topological boundary.
For $H$-module algebra description, the same result holds.

\begin{proposition}
   Two boundary models determined by $H$-comodule algebras  $\mathfrak{A}$ and $\mathfrak{B}$ (resp. $H$-module algebras $\mathfrak{M}$ and $\mathfrak{N}$) are topologically equivalent if and only if $\mathfrak{A}$ and $\mathfrak{B}$ (resp. $\mathfrak{M}$ and $\mathfrak{N}$) are equivariantly Morita equivalent.
\end{proposition}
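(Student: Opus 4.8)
\emph{Proof strategy.} The plan is to reduce the claim to the dictionary between $\mathsf{Rep}(H)$-module categories and $H$-comodule algebras up to equivariant Morita equivalence established in \cite{andruskiewitsch2007module}. First I would pin down what ``topologically equivalent'' means: by Proposition~\ref{prop:bdd-top-exc1} and the Kitaev--Kong correspondence recorded in Table~\ref{tab:bdTop1}, the boundary built from an $H$-comodule algebra $\FA$ is the boundary labelled by the $\mathsf{Rep}(H)$-module category ${_{\FA}}\Mod$, and its excitation fusion category $\EB_{\FA}\simeq{^H_{\FA}}\Mod_{\FA}$ together with the bulk-to-boundary condensation functor is entirely determined by ${_{\FA}}\Mod$ as a $\mathsf{Rep}(H)$-module category. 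Hence the $\FA$- and $\FB$-boundaries are topologically equivalent exactly when ${_{\FA}}\Mod\simeq{_{\FB}}\Mod$ as $\mathsf{Rep}(H)$-module categories; under the dictionary $\FA=\FM^{\rm op}\#H^{\rm cop}$ of Eq.~\eqref{eq:mod-comod} this is the same as $\EC_{\FM}\simeq\EC_{\FN}$ as $\EC$-module categories, $\EC=\mathsf{Rep}(H)$. So it suffices to prove: ${_{\FA}}\Mod\simeq{_{\FB}}\Mod$ as $\EC$-module categories if and only if $\FA$ and $\FB$ are equivariantly Morita equivalent.

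For the ``if'' direction I would argue directly. Given an equivariant Morita context, i.e.\ an $H$-covariant $\FA|\FB$-bimodule $P$ and an $H$-covariant $\FB|\FA$-bimodule $Q$ with $H$-covariant isomorphisms $P\otimes_{\FB}Q\cong\FA$ and $Q\otimes_{\FA}P\cong\FB$, the functor $P\otimes_{\FA}(-)\colon{_{\FA}}\Mod\to{_{\FB}}\Mod$ is an equivalence of abelian categories with quasi-inverse $Q\otimes_{\FB}(-)$. The $H$-covariance of $P$ supplies a natural isomorphism $P\otimes_{\FA}(X\otimes M)\cong X\otimes(P\otimes_{\FA}M)$ making this a $\mathsf{Rep}(H)$-module functor: recall from Section~\ref{sec:bdd-top-exc} that the $\FA$-action on $X\otimes M$ is assembled from the coaction $\beta_{\FA}$, which is exactly the structure the covariance of $P$ is compatible with, and the module-functor coherences follow from the bimodule axioms for $P$. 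Hence ${_{\FA}}\Mod\simeq{_{\FB}}\Mod$ as $\EC$-module categories and the two boundaries are topologically equivalent.

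For the ``only if'' direction, suppose $F\colon{_{\FA}}\Mod\xrightarrow{\ \sim\ }{_{\FB}}\Mod$ is an equivalence of $\EC$-module categories. I would pick a generator $M$ of ${_{\FA}}\Mod$ (any nonzero simple object, using $H$-indecomposability) and set $\FM=\underline{\Hom}(M,M)$, so ${_{\FA}}\Mod\simeq\EC_{\FM}$ by \cite{ostrik2003module}; then $F(M)$ generates ${_{\FB}}\Mod$, and since an $\EC$-module equivalence preserves internal Homs, $\underline{\Hom}(F(M),F(M))\cong\FM$ as algebras in $\EC$. Combining this with the standard fact that $\EC_{\FM}\simeq\EC_{\FN}$ as $\EC$-module categories is equivalent to $\FM$ and $\FN$ being Morita equivalent inside $\EC$ (the connecting object being an internal Hom between generators), I obtain an invertible $\FM|\FN$-bimodule object in $\EC$. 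Finally I would push this through the smash product: $(-)\#H^{\rm cop}$ is functorial on bimodules and carries Morita contexts in $\EC$ to equivariant Morita contexts of the associated $H$-comodule algebras --- this is precisely the compatibility of the two descriptions of gapped boundaries proven in \cite{andruskiewitsch2007module} --- yielding an invertible $H$-covariant $\FA|\FB$-bimodule, so $\FA$ and $\FB$ are equivariantly Morita equivalent. The $H$-module algebra version (for $\FM$ and $\FN$, cf.\ Proposition~\ref{prop:bdd-top-exc2}) is the same statement transported along $\FA=\FM^{\rm op}\#H^{\rm cop}$.

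I expect the main obstacle to lie in the equivariant bookkeeping of the ``only if'' direction: one must verify that the bimodule extracted from the internal Hom is genuinely $H$-covariant and invertible in the equivariant sense, i.e.\ that the $H$-coaction threads consistently through the defining adjunction $\Hom(X\otimes M_1,M_2)\cong\Hom(X,\underline{\Hom}(M_1,M_2))$ and through the module-functor constraint of $F$. This is exactly the point where the structural results of \cite{andruskiewitsch2007module} are needed and where a careful diagram chase is unavoidable; the remaining steps are formal manipulations of the definitions already set up in Section~\ref{sec:bdd-top-exc}.
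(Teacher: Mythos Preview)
Your proposal is correct and follows the same route as the paper: both reduce ``topologically equivalent'' to ``equivalent as $\mathsf{Rep}(H)$-module categories'' via Table~\ref{tab:bdTop1}, and then invoke the classification of indecomposable exact $\mathsf{Rep}(H)$-module categories by $H$-comodule algebras up to equivariant Morita equivalence from \cite{andruskiewitsch2007module} (recorded as Theorem~\ref{thm:appAMod}). The paper in fact does not give a separate proof block---the proposition is simply stated as a consequence of the preceding discussion and that classification theorem---so your explicit unpacking of the ``if'' direction (tensoring by an $H$-covariant bimodule gives a $\EC$-module equivalence) and the ``only if'' direction (internal Hom of generators yields the equivariant Morita context) is more than the paper provides, but it is exactly the content of the cited result.
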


Notice that from Yan-Zhu stabilizer and Eq.~\eqref{eq:mod-comod}, we will obtain an $H$-comodule algebra $\FA'$. This $\FA'$ may be different from $\FA$, but they are Morita equivalent, thus they give the same gapped boundary.

To summarize, a gapped boundary for Hopf quantum double phase is determined by Morita equivalent class of $H$-comodule algebras or $H$-module algebras.   
The module category of the corresponding boundary is given by the module category over these algebras. The topological excitations are classified by bimodules over these algebras.

\begin{example}
Consider the smooth and rough boundaries in Examples \ref{exp:smoothBd} and \ref{exp:roughBd}.
For smooth boundary, $\mathfrak{M}_s=\mathbb{C}$, and the boundary module category is $\EM_{s}\simeq \mathsf{Mod}_{\mathbb{C}}(\mathsf{Rep}(H))\simeq \mathsf{Rep}(H)$; for rough boundary, $\mathfrak{M}=\hat{H}$, and the boundary module is $\EM_r\simeq \mathsf{Mod}_{\hat{H}}(\mathsf{Rep}(H))\simeq \mathsf{Vect}$. This matches well with the result of lattice construction.  
\end{example}

\begin{example}
    It is worth discussing how to recover the result of boundary theory for finite group quantum double $H=\mathbb{C}[G]$ in our framework.
    From Refs.~\cite{Bombin2008family,Beigi2011the, Kitaev2012a,Cong2017,etingof2003finite,ostrik2003module,andruskiewitsch2007module,natale2017equivalence},  the boundary is described by a pair $(N,\sigma)$ where $N\leq G$ is a subgroup up to conjugacy and $\sigma \in H^2(N,\mathbb{C}^{\times})$ is a $2$-cocycle.
    The twisted group algebra $\mathbb{C}_{\sigma}[N]$ is a $\mathbb{C}[G]$-comodule algebra via $\beta_{\mathbb{C}_{\sigma}[N]}(x)=x\otimes x$ for all $x\in N$.
    The bulk UFC is $\EC=\mathsf{Rep}(G)$ and the boundary is the $\mathsf{Rep}(G)$-module category $\EM(N,\sigma)=\mathsf{Rep}(\tilde{N})$, where $\tilde{N}$ is the extension of $N$ under $\sigma$.
    The $H$-comodule algebra for this boundary is $\FA=\mathbb{C}_{\sigma}[N]$.
    The $H$-module algebra for this boundary is $\FM=\operatorname{Stab}_{\mathbb{C}_{\sigma}[N]}(M)=\mathbb{C}[G]\otimes_{\mathbb{C}[N]} \operatorname{End}(M)$ for some $\mathbb{C}_{\sigma}[N]$-module $M$.
\end{example}

\subsubsection{Mapping quantum double boundary to string-net boundary}

We have proved the topological nature of our Hopf quantum double boundary from an algebraic theory perspective. Let us now prove it from the perspective of lattice Hamiltonian.
We will show how to map a quantum double boundary to a string-net boundary (which realizes an extended Turaev-Viro TQFT, thus is topological), and also discuss, how a string-net boundary can be mapped to a quantum double boundary.

We first introduce the Fourier transformation for a general Hopf algebra $H$ with Haar integral $h_H$:
\begin{equation}
    |\nu; a,b\rangle=|[\nu]_{ab}\rangle = \sqrt{\frac{\dim \nu}{\dim H}} \sum_{(h_H)}D^{\nu}(h_H^{(1)})_{ab}h_H^{(2)},
\end{equation}
where $D^{\nu}$ is a fixed matrix representation of $\nu \in \operatorname{Irr}(H)$, and $a,b=1,\cdots, \dim \nu$. Here $\operatorname{Irr}(H)$ is the set of all equivalence classes of irreducible representations of $H$. 
This orthonormal basis is usually called the fusion basis. We have (Peter–Weyl theorem, a.k.a., Artin-Wedderburn theorem)
\begin{equation}
    H\cong \sum_{\nu \in \operatorname{Irr}(H)} V_{\nu}\otimes V_{\nu^{\vee}},
\end{equation}
and thus $\dim H=\sum_{\nu\in \operatorname{Irr}(H)}(\dim \nu)^2$.  So they form a complete basis.
For the convenience of later discussion, we assume that the Hopf quantum double model is built on a trivalent lattice with boundary\,\footnote{This is just for convenience for discussion; for any other lattice, the generalization is straightforward.}.
The general approach to mapping the quantum double model on a closed surface to an extended string-net model is presented in \cite{Buerschaper2009mapping,buerschaper2013electric,Hu2018full}. A mathematical proof that shows Hopf quantum double model gives the same topological invariant as Turaev-Viro TQFT is presented in \cite{balsam2012kitaevs}.
The input data of an extended string-net model is $(\EC,\omega_{\EC})_{\rm SN}$ with $\EC$ a UFC and $\omega_{\EC}$ a fiber functor, \emph{viz.}, a monoidal functor $\omega_{\EC}:\EC\to \mathsf{Vect}$.
The input data of a Hopf quantum double model is a Hopf algebra $(H)_{\rm QD}$. By setting $\EC=\mathsf{Rep}(H)$, the Hopf quantum double model can be mapped into a string-net model.  
Using Tannaka–Krein reconstruction \cite{etingof2016tensor}, and by setting $H=\operatorname{End}(\omega_{\EC})$ (the natural endomorphism of $\omega_{\EC}$ which has a canonical Hopf algebra structure), the extended string-net model can be mapped into a Hopf quantum double model.

For our case, we only need to consider how to map the boundary of the Hopf quantum double model to the extended string-net boundary.
Recall that extended string-net boundary is determined by a $\EC$-module category $\EM$ together with an additive functor $\omega_{\EM}: \EM \to \mathsf{Vect}$.
They satisfy\,\footnote{Notice that two tensor products are different. The upper one is for $\EC$-module structure over $\EM$, while the lower one is for $\mathsf{Vect}$-module structure over $\mathsf{Vect}$.} 
\begin{equation}\label{eq:CM}
    \begin{tikzcd}
\EC \times \EM \arrow[r, "\otimes"] \arrow[d,"\omega_{\EC}\times \omega_{\EM}"]
& \EM \arrow[d, "\omega_{\EM}"] \\
\mathsf{Vect}\times \mathsf{Vect} \arrow[r, "\otimes" ]
& \mathsf{Vect}
\end{tikzcd}.
\end{equation}

From the previous discussion of algebraic theory we know that $\EM={_{\FA}}\mathsf{Mod}$. We can choose $\omega_{\EM}$ as a forgetful functor. We will establish the following correspondence:

\begin{proposition}
   The Hopf quantum double model with boundary $(H,\mathfrak{A})_{\rm QD}$ is equivalent to the extended string-net model with boundary $(\EC,\omega_{\EC},\EM,\omega_{\EM})_{\rm SN}$:
   \begin{enumerate}
       \item A Hopf quantum double model can be mapped to an extended string-net model by setting: bulk $\EC=\mathsf{Rep}(H)$, and boundary $\EM={_{\mathfrak{A}}}\mathsf{Mod}$.
       \item An extended string-net model with boundary $(\EC,\omega_{\EC},\EM,\omega_{\EM})$ can be mapped to a Hopf quantum double model by setting: $H=\operatorname{End}(\omega_{\EC})$ and $\FA=\FM^{\rm op}\# H^{\rm cop}$ with $\FM=\underline{\rm Hom}(M,M)$ for some nonzero $M\in \EM$.
   \end{enumerate}
\end{proposition}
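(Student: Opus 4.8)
The plan is to prove the stated equivalence by handling the bulk and the boundary separately, then checking that the two identifications are glued compatibly, i.e.\ that they respect the commuting square~\eqref{eq:CM}. For the bulk part I would simply invoke the established correspondence: the Fourier transformation to the fusion basis $|\nu;a,b\rangle$ intertwines the edge Hilbert space $\otimes_e H$ with the string-net Hilbert space on the same trivalent lattice and carries the vertex and face projectors $A_v^H$, $B_f^H$ to the Levin--Wen vertex and plaquette operators for $\EC=\mathsf{Rep}(H)$ equipped with its canonical fiber functor $\omega_{\EC}=\mathrm{Forget}$; conversely Tannaka--Krein reconstruction with $H=\operatorname{End}(\omega_{\EC})$ inverts this. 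Both directions are known \cite{Buerschaper2009mapping,buerschaper2013electric,Hu2018full,balsam2012kitaevs}, so the real work lies entirely at the boundary.

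\emph{Direction QD $\to$ SN.} Given the $H$-comodule algebra $\FA$, I would apply the analogous Fourier transform to each boundary edge space $\mathcal{H}_{e_b}=\FA$. Since $\FA$ is a right $H$-comodule algebra it is equivalently a right $\hat H$-module algebra, hence an object of $\mathsf{Rep}(H)$ via the comodule structure, and its simple summands are exactly the boundary string labels of the module category ${_\FA}\Mod$; the forgetful functor $\omega_{\EM}:{_\FA}\Mod\to\Vect$ supplies the extra ``extended'' datum. The key step is to show that under this transform the boundary stabilizers of Sec.~\ref{sec:bdd-II} — the vertex operator $A^{\lambda}(s_b)$ built from the symmetric separability idempotent $\lambda$ of $\FA$ and the face operator $B^{\varphi_{\Hhat}}(s_b)$ of Eq.~\eqref{eq:Bf} — go over to the Kitaev--Kong boundary stabilizers attached to $(\EC,\omega_{\EC},\EM,\omega_{\EM})$. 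The coaction $\beta_{\FA}$ is precisely what, after transforming, implements the $\mathsf{Rep}(H)$-action $X\otimes M$ on ${_\FA}\Mod$, so~\eqref{eq:CM} holds at the categorical level; at the lattice level one verifies the identification on the two boundary configurations of Sec.~\ref{sec:bdd-II}, the general case reducing to these.

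\emph{Direction SN $\to$ QD.} Conversely, from $(\EC,\omega_{\EC},\EM,\omega_{\EM})$ one sets $H=\operatorname{End}(\omega_{\EC})$, picks a nonzero generator $M\in\EM$, forms the internal-Hom algebra $\FM=\underline{\Hom}(M,M)\in\EC$ so that $\EM\simeq\EC_{\FM}$ by Ostrik's theorem \cite{ostrik2003module}, and passes to the smash product $\FA=\FM^{\rm op}\#H^{\rm cop}$ as in~\eqref{eq:mod-comod}; then ${_\FA}\Mod\simeq\EC_{\FM}\simeq\EM$ by \cite{andruskiewitsch2007module}, the equivalence matching $\omega_{\EM}$ with the forgetful functor. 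Feeding this $\FA$ into the construction of Sec.~\ref{sec:bdd-II} and invoking the forward direction just proved identifies the resulting Hopf quantum double boundary with the original string-net boundary; independence of the choice of generator $M$ follows from the equivariant Morita invariance established above.

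\emph{Main obstacle.} I expect the genuinely technical point to be the boundary half of the forward direction: verifying that the Fourier transform sends $A^{\lambda}(s_b)$ and $B^{\varphi_{\Hhat}}(s_b)$ to the string-net boundary stabilizers. This requires translating the defining relations of the symmetric separability idempotent $\lambda$ — in particular Eq.~\eqref{eq:lambda}, which encodes the $H$-covariance — into the internal-Hom/module-category language, together with a careful bookkeeping of the mixed $\hat H$-action on $\FA$-valued edges versus $H$-valued bulk edges in the boundary face operator~\eqref{eq:Bf}. The closed-surface mapping is available in the literature, but no reference treats the gapped boundary at this level of generality, so this calculation is where the substance of the proof resides.
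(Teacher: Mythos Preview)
Your overall strategy --- separate bulk from boundary, invoke the known bulk Fourier/Tannaka correspondence, then verify at the boundary that the Hopf stabilizers transform to the Kitaev--Kong stabilizers, and run the reverse direction via internal Hom and the smash product $\FA=\FM^{\rm op}\#H^{\rm cop}$ --- matches the paper's approach. You also correctly locate the substantive difficulty in the boundary stabilizer matching.

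The main difference is scope. The paper does \emph{not} carry out what you flag as the main obstacle for general $\FA$: it explicitly defers the full proof to forthcoming work, and in the present paper establishes only two things. First, it checks the commutativity of the square~\eqref{eq:CM} for general $\FA$ at the level of objects and morphisms (a short functorial verification, not a lattice computation). Second, it restricts to the special case $\FA=K\leq H$ a Hopf subalgebra, where $\lambda$ reduces to the Haar integral $h_K$ and the boundary vertex operator is $A^{h_K}_{v_b}$ from Sec.~\ref{sec:bdI} rather than the general $A^{\lambda}(s_b)$. In that special case the Fourier transform is tractable: the paper writes down the $3j$-symbol~\eqref{eq:3jsymbol} that $A^K_{v_b}$ becomes in the fusion basis, and matches $B_{f_b}$ to the string-net face operator via the character decomposition $\varphi_{\Hhat}=\sum_\nu(\dim\nu/\dim H)\chi_\nu$. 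The reverse direction is likewise checked only for $\EM=\mathsf{Rep}(K)$.

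So your proposal is sound but more ambitious than what the paper actually proves. If you want to align with the paper, replace the general-$\FA$ boundary computation by (i) the categorical check of~\eqref{eq:CM} and (ii) the explicit $\FA=K$ lattice verification; if you want to go beyond the paper, the obstacle you identify --- translating the separability idempotent relations, in particular Eq.~\eqref{eq:lambda}, into the module-category stabilizer language --- is precisely the missing piece, and the paper offers no guidance there.
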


\begin{proof}

A rigorous proof for the general $H$-comodule algebra is complicated and lengthy, it will be given in our forthcoming work \cite{jia2023extended}. Here we only prove that the commutative diagram in Eq.~\eqref{eq:CM}
holds when we set $\EC=\mathsf{Rep}(H)$ and $\EM={_{\FA}}\mathsf{Mod}$.
When acting on objects $X \in \mathsf{Rep}(H)$ and $M\in {_{\FA}}\mathsf{Mod}$, for the upper right path, the tensor product is given by $X\otimes_{\mathbb{C}} M$, then acting $\omega_{{_{\FA}}\mathsf{Mod}}$ on it we obtain $ \omega_{{_{\FA}}\mathsf{Mod}}\comp \otimes (X,M)= X\otimes_{\mathbb{C}} M$.
For the lower left path, it is clear that we still obtain $\otimes \comp (\omega_{\EC}\times \omega_{{_{\FA}}\mathsf{Mod}}) (X,M)=X\otimes_{\mathbb{C}} M$.
Now consider the morphisms $(f,g): X\times M \to Y\times N$, it is clear that  $\otimes \comp (\omega_{\EC}\times \omega_{{_{\FA}}\mathsf{Mod}})(f,g)= \omega_{{_{\FA}}\mathsf{Mod}}\comp \otimes(f,g)$. 

The following proof is for the special case that $\FA=K\leq H$, where all data are clear for both the extended string-net model and for the Hopf quantum double model.

1.
Suppose that the boundary is characterized by an $H$-comodule algebra $\mathfrak{A}=K \leq H$. For a boundary vertex $v_b$ connecting two boundary edges $e_1,e_2$ and a bulk edge $e_3$, the boundary edge space is $\mathcal{H}_{e_1}=\mathcal{H}_{e_2}=\mathfrak{A}$ and the bulk edge space is $\mathcal{H}_{e_3}=H$. Thus $\mathcal{H}(v_b)=H\otimes \mathfrak{A}\otimes \mathfrak{A}$.
By taking the Fourier transform for $H$ and $\mathfrak{A}$, we obtain
\begin{equation}
  \Big{ | }   \begin{aligned}
           \begin{tikzpicture}
			\draw[black, line width=1.5pt] (0,-0.7)--(0,0.7);
			\draw[-latex,black, line width=1.5pt] (0,-0.7)--(0,-0.2);
			\draw[-latex,black, line width=1.5pt] (0,0)--(0,0.5);
			\draw[black, line width=1.0pt] (-0.6,-0.6)--(0,0);
			\draw[-latex,black, line width=1.0pt] (-0.6,-0.6)--(-0.2,-0.2);
			\node[ line width=0.2pt, dashed, draw opacity=0.5] (a) at (-0.5,-0.1){$g_3$};
   	        \node[ line width=0.2pt, dashed, draw opacity=0.5] (a) at (0.4,-0.4){$x_1$};
   	        \node[ line width=0.2pt, dashed, draw opacity=0.5] (a) at (0.4,0.4){$y_2$};
   	        \node[ line width=0.2pt, dashed, draw opacity=0.5] (a) at (0.2,0){$v_b$};
			\end{tikzpicture}
    \end{aligned}
\Big{ \rangle} = \sum_{[\nu_1]_{a_1 b_1},[\nu_2]_{a_2b_2},[\mu_3]_{a_3b_3}} \Psi([\nu_1]_{a_1 b_1},[\nu_2]_{a_2b_2},[\mu_3]_{a_3b_3}) \Big{ | }   \begin{aligned}
           \begin{tikzpicture}
			\draw[black, line width=1.5pt] (0,-0.7)--(0,0.7);
			\draw[-latex,black, line width=1.5pt] (0,-0.7)--(0,-0.2);
			\draw[-latex,black, line width=1.5pt] (0,0)--(0,0.5);
			\draw[black, line width=1.0pt] (-0.6,-0.6)--(0,0);
			\draw[-latex,black, line width=1.0pt] (-0.6,-0.6)--(-0.2,-0.2);
			\node[ line width=0.2pt, dashed, draw opacity=0.5] (a) at (-1,-0.2){$[\mu_3]_{a_3b_3}~$};
   	        \node[ line width=0.2pt, dashed, draw opacity=0.5] (a) at (0.9,-0.4){$[\nu_1]_{a_1 b_1}$};
   	        \node[ line width=0.2pt, dashed, draw opacity=0.5] (a) at (0.9,0.4){$[\nu_2]_{a_2b_2}$};
   	        \node[ line width=0.2pt, dashed, draw opacity=0.5] (a) at (0.2,0){$v_b$};
			\end{tikzpicture}
    \end{aligned}
\Big{ \rangle}. 
\end{equation}
The coefficient can be obtained by taking the inner product of each, e.g., $\langle \nu;a,b|g\rangle=\sqrt{\frac{\dim \nu}{\dim H}} \sum_{(h_H)}D^{\nu}(h_H^{(1)})_{ab}^*\langle h_H^{(2)}|g\rangle$\,\footnote{Recall that the inner product of $H$ is given by $\langle h|g\rangle =\varphi_{\hat{H}}(h^*g)$}.

To obtain the Fourier transform of the vertex operator $A_{v_b}^K$,
it is sufficient to consider the $K$ action over the fusion basis of $H$ and $K$. Taking $\mathfrak{A}=K$ fusion basis as an example, it is easy to verify that 
\begin{equation}
  L^{k}_+|[\nu]_{ab}\rangle = \sum_{c}D^{\nu}(S(k))_{ac}|[\nu]_{cb}\rangle, \quad L_-^k|[\nu]_{ab} \rangle =\sum_c D^{\nu}(k)_{cb}|[\nu]_{ac}\rangle,
\end{equation}
where the second one can be derived from the first one by using Eq.~\eqref{eq:LTS}.
Then we see that 
\begin{equation}
    A^K_{v_b}|[\nu_1]_{a_1b_1},[\nu_2]_{a_2b_2},[\mu_3]_{a_3b_3}\rangle=\sum_{c_1,c_2,c_3}[W^{\nu_2}_{\mu_3\nu_1}]_{\{a_1,b_2,a_3\},\{c_1,c_2,c_3\}} |[\nu_1]_{c_1b_1},[\nu_2]_{a_2c_2},[\mu_3]_{c_3b_3}\rangle,
\end{equation}
where the $3j$-symbol
\begin{equation}\label{eq:3jsymbol}
    [W^{\nu_2}_{\mu_3\nu_1}]_{\{a_1,b_2,a_3\},\{c_1,c_2,c_3\}}=\sum_{(h_K)} D^{\nu_1}(S(h_K^{(1)}))_{a_1,c_1}\otimes D^{\nu_2}(h_K^{(2)})_{c_2,b_2}\otimes D^{\mu_2}(S(h_K^{(3)}))_{a_3,c_3}
\end{equation}
is the projector on the space $\Hom (\mu_3\otimes \nu_1,\nu_2)$.
This coincides with the string-net boundary vertex operator.

In the same spirit, for boundary face $f_b$ (with one boundary edge $e_1$ and three bulk edges $e_2,e_3,e_4$), we need to consider the $\hat{H}$ action $T^{\varphi}_{\pm}$ on fusion bases of $H$ and $K$. It is clear that for $H$ (and similar for $K$)
\begin{align}
    T^{\varphi}_+|[\nu]_{ab}&\rangle= \sqrt{\frac{\dim \nu}{\dim H}} \sum_{(h_H)}D^{\nu}(h_H^{(1)})_{ab}   \varphi( h_H^{(3)} ) |h_H^{(2)}\rangle,\label{eq:TBF1}\\
     T^{\varphi}_{-}|[\nu]_{ab}\rangle&= \sqrt{\frac{\dim \nu}{\dim H}} \sum_{(h_H)}D^{\nu}(h_H^{(1)})_{ab}   \varphi( S(h_H^{(2)}) )  |h_H^{(3)}\rangle. \label{eq:TBF2}
\end{align}
Notice that we have 
\begin{equation}
    \varphi_{\Hhat}=\sum_{\nu \in \operatorname{Irr}(H)} \frac{\dim \nu }{\dim H}\chi_{\nu},
\end{equation}
where $\chi_\nu$ is the character of $\nu$. The boundary phase operator can be written as
\begin{equation}
B_{f_b}=\sum_{\nu} \frac{\dim \nu}{\dim H} B_{f_b}^{\nu},   
\end{equation}
where 
\begin{equation}\label{eq:faceQDW}
    B_{f_b}^{\nu}= B_{f_b}^{\chi_{\nu}}=\bigotimes_{i=1}^4 \operatorname{ev}_i \left(\bigoplus_{\mu_i,\zeta_i} \bigotimes_{i=1}^4 \sqrt{\dim \mu_i \dim \zeta_i} W_{\mu_i,\zeta_i,\nu_i} \right).
\end{equation}
Suppose all edges surrounds $f_b$ counterclockwise, then using Eqs.~\eqref{eq:TBF1} and \eqref{eq:TBF2} one has 
\begin{equation}
    \begin{aligned}
        &\langle \{[\nu_i]_{a_ib_i}\}|B_{f_b}^{\chi_{\nu}}|\{[\nu'_j]_{a'_jb'_j}\}\rangle \\
        =&  \sum_{(\chi_{\nu}),(h_{K,1}),(h_{H,2,3,4})}\sqrt{\frac{\dim \nu'_1}{\dim K}}
        D^{\nu'_1}(h_{K,1}^{(1)})_{a'_1b'_1} \chi_{\nu}^{(1)}(h_{K,1}^{(3)}) \langle [\nu_1]_{a_1b_1}|h_{K,1}^{(2)}\rangle\\
        &\times \prod_{j=2}^4 \sqrt{\frac{\dim \nu'_j}{\dim H}}
        D^{\nu'_j}(h_{H,j}^{(1)})_{a'_jb'_j} \chi_{\nu}^{(j)}(h_{H,j}^{(3)}) \langle [\nu_j]_{a_jb_j}|h^{(2)}_{H,j}\rangle.
    \end{aligned}
\end{equation}
It is easy to verify that this matches well with the string-net face operator.

2. Consider an extended string-net model with $\EC=\mathsf{Rep}(H)$ and $\EM=\mathsf{Rep}(K)$ with $K\leq H$.
We attach to each bulk and boundary edge the Hilbert spaces
\begin{equation}
\begin{aligned}
    \begin{tikzpicture}
			\draw[black, line width=1.5pt] (0,-0.7)--(0,0.7);
			\draw[-latex,black, line width=1.5pt] (0,-0.7)--(0,-0.2);
			\draw[-latex,black, line width=1.5pt] (0,0)--(0,0.5);
			\draw[black, line width=1.0pt] (-0.6,-0.6)--(0,0);
			\draw[-latex,black, line width=1.0pt] (-0.6,-0.6)--(-0.2,-0.2);
			\node[ line width=0.2pt, dashed, draw opacity=0.5] (a) at (-0.5,-0.1){$i$};
   	        \node[ line width=0.2pt, dashed, draw opacity=0.5] (a) at (0.4,-0.4){$x$};
   	        \node[ line width=0.2pt, dashed, draw opacity=0.5] (a) at (0.4,0.4){$y$};
   	        \node[ line width=0.2pt, dashed, draw opacity=0.5] (a) at (0.2,0){$v_b$};
			\end{tikzpicture}
    \end{aligned}\quad \quad \quad 
    \begin{aligned}
           \mathcal{H}_e&=\bigoplus_{i\in \operatorname{Irr} (\EC)} \omega_{\EC}(i) \otimes \omega_{\EC} (i)^{\vee} \cong H,\\
           \mathcal{H}_{e_b}&=\bigoplus_{x\in \operatorname{Irr} (\EM)} \omega_{\EM}(x) \otimes \omega_{\EM} (x)^{\vee}\cong K, 
    \end{aligned}
\end{equation}
where $\omega_{\EC}$ and $\omega_{\EM}$ are forgetful functors of the representation.
For a trivalent boundary vertex $v_b$, the corresponding space is $\mathcal{H}_{v_b}=\mathcal{H}_e\otimes \mathcal{H}_{e_b}\otimes \mathcal{H}_{e_b}$.
As we have shown before, since $K$ is a left $H$-comodule algebra, the tensor product $i\otimes x$ for representations $i\in \operatorname{Irr}(H)$ and $x\in  \operatorname{Irr}(K)$ is a representation of $K$, the fusion rule is
\begin{equation}
    i\otimes x =\bigoplus_{y\in \operatorname{Irr}(K)} N_{i,x}^yy.
\end{equation}
Since $\omega_{\EM}$ preserves that tensor product (Eq.~\eqref{eq:CM}) and direct sum (this holds by definition), we have
\begin{equation}
    \omega_{\EC}(i) \otimes \omega_{\EM}(x)\cong \bigoplus_{y\in \operatorname{Irr}(K)} N_{i,x}^y\omega_{\EM}(y).
\end{equation}
The boundary vertex operator is defined as the projection that characterizes the above isomorphism $ Q_{v_b}^{i,x;y}: \omega_{\EC}(i)\otimes \omega_{\EM}(x)\otimes \omega_{\EM}(y) \to \omega_{\EC}(i)\otimes \omega_{\EM}(x)\otimes \omega_{\EM}(y)$. Namely, if $N_{i,x}^y \neq 0$, $Q_{v_b}^{i,x;y}=1$, otherwise, $Q_{v_b}^{i,x;y}=0$. The boundary vertex operator is thus given by 
\begin{equation}
    Q_{v_b}:=\bigoplus_{i\in \operatorname{Irr}(\EC),\,x,y\in\operatorname{Irr}(\EM)}Q_{v_b}^{i,x;y} \otimes \id_{\omega_{\EC}(i)^{\vee}} \otimes \id_{\omega_{\EM}(x)^{\vee}} \otimes \id_{\omega_{\EM}(y)^{\vee}}.
\end{equation}
To show that $Q_{v_b}$ coincides with the quantum double boundary vertex operator $A_{v_b}$, we only need to use the fact that the $3j$-symbol we defined in Eq.~\eqref{eq:3jsymbol} coincides with $Q_{v_b}^{i,x;y}$.

We will employ the definition of the face operator provided in Ref.~\cite{buerschaper2013electric}. Through our analysis, we will demonstrate that, in this particular instance, the face operator aligns with the quantum double face operator as per its definition.
\begin{equation}
    \begin{aligned}
		\begin{tikzpicture}
  			\draw[black,line width =1.6pt] (0.5,0.5) -- (0.8,0.8);
                \draw[black] (-0.5,0.5)--(-0.8,0.8);
                \draw[black] (-0.8,-0.8)--(-0.5,-0.5);
			\draw[-latex,black] (-0.5,0.5) -- (0.5,0.5);
			\draw[-latex,black] (-0.5,-0.5) -- (-0.5,0.5); 
   	      \draw[black,line width =1.6pt] (0.8,-0.8) -- (0.5,-0.5);
			\draw[-latex,black,line width = 1.6pt] (0.5,-0.5) -- (0.5,0.5);
			\draw[-latex,black] (-0.5,-0.5) -- (0.5,-0.5); 
			\draw [fill = black] (0,0) circle (1.2pt);
			%\draw[red,line width=1pt] (0,0) -- (-0.5,0.5);
			\node[ line width=0.2pt, dashed, draw opacity=0.5] (a) at (0.75,0){$x$};
			\node[ line width=0.2pt, dashed, draw opacity=0.5] (a) at (-0.75,0){$j$};
			\node[ line width=0.2pt, dashed, draw opacity=0.5] (a) at (0,-0.7){$i$};
			\node[ line width=0.2pt, dashed, draw opacity=0.5] (a) at (0,0.7){$k$};
		\end{tikzpicture}
	\end{aligned} \quad \quad 
B_f: \mathcal{H}_e\otimes \mathcal{H}_e \otimes \mathcal{H}_e \otimes \mathcal{H}_{e_b} \to \mathcal{H}_e\otimes \mathcal{H}_e \otimes \mathcal{H}_e \otimes \mathcal{H}_{e_b}.
\end{equation}
If we denote the domain and codomain face spaces of $B_f$ as 
\begin{equation} 
\begin{aligned}
    \mathcal{H}_{f_b}= &(\oplus_{i\in \operatorname{Irr}(\EC)} (\omega_{\EC}(i)\otimes \omega_{\EC}(i)^{\vee} )) \otimes (\oplus_{j\in \operatorname{Irr}(\EC)} (\omega_{\EC}(j)\otimes \omega_{\EC}(j)^{\vee})) \\
    &\otimes (\oplus_{k\in \operatorname{Irr}(\EC)} (\omega_{\EC}(k)\otimes \omega_{\EC}(k)^{\vee}))\otimes (\oplus_{x\in \operatorname{Irr}(\EM)} (\omega_{\EC}(x)\otimes \omega_{\EC}(x)^{\vee}))
\end{aligned}
\end{equation}
and 
\begin{equation}
    \begin{aligned}
        \mathcal{H}'_{f_b}= &(\oplus_{i'\in \operatorname{Irr}(\EC)} (\omega_{\EC}(i')\otimes \omega_{\EC}(i')^{\vee})) \otimes (\oplus_{j'\in \operatorname{Irr}(\EC)} (\omega_{\EC}(j')\otimes \omega_{\EC}(j')^{\vee})) \\
        & \otimes (\oplus_{k'\in \operatorname{Irr}(\EC)} (\omega_{\EC}(k')\otimes \omega_{\EC}(k')^{\vee}))\otimes (\oplus_{x'\in \operatorname{Irr}(\EM)} (\omega_{\EC}(x')\otimes \omega_{\EC}(x')^{\vee})),
    \end{aligned}
\end{equation} 
then the face operator can be regarded as an element in 
\begin{equation}
    \begin{aligned}
        \displaystyle \mathcal{H}_{f_b}\otimes (\mathcal{H}_{f_b}')^{\vee}= \oplus_{i,j,k,x}\oplus_{i',j',k',x'} &( (\omega_{\EC}(i)\otimes \omega_{\EC}(i)^{\vee})\otimes (\omega_{\EC}(i')^{\vee}\otimes \omega_{\EC}(i')) ) \\
         \otimes \,&( (\omega_{\EC}(j)\otimes \omega_{\EC}(j)^{\vee})\otimes (\omega_{\EC}(j')^{\vee}\otimes \omega_{\EC}(j')) )  \\
        \otimes  \,& ( (\omega_{\EC}(k)\otimes \omega_{\EC}(k)^{\vee})\otimes (\omega_{\EC}(k')^{\vee}\otimes \omega_{\EC}(k')) )  \\
        \otimes  \,& ( (\omega_{\EC}(x)\otimes \omega_{\EC}(x)^{\vee})\otimes (\omega_{\EC}(x')^{\vee}\otimes \omega_{\EC}(x')) ). 
    \end{aligned}
\end{equation}
Fix the $\nu \in \operatorname{Irr}(\EC)$, we defined 
$P^{i,i'^{\vee},\nu}$ as the projection of the tensor product of three representations into the vacuum, $i\otimes i'^{\vee} \otimes \nu \to \one$. 
Notice that for the boundary edge, the projection to vacuum is to the vacuum of $\one_{K}\in\mathsf{Rep}(K)$, and for the bulk edge, it is the projection to vacuum $\one_H$.
For each edge, we assign the same $\nu$, and label the edges around the faces as $s=1,\cdots,4$, then the $\nu$-component of string-net face operator is defined as
\begin{equation}
    B_{f_b}^{\nu}=\bigotimes_{s=1}^4 \operatorname{ev}_s \left(\bigoplus_{\mu_s,\zeta_s} \bigotimes_{s=1}^4 \sqrt{\dim \mu_s \dim \zeta_s} P^{\mu_s,\zeta^{\vee}_s,\nu} \right).
\end{equation}
We see that it is of the same form as Eq.~\eqref{eq:faceQDW}.
The boundary face operator 
\begin{equation}
B_{f_b}=\sum_{\nu} \frac{\dim \nu}{\dim H} B_{f_b}^{\nu},   
\end{equation}
thus coincides with the face operator of that of the quantum double model, as expected.
\end{proof}

\subsection{Anyon condensation via ribbon operator}

In this section, we will delve into the lattice construction of local operator algebra, as well as ribbon operators and their application in realizing anyon condensation and boundary-bulk duality.

From the perspective of anyon condensation theory, the boundary vacuum anyon can be viewed as a composite of bulk anyons, represented by the equation $\one_{\EB}=\one \oplus X_1\oplus \cdots \oplus X_n$. 
In the lattice realization, the boundary vacuum anyon sector is the space of states that are invariant under all boundary local stabilizers, i.e.,
\begin{equation}
V_{\one_{\EB}}=\{|\psi\rangle:
A_{v_b}|\psi\rangle = B_{f_b}|\psi\rangle=|\psi\rangle,\forall~ v_b,f_b\}.
\end{equation}
The anyon condensation process occurs when a bulk anyon moves near the boundary and becomes a boundary anyon.
Since the bulk phase is described by a UMTC $\ED$ and the boundary phase is described by a UFC $\EB$, the condensation is characterized mathematically by a monoidal functor $\mathbf{Cond}:\ED\to \EB$ (roughly speaking, this is a functor that preserves the fusion structure).
The bulk anyons that become boundary vacuum anyon are called condensed.
Our goal is to construct ribbon operators that can facilitate the anyon condensation process and explore the boundary-bulk duality.

Before we start, let us introduce some notions that we will use:
(i) the bulk-to-boundary ribbon $\rho_{\downarrow}$ is a directed ribbon that starts at some bulk site $s=\partial_0 \rho$ and ends at some boundary site $s_b=\partial_1 \rho$;
(ii) the boundary-to-bulk ribbon $\rho_{\uparrow}$ starts at some boundary site and ends at some bulk site;
(iii) the boundary ribbon $\rho_{b}$ is a ribbon that starts and terminates on boundary sites $s_b^0=\partial_0\rho_{b}$ and $s_b^1=\partial_1\rho_b$. These ribbons can also be called type-A or type-B according to the same convention as we have adopted in the bulk case (see Sec.~\ref{sec:ribbon}).

Topological excitations are characterized as point-like excitations in the lattice model, and are represented by the local operator algebra. Specifically, for a given bulk site $s$, the vertex operator $A^g(s)$ and $B^{\psi}(s)$ generate an algebra $\mathcal{D}(s)$, which is isomorphic to the quantum double of the input Hopf algebra. Topological excitations correspond to irreducible representations of these local operator algebras.
The bulk ribbon operator is constructed to generate an algebra $\mathcal{A}$, which is isomorphic to the dual algebra of $\mathcal{D}(s)$. Local stabilizers at the internal sites of the ribbon commute with the ribbon operator, but at the two ends, they do not commute, and their commutators correspond to topological excitations when the ribbon operators act on the ground state.
For a gapped boundary, a similar picture exists. Boundary excitations correspond to representations of the boundary local operator algebra, and the bulk-to-boundary ribbon operator algebra is dual to the boundary local operator algebra.
In the previous section, we demonstrated that the gapped boundaries of a bulk Hopf algebra $H$ are classified by $H$-indecomposable left (or right) $H$-comodule algebras $\FA$. We will now determine the algebraic structure of the boundary local operator algebra and bulk-to-boundary ribbon operators.

\begin{table}[t]
\centering \small 
\begin{tabular} {|l|c|c|c|} 
\hline
   &$\EC$-module &  Boundary phase & Boundary local algebra \\ \hline
 Left $H$-comodule  & ${_{\FA}}\Mod$  & ${^H_{\FA}}\Mod_{\FA}$ & $(\FA\otimes \FA^{\rm op}) \star	 \Hhat$ \\ \hline
 Right $H$-comodule  & $\Mod_{\FA}$ & ${_{\FA}}\Mod_{\FA}^H$ &  $(\FA\otimes \FA^{\rm op}) \star\Hhat$ \\ \hline
\end{tabular}
\caption{The boundary local operator algebras, see Eq.~\eqref{eq:AHstar} for the definition of the crossed product $(\FA\otimes \FA^{\rm op}) \star\Hhat$ 
 for a right $H$-comodule algebra $\FA$. The left comodule algebra case can be defined in a similar way. Notice that, by employing a slight abuse of notation, both cases are denoted as $(\FA\otimes \FA^{\rm op}) \star \Hhat$.
 \label{tab:bdLocalAlg}}
\end{table}

\begin{proposition}
\label{thm:local-ribbon-alg}
Consider a gapped boundary determined by an $H$-comodule algebra $\FA$: 
\begin{enumerate}
    \item The boundary local operator algebra is fusion-categorical
Morita equivalent to $\mathcal{D}(s_b)\simeq  (\FA\otimes \FA^{\rm op})\star \Hhat$.\,\footnote{In this paper, we use $A\cong B$ to denote the isomorphism between algebraic objects, and use $A\simeq B$ to denote the Morita equivalence between them.} See Eq.~\eqref{eq:AHstar} for the definition of $ (\FA\otimes \FA^{\rm op})\star \Hhat$, also see Table~\ref{tab:bdLocalAlg} for a summary.
    \item The bulk-to-boundary ribbon operator algebra $\mathcal{A}_{\rho_{\downarrow}}$ is  the dual of the boundary local operator algebra $\mathcal{A}_{\rho_{\downarrow}}\simeq \mathcal{D}(s_{b})^{\vee}$. Or equivalently, the dual of 
     the  bulk-to-boundary ribbon operator algebra is fusion-categorical Morita equivalent to the boundary local operator algebra $\mathcal{A}_{\rho_{\downarrow}}^{\vee}\simeq  (\FA\otimes \FA^{\rm op})\star \hat{H}$.
\end{enumerate}
\end{proposition}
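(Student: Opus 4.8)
\textbf{Part 1.} The plan is to upgrade Proposition~\ref{prop:bdLocalAlg} to an isomorphism and then identify the representation category. Proposition~\ref{prop:bdLocalAlg} already supplies the algebra homomorphism $\Psi_{s_b}\colon(\FA\otimes\FA^{\rm op})\star\Hhat\to\operatorname{End}(\mathcal{H}(s_b))$ with image $\mathcal{D}(s_b)$, together with the straightening relation~\eqref{eq:stra-a-2}. First I would check that $\Psi_{s_b}$ is faithful: it suffices to see that the ``normal-ordered'' elements $(a\otimes b)\star\varphi$, applied to the boundary-site state carrying unit labels on all incident edges, produce linearly independent vectors, which follows from~\eqref{eq:stra-a-1}. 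This gives $\mathcal{D}(s_b)\cong(\FA\otimes\FA^{\rm op})\star\Hhat$ as algebras. For the fusion-categorical Morita statement I would identify $\mathsf{Rep}$ of this algebra: a module over $\FA\otimes\FA^{\rm op}$ is exactly an $\FA|\FA$-bimodule, and the extra $\star\Hhat$-factor subject to~\eqref{eq:stra-a-1} equips it with an $\hat H$-action compatible with the bimodule structure, i.e.\ --- under the $\hat H$-action $\leftrightarrow$ $H$-coaction dictionary recorded in the Remark after the module-algebra definition --- with an $H$-covariant $\FA|\FA$-bimodule structure. Hence $\mathsf{Rep}\bigl((\FA\otimes\FA^{\rm op})\star\Hhat\bigr)\simeq{^H_{\FA}}\mathsf{Mod}_{\FA}=\EB_{\FA}$, which by Proposition~\ref{prop:bdd-top-exc1} is the boundary-excitation category; the ``Morita'' wording then reflects the fact that different lattice realizations of one boundary --- e.g.\ the Hopf-subalgebra model of Section~\ref{sec:bdI} with edge projector $\Pi_K$ versus the present model --- give a priori different local algebras but the same boundary phase, hence fusion-categorically Morita equivalent ones (as promised after Example~\ref{exp:roughBd}).

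\textbf{Part 2, construction of the ribbon operators.} Next I would build the generators of $\mathcal{A}_{\rho_{\downarrow}}$. For a bulk-to-boundary ribbon $\rho_{\downarrow}=\tau_1\cup\cdots\cup\tau_n$ with $\partial_0\rho_{\downarrow}$ in the bulk and $\partial_1\rho_{\downarrow}=s_b$, the triangle operators on the interior triangles are the bulk ones of Eqs.~\eqref{eq:tri1}--\eqref{eq:tri16} (built from $D(H)^{\vee}$), while on the triangle(s) touching $\partial\Sigma$ one replaces the $H$-edge actions by the $\FA$-actions and induced $\hat H$-(co)actions used to define $A^{a\otimes b}(s_b)$ and $B^{\varphi}(s_b)$ in Eqs.~\eqref{eq:bdd-sta-A1}--\eqref{eq:bdd-sta-A4} and~\eqref{eq:Bf}. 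The recursion $F^{\xi}(\rho_{\downarrow})=\sum F^{\xi^{(1)}}(\tau_1)\,F^{\xi^{(2)}}(\rho_{\downarrow}\setminus\tau_1)$ should be independent of the splitting by coassociativity of the relevant Hopf structure together with multiplicativity of $\beta_{\FA}$, just as in Section~\ref{sec:ribbon}. One then verifies, by the same local commutation computations as on a closed surface but using~\eqref{eq:stra-a-2} at the boundary end, that $F^{\xi}(\rho_{\downarrow})$ commutes with every vertex and face stabilizer except at $\partial_0\rho_{\downarrow}$ and at $s_b$; this says precisely that such a ribbon operator creates a bulk excitation and a boundary excitation at its two ends, which is what makes it implement anyon condensation.

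\textbf{Part 2, the duality.} To identify $\mathcal{A}_{\rho_{\downarrow}}$ with $\mathcal{D}(s_b)^{\vee}$ I would produce a nondegenerate bilinear pairing $\mathcal{A}_{\rho_{\downarrow}}\times\mathcal{D}(s_b)\to\mathbb{C}$ encoded by the action of the boundary stabilizers at $s_b$ on the ribbon operators: commuting $A^{a\otimes b}(s_b)B^{\varphi}(s_b)$ past $F^{\xi}(\rho_{\downarrow})$ yields a left $\mathcal{D}(s_b)$-module structure on $\mathcal{A}_{\rho_{\downarrow}}$ whose structure constants, read as a form, are nondegenerate --- here the symmetric separability idempotent $\lambda$ of $\FA$ plays the role of the Haar integral in the closed-surface case, via~\eqref{eq:lambda}. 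This should exhibit $\mathcal{A}_{\rho_{\downarrow}}\cong\mathcal{D}(s_b)^{\vee}$, the dual equipped with the convolution product coming from the coproduct that $\mathcal{D}(s_b)$ inherits through the Frobenius comultiplication $\Delta_{\FA}$ of Section~\ref{sec:bd} and the coproduct of $\hat H$, mirroring the bulk construction of Section~\ref{sec:ribbon} from $D(H)^{\vee}$. Dualizing once more and passing to $\mathsf{Rep}$, and invoking Part~1, then gives the equivalent formulation $\mathcal{A}_{\rho_{\downarrow}}^{\vee}\simeq(\FA\otimes\FA^{\rm op})\star\hat H$.

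\textbf{Expected main obstacle.} The delicate point is making ``$\mathcal{D}(s_b)^{\vee}$'' precise as an algebra: $(\FA\otimes\FA^{\rm op})\star\Hhat$ is not a Hopf algebra when $\FA$ is not one, so one has to supply the correct coalgebra structure (the Frobenius $\Delta_{\FA}$ of Section~\ref{sec:bd} together with the coproduct of $\hat H$) and verify compatibility --- or, alternatively, transport the whole duality through the fusion-categorical Morita equivalence of Part~1 to the indirect $(K,J,W)$-model of Appendix~\ref{sec:bdII}, where the boundary site genuinely carries a generalized quantum double and ordinary Hopf duality applies. The two accompanying technical cores are (i) well-definedness of the boundary end of the ribbon recursion (independence of the decomposition when a triangle of $\rho_{\downarrow}$ touches $\partial\Sigma$) and (ii) nondegeneracy of the pairing, equivalently that the $F^{\xi}(\rho_{\downarrow})$ over a fixed $\rho_{\downarrow}$ span a $\dim\mathcal{D}(s_b)$-dimensional space of operators; both reduce to Sweedler-calculus computations that parallel the closed-surface case with $\lambda$ in place of $h_H$.
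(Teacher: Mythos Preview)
Your Part~1 argument matches the paper's: both prove the statement by identifying $\mathsf{Rep}\bigl((\FA\otimes\FA^{\rm op})\star\Hhat\bigr)$ with the boundary-phase category ${_{\FA}}\mathsf{Mod}_{\FA}^H$ (or ${^H_{\FA}}\mathsf{Mod}_{\FA}$) via the $\hat H$-action $\leftrightarrow$ $H$-coaction dictionary, exactly as you describe. Your preliminary step of upgrading $\Psi_{s_b}$ to an algebra isomorphism is not needed and the paper does not attempt it: fusion-categorical Morita equivalence is a statement about representation categories, so the paper works entirely at that level and never checks injectivity of $\Psi_{s_b}$.

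Where you diverge from the paper is Part~2. The paper takes a much lighter stance than you do: the Remark preceding the proof declares that duality with the local operator algebra is the \emph{defining requirement} on the ribbon-operator algebra, and the proof itself opens with ``We only need to prove [Part~1]; 2 are statements about the requirement that bulk-to-boundary ribbon operator algebra must satisfy.'' In other words, the paper does not construct the ribbon operators or a nondegenerate pairing at this generality; explicit constructions appear only later, in the special case $\FA=K\leq H$. Your plan (boundary triangle operators, recursion, pairing via commuting local stabilizers past ribbon operators) is an attempt to prove something the paper chooses not to prove in general, and your ``Expected main obstacle'' is exactly why: $(\FA\otimes\FA^{\rm op})\star\Hhat$ carries no natural Hopf (or even bialgebra) structure when $\FA$ is not Hopf, so ``$\mathcal{D}(s_b)^{\vee}$'' has no canonical algebra meaning. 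The paper's workaround is precisely the alternative you float at the end: pass to the Hopf-subalgebra or the generalized-quantum-double boundary model, where an honest (co)quasi-Hopf structure is available and duality can be made literal.
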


\begin{remark}
Before we prove the main assertions, it is important to recall that two algebras are considered fusion-categorical Morita equivalent if and only if their respective representation categories are equivalent as fusion categories.
In the case of a gapped boundary, the local operator algebra is unique up to fusion-categorical Morita equivalence. The only requirement is that the representation category of the local operator algebra is equivalent to the boundary topological phase as UFC. As a result, the category of $\mathcal{D}(s_b)$-modules is equivalent to either $\EB={_{\FA}^H}\Mod{_{\FA}}$ or $\EB={_{\FA}}\Mod{_{\FA}^H}$, depending on the choice of boundary directions (with the bulk either on the left- or right-hand side of the boundary), see Table~\ref{tab:bdTop1} for a summary.
The bulk-to-boundary ribbon operator algebra is dual to the boundary local operator algebra.
At the boundary end, the ribbon operator algebra generates all the boundary excitations.
Because the local operator algebra is not unique, it follows that the ribbon operator algebra is also not unique. This is reflected in the fact that there exist many different lattice constructions that realize the same phase.
\end{remark}

\begin{proof}[Proof of Proposition \ref{thm:local-ribbon-alg}]
We only need to prove that 1 and 2 are statements about the requirement that bulk-to-boundary ribbon operator algebra must satisfy.
  
Without loss of generality, let us consider the right-$H$ comodule algebra $\FA$, and suppose that the boundary phase is given by $\EB={_{\FA}}\Mod{_{\FA}^H}$.
A module $M\in \EB$ is an $\FA|\FA$-bimodule with a right $H$-comodule structure with the coaction 
\begin{equation}
    \beta: M\to M\otimes H,
\end{equation}
such that the coaction is an $\FA|\FA$-bimodule map (the $\FA|\FA$-bimodule structure of  $M \otimes H$ is given by $a\cdot (m\otimes h)\cdot b=\sum(a^{[0]}\cdot m \cdot b^{[0]})\otimes (a^{[1]}\cdot h \cdot b^{[1]})$).
This means that we have 
\begin{equation} \label{eq:RightAA}
   \sum_{[a\cdot m \cdot b]} (a\cdot m \cdot b)^{[0]} \otimes  (a\cdot m \cdot b)^{[1]}= \sum_{[a],[m],[b]} a^{[0]}\cdot m^{[0]} \cdot b^{[0]} \otimes a^{[1]}\cdot m^{[1]} \cdot b^{[1]}.
\end{equation}
A left $\Hhat$-module structure of $M$ can be defined as 
\begin{equation}
    \varphi \rightharpoonup m=\sum_{[m]} m^{[0]}\varphi(m^{[1]}).
\end{equation}
The condition of Eq.~\eqref{eq:RightAA} can be expressed in the left $\Hhat$-action as 
\begin{equation}
    \varphi \rightharpoonup (a\cdot m \cdot b)=\sum_{[a],[m],[b]} a^{[0]}\cdot m^{[0]} \cdot b^{[0]} \varphi( a^{[1]}\cdot m^{[1]} \cdot b^{[1]}).
\end{equation}
Now a direct calculation gives ($\star$ is just tensor product here)
\begin{equation}
    ((c\otimes d)\star \psi)\cdot ((a\otimes b)\star \varphi) \cdot m= \sum_{[a],[m],[b]} (c a^{[0]} m^{[0]} b^{[0]} d) \star \psi (a^{[1]} m^{[1]} b^{[1]}) \varphi(m^{[2]}),
\end{equation}
where $a,b,c,d\in \FA$ and $\psi,\varphi \in \Hhat$.
This implies that 
\begin{equation}
      ((a\otimes b)\star	\varphi )\cdot ((c\otimes d)\star	\psi)=  \sum (ac^{[0]}\otimes d^{[0]}b) \star	  \varphi(c^{[1]}\bullet d^{[1]})\psi,
\end{equation}
which coincides with Eq.~\eqref{eq:AHstar}.
This means that $M\in \EB$ implies that $M \in {_{(\FA\otimes \FA^{\rm op}) \star \Hhat}} \Mod$.
The proof of the converse direction can be showed similarly since the left $\Hhat$-action and the right $H$-coaction correspond to each other in a canonical way.
\end{proof}

In Proposition \ref{prop:bdLocalAlg}, we demonstrated that the boundary local vertex and face operators generate the boundary local algebra $\mathcal{D}(s_b)$. This result supports the claim that the lattice model we have constructed realizes the boundary phase of  $\EB={_{\FA}}\Mod{_{\FA}^H}$.
We would like to note that the above proposition holds not only when the input $\FA$ for the boundary is a right $H$-comodule algebra, but also when $\FA$ is a left $H$-comodule algebra. This is summarized in Table \ref{tab:bdLocalAlg}.
When $\FA$ is a faithfully flat $H$-Galois extension, the local operator algebra can be showed to be fusion-categorically equivalent to either $\FA\Box_H \bar{\FA}$ or $\bar{\FA}\Box_H \FA$ \cite{caenepeel2007morita}. For further details, please refer to Eq.~\eqref{eq:boxtensor} and the accompanying discussions. This equivalence enables the construction of a gapped boundary model via a generalized quantum double construction, as outlined in Appendix~\ref{sec:bdII}. Using this approach can simplify the lattice realization of ribbon operators and other related concepts.

The case where $\FA=K\leq H$ is of interest in its own right, we can provide a more explicit lattice construction of ribbon operators and anyon condensation. 
By discussion in Sec.~\ref{sec:bdd-top-exc}, we have known that for the boundary model determined by an $H$-comodule algebra $\FA=K\leq H$, the boundary topological excitations are classified by $H$-covariant $\mathfrak{A}|\mathfrak{A}$-bimodules. Or equivalently, they are also classified by $\mathfrak{M}|\mathfrak{M}$-bimodules in $\mathsf{Rep}(H)$, where $\mathfrak{M}\cong\mathrm{Stab}_K(M)$ is the Yan-Zhu stabilizer for some nonzero $K$-module $M$ and is an $H$-module algebra. One can take $M=\mathbb{C}$ the trivial $K$-module. In this case, it is showed \cite[Example 2.19]{andruskiewitsch2007module} that $\mathfrak{M}\cong\mathrm{Stab}_K(\mathbb{C})\cong \underline{K}^\vee$ as right $\hat{H}$-comodule algebras with $\underline{K} := H/S^{-1}(K^+)H$ and $K^+:=\ker(\varepsilon)\cap K$ the augmentation ideal of $K$. 
The $\Hhat$-comodule algebra structure will canonically induces an $H$-module algebra structure on $\underline{K}^\vee$.
The bulk-to-boundary ribbon operators are operators supported on bulk-to-boundary ribbons and commute with all stabilizer operators in the Hamiltonian except at the two end sites of the ribbons. These operators form an algebra $\mathcal{A}_{\rho_{\downarrow}}$. Proposition~\ref{thm:local-ribbon-alg} has provided one way to identify this algebra up to a fusion-categorical Morita equivalence.

\subsubsection{Local operator algebra}

In this subsection, we will determine the local operator algebra $\mathcal{D}(s_b)$ over the boundary site $s_b$ in the case that $\mathfrak{A}=K$ is a Hopf subalgebra of $H$.

Since boundary topological excitations are given as the monoidal category ${^H_K}\mathsf{Mod}_K$, identifying the local operator algebra amounts to the reconstruction problem: if there is some kind of algebraic object whose representation (modules or comodules) category is equivalent to ${^H_K}\mathsf{Mod}_K$. A more general case has been studied in \cite{schauenburg02extension,schauenburg02coquasibialg} which applies to our case. Note that $K\leq H$ are finite-dimensional complex Hopf algebras, hence $H$ is cocleft as a left $K$-module coalgebra (see Refs.~\cite{schauenburg02extension,schauenburg02coquasibialg} and references therein for related definition). Denote by $\pi:H\to K$ the convolution invertible cocleaving map which is unital and counital. Denote $\bar{M}:=M/K^+M$ for an $H$-covariant $K|K$-bimodule $M$. It can be showed that $\overline{M\otimes_KN}\cong\bar{M}\otimes\bar{N}$. Let $j:\bar{M}\to M$ be the map $j(\bar{m})=\sum_{(m)}\pi^{-1}(m^{[-1]})m^{[0]}$. One has an isomorphism $M\cong K\otimes \bar{M}, m\mapsto \sum_{(m)}\pi(m^{[-1]})\otimes \overline{{m^{[0]}}}$ with inverse $x\otimes \bar{m}\mapsto xj(\bar{m})$. In particular, $H\cong K\otimes (H/K^+H)$ as left $K$-modules and right $H/K^+H$-comodules, thus $\pi:H\to K$ and $j:H/K^+H\to H$ can be identified with $\pi(x\otimes\bar{h})=\varepsilon(h)x$ and $j(\bar{h})=1_K\otimes \bar{h}$ respectively for $x\in K, \bar{h}\in H/K^+H$.

Define the functor $\omega:{^H_K}\mathsf{Mod}_K\to\mathsf{Vect}$ by $\omega(M)=\bar{M}=M/K^+M$. It is a tensor functor with the factorial isomorphisms $\xi_{M,N}:\bar{M}\otimes \bar{N} \to \overline{M\otimes_KN}$ given by $\xi(\bar{m}\otimes\bar{n})=\overline{m\otimes j(\bar{n})}$, with $\xi_{M,N}^{-1}(\overline{m\otimes n})=\sum_{(n)}\overline{m\pi(n^{[-1]})}\otimes \overline{n^{[0]}}$. 
Consider the category ${^{H/K^+H}}\mathsf{Mod}_K$ of $H/K^+H$-covariant right $K$-modules, that is, right $K$-modules such that the $H/K^+H$-coactions are morphisms of right $K$-modules. It is showed \cite{schauenburg02coquasibialg} by Schneider's structure theorem that there is a category equivalence $\hat{\omega}:{^H_K}\mathsf{Mod}_K\simeq {^{H/K^+H}}\mathsf{Mod}_K, M\mapsto\bar{M}$, whose inverse is $N\mapsto H\Box_{H/K^+H}N$. In fact, the $H/K^+H$-comodule structure over $\bar{M}$ is induced by the $H$-comodule structure over $M$, and $\bar{m}x=\overline{mx}$ for $m\in M, x\in K$ gives the right $K$-action; the left $H$-comodule and $K$-module structures over $H\Box_{H/K^+H}N$ are induced by the left tensor factor, and $(\sum h_i\otimes n_i)x = \sum h_ix^{(1)}\otimes n_ix^{(2)}$ for $x\in K, \sum h_i\otimes n_i\in H\Box_{H/K^+H}N$. There is a unique monoidal category structure over ${^{H/K^+H}}\mathsf{Mod}_K$ making $\hat{\omega}$ a monoidal equivalence such that $\omega$ factors over $\hat{\omega}$ (see \cite[Theorem 3.3.5]{schauenburg02coquasibialg}):
\begin{equation*}
    \begin{tikzcd}
	{{^H_K}\mathsf{Mod}_K} && {{^{H/K^+H}}\mathsf{Mod}_K} \\
	& {\mathsf{Vect}}
	\arrow["{\hat{\omega}}", from=1-1, to=1-3]
	\arrow["\omega"', from=1-1, to=2-2]
	\arrow["", from=1-3, to=2-2]
\end{tikzcd}.
\end{equation*}

The next step is to identify an algebraic object $\tilde{H}$ such that ${^{H/K^+H}}\mathsf{Mod}_K\simeq {^{\tilde{H}}}\mathsf{Mod}$. It turns out that $\tilde{H} = (H/K^+H)\rtimes\hat{K}$ is a coquasi-Hopf algebra which is a crossed coproduct built on $(H/K^+H)\otimes \hat{K}$ with structures given below (cf.~\cite[Sec.~4.2]{schauenburg02extension}). First define $\rightharpoonup\,:\hat{K}\otimes (H/K^+H)\to H/K^+H$, $\leftharpoonup\,:\hat{K}\otimes (H/K^+H)\to \hat{K}$ by 
\begin{equation*} %\label{eq:arrow-actions}
    \varphi\rightharpoonup \bar{h} = \hat{\varepsilon}(\varphi)\bar{h},\quad  (\varphi \leftharpoonup \bar{h})(x)=\varepsilon(h)\varphi(x),
\end{equation*}
and $\rho:H/K^+H\to \hat{K}\otimes (H/K^+H),~\rho(\bar{h}) := \sum\rho(\bar{h})^{[-1]}\otimes \rho(\bar{h})^{[0]}$ by
\begin{equation*} %\label{eq:rho-action}
     \sum\rho(\bar{h})^{[-1]}(x) \rho(\bar{h})^{[0]} = \overline{j(\bar{h})x},\quad\text{for~all}~ x\in K. 
\end{equation*}
Using this definition, one computes that
\begin{equation*}
    \rho(\bar{h}) =\sum_k\sum_{(k)}\hat{k}(S(k^{(3)})\bullet k^{(1)})\otimes \overline{hk^{(2)}},
\end{equation*}
where $\{k\}$ is an orthogonal basis of $K$ with $\{\hat{k}\}$ its dual basis. By this preparation, $(H/K^+H)\rtimes\hat{K}$ becomes a coquasi-bialgebra with $\varepsilon(\bar{h}\otimes \varphi)=\varepsilon_H(h)\varphi(1_K)$, $1 = \overline{1_H}\otimes\varepsilon_K$,
\begin{align}
    (\bar{h}\otimes\varphi)(\bar{g}\otimes & \psi) = \sum_{(g),(\varphi)}\bar{h}(\varphi^{(1)}\rightharpoonup \overline{g^{(1)}})\otimes (\varphi^{(2)}\leftharpoonup \overline{g^{(2)}})\psi=\bar{h}\bar{g}\otimes \varphi\psi, \nonumber \\
    \Delta(\bar{h}\otimes \varphi) &= \sum(\overline{h^{(1)}}\otimes \rho(\overline{h^{(2)}})^{[-1]}\varphi^{(1)})\otimes(\rho(\overline{h^{(2)}})^{[0]}\otimes \varphi^{(2)}) \nonumber \\
     &= \sum_k\sum_{(k),(h)}(\overline{h^{(1)}}\otimes \hat{k})\otimes (\overline{S(k^{(3)})h^{(2)}k^{(1)}}\otimes \varphi(k^{(2)}\bullet)),  \label{eq:comul-quasi-hopf}
\end{align}
and associator 
\begin{equation*}
\Phi((\bar{h}\otimes\varphi)\otimes(\bar{g}\otimes\psi)\otimes(\bar{k}\otimes\theta)) =\varepsilon_H(h)\varepsilon_H(g)\varepsilon_H(k)\varphi(1_K)\psi(1_K)\theta(1_K).
\end{equation*}
Moreover, $(H/K^+H)\rtimes \hat{K}$ can be equipped with a coquasi-antipode $(S,\alpha,\gamma)$ whose existence is guaranteed by the rigidity of the monoidal category ${^H_K}\mathsf{Mod}_K$. Thus $(H/K^+H)\rtimes \hat{K}$ has a structure of coquasi-Hopf algebra. Details can be found in \cite[Sec.~4.4]{schauenburg02coquasibialg} and \cite[Sec.~9.4]{majid2000foundations}.

The equivalence $\tilde{\omega}:{^{H/K^+H}}\mathsf{Mod}_K\simeq{^{(H/K^+H)\rtimes\hat{K}}\mathsf{Mod}}$ is described as follows. Given an $M\in {^{H/K^+H}}\mathsf{Mod}_K$, it can be endowed with a left $(H/K^+H)\rtimes \hat{K}$-comodule structure by
\begin{equation*}
    N\to ((H/K^+H)\rtimes\hat{K})\otimes N,\quad n\mapsto \sum_{k,(n)} (n^{[-1]}\otimes \hat{k})\otimes n^{[0]}k,
\end{equation*}
where $\{k\}$ is an orthogonal basis of $K$ with $\{\hat{k}\}$ its dual basis; conversely, given an $N\in{^{{(H/K^+H)}\rtimes\hat{K}}}\mathsf{Mod}$ with coaction $\rho_N(n) = \sum (n^{[-2]}\otimes n^{[-1]})\otimes n^{[0]}\in ((H/K^+H)\rtimes\hat{K})\otimes N$, it can be endowed with a left $H/K^+H$-comodule structure by 
\begin{equation*}
    N\to (H/K^+H)\otimes N,\quad n\mapsto \sum n^{[-2]}\varepsilon(n^{[-1]})\otimes n^{[0]}, 
\end{equation*}
and a right $K$-action by $nx = \sum\varepsilon(n^{[-2]})\langle n^{[-1]},x\rangle n^{[0]}$ for $x\in K, n\in N$. It is showed that $\tilde{\omega}$ is a monoidal equivalence. In summary, there are monoidal equivalences ${^H_K}\mathsf{Mod}_K\simeq {^{H/K^+H}}\mathsf{Mod}_K\simeq{^{(H/K^+H)\rtimes\hat{K}}}\mathsf{Mod}$, see \cite{schauenburg02coquasibialg} for rigorous proofs. Since ${^A}\mathsf{Mod}\simeq\mathsf{Mod}_{\hat{A}}\simeq {_{\hat{A}^{\rm op}}}\mathsf{Mod}$, we conclude that 
\begin{equation}\label{eq:bdFusion}
    {^H_K}\mathsf{Mod}_K\simeq {_{((H/K^+H)\rtimes\hat{K})^{\vee,\rm op}}}\mathsf{Mod}.
\end{equation}
For the convenience of later discussion, consider the identification ${^H_K}\mathsf{Mod}_K\simeq {^{H^{\rm op}}_{K^{\rm op}}}\mathsf{Mod}_{K^{\rm op}}$ and apply the discussion above, we see that ${^H_K}\mathsf{Mod}_K\simeq {_{\mathcal{K}_{K}}}\mathsf{Mod}$ where 
\begin{equation}
\begin{aligned}
    \mathcal{K}_{K}:&={((H^{\rm op}/K^{\rm op,+}H^{\rm op})\rtimes\widehat{K^{\rm op}})^{\vee,\rm op}} \\
    & \cong ((H/HK^+)^{\rm cop}\rtimes \hat{K}^{\rm cop})^{\rm cop,\vee} \\
    &\cong((H/HK^+)\rtimes \hat{K})^\vee. 
\end{aligned}
\end{equation}
Therefore, the local operator algebra $\mathcal{D}(s_b)$ is fusion-categorical Morita equivalent to $\mathcal{K}_{K}\cong((H/HK^+)\rtimes \hat{K})^\vee$, which has a structure of quasi-Hopf algebra. This is compatible with Examples~\ref{exp:smoothBd} and \ref{exp:roughBd}, since $\mathcal{K}_{H}\cong H$ and $\mathcal{K}_{\mathbb{C}1_H}\cong \hat{H}$. 
It is important to note that the algebra $\mathcal{K}_K$ is also fusion-categorical Morita equivalent to the local algebra described in Proposition~\ref{thm:local-ribbon-alg}. This equivalence arises because the category in Eq.~\eqref{eq:bdFusion} is equivalent to the fusion category associated with the boundary phase.

\subsubsection{Bulk-to-boundary ribbon operators}

Let us now consider the construction of the bulk-to-boundary ribbon operator for a gapped boundary determined by a Hopf subalgebra $K\leq H$. The process is similar to that of the bulk case: first, we define the triangle operators on eight types of triangles, then the general ribbon operators are defined recursively. The main tool is the observation that the local operator algebra $\mathcal{D}(s_b)$ is fusion-categorical Morita equivalent to $((H/HK^+)\otimes \hat{K})^\vee\cong(H/HK^+)^\vee\otimes K$ given above. Note that $(H/HK^+)^\vee$ can be identified with the subspace of $\hat{H}$ consisting of those functions vanishing on $HK^+$.

Denote $H\sslash K:=H/HK^+$. 
For $[h]=\bar{h}\in H\sslash K$, define
\begin{equation}
    L^{[h]}_{\pm}|x\rangle := L^{hh_K}_{\pm}|x\rangle,
\end{equation}
where $h_K\in K$ is the Haar integral of $K$. This is well-defined by the definition of $K^+$: if $h = h' + h''k$ with $k\in K^+$, then $hh_K=h'h_K+h''kh_K= h'h_K + h''\varepsilon(k)h_K = h'h_K$. In fact, $H\sslash K \cong Hh_K, [h]\mapsto hh_K$, as left $H$-modules via the restriction of the quotient map $H\to H\sslash K$, whose kernel is $H(1_H-h_K)$. A $\varphi\in\hat{K}$ acts on $H$ in the following way. First choose a basis $\{e_1,\cdots,e_n\}$ of $H$ such that the first $k$ vectors form a basis of $K$. Then each dual vector $e^i:=\hat{e}_i\in\hat{H}$, and so $\varphi = \sum_{i=1}^k\varphi(e_i)e^i$ can act on $H$ by $\varphi\rightharpoonup x$ and $x\leftharpoonup \varphi$. Therefore, the edge operators $T_\pm^\varphi:H\to H$ are well-defined for $\varphi\in\hat{K}$.

Ribbon operators are parameterized by pairs $([h],\varphi)$ with $[h]\in H\sslash K$ and $\varphi\in \hat{K}$. The building blocks are triangle operators. Basically, these operators are similar to Eqs.~(\ref{eq:tri1}--\ref{eq:tri16}) with some minor modification: $L^h_\pm$ are replaced by $L^{[h]}_{\pm}$. For example,
\begin{equation}
    F^{[h],\varphi}( \tau_R)|x\rangle =\varepsilon_{H}(h)T_{\pm}^{{\varphi}} |x\rangle, \quad  F^{[h],\varphi}( \tilde{\tau}_R)|x\rangle =\varepsilon_{\hat{K}}(\varphi) L^{[h]}_{\pm}|x\rangle,
\end{equation}
for direct right triangle $\tau_R$ and dual right triangle $\tilde{\tau}_R$, etc. 

For a general bulk-to-boundary ribbon $\rho$, say of type-B, ribbon operators are determined by the comultiplication of the dual  $\mathcal{D}(s_b)^\vee=(H\sslash K)\otimes \hat{K}$, whose comultiplication is given by \eqref{eq:comul-quasi-hopf}. Since it is a coquasi-Hopf algebra, the comultiplication is co-associative. To define the ribbon operator, take a decomposition $\rho=\rho_1\cup\rho_2$. For $[h]\in H\sslash K$ and $\varphi\in \hat{K}$, the ribbon operator is then given by 
\begin{equation}\label{eq:ribb_op_bdd_I}
\begin{aligned}
    F^{[h],\varphi}(\rho) & = \sum_{([h]\otimes\varphi)}F^{([h]\otimes\varphi)^{(1)}}(\rho_1)F^{([h]\otimes \varphi)^{(2)}}(\rho_2) \\
			&= \sum_{k} \sum_{(k), (h)} F^{[h^{(1)}], \hat{k}}(\rho_1) F^{[S(k^{(3)} )  h^{(2)} k^{(1)}] , \varphi (k^{(2)}  \bullet) }(\rho_2),  
\end{aligned}
\end{equation}
where $\{k\}$ is a basis of $K$. Note that $[S(k^{(3)})h^{(2)}k^{(1)}]$ only depends on $[h]$ by simple computation. As in the bulk case, this definition is independent of the choice of the ribbon decomposition. Ribbon operators for ribbons of type-A are defined by $((H\sslash K) \otimes \hat{K})^{\rm op}$ in a similar formula.

\begin{remark}
If we choose $H=\mathbb{C}[G]$ and $K=\mathbb{C}[N]$ for a finite group $G$ and a subgroup $N\leq G$, then since $\mathbb{C}[G]/ \mathbb{C}[G]\mathbb{C}[N]^+\cong \mathbb{C}[G/N]$, our definition reduces to the bulk-to-boundary ribbon operator of group-algebra boundary model in Ref.~\cite{Cong2017}. Thus, the group case fits well into our construction. 
\end{remark}

In Appendix~\ref{sec:AB}, we have shown that at any boundary site $s_b$ the following commutation relations hold for $k\in K$ and $\psi\in(H\sslash K)^\vee$: 
\begin{align}
           A^k(s_b)F^{[h],\varphi}(\rho_A) & = \sum_{(k)}F^{[h],\varphi(\bullet k^{(2)})}(\rho_A)A^{k^{(1)}}(s_b), \label{eq:bdd-comm3} \\
           A^k(s_b)F^{[h],\varphi}(\rho_B) & = \sum_{(k)}F^{[h],\varphi(\bullet k^{(1)})}(\rho_B)A^{k^{(2)}}(s_b), \label{eq:bdd-comm7}  \\
           B^{\psi}(s_b)F^{[h],\varphi}(\rho_A)&=\sum_{(h)}\sum_{\ell,(\ell)}\varphi(\ell^{(2)})F^{[h^{(1)}],\hat{\ell}}(\rho_A)B^{\psi(S(\ell^{(3)})h^{(2)}\ell^{(1)}\bullet)}(s_b), \label{eq:bdd-comm4} \\ 
            B^{\psi}(s_b)F^{[h],\varphi}(\rho_B)&=\sum_{(h)}\sum_{\ell,(\ell)}\varphi(\ell^{(2)})F^{[h^{(1)}],\hat{\ell}}(\rho_B)B^{\psi(\bullet S(\ell^{(3)})h^{(2)})\ell^{(1)})}(s_b), \label{eq:bdd-comm8} 
\end{align}
where $\{\ell\}$ is a basis of $K$. 

The local operator algebra $\mathcal{D}(s_b)$ is fusion-categorical Morita equivalent to $\mathcal{K}_K$, which is also fusion-categorical Morita equivalent to $(K\otimes K^{\rm op})\star \hat{H}$ by Proposition~\ref{thm:local-ribbon-alg}. The multiplication in the algebra $\mathcal{D}(s_b)$ is determined by the comultiplication of $\mathcal{D}(s_b)^\vee$, hence is given as
\begin{equation}\label{eq:mul_loc}
    (\varphi\otimes x)\cdot(\psi\otimes y) = \sum_{(x)}\varphi\psi(S(x^{(3)})\bullet x^{(1)})\otimes x^{(2)}y, 
\end{equation}
by Eq.~\eqref{eq:comul-quasi-hopf}. 
It is immediate to show that the relation 
\begin{equation}\label{eq:str_bdd_loc}
     A^k(s_b)B^\varphi(s_b) = \sum_{(k)} B^{\varphi(S(k^{(3)})\bullet k^{(1)})}(s_b)A^{k^{(2)}}(s_b)
\end{equation}
holds for $\varphi\in (H/HK^+)^\vee$ and $k\in K$ (the proof has no essential difference with that on the bulk). This straightening relation totally determines the multiplication in Eq.~\eqref{eq:mul_loc}. Hence every boundary site supports a representation of $\mathcal{D}(s_b)$. 

More precisely, consider a bulk-to-boundary ribbon $\rho$ with $s_0=\partial_0\rho$ and $s_1=\partial_1\rho=s_b$. Let $|\Omega\rangle$ be a ground state. Denote $|\Omega^{[h],\varphi}_\rho \rangle = F^{[h],\varphi}(\rho)|\Omega\rangle$ and and let $\mathcal{V}(s_0,s_1)$ denote the space spanned by $\{|\Omega^{[h],\varphi}_\rho\rangle\,|\,[h]\in H\sslash K,\,\varphi\in\hat{K}\}$. Then $B^\psi(s_b)A^k(s_b)$ defines a representation on $\mathcal{V}(s_0,s_1)$. In fact, the commutation relations above imply that
\[
    A^k(s_b)|\Omega_\rho^{[h],\varphi}\rangle = |\Omega_\rho^{[h],\varphi(\bullet k)}\rangle  \in \mathcal{V}(s_0,s_1),\; B^{\psi}(s_b)|\Omega_\rho^{[h],\varphi}\rangle = \sum\psi(h^{(2)})|\Omega_\rho^{[h^{(1)}],\varphi}\rangle  \in \mathcal{V}(s_0,s_1); 
\]
moreover, 
\begin{align*}
    &\quad (B^\psi(s_b)A^k(s_b))((B^\theta(s_b)A^l(s_b)) |\Omega_\rho^{[h],\varphi}\rangle) \\
    & = \sum \psi(h^{(2)})\theta(h^{(3)})|\Omega_\rho^{[h^{(1)}],\varphi(\bullet kl)}\rangle  \\
    & = \sum \psi(h^{(2)})\theta(S(k^{(3)})h^{(3)}k^{(1)})|\Omega_\rho^{[h^{(1)}],\varphi(\bullet k^{(2)}l)}\rangle  \\
    & = \sum B^{\psi\theta(S(k^{(3)})\bullet k^{(1)})}(s_b)A^{k^{(2)}l}(s_b)|\Omega_\rho^{[h],\varphi}\rangle \\
    & = ((B^\psi(s_b)A^k(s_b))(B^\theta(s_b)A^l(s_b))) |\Omega_\rho^{[h],\varphi}\rangle,
\end{align*}
where in the last equality we used Eq.~\eqref{eq:str_bdd_loc}. In other words, the map 
\[
    \mathcal{D}(s_b) \to \operatorname{End}(\mathcal{V}(s_0,s_1)),\quad \varphi\otimes k\mapsto B^{\varphi}(s_b)A^k(s_b)
\]
is an algebra homomorphism and defines the representation.

\begin{example}
  Let us take the Kac-Paljutkin algebra $H_8$ as an example to illustrate our construction. 
 As an algebra, it is generated by $x, y, z$, with the relations 
\begin{gather*}
        x^2 = y^2 = 1, \quad z^2 = \frac{1}{2}(1+x+y-xy), \\
        xy = yx,\quad zx=yz,\quad zy = xz. 
\end{gather*}
The coalgebra structure and the antipode are determined by 
\begin{gather*}
    \Delta(x) = x\otimes x,\quad \Delta(y) = y\otimes y, \\
    \Delta(z) = \frac{1}{2}(1\otimes 1+y\otimes 1 + 1\otimes x - y\otimes x)(z\otimes z), \\
    \varepsilon(x) = \varepsilon(y) = \varepsilon(z) = 1, \\
    S(x) = x,\quad S(y) = y, \quad S(z) = z,
\end{gather*}
with linear extension. Clearly, $H_8$ has dimension 8 with basis $\{1,x,y,xy,z,zx,zy,zxy\}$.

To classify the gapped boundaries of the $D(H_8)$ quantum double phase, it is necessary to classify the $H_8$-comodule algebras (or equivalently $H_8$-module algebras) up to Morita equivalence. This classification has been provided in Ref.~\cite[Sec.~5.3]{etingof2021tensor}.
There are six kinds of gapped boundaries.
Here to illustrate our ribbon operator construction, we only consider one example of $K\leq H_8$. By Lagrange theorem, $\dim K$ must divide $\dim H_8$.
Let us consider the four-dimensional case, which is generated by the group $G=\{1,x,y,xy\}$ of group-like elements of $H_8$. The corresponding Hopf subalgebra is $K=\mathbb{C}[G]\leq H=H_8$. 
It is easy to verify that $K$ is normal in the sense that $HK^+=K^+H$, thus $H\sslash K=H/HK^+$ is a Hopf algebra. Indeed, since $\dim (H/ HK^+) = 2$, it is clear that $H/ HK^+ \cong \mathbb{C}[\mathbb{Z}_2]$, where one can take $[x]$ as a generator. Thus the local operator algebra is fusion-categorical Morita equivalent to $\mathbb{C}^{\mathbb{Z}_2}\otimes \mathbb{C}[G]$.

We can write down the ribbon operators in this case. Note that the Haar integral of $K$ is $h_K = \frac{1}{4}(1+x+y+xy)$. Let $\delta_a$ denote the dual element of $a\in K$. The edge operators are 
\begin{align*}
    L^{[h]}_+|w\rangle & = \frac{1}{4}|hw+hxw+hyw+hxyw\rangle, \\
    L^{[h]}_-|w\rangle & = \frac{1}{4}|wh+whx+why+whxy \rangle, \\
    T^{\delta_a}_+|w\rangle & = \sum \delta_{a,w^{(2)}}|w^{(1)}\rangle, \\
    T^{\delta_a}_-|w\rangle & = \sum \delta_{a,w^{(1)}}|w^{(2)}\rangle,
\end{align*}
where $\delta_{i,j}=1$ if $i=j$ and $0$ otherwise. 
For $[h]\in H/HK^+$ and $\delta_a\in\hat{K}$, the ribbon operator $F^{[h],\delta_a}(\rho)$ is given by  
\begin{align*}
     & 
    \begin{aligned}
        \begin{tikzpicture}
            \draw[-latex, line width = 1.6pt] (0,0) -- (0,1.5); 
            \draw[-latex, black] (1.5,1.5) -- (0,1.5);
            \draw[-latex, black] (1.5,1.5) -- (1.5,0);
            \draw[-latex, black] (3,1.5) -- (1.5,1.5);
            \draw[-latex, black] (3,1.5) -- (3,0);
            \draw[-latex, black] (4.5,1.5) -- (3,1.5);
            \draw[-latex, black] (4.5,1.5) -- (4.5,0);
            \draw[-latex, black] (1.5,0) -- (0,0);
            \draw[-latex, black] (3,0) -- (1.5,0);
            \draw[-latex, black] (4.5,0) -- (3,0); 
            \draw[red, line width = 0.6pt] (0,1.5) -- (0.75,0.75);
            \draw[red, line width = 0.6pt] (1.5,1.5) -- (0.75,0.75);
            \draw[red, line width = 0.6pt] (1.5,1.5) -- (2.25,0.75);
            \draw[red, line width = 0.6pt] (3,1.5) -- (2.25,0.75);
            \draw[red, line width = 0.6pt] (3,1.5) -- (3.75,0.75);
            \draw[red, line width = 0.6pt] (4.5,1.5) -- (3.75,0.75);
            \draw[red, line width = 0.6pt] (4.5,1.5) -- (5.25,0.75); 
            \draw[dashed, black] (5.25,0.75) -- (0.75,0.75); 
            \node[ line width=0.2pt, dashed, draw opacity=0.5] (a) at (0.75,1.8){$b_3$};
            \node[ line width=0.2pt, dashed, draw opacity=0.5] (a) at (2.25,1.8){$b_2$};
            \node[ line width=0.2pt, dashed, draw opacity=0.5] (a) at (3.75,1.8){$b_1$};
            \node[ line width=0.2pt, dashed, draw opacity=0.5] (a) at (1.3,0.5){$a_3$};
            \node[ line width=0.2pt, dashed, draw opacity=0.5] (a) at (2.8,0.5){$a_2$};
            \node[ line width=0.2pt, dashed, draw opacity=0.5] (a) at (4.3,0.5){$a_1$};
            \node[ line width=0.2pt, dashed, draw opacity=0.5] (a) at (0.3,0.9){$s_b$};
            \node[ line width=0.2pt, dashed, draw opacity=0.5] (a) at (2.28,1.125){$\rho$};
            \node[ line width=0.2pt, dashed, draw opacity=0.5] (a) at (-1,0.75){$F^{[h],\delta_{a}}(\rho)$};
        \end{tikzpicture}
    \end{aligned} 
    \\ & 
    \begin{aligned}
        \begin{tikzpicture}
            \draw[-latex, line width = 1.6pt] (0,0) -- (0,1.5); 
            \draw[-latex, black] (1.5,1.5) -- (0,1.5);
            \draw[-latex, black] (1.5,1.5) -- (1.5,0);
            \draw[-latex, black] (3,1.5) -- (1.5,1.5);
            \draw[-latex, black] (3,1.5) -- (3,0);
            \draw[-latex, black] (4.5,1.5) -- (3,1.5);
            \draw[-latex, black] (4.5,1.5) -- (4.5,0);
            \draw[-latex, black] (1.5,0) -- (0,0);
            \draw[-latex, black] (3,0) -- (1.5,0);
            \draw[-latex, black] (4.5,0) -- (3,0); 
            \draw[red, line width = 0.6pt] (0,1.5) -- (0.75,0.75);
            \draw[red, line width = 0.6pt] (1.5,1.5) -- (0.75,0.75);
            \draw[red, line width = 0.6pt] (1.5,1.5) -- (2.25,0.75);
            \draw[red, line width = 0.6pt] (3,1.5) -- (2.25,0.75);
            \draw[red, line width = 0.6pt] (3,1.5) -- (3.75,0.75);
            \draw[red, line width = 0.6pt] (4.5,1.5) -- (3.75,0.75);
            \draw[red, line width = 0.6pt] (4.5,1.5) -- (5.25,0.75); 
            \draw[dashed, black] (5.25,0.75) -- (0.75,0.75); 
            \node[ line width=0.2pt, dashed, draw opacity=0.5] (a) at (0.75,1.9){$b_3^{(1)}$};
            \node[ line width=0.2pt, dashed, draw opacity=0.5] (a) at (2.25,1.9){$b_2^{(1)}$};
            \node[ line width=0.2pt, dashed, draw opacity=0.5] (a) at (3.75,1.9){$b_1^{(1)}$};
            \node[ line width=0.2pt, dashed, draw opacity=0.5] (a) at (1.3,0.5){$a'_3$};
            \node[ line width=0.2pt, dashed, draw opacity=0.5] (a) at (2.8,0.5){$a'_2$};
            \node[ line width=0.2pt, dashed, draw opacity=0.5] (a) at (4.3,0.5){$a'_1$};
            \node[ line width=0.2pt, dashed, draw opacity=0.5] (a) at (0.3,0.9){$s_b$};
            \node[ line width=0.2pt, dashed, draw opacity=0.5] (a) at (2.28,1.125){$\rho$};
            \node[ line width=0.2pt, dashed, draw opacity=0.5] (a) at (-2.5,0.75){$\displaystyle = \sum_{g,k\in G} 
    \delta_{a,kgb_3^{(2)}}\delta_{g,b_2^{(2)}}\delta_{k,b_1^{(2)}}$};
        \end{tikzpicture} 
    \end{aligned} 
\end{align*}
where 
\begin{gather*}
    a'_1=a_1hh_K,\quad  
    a'_2=a_2k^{-1}hkh_K,\quad 
    a'_3=a_3(gk)^{-1}h(gk)h_K. 
\end{gather*}

\end{example}

\subsubsection{Condensations}

Now let us characterize the algebras that give the condensations of the boundary model. One can move bulk excitations to the boundary by applying bulk-to-boundary ribbon operators. Since the stabilizer operators of the Hamiltonian on the boundary are different from those in the bulk, a bulk excitation may be condensed on the boundary sites. Our goal is to determine the subalgebra of the ribbon operator algebra which gives condensations. The group case has been studied in \cite{Beigi2011the}.

Fix a bulk-to-boundary ribbon $\rho=\rho_{\downarrow}$, say of type-B, connecting a bulk site $s$ and a boundary site $s_b=(v_b,f_b)$. Let $\mathcal{A}_\rho$ be the algebra of ribbon operators $F^{[h],\varphi}(\rho)$, and let $\mathcal{C}_\rho\subset \mathcal{A}_\rho$ be the subalgebra of ribbon operators that commute with both $A^K_{v_b}$ and $B^{H}_{f_b}$:   
\begin{equation}
    \mathcal{C}_\rho = \left\{F\in\mathcal{A}_\rho\,|\,[F,A^K_{v_b}]=[F,B^{H}_{f_b}]=0\right\}. 
\end{equation}
Applying elements in $\mathcal{C}_\rho$ to the ground state of the Hamiltonian will create excitations on $s$ but not on $s_b$. 

Let $F=\sum_{h,\varphi}c_{h,\varphi}F^{[h],\varphi}(\rho)$ be in $\mathcal{C}_\rho$ with $c_{h,\varphi}$ complex coefficients. By Eq.~\eqref{eq:bdd-comm7}, $[F,A^K_{v_b}]=0$ is equivalent to
\begin{equation}
\sum_{h,\varphi}c_{h,\varphi}F^{[h],\varphi}(\rho)A^{h_K}(s_b)=\sum_{h,\varphi}\sum_{(h_K)}c_{h,\varphi}F^{[h],\varphi(\bullet h_K^{(1)})}(\rho)A^{h_K^{(2)}}(s_b). 
\end{equation}
Multiplying $A^{h_K}(s_b)$ on both sides from the right, and using $A^k(s_b)A^l(s_b)=A^{kl}(s_b)$, $kh_K = \varepsilon(k)h_K$ and $h_K^2=h_K$, the above identity holds if and only if 
\begin{equation}
c_{h,\varphi} = c_{h,\varphi(\bullet h_K)}. 
\end{equation}
Similarly, $[F,B^H_{f_b}]=0$ if and only if $c_{h,\varphi} = 0$ for $h\notin K$. 
By the above computation, we conclude that $\mathcal{C}_\rho$ is generated by $\{F^{[k],\varphi(\bullet h_K)}(\rho)~|~k\in K, \varphi\in\hat{K}\}.$ This algebra determines all excitations at $s$ condensed on the boundary site $s_b$.

\subsection{Recover the bulk phase from the boundary phase}

From the boundary-bulk duality, the bulk phase $\ED$ can be recovered from boundary phase $\EB$ by taking monoidal center: $\ED \simeq \mathcal{Z}(\EB)$.
It is interesting to consider how to realize this point of view of boundary-bulk duality in the lattice model.
This, to our knowledge, has not been systematically discussed before.

Here we will follow the approach initially proposed by Kong \cite{Kong2019private,Kong2014} to recover the bulk anyon from its boundary condensate in the lattice model.
We know that for bulk anyons $x$, the physical information it carries is: topological charge $x$, fusion rule $[N_x]=N_{x,y}^z$, $F$-symbol, quantum dimension $d_x$, topological spin $\theta_x$ and braiding $c_x=\{c_{x,y}:x\otimes y\to y\otimes x\}_y$.
After anyon condensation, the boundary anyon $X=\mathbf{Cond}(x)=x_1\oplus \cdots \oplus x_n$ is a condensate of several kinds of bulk anyons and they all lose their braiding information.
To recover the bulk anyons from their boundary condensate, we need to add their braiding information back.
This intuition can be formalized as the following rigorous construction \cite{turaev2017monoidal}:

\begin{definition}
For a given monoidal category $(\EC,\otimes, \mathds{1},\alpha, l, r)$, a left half braiding is a pair $(x,\beta_{x,\bullet})$, where $x\in \Obj(\EC)$ and $\beta_{x,\bullet}=\{\beta_{x,y}: x\otimes y \to y \otimes x\}_{y\in\Obj(\EC)}$ is a natural isomorphism between functors $x\otimes \bullet$ and $\bullet\otimes x$, that is, for any $y,z \in \Obj(\EC)$ and $f:y\to z$, we have $(f\otimes \id_x) \comp \beta_{x,y}= \beta_{x,z}\comp (\id_x\otimes f)$, and which is $\otimes$-multiplicative in the sense that for $\forall~y,z \in \Obj(\EC)$, the following diagram commutes:
\begin{equation}\label{}
 \begin{split}
  \xymatrix{
(x\otimes y)\otimes z  \ar[d]^{\alpha_{x,y,z}}\ar[r]^{\beta_{x,y}\otimes \id_z} & (y\otimes x)\otimes z\ar[r]^{\alpha_{y,x,z}} & y\otimes (x\otimes z) \ar[d]^{\id_y\otimes \beta_{x,z}} \\
x\otimes (y\otimes z) \ar[r]_{\beta_{x,y\otimes z}}          &(y\otimes z)\otimes x \ar[r]_{\alpha_{y,z,x}}  &  y\otimes (z\otimes x) }
\end{split}.
\end{equation}
\end{definition}

The definition of left half braiding implies that $\beta_{x,\mathds{1}}=\id_x$; we call $x$ the underlying object of the left half braiding $(x,\beta_{x,\bullet})$.

\begin{definition}
The monoidal center $\mathcal{Z}(\EC)$ of a monoidal category $\EC$ is a category defined as follows.
\begin{enumerate}
  \item The objects of $\mathcal{Z}(\EC)$ are left half braidings of $\EC$, i.e., $\Obj(\mathcal{Z}(\EC))=\{(x,\beta_{x,\bullet})\}$;
  \item A morphism $(x,\beta_{x,\bullet})\to (y,\beta_{y,\bullet})$ is a morphism $f:x\to y$ in $\EC$ such that for all $z\in \Obj(\EC)$, the following diagram commutes
      \begin{equation}\label{}
        \begin{split}
       \xymatrix{
       x\otimes z  \ar[d]^{\beta_{x,z}}\ar[r]^{f\otimes \id_z}  & y\otimes z \ar[d]^{\beta_{y,z}} \\
       z\otimes x \ar[r]_{\id_z\otimes f}           &  z\otimes y }
        \end{split};
      \end{equation}
  \item The unit object of $\mathcal{Z}(\EC)$ is $(\mathds{1},\beta_{\mathds{1},\bullet}=\{\beta_{\mathds{1},x}=\id_x\})$;
  \item $\mathcal{Z}(\EC)$ is a monoidal category: the monoidal structure is given by $(x,\beta_{x,\bullet})\otimes (y,\beta_{y,\bullet})=(x\otimes y, (\beta_{x,\bullet}\otimes \id_y)\comp (\id_x\otimes \beta_{y,\bullet}))$, i.e., for any $z\in \Obj(\EC)$, the following diagram commutes
      \begin{equation}\label{}
 \begin{split}
  \xymatrix{
x\otimes (y\otimes z)  \ar[r]^{\id_x\otimes \beta_{y,z}} & x \otimes (z\otimes y) \ar[r]^{\alpha^{-1}_{x,z,y}} & (x\otimes z)\otimes y\ar[d]^{\beta_{x,z}\otimes \id_y} \\
(x\otimes y)\otimes z \ar[u]^{\alpha_{x,y,z}}\ar[r]_{\beta_{x\otimes y, z}}          &z\otimes(x\otimes y) \ar[r]_{\alpha^{-1}_{z,x,y}}  &  (z\otimes x)\otimes y }
\end{split};
\end{equation}
  \item $\mathcal{Z}(\EC)$ is a braided monoidal category: the braiding is given by $c_{(x,\beta_{x,\bullet}), (y,\beta_{y,\bullet})}=\beta_{x,y}$.
\end{enumerate}
\end{definition}

By definition, an anyon in the bulk phase (seen from the boundary-to-bulk perspective)  $\mathcal{Z}(\EB)$ is a pair $(X,\beta_{X}=\{\beta_{X,Y}:X\otimes Y \to Y\otimes X\}_{Y\in\Obj(\EB)})$, where $\beta_{X}$ is a half braiding. 
As illustrated in Fig.~\ref{fig:bdRib}, we could consider two situations: (i) the direct condensation of two bulk anyons $x,y$ and obtain $X\otimes Y$ in the boundary (this is realized by two bulk-to-boundary ribbon operators $F_X(\rho^1_{\downarrow})$ and $F_Y(\sigma^1_{\downarrow})$ with $\rho_{\downarrow}\cap \sigma_{\downarrow}=\emptyset$); and (ii) first braid $x,y$, then condense them and obtain $Y\otimes X$ in the boundary (in this case, two corresponding bulk-to-boundary ribbons have non-empty overlap $\rho^2_{\downarrow}\cap \sigma^2_{\downarrow}\neq \emptyset$). 
The braiding information is thus encoded in the morphism 
$\beta_{X,Y}:X\otimes Y \to Y\otimes X$. Since the boundary phase is described by a UFC $\EB$, we could use the boundary ribbon operator $F_{Y}(\xi)$ to realize this process.
$F_Y(\xi)$ create $Y,\bar{Y}$ pairs at two ends of  $\xi$, then we drag $\bar{Y}$ to the condensed anyon $Y$ and fuse them to vacuum. 
Suppose that $|\Omega\rangle$ is the ground state, then we have
\begin{equation} \label{eq:recover-braiding}
    F_Y(\xi)(F_{Y}(\sigma^1_{\downarrow}) F_X(\rho^1_{\downarrow})|\Omega\rangle)\cong F_{Y}(\sigma^2_{\downarrow}) F_X(\rho^2_{\downarrow})|\Omega\rangle,
\end{equation}
where `$\cong$' means that they are topologically equivalent.

\begin{figure}
  \centering
  \includegraphics[width=10cm]{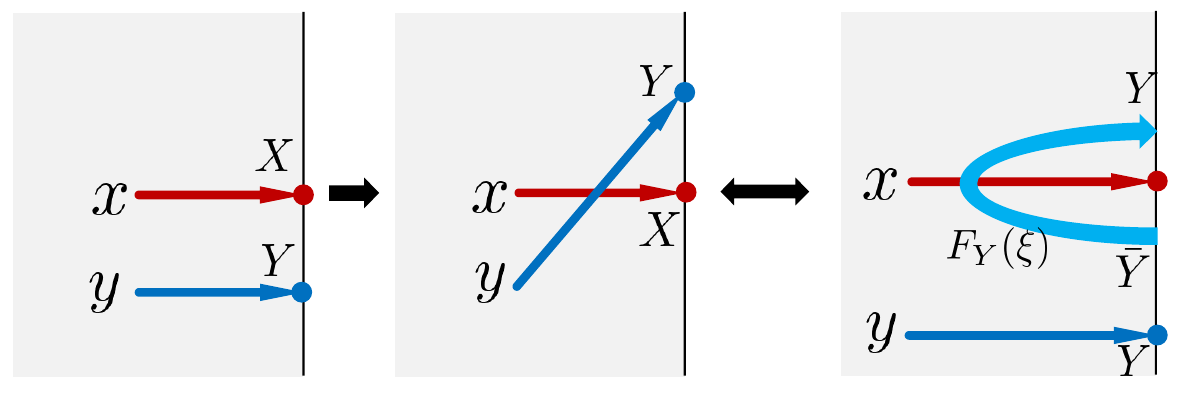}\\
  \caption{Illustration of the boundary ribbon operator. The bulk anyon braiding information is contained in the boundary ribbon.}\label{fig:bdRib}
\end{figure}

A simple example of toric code is given by Kong \cite{Kong2019private}.
For this case $H=\mathbb{C}[\mathbb{Z}_2]$, and the bulk phase is $\ED=\{\one,e,m,\varepsilon\}$.
There are two inequivalent gapped boundaries \cite{bravyi1998quantum}: rough boundary $\EuScript{B}_r=\{I,M\}$, where $I$ is the rough boundary vacuum and $M$ is the boundary magnetic charge (bulk anyons $m,\varepsilon$ condense to $M$, and $\one,e$ condense to $I$),
and smooth boundary $\EuScript{B}_s=\{I,E\}$,  where $I$ represents the boundary vacuum and $E$ represents the boundary electric charge (bulk anyons $e,\varepsilon$ condense to $E$, and $\one,m$ condense to $I$).

To recover the bulk phase from the rough boundary, we first need to calculate the monoidal center $\mathcal{Z}(\EuScript{B}_r)$. 
Each object in $\mathcal{Z}(\EuScript{B}_r)$ is a pair $(x, \beta_{x,\bullet})$. So our goal is to determine each object $x$ and isomorphism $\{\beta_{x,y}:x\otimes y \to y\otimes x\}_{y}$  corresponding to each bulk anyon. 
Notice here $x,y=I, M$; we will give a physical argument based on the lattice model. First, since the bulk particles $\one$ and $e$ condensed as vacuum of $\EuScript{B}_r$, they must be of the forms $(I, \beta_{I,\bullet})$ and $(I, \rho_{I,\bullet})$; similarly, $m$ and $\varepsilon$ are of the forms $(M, \beta_{M,\bullet})$ and $(M, \rho_{M,\bullet})$. Now we need to specify the half-braidings. Firstly, let us take $m$ as an example, the braidings involving $m$ are
\begin{equation}\label{}
\begin{split}
c_{m,\one}&=c_{(M,\beta_{M,\bullet}),(I,\beta_{I,\bullet})}=\beta_{M,I},\\
c_{m,m}&=c_{(M,\beta_{M,\bullet}),(M,\beta_{M,\bullet})}=\beta_{M,M},\\
c_{e,m}&=c_{(I,\rho_{I,\bullet}),(M,\beta_{M,\bullet})}=\rho_{I,M},\\
c_{\varepsilon,m}&=c_{(M,\rho_{M,\bullet}),(M,\beta_{M,\bullet})}=\rho_{M,M}.
\end{split}
\end{equation}
If we write down all $c_{x,y}$ for $\one$, $m$, $e$ and $\varepsilon$, comparing with what we have known for $c_{x,y}$, we can easily get the following result \cite{Kong2019private}:
\begin{equation}\label{}
\begin{split}
\mathds{1} & =(I,I\otimes I=I \xrightarrow{\id_{I}} I=I\otimes I,I\otimes M= M\xrightarrow{\id_{M}}M=I\otimes M), \\
e    &  = (I,I\otimes I=I \xrightarrow{\id_{I}} I=I\otimes I,I\otimes M= M\xrightarrow{-\id_{M}}M=I\otimes M),\\
m   &  = (M,M\otimes I=M\xrightarrow{\id_{M}}M=I\otimes M, M\otimes M=I \xrightarrow{\id_{I}}I=M\otimes M),\\
\varepsilon    &  = (M,M\otimes I=M\xrightarrow{\id_{M}}M=I\otimes M, M\otimes M=I \xrightarrow{-\id_{I}}I=M\otimes M).
\end{split}
\end{equation}

In general, it is believed that braiding has no physical observable effect, we can only observe the mutual statistics that corresponding to double braiding. But we shall argue that for a quantum double with boundary, braiding has its physical reality. This can be seen from toric code boundary model.
We first calculate the statistics of dragging one $e$ particle around a boundary $m$ vortex. Let $|\Psi(m,f_b)\rangle$ be an excited state containing a magnetic vortex at the boundary face $f_b$, that is, $B_{f_b}\vPsi =-\vPsi$. 
The boundary ribbon operator $F_{e}(\xi)$ in this case is just a $Z$-string:
\begin{equation}
    F_{e}(\xi)=\prod_{j\in \xi} \sigma^z_j.
\end{equation}
Notice that $f_b$ must be surrounded by $\xi$.
When implementing this on the boundary excited state $|\Psi(m,f_b)\rangle$, we have
\begin{equation}\label{}
|\Psi(m,f_b)\rangle \mapsto  F_{e}(\xi) |\Psi(m,f_b)\rangle =\prod_{f\, \mathrm{inside}\, \xi}B_f|\Psi(m,f_b)\rangle=-|\Psi(m,f_b)\rangle.
\end{equation}
The boundary ribbon operator $ F_{e}(\xi)$ realizes the half braiding between $m$ and $e$ and it has a physical observable effect.
All other half braidings can be realized in a similar way.

For the general Hopf quantum double boundary, the generalization is straightforward.
To recover the bulk phase from the boundary phase, we implement a boundary-to-bulk ribbon operator $F_X(\rho_{\uparrow})$ which connects boundary site $s_b$
and bulk site $s$. At boundary site $s_b$, there is a boundary excitation $X$ which is a condensate of several bulk anyons $x_i$ (e.g., $M$ is a condensate of $m,\varepsilon$ in toric code).
We have a collection of boundary ribbon operators 
\begin{equation}
    \mathcal{F}_{\EB}(\xi)=\{F_{Y}(\xi),Y\in \EB\}.
\end{equation}
Choosing different values from equivalent classes of $[h],[\varphi]$ for $F_X(\rho_{\uparrow})$ we obtain different bulk anyons.

\section{Gapped domain wall theory}
\label{sec:domainwall}

Let us now consider the gapped domain wall, which is closely related to the gapped boundary: the gapped boundary can be regarded as a domain wall that separates the Hopf quantum double phase from the vacuum phase; conversely, using the folding trick, a gapped domain wall can be transformed into a gapped boundary.
In this section, let us generalize the construction of the lattice model for gapped boundaries to the case of general domain walls and examine the correspondence between boundaries and domain walls.
We will see that the domain wall is characterized by an $H_1|H_2$-bicomodule algebra $\FB$ (or equivalently $H_1|H_2$-bimodule algebra $\FN$):
	
 \begin{definition}
Let $H_1,H_2$ be Hopf algebras, $\mathfrak{B}$ be an algebra, and $\mathfrak{B}$ be an $H_1|H_2$-bicomodule with coaction $\beta_{\mathfrak{B}}:\mathfrak{B}\to H_1\otimes \mathfrak{B}\otimes H_2$. If $\beta_{\mathfrak{B}}$ is a homomorphism of algebras, then $\mathfrak{B}$ is called an $H_1|H_2$-bicomodule algebra.
\end{definition}

\begin{remark}
    For an $H_1|H_2$-bicomodule algebra $\mathfrak{B}$ with coaction $\beta_{\mathfrak{B}}$, the Sweedler's notation reads $\beta_{\mathfrak{B}}(x) = \sum_{[x]}x^{[-1]}\otimes x^{[0]}\otimes x^{[1]} \in H_1\otimes \mathfrak{B}\otimes H_2$. 
\end{remark}

\begin{definition}
Let $H_1,H_2$ be Hopf algebras, $\mathfrak{N}$ be an algebra, and $\FN$ be an $H_1|H_2$-bimodule. $\FN$ is called a bimodule algebra if $h \triangleright (xy) \triangleleft g=\sum_{(h),(g)}(h^{(1)} \triangleright x \triangleleft g^{(2)}) (h^{(2)} \triangleright y \triangleleft g^{(1)})$.
\end{definition}

\subsection{Domain wall lattice}

	\begin{figure}[t]
		\centering
	(a)	\includegraphics[align=c,width=7cm]{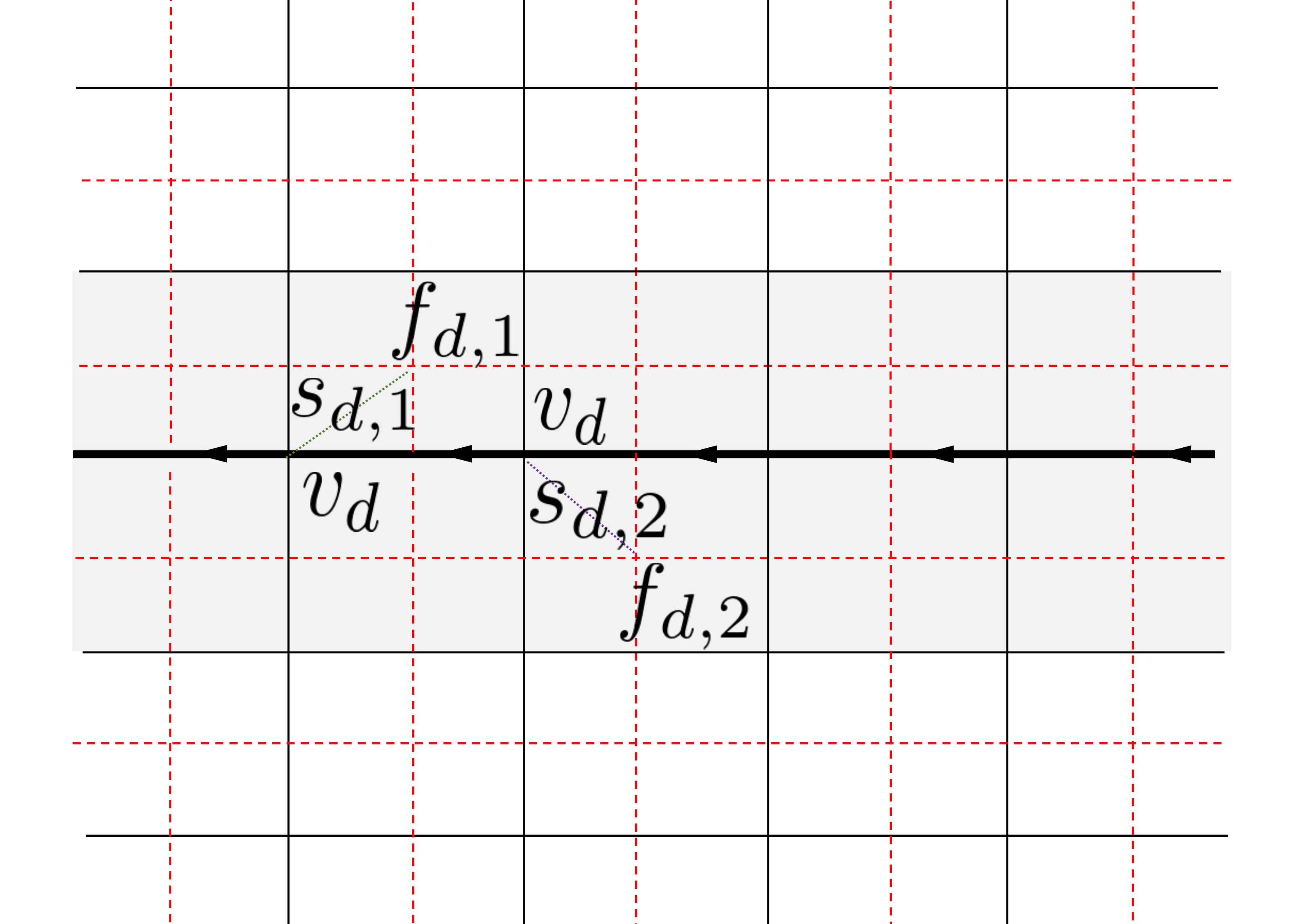}
	(b)	\includegraphics[align=c,width=6.5cm]{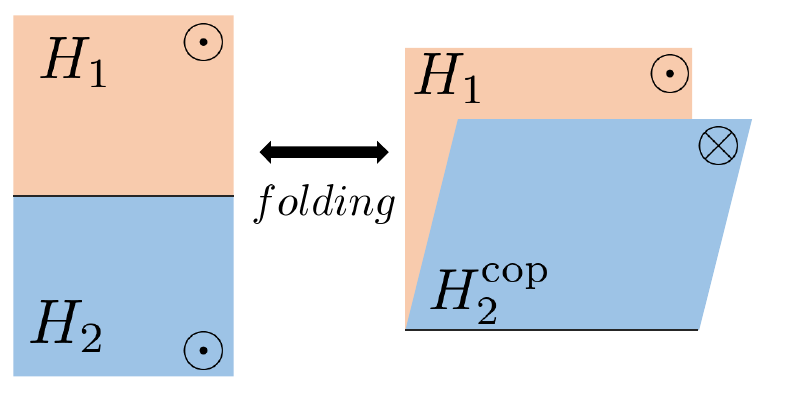}
		\caption{The depictions of (a) the domain wall, domain wall face, domain wall vertex, and domain wall site, and (b) the folding trick. \label{fig:wall}}
	\end{figure} 

Suppose that two $2d$ quantum double phases, which are respectively characterized by Hopf algebras $H_1$ and $H_2$, are separated by a $1d$ domain wall, see Fig.~\ref{fig:wall} (a). Suppose that the domain wall is determined by an $H_1|H_2$-bicomodule algebra $\FB$. The boundary lattice realization in Sec.~\ref{sec:bdd-II} can be generalized to domain wall straightforwardly. By convention, when walking along the direction on the domain wall edges, the $H_1$-bulk is always on the left-hand side while the $H_2$-bulk is always on the right-hand side. According to how the edges on the two bulks are connected to the domain wall vertices, there are four configurations. There are four corresponding sites for each of these configurations. We define the domain wall vertex operator $A^{a\otimes b}(s_{d_i})$ ($i=1,2$) for $a\otimes b\in \mathfrak{B}\otimes\mathfrak{B}^{\rm op}$ as follows: 
\begin{align}
     \begin{aligned}
    \begin{tikzpicture}
				\draw[-latex,black,line width = 1.6pt] (0,0) -- (0,1);
    	        \draw[red,line width = 1pt] (0,1) -- (0.5,0.5); 
             \draw[red,line width = 1pt] (0,1) -- (-0.5,0.5); 
    			\draw[-latex,black,line width = 1.6pt] (0,1) -- (0,2); 
				\draw[-latex,black] (1,1) -- (0,1); 
                    \draw[-latex,black] (0,1) -- (-1,1); 
				\node[ line width=0.2pt, dashed, draw opacity=0.5] (a) at (-0.2,0.5){$x$};
    		\node[ line width=0.2pt, dashed, draw opacity=0.5] (a) at (-0.2,1.5){$y$};
            \node[ line width=0.2pt, dashed, draw opacity=0.5] (a) at (0.5,1.2){$h$};
            \node[ line width=0.2pt, dashed, draw opacity=0.5] (a) at (-0.5,1.2){$k$};
            \node[ line width=0.2pt, dashed, draw opacity=0.5] (a) at (0.85,0.4){$s_{d_2}$}; 
            \node[ line width=0.2pt, dashed, draw opacity=0.5] (a) at (-0.8,0.4){$s_{d_1}$}; 
	\end{tikzpicture}  \end{aligned} & \quad \quad  A^{a\otimes b}(s_{d_i})|x,k,y,h\rangle=\sum_{[b]} |ax,kb^{[-1]},yb^{[0]},S(b^{[1]})h\rangle, \label{eq:bdd-sta-D1}\\
       \begin{aligned}
    \begin{tikzpicture}
				\draw[-latex,black,line width = 1.6pt] (0,0) -- (0,1);
    	        \draw[red,line width = 1pt] (0,1) -- (0.5,1.5); 
             \draw[red,line width = 1pt] (0,1) -- (-0.5,1.5); 
    			\draw[-latex,black,line width = 1.6pt] (0,1) -- (0,2); 
				\draw[-latex,black] (1,1) -- (0,1); 
                    \draw[-latex,black] (0,1) -- (-1,1); 
				\node[ line width=0.2pt, dashed, draw opacity=0.5] (a) at (-0.2,0.5){$x$};
    		\node[ line width=0.2pt, dashed, draw opacity=0.5] (a) at (-0.2,1.5){$y$};
            \node[ line width=0.2pt, dashed, draw opacity=0.5] (a) at (0.85,1.4){$s_{d_2}$};
            \node[ line width=0.2pt, dashed, draw opacity=0.5] (a) at (-0.8,1.4){$s_{d_1}$};
            \node[ line width=0.2pt, dashed, draw opacity=0.5] (a) at (0.5,0.8){$h$};
            \node[ line width=0.2pt, dashed, draw opacity=0.5] (a) at (-0.5,0.8){$k$};
	\end{tikzpicture} \end{aligned} & \quad \quad A^{a\otimes b}(s_{d_i})|x,k,y,h\rangle=\sum_{[a]} | a^{[0]}x, kS(a^{[-1]}), yb,a^{[1]}h\rangle,  \label{eq:bdd-sta-D2}\\
       \begin{aligned}
    \begin{tikzpicture}
				\draw[-latex,black,line width = 1.6pt] (0,0) -- (0,1); 
        	    \draw[red,line width = 1pt] (0,1) -- (0.5,0.5); 
                \draw[red,line width = 1pt] (0,1) -- (-0.5,0.5); 
    			\draw[-latex,black,line width = 1.6pt] (0,1) -- (0,2); 
				\draw[-latex,black] (0,1) -- (1,1); 
                    \draw[-latex,black] (-1,1) -- (0,1); 
				\node[ line width=0.2pt, dashed, draw opacity=0.5] (a) at (-0.2,0.5){$x$};
    		\node[ line width=0.2pt, dashed, draw opacity=0.5] (a) at (-0.2,1.5){$y$};
            \node[ line width=0.2pt, dashed, draw opacity=0.5] (a) at (0.5,1.2){$h$};
            \node[ line width=0.2pt, dashed, draw opacity=0.5] (a) at (-0.5,1.2){$k$};
            \node[ line width=0.2pt, dashed, draw opacity=0.5] (a) at (0.85,0.4){$s_{d_2}$}; 
            \node[ line width=0.2pt, dashed, draw opacity=0.5] (a) at (-0.8,0.4){$s_{d_1}$};
	\end{tikzpicture}\end{aligned} & \quad\quad A^{a\otimes b} (s_{d_i})|x,k,y,h\rangle=\sum_{[b]}|ax, S(b^{[-1]})k, yb^{[0]},hb^{[1]}\rangle, \label{eq:bdd-sta-D3} \\
      \begin{aligned} \begin{tikzpicture}
				\draw[-latex,black,line width = 1.6pt] (0,0) -- (0,1); 
        	    \draw[red,line width = 1pt] (0,1) -- (0.5,1.5); 
                    \draw[red,line width = 1pt] (0,1) -- (-0.5,1.5); 
    			\draw[-latex,black,line width = 1.6pt] (0,1) -- (0,2); 
				\draw[-latex,black] (0,1) -- (1,1); 
                    \draw[-latex,black] (-1,1) -- (0,1);
				\node[ line width=0.2pt, dashed, draw opacity=0.5] (a) at (-0.2,0.5){$x$};
    		\node[ line width=0.2pt, dashed, draw opacity=0.5] (a) at (-0.2,1.5){$y$};
            \node[ line width=0.2pt, dashed, draw opacity=0.5] (a) at (0.5,0.75){$h$};
            \node[ line width=0.2pt, dashed, draw opacity=0.5] (a) at (-0.5,0.8){$k$};
            \node[ line width=0.2pt, dashed, draw opacity=0.5] (a) at (0.85,1.4){$s_{d_2}$};
            \node[ line width=0.2pt, dashed, draw opacity=0.5] (a) at (-0.8,1.4){$s_{d_1}$};
	\end{tikzpicture}  \end{aligned}  &\quad \quad A^{a\otimes b} (s_{d_i})|x,k,y,h\rangle=\sum_{[a]}|a^{[0]}x,a^{[-1]}k, yb,hS(a^{[1]})\rangle, \label{eq:bdd-sta-D4} \\ \begin{aligned}
    \begin{tikzpicture}
				\draw[-latex,black,line width = 1.6pt] (0,0) -- (0,1);
    	        \draw[red,line width = 1pt] (0,1) -- (0.5,0.5); 
             \draw[red,line width = 1pt] (0,1) -- (-0.5,0.5); 
    			\draw[-latex,black,line width = 1.6pt] (0,1) -- (0,2); 
				\draw[-latex,black] (0,1) -- (1,1); 
                    \draw[-latex,black] (0,1) -- (-1,1); 
				\node[ line width=0.2pt, dashed, draw opacity=0.5] (a) at (-0.2,0.5){$x$};
    		\node[ line width=0.2pt, dashed, draw opacity=0.5] (a) at (-0.2,1.5){$y$};
            \node[ line width=0.2pt, dashed, draw opacity=0.5] (a) at (0.5,1.2){$h$};
            \node[ line width=0.2pt, dashed, draw opacity=0.5] (a) at (-0.5,1.2){$k$};
            \node[ line width=0.2pt, dashed, draw opacity=0.5] (a) at (0.85,0.4){$s_{d_2}$}; 
            \node[ line width=0.2pt, dashed, draw opacity=0.5] (a) at (-0.8,0.4){$s_{d_1}$}; 
	\end{tikzpicture}  \end{aligned} & \quad \quad  A^{a\otimes b}(s_{d_i})|x,k,y,h\rangle=\sum_{[b]} |ax,kb^{[-1]},yb^{[0]},hb^{[1]}\rangle, \label{eq:bdd-sta-D5}\\
       \begin{aligned}
    \begin{tikzpicture}
				\draw[-latex,black,line width = 1.6pt] (0,0) -- (0,1);
    	        \draw[red,line width = 1pt] (0,1) -- (0.5,1.5); 
             \draw[red,line width = 1pt] (0,1) -- (-0.5,1.5); 
    			\draw[-latex,black,line width = 1.6pt] (0,1) -- (0,2); 
				\draw[-latex,black] (0,1) -- (1,1); 
                    \draw[-latex,black] (0,1) -- (-1,1); 
				\node[ line width=0.2pt, dashed, draw opacity=0.5] (a) at (-0.2,0.5){$x$};
    		\node[ line width=0.2pt, dashed, draw opacity=0.5] (a) at (-0.2,1.5){$y$};
            \node[ line width=0.2pt, dashed, draw opacity=0.5] (a) at (0.85,1.4){$s_{d_2}$};
            \node[ line width=0.2pt, dashed, draw opacity=0.5] (a) at (-0.8,1.4){$s_{d_1}$};
            \node[ line width=0.2pt, dashed, draw opacity=0.5] (a) at (0.5,0.8){$h$};
            \node[ line width=0.2pt, dashed, draw opacity=0.5] (a) at (-0.5,0.8){$k$};
	\end{tikzpicture} \end{aligned} & \quad \quad A^{a\otimes b}(s_{d_i})|x,k,y,h\rangle=\sum_{[a]} | a^{[0]}x, kS(a^{[-1]}), yb,hS(a^{[1]})\rangle,  \label{eq:bdd-sta-D6}\\
       \begin{aligned}
    \begin{tikzpicture}
				\draw[-latex,black,line width = 1.6pt] (0,0) -- (0,1); 
        	    \draw[red,line width = 1pt] (0,1) -- (0.5,0.5); 
                \draw[red,line width = 1pt] (0,1) -- (-0.5,0.5); 
    			\draw[-latex,black,line width = 1.6pt] (0,1) -- (0,2); 
				\draw[-latex,black] (1,1) -- (0,1); 
                    \draw[-latex,black] (-1,1) -- (0,1); 
				\node[ line width=0.2pt, dashed, draw opacity=0.5] (a) at (-0.2,0.5){$x$};
    		\node[ line width=0.2pt, dashed, draw opacity=0.5] (a) at (-0.2,1.5){$y$};
            \node[ line width=0.2pt, dashed, draw opacity=0.5] (a) at (0.5,1.2){$h$};
            \node[ line width=0.2pt, dashed, draw opacity=0.5] (a) at (-0.5,1.2){$k$};
            \node[ line width=0.2pt, dashed, draw opacity=0.5] (a) at (0.85,0.4){$s_{d_2}$}; 
            \node[ line width=0.2pt, dashed, draw opacity=0.5] (a) at (-0.8,0.4){$s_{d_1}$};
	\end{tikzpicture}\end{aligned} & \quad\quad A^{a\otimes b} (s_{d_i})|x,k,y,h\rangle=\sum_{[b]}|ax, S(b^{[-1]})k, yb^{[0]},S(b^{[1]})h\rangle, \label{eq:bdd-sta-D7} \\
      \begin{aligned} \begin{tikzpicture}
				\draw[-latex,black,line width = 1.6pt] (0,0) -- (0,1); 
        	    \draw[red,line width = 1pt] (0,1) -- (0.5,1.5); 
                    \draw[red,line width = 1pt] (0,1) -- (-0.5,1.5); 
    			\draw[-latex,black,line width = 1.6pt] (0,1) -- (0,2); 
				\draw[-latex,black] (1,1) -- (0,1); 
                    \draw[-latex,black] (-1,1) -- (0,1);
				\node[ line width=0.2pt, dashed, draw opacity=0.5] (a) at (-0.2,0.5){$x$};
    		\node[ line width=0.2pt, dashed, draw opacity=0.5] (a) at (-0.2,1.5){$y$};
            \node[ line width=0.2pt, dashed, draw opacity=0.5] (a) at (0.5,0.75){$h$};
            \node[ line width=0.2pt, dashed, draw opacity=0.5] (a) at (-0.5,0.8){$k$};
            \node[ line width=0.2pt, dashed, draw opacity=0.5] (a) at (0.85,1.4){$s_{d_2}$};
            \node[ line width=0.2pt, dashed, draw opacity=0.5] (a) at (-0.8,1.4){$s_{d_1}$};
	\end{tikzpicture}  \end{aligned}  &\quad \quad A^{a\otimes b} (s_{d_i})|x,k,y,h\rangle=\sum_{[a]}|a^{[0]}x,a^{[-1]}k, yb,a^{[1]}h\rangle. \label{eq:bdd-sta-D8} 
\end{align}
The convention follows that in Sec.~\ref{sec:bulkQD}. 

To construct the lattice model of the domain wall, choose the symmetric separability idempotent $\lambda \in \FB\otimes \FB^{\rm op}$ of the algebra $\FB$ to form the domain wall vertex operator $A^\lambda(s_{d_i})$. 
The domain wall Hamiltonian is thus given as 
\begin{equation}
    H[C(\Sigma_d)] = \sum_{i=1,2}\sum_{d_i}(I-A^\lambda(s_{d_i})) + \sum_{i=1,2}\sum_{d_i}(I-B^{\varphi_{\hat{H}_i}}(s_{d_i})). 
\end{equation}

\subsection{Topological excitation of the domain wall}

In this part, we will establish the algebraic theory of the gapped domain wall and discuss the relation between the gapped domain wall and the gapped boundary in the language of Hopf algebras. The main results are summarized in Table \ref{tab:WallTop}. 
Similar to the boundary, we can analyze the domain wall excitations at two different but equivalent levels:

\begin{itemize}
    \item From the anyon condensation perspective, a domain wall that separates two bulks with topological order given by UMTCs $\ED_i$ ($i=1,2$) is a UFC $\EW$.
    The anyon condensation from $\ED_i$ to domain wall $\EW$ is controlled by a condensable algebra $A_i\in \ED_i$ (i.e., a connected separable commutative algebra) \cite{Kong2014}. The wall excitations are given by $\EW=\mathsf{Mod}_{A_i}(\ED_i)$.
    \item From the bulk UFC $\EC_i$ ($i=1,2$) perspective, two bulk topological order are given by $\ED_i=\mathcal{Z}(\EC_i)$.
    The domain wall is characterized by a $\EC_1|\EC_2$-bimodule category $\EM$.
    The domain wall excitations are given by the category of all module functors $\EW=\mathsf{Fun}_{\EC_1|\EC_2}(\EM,\EM)$.
    Notice that in $\EC_i$ bulk, $\EC_i$ can be regarded as a $\EC_i|\EC_i$-bimodule category; thus the bulk topological excitations can also be described by $\ED_i=\mathcal{Z}(\EC_i)\simeq \mathsf{Fun}_{\EC_i|\EC_i}(\EC_i,\EC_i)$.
\end{itemize}

For our construction, two bulk UFCs are $\EC_i=\mathsf{Rep}(H_i)$, and 
the domain wall is characterized by an $H_1|H_2$-bicomodule algebra $\mathfrak{B}$. 
The category ${_{\mathfrak{B}}}\mathsf{Mod}$ is a $\mathsf{Rep}(H_1)|\mathsf{Rep}(H_2)$-bimodule category. To see this, we only need to check that
for $X_i\in \mathsf{Rep}(H_i)$ and $M\in {_{\mathfrak{B}}}\mathsf{Mod}$,  $X_1\otimes M \otimes X_2$ is also a $\mathfrak{B}$-module.
In fact, it is easy to verify that the module structure map is given by (in diagrammatic representation)
\begin{equation}
    \mu_{X_1\otimes M \otimes X_2}= \begin{aligned}
        \begin{tikzpicture}
		\draw[black, line width=1.0pt] (0.5,0.5) arc (90:180:0.5);
		\draw[black, line width=1.0pt] (1,-1.4) arc (0:-180:0.5);
		\draw[black, line width=1.0pt] (1,0) arc (180:90:0.5);
		\draw[black, line width=1.0pt] (1.5,0)--(1.5,.8);
		\draw[black, line width=1.0pt] (2,0) arc (180:90:0.5);
	    \braid[width=0.5cm,
				height=0.3cm,
				line width=1.0pt,number of strands=4] (braid) a_1^{-1} a_3^{-1} a_2^{-1};
	    	\draw[black, line width=1.0pt] (0.5,0)--(.5,.8);
	    	\draw[black, line width=1.0pt] (0,0)--(0,-1.4);
	    	\draw[black, line width=1.0pt] (2.5,-2.2)--(2.5,0.8);
	    	\draw[black, line width=1.0pt] (0.5,-2.2)--(0.5,-1.3);
	    	\draw[black, line width=1.0pt] (2,-1.4)--(2,-2.2);
	    	\draw[black, line width=1.0pt] (1.5,-1.4)--(1.5,-2.2);
	    	\node[ line width=0.2pt, dashed, draw opacity=0.5] (a) at (0.5,-2.5){$\mathfrak{B}$};
	    	\node[ line width=0.2pt, dashed, draw opacity=0.5] (a) at (0,-2.1){$H_1$};
	    	\node[ line width=0.2pt, dashed, draw opacity=0.5] (a) at (1,-2.1){$H_2$};
	    	\node[ line width=0.2pt, dashed, draw opacity=0.5] (a) at (1.5,-2.5){$X_1$};
	    	\node[ line width=0.2pt, dashed, draw opacity=0.5] (a) at (2,-2.49){$M$};
	    	\node[ line width=0.2pt, dashed, draw opacity=0.5] (a) at (2.5,-2.5){$X_2$};
	    	\node[ line width=0.2pt, dashed, draw opacity=0.5] (a) at (0.55,1.1){$X_1$};
	    	\node[ line width=0.2pt, dashed, draw opacity=0.5] (a) at (1.5,1.12){$M$};
	    	\node[ line width=0.2pt, dashed, draw opacity=0.5] (a) at (2.5,1.1){$X_2$};
	     	\node[ line width=0.2pt, dashed, draw opacity=0.5] (a) at (0.7,-1){$\mathfrak{B}$};
	     	\node[ line width=0.2pt, dashed, draw opacity=0.5] (a) at (0.5,-1.9){$\bullet$};
	    	\node[ line width=0.2pt, dashed, draw opacity=0.5] (a) at (0.5,0.5){$\bullet$};
	    	\node[ line width=0.2pt, dashed, draw opacity=0.5] (a) at (-0.2,0.4){$\mu_{X_1}$};
	    	\node[ line width=0.2pt, dashed, draw opacity=0.5] (a) at (1.5,0.5){$\bullet$};
	    	\node[ line width=0.2pt, dashed, draw opacity=0.5] (a) at (1.85,0.4){$\mu_{M}$};
	    	\node[ line width=0.2pt, dashed, draw opacity=0.5] (a) at (2.5,0.5){$\bullet$};
	    	\node[ line width=0.2pt, dashed, draw opacity=0.5] (a) at (2.9,0.4){$\mu_{X_2}$};
	     	\node[ line width=0.2pt, dashed, draw opacity=0.5] (a) at (0.25,-1.5){$\beta_{\mathfrak{B}}$};
			\end{tikzpicture}
    \end{aligned},
\end{equation}
\emph{viz.}, $\mu_{X_1\otimes M\otimes X_2}=
(\mu_{X_1}\otimes \mu_{M}\otimes \mu_{X_2})
\comp
(\id_{H_1}\otimes \tau_{\mathfrak{B},X_1}\otimes \id_{M\otimes H_2\otimes X_2} )
\comp
(\id_{H_1\otimes \mathfrak{B}\otimes X_1}\otimes \tau_{H_2,M}\otimes \id_{X_2})
\comp
(\id_{H_1\otimes \mathfrak{B}} \otimes \tau_{H_2,X_1}\otimes \id_{M\otimes X_2})
\comp
(\beta_{\mathfrak{B}}\otimes \id_{X_1\otimes M\otimes X_2} )$.
This proves that the boundary bimodule category for our model is $\EM\simeq {_{\mathfrak{B}}}\mathsf{Mod}$.

We can also apply the folding trick to our model. As illustrated in Fig.~\ref{fig:wall} (b), an $H_1|H_2$  gapped domain wall can be transformed into an $H_1\otimes H^{\rm cop}_2$ gapped boundary.
The $H_1|H_2$-bicomodule algebra $\FB$ now becomes an $H_1\otimes H_2^{\rm cop}$-left comodule algebra.
All the results about boundary topological excitations can be applied to the domain wall. Conversely, the gapped boundary can be regarded as a gapped domain wall between Hopf algebras $H_1=H$ and vacuum $H_2=\mathbb{C}$.
From this, we see that the gapped boundary theory and gapped domain wall theory for Hopf quantum double phase are dual to each other.

\begin{proposition}
   The topological excitations of the gapped domain wall characterized by an $H_1|H_2$-bicomodule algebra $\FB$ are equivalently described by the following categories: 
   \begin{itemize}
    \item[(i)] the category of $\mathsf{Rep}(H_1)|\mathsf{Rep}(H_2)$-bimodule functors $\mathsf{Fun}_{\mathsf{Rep}(H_1)|\mathsf{Rep}(H_2)}({_{\FB}}\mathsf{Mod},{_{\FB}}\mathsf{Mod})$; 
    \item[(ii)] the category of $\mathsf{Rep}(H_1\otimes H_2^{\rm cop})$-module functors $\EW\simeq \mathsf{Fun}_{\mathsf{Rep}(H_1\otimes H_2^{\rm cop})} ({_{\FB}}\mathsf{Mod},{_{\FB}}\mathsf{Mod})$; 
    \item[(iii)] the category of all $\FB|\FB$-bimodules ${_{\FB}}\mathsf{Mod}_{\FB}$.
    \end{itemize}
\end{proposition}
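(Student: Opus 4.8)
The plan is to deduce the proposition from the gapped boundary theory of Section~\ref{sec:bd} via the folding trick, and then to unwind the resulting equivalences at the level of module categories. Statement~(i) is simply the specialization to $\EM = {_{\FB}}\mathsf{Mod}$ of the general principle recalled at the start of this subsection, namely that the excitations of a domain wall between bulks $\EC_1$ and $\EC_2$ form $\mathsf{Fun}_{\EC_1|\EC_2}(\EM,\EM)$; so the substance is in establishing (ii) and (iii), and in checking that the bimodule category structure on ${_{\FB}}\mathsf{Mod}$ really is the one produced by the diagrammatic map $\mu_{X_1\otimes M\otimes X_2}$ displayed above.

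First I would make folding precise on the algebraic side. Folding the $H_1$-bulk onto the $H_2$-bulk replaces the pair $(H_1,H_2)$ by the single Hopf algebra $H := H_1\otimes H_2^{\rm cop}$; the co-opposite appears because folding reverses the orientation of the $H_2$-region, hence the orientation around faces, which on the dual side is the passage from $\hat{H}_2$ to $\hat{H}_2^{\rm op}$, i.e.\ from $H_2$ to $H_2^{\rm cop}$. Under this identification an $H_1|H_2$-bicomodule algebra $\FB$, with coaction $\beta_{\FB}:\FB\to H_1\otimes\FB\otimes H_2$, becomes a left $H$-comodule algebra with coaction $b\mapsto \sum_{[b]}(b^{[-1]}\otimes b^{[1]})\otimes b^{[0]}$, and indecomposability is preserved. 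Applying Proposition~\ref{prop:bdd-top-exc1} and Table~\ref{tab:bdTop1} to this folded boundary gives that the wall excitations are classified by left $H$-covariant $\FB|\FB$-bimodules, equivalently by $\mathsf{Rep}(H)$-module functors ${_{\FB}}\mathsf{Mod}\to{_{\FB}}\mathsf{Mod}$, which is exactly (ii). For (iii) one notes that a left $H$-comodule structure on a $\FB|\FB$-bimodule $Y$ is, by the same folding dictionary, the same datum as a compatible $H_1|H_2$-bicomodule structure, so ${_{\FB}}\mathsf{Mod}_{\FB}$ (with this covariance built in, as in the boundary case) is the category appearing in (iii); the equivalence (ii) $\Leftrightarrow$ (iii) is then \cite[Theorem~1.21]{andruskiewitsch2007module} applied verbatim as in Section~\ref{sec:bdd-top-exc}. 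To connect (i) with (ii) I would invoke the standard fact that, for finite tensor categories $\EC,\ED$, a $\EC|\ED$-bimodule category structure on an abelian category is the same as a module category structure over the Deligne product $\EC\boxtimes\ED^{\rm rev}$, with $\mathsf{Fun}_{\EC|\ED}(\EM,\EM)\simeq\mathsf{Fun}_{\EC\boxtimes\ED^{\rm rev}}(\EM,\EM)$ as fusion categories, together with the monoidal equivalences $\mathsf{Rep}(H_1)\boxtimes\mathsf{Rep}(H_2)^{\rm rev}\simeq\mathsf{Rep}(H_1)\boxtimes\mathsf{Rep}(H_2^{\rm cop})\simeq\mathsf{Rep}(H_1\otimes H_2^{\rm cop})$ (the first because reversing the tensor product of $\mathsf{Rep}(H_2)$ corresponds to co-opposing $H_2$, the second by the Künneth-type decomposition of the representation category of a tensor product of Hopf algebras).

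I expect the main obstacle to be the compatibility check that the $\mathsf{Rep}(H_1)|\mathsf{Rep}(H_2)$-bimodule structure on ${_{\FB}}\mathsf{Mod}$ defined \emph{directly} by $\mu_{X_1\otimes M\otimes X_2}$ coincides, under all the identifications above, with the $\mathsf{Rep}(H_1\otimes H_2^{\rm cop})$-module structure coming from the folded $H$-comodule algebra. Concretely this is a string-diagram computation showing that acting by $H_1$ on the left and $H_2$ on the right through $\beta_{\FB}$ and the swap maps $\tau$ agrees with folding first and then acting by $H_1\otimes H_2^{\rm cop}$ through the single coaction — and it is precisely the passage $H_2\rightsquigarrow H_2^{\rm cop}$ that makes the braidings line up, since reversing the order of the right-hand tensor factors reverses the coproduct of $H_2$. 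Once this is verified, the three descriptions are manifestly equivalent and all invariants (fusion rules, Frobenius–Perron dimensions, etc.) transport along the equivalences; the remaining diagrammatic manipulations are routine but lengthy and I would relegate them to an appendix, in parallel with the treatment of the gapped boundary.
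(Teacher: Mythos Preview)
Your proposal is correct and follows the same strategy as the paper: reduce to the boundary case via the folding trick $H_1|H_2 \rightsquigarrow H_1\otimes H_2^{\rm cop}$ and then invoke the boundary classification from Section~\ref{sec:bdd-top-exc}. The paper's own proof is the single sentence ``straightforward by using the result for boundary topological excitations and the folding trick,'' so in fact you have supplied far more detail (the explicit folded coaction, the $\EC\boxtimes\ED^{\rm rev}$ identification, and the compatibility check for $\mu_{X_1\otimes M\otimes X_2}$) than the paper deemed necessary.
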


\begin{proof}
The proof is straightforward by using the result for boundary topological excitations and the folding trick.
\end{proof}

\begin{table}[t]
\centering \small 
\begin{tabular}{|l|c|c|}
\hline
   &Hopf quantum double model & String-net model   \\ \hline
 Bulks ($i=1,2$) & Hopf algebras $H_i$ & UFC $\EC_i=\mathsf{Rep}(H_i)$   \\ \hline
 Bulk phases  & $\ED_i= \mathsf{Rep}(D(H_i))$ & $\mathsf{Fun}_{\mathsf{Rep}(H_i)|\mathsf{Rep}(H_i)}(\mathsf{Rep}(H_i),\mathsf{Rep}(H_i)) $  \\ \hline
 Domain wall & $H_1|H_2$-bicomodule algebra $\mathfrak{B}$ & $\EC_1|\EC_2$-bimodule category ${_{\mathfrak{B}}}\mathsf{Mod}$ \\\hline
Domain wall phase & ${_{\mathfrak{B}}}\mathsf{Mod}_{\mathfrak{B}}$ & $\mathsf{Fun}_{\EC_1|\EC_2}({_{\mathfrak{B}}}\EM,{_{\mathfrak{B}}}\EM)$ \\\hline
Domain wall defect & ${_{\mathfrak{B}}}\mathsf{Mod}_{\mathfrak{A}}$ & $\mathsf{Fun}_{\mathsf{Rep}(H_1)|\mathsf{Rep}(H_2)}({_{\mathfrak{A}}}\EM,{_{\mathfrak{B}}}\EM)$ \\\hline
\end{tabular}
\caption{The dictionary between Hopf quantum double model and string-net model for gapped domain wall.\label{tab:WallTop}}
\end{table}

Now consider an $H_1|H_2$-bicomodule algebra $\FB$, ${_{\FB}}\mathsf{Mod}$ is a $\mathsf{Rep}(H_1)|\mathsf{Rep}(H_2)$-bimodule category.
If we reverse the direction of the domain wall, we obtain an $H_2|H_1$-domain wall. This domain wall is given by the opposite category ${_{\FB}}\mathsf{Mod}^{\rm op}$, which is a $\mathsf{Rep}(H_2)|\mathsf{Rep}(H_1)$-bimodule category.
If we have
\begin{equation}
    {_{\FB}}\mathsf{Mod}^{\rm op} \boxtimes_{\mathsf{Rep}(H_1)}  {_{\FB}}\mathsf{Mod} \simeq \mathsf{Rep}(H_2), \quad {_{\FB}}\mathsf{Mod} \boxtimes_{\mathsf{Rep}(H_2)}  {_{\FB}}\mathsf{Mod}^{\rm op}\simeq \mathsf{Rep}(H_1),
\end{equation}
this domain wall will be called a transparent domain wall (the corresponding bimodule category is called invertible). 
The box tensor here is the Deligne tensor product, which means that we fuse two domain walls over some given bulk that is in between them.

	\section{Ground states in the presence of boundaries and domain walls}
	\label{sec:bdgs}
	
	In this section, we will solve the extended generalized quantum double model with gapped boundaries and domain walls and obtain the corresponding ground states. 
	To this end, we first review the Hopf tensor network representation of the states proposed in \cite{Buerschaper2013a}, which is convenient for investigating the entanglement properties of the states.
	
\subsection{Hopf tensor network states}\label{sec:tensor-network}

Consider two Hopf algebras $J,W$ with a pairing $\langle \bullet, \bullet \rangle: J\otimes W\to \mathbb{C}$ and $\phi \in J, x\in W$.
The basic ingredient of the Hopf tensor network is the rank-2 Hopf tensor $\Psi(x,\phi)=\langle \phi,x\rangle$,
\begin{equation}
    \langle \phi, x\rangle=	\begin{aligned}
        \begin{tikzpicture}
        \fill[red, opacity=0.8] (-0.4,0) rectangle (0.4,0.4); 
        \draw[-latex,black] (-1,0) -- (1,0);
        \draw[-latex,blue] (-0.9,0.7) arc(180:270:0.5);
        \draw[-latex,blue] (0.4,0.2) arc(-90:0:0.5);
        \node[ line width=0.2pt, dashed, draw opacity=0.5] (a) at (0,0.7){$\phi$};
        \node[ line width=0.2pt, dashed, draw opacity=0.5] (a) at (1.3,0){$x$};
                \end{tikzpicture}
            \end{aligned}
            =
            \begin{aligned}
        \begin{tikzpicture}
        \fill[red, opacity=0.8] (-0.4,0) rectangle (0.4,0.4); 
        \draw[-latex,black] (1,0) -- (-1,0);
        \draw[-latex,blue] (-0.9,0.7) arc(180:270:0.5);
        \draw[-latex,blue] (0.4,0.2) arc(-90:0:0.5);
        \node[ line width=0.2pt, dashed, draw opacity=0.5] (a) at (0,0.7){$\phi$};
        \node[ line width=0.2pt, dashed, draw opacity=0.5] (a) at (1.6,0){$S(x)$};
                \end{tikzpicture}
            \end{aligned},
        \end{equation}
    \begin{equation}
    \langle \phi, x\rangle=		\begin{aligned}
        \begin{tikzpicture}
        \fill[red, opacity=0.8] (-0.4,0) rectangle (0.4,0.4); 
        \draw[-latex,black] (-1,0) -- (1,0);
        \draw[-latex,blue] (-0.9,0.7) arc(180:270:0.5);
        \draw[-latex,blue] (0.4,0.2) arc(-90:0:0.5);
        \node[ line width=0.2pt, dashed, draw opacity=0.5] (a) at (0,0.7){$\phi$};
        \node[ line width=0.2pt, dashed, draw opacity=0.5] (a) at (1.3,0){$x$};
                \end{tikzpicture}
            \end{aligned}
            =
        \begin{aligned}
        \begin{tikzpicture}
        \fill[red, opacity=0.8] (-0.4,0) rectangle (0.4,0.4); 
        \draw[-latex,black] (-1,0) -- (1,0);
        \draw[-latex,blue] (-0.4,0.2) arc(-90:-180:0.5);
        \draw[-latex,blue] (0.9,0.7) arc(0:-90:0.5);
        \node[ line width=0.2pt, dashed, draw opacity=0.5] (a) at (0,0.7){$S(\phi)$};
        \node[ line width=0.2pt, dashed, draw opacity=0.5] (a) at (1.3,0){$x$};
                \end{tikzpicture}
            \end{aligned}.
        \end{equation}
Notice that here the blue edges (corresponding to the faces of the lattice) are labeled with $\phi \in J$ and the black edges (corresponding to the edges of the lattice) are labeled with $x \in W$. For fixed $\phi,x$, the evaluation is determined by the pairing, and the process will be called Hopf trace.
Therefore, the Hopf tensor network representation is a diagrammatic representation of the pairing.		
        
For two Hopf algebras $J$ and $I$	which both have their pairings with Hopf algebra $W$, we can contract the rank-2 tensors to obtain rank-3 tensors. 
There are two types of contractions. The first one is parallel gluing, which is determined by the comultiplication of $W$.
For example, fix $\phi \in J$, $\psi \in I$ and $x \in W$, we define 
        \begin{equation}\label{eq:pglue}
        \begin{aligned}
        \begin{tikzpicture}
        \fill[red, opacity=0.8] (-0.4,-0.4) rectangle (0.4,0.4); 
        \draw[-latex,black] (-1,0) -- (1,0);
        \draw[-latex,blue] (-0.9,0.7) arc(180:270:0.5);
        \draw[-latex,blue] (0.4,0.2) arc(-90:0:0.5);
        \draw[-latex,blue] (-0.4,-0.2) arc(90:180:0.5);
        \draw[-latex,blue] (0.9,-0.7) arc(0:90:0.5);
        \node[ line width=0.2pt, dashed, draw opacity=0.5] (a) at (0,0.7){$\phi$};
        \node[ line width=0.2pt, dashed, draw opacity=0.5] (a) at (0,-0.7){$\psi$};
        \node[ line width=0.2pt, dashed, draw opacity=0.5] (a) at (1.3,0){$x$};
                \end{tikzpicture}
            \end{aligned}
            :=\sum_{(x)}
        \begin{aligned}
        \begin{tikzpicture}
        \fill[red, opacity=0.8] (-0.4,0) rectangle (0.4,0.4); 
        \draw[-latex,black] (-1,0) -- (1,0);
    \draw[-latex,blue] (-0.9,0.7) arc(180:270:0.5);
        \draw[-latex,blue] (0.4,0.2) arc(-90:0:0.5);
        \node[ line width=0.2pt, dashed, draw opacity=0.5] (a) at (0,0.7){$\phi$};
        \node[ line width=0.2pt, dashed, draw opacity=0.5] (a) at (1.3,0.2){$x^{(2)}$};
         \fill[red, opacity=0.8] (-0.4,-0.6) rectangle (0.4,-0.2); 
        \draw[-latex,black] (-1,-0.2) -- (1,-0.2);
        \draw[-latex,blue] (-0.4,-0.4) arc(90:180:0.5);
        \draw[-latex,blue] (0.9,-0.9) arc(0:90:0.5);
        \node[ line width=0.2pt, dashed, draw opacity=0.5] (a) at (0,-0.9){$\psi$};
        \node[ line width=0.2pt, dashed, draw opacity=0.5] (a) at (1.3,-0.4){$x^{(1)}$};
                \end{tikzpicture}
            \end{aligned}
            = \sum_{(x)}\langle \phi, x^{(2)} \rangle \langle \psi, S(x^{(1)})\rangle.
        \end{equation}
The second one is vertical gluing, which is determined by the comultiplication of $J$. For example, fix $\phi \in J$ and $x,y\in W$, we have
    \begin{equation}
    \begin{aligned}
        \begin{tikzpicture}
        \fill[red, opacity=0.8] (-0.4,0) rectangle (0.4,0.4); 
        \draw[-latex,black] (-1,0) -- (1.1,0);
        \draw[-latex,blue] (-0.9,0.7) arc(180:270:0.5);
        \draw[-latex,blue] (0.4,0.2) arc(-90:0:0.5);
        \node[ line width=0.2pt, dashed, draw opacity=0.5] (a) at (0,1.1){$\phi$};
        \node[ line width=0.2pt, dashed, draw opacity=0.5] (a) at (0.8,-0.4){$x$};
        %2
        \fill[red, opacity=0.8] (0.7,0.7) rectangle (1.1,1.5); 
        \draw[-latex,black] (1.1,0) -- (1.1,2);
        \draw[-latex,blue] (0.9,1.5) arc(0:90:0.5);			\node[ line width=0.2pt, dashed, draw opacity=0.5] (a) at (1.4,1.8){$y$};
        \end{tikzpicture}
            \end{aligned}
            :=
            \sum_{(\phi)}
        \begin{aligned}
        \begin{tikzpicture}
        \fill[red, opacity=0.8] (-0.4,0) rectangle (0.4,0.4); 
        \fill[red, opacity=0.8] (1.3,1.0) rectangle (1.7,1.8); 
        \draw[-latex,black] (-1,0) -- (1,0);
        \draw[-latex,black] (1.7,0.4) -- (1.7,2.4);
        \draw[-latex,blue] (-0.9,0.7) arc(180:270:0.5);
        \draw[-latex,blue] (1.5,0.4) -- (1.5,1.0);
        \draw[-latex,blue] (0.4,0.2) -- (1,0.2);
        \draw[-latex,blue] (1.5,1.8) arc(0:90:0.5);
        \node[ line width=0.2pt, dashed, draw opacity=0.5] (a) at (0,0.7){$\phi^{(1)}$};
        \node[ line width=0.2pt, dashed, draw opacity=0.5] (a) at (0.9,1.4){$\phi^{(2)}$};
        \node[ line width=0.2pt, dashed, draw opacity=0.5] (a) at (1.3,0){$x$};
        \node[ line width=0.2pt, dashed, draw opacity=0.5] (a) at (2.0,2.0){$y$};
                \end{tikzpicture}
            \end{aligned}=	\sum_{(\phi)}\langle \phi^{(1)},x\rangle \langle \phi^{(2)} ,y\rangle. \label{eq:vglue}
        \end{equation}
Notice that when dealing with a face, the starting site determined the starting point of ordered pairing according to Eq.~\eqref{eq:vglue}.

With the above rules, for arbitrary Hopf tensor network $\Gamma$ with no free indices (which is a lattice for which all edges and faces have been labeled, and each face has its fixed starting site), the above Hopf trace gives us a complex number. We denote the Hopf trace as 
\begin{equation}
  \Psi_{\Gamma}(\{x_e\}_{e\in E_{\Gamma}}, \{ \phi_f\}_{f\in F_{\Gamma}})= \mathrm{ttr}_{\Gamma} (\{x_e\}_{e\in E_{\Gamma}}, \{ \phi_f\}_{f\in F_{\Gamma}}),
\end{equation} 
whose evaluation rules are given by Eqs.~\eqref{eq:pglue} and \eqref{eq:vglue}.

Now consider a lattice $C(\Sigma)$ of a $2d$ surface $\Sigma$ with face set $F$ and edge set $E$. We assign each face $f$ a Hopf algebra $J_f$ and each edge $e$ a Hopf algebra $W_e$. Assume that for each face, $J_f$ has the corresponding pairings with the edge Hopf algebras for edges $e \in \partial f$.
If we set the edge with values $x_e \in W_e$ and face with values $\phi_f\in J_f$, then we obtain a Hopf tensor network 
\begin{equation}
    (\otimes_{e\in E} x_e) \otimes (\otimes_{f\in F} \phi_f) \mapsto    \Psi_{C(\Sigma)}(\{x_e\},\{\phi_f\})=  \mathrm{ttr}_{C(\Sigma)} (\{x_e\},\{\phi_f\}).
\end{equation}
The corresponding Hopf tensor network states are defined as 
\begin{equation}
    |\Psi_{C(\Sigma)} (\{x_e\},\{ \phi_f\})\rangle = \sum_{(x_e)} \mathrm{ttr}_{C(\Sigma)} (\{x_e^{(2)}\},\{\phi_f\}) \bigotimes_{e\in E}|x_e^{(1)}\rangle.
\end{equation}

\subsection{Ground state of the boundary model}

From an anyon-condensation point of view, the bulk phase of the Hopf algebraic quantum double model is characterized by the UMTC $\EB=\mathsf{Rep}(D(H)) \simeq \mathcal{Z} (\mathsf{Rep} (H))$.
A gapped boundary is characterized by a Lagrangian algebra $L$ in $\EB$.
For a sphere $\Sigma$ with $n$ boundaries $\partial^i \Sigma$ and each boundary is characterized by its Lagrangian algebra $L_i$, the ground state degeneracy is given by
\begin{equation}\label{eq:GSDhom}
    \operatorname{GSD}=\dim \Hom(\one, L_1\otimes \cdots \otimes L_n),
\end{equation}
where $\one$ is the bulk vacuum charge.
For the disk, there is only one boundary. From the fact that, for arbitrary  Lagrangian algebra, $\dim \Hom(\one, L)=1$, this implies that the ground state on a disk is unique. Notice that Eq.~\eqref{eq:GSDhom} is universal, namely, it is independent of the lattice realizations.

Let us solve the Hamiltonian of the boundary lattice model constructed in Sec.~\ref{sec:bdd-II} in terms of the Hopf tensor network states. For the boundary model determined by a right $H$-comodule algebra $\FA$, the bulk edge algebra is $H_e=H$, the bulk face algebra is $\hat{H}_f = \hat{H}$ and the boundary edge algebra is $\FA_{e_b}=\FA$. Recall that each $\varphi\in\hat{H}$ can act on $\mathfrak{A}$ by the coaction: $\varphi\rightharpoonup x = \sum x^{[0]}\varphi(x^{[1]})$, since $x^{[1]}\in H$. Hence for $h_e\in H_e$, $\phi_f,\,\varphi_{f_b}\in\hat{H}_f$, and $x_{e_b}\in\FA_{e_b}$, we have the following Hopf tensor network state 
\begin{equation}
\begin{aligned}
    &\quad |\Psi_{C(\Sigma\setminus\partial \Sigma); C(\partial \Sigma)}(\{h_e\},\{\phi_f\};\{x_{e_b}\},\{\varphi_{f_b}\})\rangle \\
    & = \sum_{(h_e)}\sum_{[x_{e_b}]}\operatorname{ttr}_{C(\Sigma\setminus\partial \Sigma); C(\partial \Sigma)}(\{h_e^{(2)}\},\{\phi_f\};\{x_{e_b}^{[1]}\},\{\varphi_{f_b}\}) \bigotimes_{e\in E(\Sigma\setminus\partial \Sigma )} |h_e^{(1)}\rangle  \bigotimes_{e_b\in E(\partial \Sigma)} |x_{e_b}^{[0]}\rangle. 
\end{aligned}
\end{equation}

Suppose that $\FA$ is augmented, i.e., there is an algebra homomorphism $\varepsilon:\FA\to\mathbb{C}$; for example, one can take the counit $\varepsilon$ when $\FA=K$ is a Hopf subalgebra. Let $\lambda = \sum_{\langle \lambda\rangle}\lambda^{\langle 1\rangle}\otimes \lambda^{\langle 2\rangle}\in\FA\otimes\FA^{\rm op}$ be the  symmetric separability idempotent of $\FA$. Denote 
\begin{equation}\label{eq:aaafgh}
    h_\FA := \sum_{\langle \lambda\rangle}\lambda^{\langle 1\rangle}\varepsilon(\lambda^{\langle 2\rangle})\in\FA. 
\end{equation}
When $\FA=K$ and $\varepsilon$ is the counit of $K$, $h_\FA$ is the Haar integral of $K$.

\begin{proposition}[Ground state on a disk] \label{prop:GS-bdd-II}
    Suppose that $\Sigma$ is a disk, then the ground state is unique. Then the ground state of the boundary lattice model determined by an $H$-comodule algebra $\FA$ is the Hopf tensor network state 
    \begin{equation}\label{eq:state-GS}
        |\Psi_{GS}\rangle = |\Psi_{C(\Sigma\setminus\partial \Sigma); C(\partial \Sigma)}(\{h_{H,e}\},\{\phi_{\hat{H},f}\};\{h_{\FA,e_b}\},\{\varphi_{\hat{H},f_b}\})\rangle. 
    \end{equation}
    For each bulk site $s$, we have
    \begin{equation}
	    B^{\varphi}(s) A^{g}(s) |\Psi_{GS}\rangle =\varepsilon(g) \varphi(1_H)  |\Psi_{GS}\rangle.
	\end{equation}
\end{proposition}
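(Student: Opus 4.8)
The plan is to verify the two displayed claims directly by a local computation, exploiting the frustration-free structure of the Hamiltonian and the Hopf-trace evaluation rules. First I would establish the claim that the state in Eq.~\eqref{eq:state-GS} is a ground state, i.e.\ that it is stabilized by every bulk vertex operator $A^H_v$, every bulk face operator $B^H_f$, every boundary vertex operator $A^\lambda(s_b)$, and every boundary face operator $B^{\varphi_{\hat H}}(s_b)$. Since the Haar integral $h_H$ is idempotent and cocommutative, a standard calculation (the same one used for the closed-surface model) shows that applying $A^{h_H}_v$ to the Hopf tensor network state with all faces labeled by $\phi_{\hat H}$ and all edges by $h_H$ reproduces the same state; the key inputs are $\Delta(h_H)$ distributing the action around the vertex and the fact that $L^{h_H^{(1)}}_{\pm}$ acting on the edge label $h_H^{(2)}$ (or $S(h_H^{(2)})$) closes up using $h_H g = \varepsilon(g) h_H$. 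For the boundary vertex operator $A^\lambda(s_b)$, I would use Eq.~\eqref{eq:bdd-sta-A1}--\eqref{eq:bdd-sta-A4} together with the defining properties (1)--(3) of the symmetric separability idempotent and the compatibility relation Eq.~\eqref{eq:lambda} between $\lambda$ and the coaction $\beta$; the upshot is that $A^\lambda(s_b)$ acts as a projector that fixes the state because $h_\FA$ defined in Eq.~\eqref{eq:aaafgh} plays exactly the role the Haar integral plays in the Hopf-subalgebra case (when $\FA=K$ this literally reduces to the familiar computation). For the face operators, I would invoke the Fourier/character expansion $\varphi_{\hat H}=\sum_\nu \frac{\dim\nu}{\dim H}\chi_\nu$ together with the Hopf-trace vertical-gluing rule Eq.~\eqref{eq:vglue}: each $B^{\varphi_{\hat H}}$ is already built into the labeling of the faces of the tensor network, so invariance is immediate from coassociativity of the relevant comultiplications and the idempotency $\varphi_{\hat H}^2=\varphi_{\hat H}$.

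Given that $\Sigma$ is a disk, uniqueness of the ground state follows from Eq.~\eqref{eq:GSDhom} applied to a single boundary: $\operatorname{GSD}=\dim\Hom(\one, L)=1$ for any Lagrangian algebra $L$. Hence it suffices to exhibit one nonzero state in the common $+1$ eigenspace of all stabilizers, and the candidate Hopf tensor network state is nonzero because the Hopf trace is nondegenerate (the pairing $\langle\bullet,\bullet\rangle$ between $H$ and $\hat H$ and between $\FA$ and its dual is nondegenerate, and the edge-label data includes $h_H$ and $h_\FA$ which are nonzero). This identifies $|\Psi_{GS}\rangle$ as the unique ground state.

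The second claim --- that a bulk dyon creation operator $B^\varphi(s)A^g(s)$ acts on $|\Psi_{GS}\rangle$ by the scalar $\varepsilon(g)\varphi(1_H)$ --- I would prove by writing $B^\varphi(s)A^g(s)$ in terms of the stabilizer projectors. The point is that on a bulk site $s=(v,f)$ the algebra $\mathcal{D}(s)\cong D(H)$ acts, and the ground-state projector at $s$ is $B^{\varphi_{\hat H}}(s)A^{h_H}(s)$; the vacuum representation of $D(H)$ is one-dimensional and is precisely the counit $\varepsilon_{D(H)}$, which on $\varphi\otimes h$ evaluates to $\varphi(1_H)\varepsilon(h)$ (see the formula for $\varepsilon_{D(H)^\vee}$ and dualize, or use $\varepsilon_{D(H)}(\varphi\otimes h)=\hat\varepsilon$-value times $\varepsilon(h)$). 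Concretely: since $A^{h_H}(s)|\Psi_{GS}\rangle=|\Psi_{GS}\rangle$ and $h_H$ is the Haar integral, $A^g(s)A^{h_H}(s)=A^{gh_H}(s)=\varepsilon(g)A^{h_H}(s)$, so $A^g(s)|\Psi_{GS}\rangle=\varepsilon(g)|\Psi_{GS}\rangle$; similarly $B^\varphi(s)B^{\varphi_{\hat H}}(s)=B^{\varphi\varphi_{\hat H}}(s)=\varphi(1_H)B^{\varphi_{\hat H}}(s)$ because $\varphi_{\hat H}$ is the Haar integral of $\hat H$ and $\psi\varphi_{\hat H}=\psi(1_H)\varphi_{\hat H}$ for all $\psi\in\hat H$. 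Composing these two facts (and using that $B^\varphi(s)$ commutes with $A^{h_H}(s)$ up to the straightening relation, which for Haar integrals gives genuine commutation) yields $B^\varphi(s)A^g(s)|\Psi_{GS}\rangle=\varepsilon(g)\varphi(1_H)|\Psi_{GS}\rangle$.

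I expect the main obstacle to be the careful bookkeeping in the first part: verifying $A^\lambda(s_b)|\Psi_{GS}\rangle=|\Psi_{GS}\rangle$ requires correctly tracking how the coaction legs $a^{[1]},b^{[1]}\in H$ in Eqs.~\eqref{eq:bdd-sta-A1}--\eqref{eq:bdd-sta-A4} interact with the adjacent bulk edge labels $h_H$ under the Hopf-trace gluing rules, and confirming that the symmetry property (3) of $\lambda$ together with Eq.~\eqref{eq:lambda} is exactly what is needed for the boundary face label $\varphi_{\hat H}$ to absorb the produced $H$-factors. The closed-surface and Hopf-subalgebra cases are reassuring templates, but writing the general $\FA$ argument cleanly --- rather than by analogy --- is the delicate step; the second claim, by contrast, is essentially a one-line consequence of the Haar-integral absorption identities once the first part is in place. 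A full proof of the general-$\FA$ ground-state property is deferred to the companion work \cite{jia2023weak, jia2023extended}, and here I would content myself with the $\FA=K$ case spelled out in detail plus the scalar computation above.
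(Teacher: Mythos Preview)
Your proposal is correct and follows essentially the same approach as the paper: uniqueness via $\dim\Hom(\one,L)=1$, then direct verification of the stabilizer conditions using the defining properties of the symmetric separability idempotent together with Eq.~\eqref{eq:lambda} for the boundary vertex operator, idempotency $\varphi_{\hat H}^2=\varphi_{\hat H}$ for the face operators, and Haar-integral absorption $gh_H=\varepsilon(g)h_H$, $\varphi\varphi_{\hat H}=\varphi(1_H)\varphi_{\hat H}$ for the scalar claim. The only discrepancy is your hedge at the end: the paper does not retreat to the $\FA=K$ case but carries out the general-$\FA$ computation of $A^\lambda(s_b)|\Psi_{v_b}\rangle$ explicitly, using precisely the ingredients you already listed (properties (1)--(3) of $\lambda$ and Eq.~\eqref{eq:lambda}), so your deferral is unnecessary.
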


\begin{proof}
    The first assertion follows from the general equation \eqref{eq:GSDhom}. Hence by the uniqueness of the ground state on a disk, it suffices to check the state \eqref{eq:state-GS} satisfies the stabilizer conditions. The proof for the bulk part is the same as that in Ref.~\cite{Buerschaper2013a}. Let us prove the boundary part. 

    Consider the following configuration 
    \begin{equation*}
     \begin{aligned}
    \begin{tikzpicture}
                \draw[-latex,blue] (1,1) circle (0.9);
                \draw[-latex,blue] (0.386,0.35) -- (0.35,0.39);
                \draw[-latex,blue] (1,3) circle (0.9);
                \draw[-latex,blue] (0.386,2.35) -- (0.35,2.39);
				\draw[-latex,black,line width = 1.6pt] (0,0) -- (0,2);
    			\draw[-latex,black,line width = 1.6pt] (0,2) -- (0,4); 
				\draw[-latex,black] (2,2) -- (0,2); 
                \fill[red, opacity=1] (-0.35,0.75) rectangle (0.35,1.25);
				\node[ line width=0.2pt, dashed, draw opacity=0.5] (a) at (0,1){$h_{\mathfrak{A}}$};
                \node[ line width=0.2pt, dashed, draw opacity=0.5] (a) at (-0.35,2){$v_b$};
                    \fill[red, opacity=1] (-0.35,2.75) rectangle (0.35,3.25);
				\node[ line width=0.2pt, dashed, draw opacity=0.5] (a) at (0,3){$h_{\mathfrak{A}}$};
                    \fill[red, opacity=1] (0.7,1.65) rectangle (1.3,2.35);
         		\node[ line width=0.2pt, dashed, draw opacity=0.5] (a) at (1,2){$h_H$};
                    \node[ line width=0.2pt, dashed, draw opacity=0.5] (a) at (1,3){$\varphi_{\hat{H}}$};
                    \node[ line width=0.2pt, dashed, draw opacity=0.5] (a) at (1,1){$\varphi_{\hat{H}}$};
	\end{tikzpicture}  \end{aligned}
    \end{equation*}
    In this case, the state \eqref{eq:state-GS} near the vertex reads 
    \begin{equation}
    \begin{aligned}
        |\Psi_{v_b}\rangle & = \sum \varphi_1(S(h_H^{(3)})h_1^{[1]})\varphi_2(h_2^{[1]}h_H^{(2)})|h_1^{[0]},h_H^{(1)},h_2^{[0]}\rangle \\ 
        & = \sum \varepsilon(\lambda_1^{\langle 2 \rangle})\varepsilon(\lambda_2^{\langle 2 \rangle})\varphi_1(S(h_H^{(3)})\lambda_1^{\langle 1 \rangle[1]})\varphi_2(\lambda_2^{\langle 1 \rangle[1]}h_H^{(2)})|\lambda_1^{\langle 1 \rangle[0]}, h_H^{(1)}, \lambda_2^{\langle 1 \rangle[0]} \rangle, 
    \end{aligned}
    \end{equation}
    where $\varphi_1=\varphi_2 = \varphi_{\hat{H}}$,  $h_1=h_2=h_\FA$, and $\lambda_1=\lambda_2=\lambda$. One computes 
    \begin{align*}
        A^\lambda(s_b)|\Psi_{v_b}\rangle & = \sum \varepsilon(\lambda_1^{\langle 2 \rangle})\varepsilon(\lambda_2^{\langle 2 \rangle})\varphi_1(S(h_H^{(3)})\lambda_1^{\langle 1 \rangle[1]})\varphi_2(\lambda_2^{\langle 1 \rangle[1]}h_H^{(2)}) \\ 
        & \quad \quad |\lambda^{\langle 1 \rangle}\lambda_1^{\langle 1 \rangle[0]}, S(\lambda^{\langle 2 \rangle[1]})h_H^{(1)}, \lambda_2^{\langle 1 \rangle[0]}\lambda^{\langle 2 \rangle[0]} \rangle \\
        & = \sum \varepsilon(\lambda_1^{\langle 2 \rangle})\varepsilon(\lambda_2^{\langle 2 \rangle})\varphi_1(S(h_H^{(3)})S(\lambda^{\langle 1 \rangle[2]})\lambda^{\langle 1 \rangle[1]}\lambda_1^{\langle 1 \rangle[1]}) \\ 
        & \quad \quad \varphi_2(\lambda_2^{\langle 1 \rangle[1]}h_H^{(2)}) |\lambda^{\langle 1 \rangle[0]}\lambda_1^{\langle 1 \rangle[0]}, S(\lambda^{\langle 2 \rangle[1]})h_H^{(1)}, \lambda_2^{\langle 1 \rangle[0]}\lambda^{\langle 2 \rangle[0]} \rangle \\
        & = \sum \varepsilon(\lambda_1^{\langle 2 \rangle}\lambda^{\langle 1\rangle[0]})\varepsilon(\lambda_2^{\langle 2 \rangle})\varphi_1(S(h_H^{(3)})S(\lambda^{\langle1\rangle[1]})\lambda_1^{\langle 1 \rangle[1]}) \\ 
        & \quad \quad \varphi_2(\lambda_2^{\langle 1 \rangle[1]}h_H^{(2)})|\lambda_1^{\langle 1 \rangle[0]}, S(\lambda^{\langle 2 \rangle[1]})h_H^{(1)}, \lambda_2^{\langle 1 \rangle[0]}\lambda^{\langle 2 \rangle[0]} \rangle \\
        & = \sum \varepsilon(\lambda_1^{\langle 2 \rangle}\lambda^{\langle 1\rangle[0]})\varepsilon(\lambda_2^{\langle 2 \rangle})\varphi_1(S(\lambda^{\langle2\rangle[2]}h_H^{(3)})S(\lambda^{\langle1\rangle[1]})\lambda_1^{\langle 1 \rangle[1]}) \\ 
        & \quad \quad \varphi_2(\lambda_2^{\langle 1 \rangle[1]}\lambda^{\langle2\rangle[1]}h_H^{(2)})|\lambda_1^{\langle 1 \rangle[0]}, h_H^{(1)}, \lambda_2^{\langle 1 \rangle[0]}\lambda^{\langle 2 \rangle[0]} \rangle \\
        & = \sum \varepsilon(\lambda_1^{\langle 2 \rangle}\lambda^{\langle 1\rangle[0]})\varepsilon(\lambda^{\langle2\rangle[0]}\lambda_2^{\langle 2 \rangle})\varphi_1(S(h_H^{(3)})S(\lambda^{\langle1\rangle[1]}\lambda^{\langle2\rangle[1]})\lambda_1^{\langle 1 \rangle[1]}) \\ 
        & \quad \quad \varphi_2(\lambda_2^{\langle 1 \rangle[1]}h_H^{(2)})|\lambda_1^{\langle 1 \rangle[0]}, h_H^{(1)}, \lambda_2^{\langle 1 \rangle[0]} \rangle \\
        & = \sum \varepsilon(\lambda_1^{\langle 2 \rangle}\lambda^{\langle 1\rangle})\varepsilon(\lambda^{\langle2\rangle}\lambda_2^{\langle 2 \rangle})\varphi_1(S(h_H^{(3)})\lambda_1^{\langle 1 \rangle[1]}) \\ 
        & \quad \quad \varphi_2(\lambda_2^{\langle 1 \rangle[1]}h_H^{(2)})|\lambda_1^{\langle 1 \rangle[0]}, h_H^{(1)}, \lambda_2^{\langle 1 \rangle[0]} \rangle \\
        & = \sum \varepsilon(\lambda_1^{\langle 2 \rangle})\varepsilon(\lambda_2^{\langle2\rangle})\varphi_1(S(h_H^{(3)})\lambda_1^{\langle 1 \rangle[1]}) \varphi_2(\lambda_2^{\langle 1 \rangle[1]}h_H^{(2)})|\lambda_1^{\langle 1 \rangle[0]}, h_H^{(1)}, \lambda_2^{\langle 1 \rangle[0]} \rangle, 
    \end{align*}
    where the third equality follows from $\sum x\lambda_1^{\langle1\rangle}\otimes\lambda_1^{\langle2\rangle} = \sum \lambda_1^{\langle1\rangle}\otimes \lambda^{\langle2\rangle}x$, the fourth from $\sum S(h)h_H^{(1)}\otimes h_H^{(2)} = \sum h_H^{(1)}\otimes hh_H^{(2)}$, the fifth from $\sum \lambda_2^{\langle1\rangle}x\otimes \lambda_1^{\langle2\rangle} = \sum\lambda_2^{\langle1\rangle}\otimes x\lambda_2^{\langle2\rangle}$, the sixth from Eq.~\eqref{eq:lambda}, and the last from $\sum \lambda^{\langle1\rangle}\lambda^{\langle2\rangle} = 1$. 

    Next consider the following configuration 
    \begin{equation*}
     \begin{aligned}
    \begin{tikzpicture}
                \draw[-latex,blue] (1,1) circle (0.9);
                \draw[-latex,blue] (0.386,0.35) -- (0.35,0.39);
				\draw[-latex,black,line width = 1.6pt] (0,0) -- (0,2);
				\draw[-latex,black] (2,2) -- (0,2); 
                \draw[-latex,black] (2,0) -- (0,0);
                \draw[-latex,black] (2,2) -- (2,0);
                \fill[red, opacity=1] (-0.35,0.75) rectangle (0.35,1.25);
                \fill[red, opacity=1] (1.65,0.75) rectangle (2.35,1.25);
                % \draw[-latex,blue] (2,2.2) -- (1.3,2.2); 
                \draw[-latex,blue] (1.85,2.75) .. controls (1.85,2.4) and (1.5,2.25) .. (1.3,2.2);
                % \draw[-latex,blue] (1.3,-0.2) -- (2,-0.2); 
                % \draw[-latex,blue] (2.2,0.1) -- (2.2,0.75); 
                % \draw[-latex,blue] (2.2,1.25) -- (2.2,1.9); 
                \draw[-latex,blue] (2.2,1.25) .. controls (2.25,1.65) and (2.65,1.8) .. (2.75,1.85);
                \draw[-latex,blue] (2.75,0.15) .. controls (2.5,0.2) and (2.25,0.5) .. (2.2,0.75);
                \draw[-latex,blue] (1.3,-0.2) .. controls (1.7,-0.2) and (1.8,-0.5) .. (1.85,-0.75);
                \draw[-latex,blue] (0.7,2.2) .. controls (0.5,2.25) and (0.25,2.35) .. (0.2,2.75);
                \draw[-latex,blue] (0.2,-0.75) .. controls (0.25,-0.35) and (0.5,-0.25) .. (0.7,-0.2);
				\node[ line width=0.2pt, dashed, draw opacity=0.5] (a) at (0,1){$h_{\mathfrak{A}}$};
                \node[ line width=0.2pt, dashed, draw opacity=0.5] (a) at (-0.35,2){$v_b$};
                \fill[red, opacity=1] (0.7,1.65) rectangle (1.3,2.35);
                \fill[red, opacity=1] (0.7,-0.35) rectangle (1.3,0.35);
         		\node[ line width=0.2pt, dashed, draw opacity=0.5] (a) at (1,2){$h_H$};
                \node[ line width=0.2pt, dashed, draw opacity=0.5] (a) at (1,0){$h_H$};
                \node[ line width=0.2pt, dashed, draw opacity=0.5] (a) at (2,1){$h_H$};
                \node[ line width=0.2pt, dashed, draw opacity=0.5] (a) at (1,2.75){$\varphi_{\hat{H}}$};
                \node[ line width=0.2pt, dashed, draw opacity=0.5] (a) at (1,-0.7){$\varphi_{\hat{H}}$};
                \node[ line width=0.2pt, dashed, draw opacity=0.5] (a) at (2.8,1){$\varphi_{\hat{H}}$};
                \node[ line width=0.2pt, dashed, draw opacity=0.5] (a) at (1,1){$\varphi_{\hat{H}}$};
	\end{tikzpicture}  \end{aligned}
    \end{equation*}
    In this case, the state \eqref{eq:state-GS} near the face is 
    \begin{equation}
        \begin{aligned}
            |\Psi_{f_b}\rangle & = \sum \varphi_1(h_1^{(2)})\varphi_2(S(h_2^{(3)}))\varphi_3(S(h_3^{(3)}))\varphi(h^{[1]}S(h_1^{(3)})h_2^{(2)}h_3^{(2)}) \\
            & \quad \quad |h^{[0]},h_1^{(1)},h_2^{(1)},h_3^{(1)} \rangle, 
        \end{aligned}
    \end{equation}
    where $h_i=h_H$, $\varphi_i=\varphi=\varphi_{\hat{H}}$, and $h = h_\FA$. By computation, one has for $\varphi'=\varphi_{\hat{H}}$,  
    \begin{align*}
        B_{f_b}^H|\Psi_{f_b}\rangle & =  \sum \varphi_1(h_1^{(3)})\varphi_2(S(h_2^{(4)}))\varphi_3(S(h_3^{(4)}))\varphi(h^{[2]}S(h_1^{(4)})h_2^{(3)}h_3^{(3)}) \\
            & \quad \quad \varphi'(h^{[1]}S(h_1^{(1)})h_2^{(2)}h_3^{(2)})|h^{[0]},h_1^{(2)},h_2^{(1)},h_3^{(1)} \rangle \\ 
            & =  \sum \varphi_1(h_1^{(1)})\varphi_2(S(h_2^{(4)}))\varphi_3(S(h_3^{(4)}))\varphi(h^{[2]}S(h_1^{(2)})h_2^{(3)}h_3^{(3)}) \\
            & \quad \quad \varphi'(h^{[1]}S(h_1^{(3)})h_2^{(2)}h_3^{(2)})|h^{[0]},h_1^{(4)},h_2^{(1)},h_3^{(1)} \rangle \\ 
            & =  \sum \varphi_1(h_1^{(1)})\varphi_2(S(h_2^{(3)}))\varphi_3(S(h_3^{(3)}))\varphi(h^{[1]}S(h_1^{(2)})h_2^{(2)}h_3^{(2)}) \\
            & \quad \quad |h^{[0]},h_1^{(3)},h_2^{(1)},h_3^{(1)} \rangle, 
    \end{align*}
    where the last equality follows from $\varphi_{\hat{H}}^2 = \varphi_{\hat{H}}$. The last assertion is immediate from the property of Haar integral. 
\end{proof}

In Appendix~\ref{sec:bdII}, we give an alternative construction of the boundary lattice model based on the generalized quantum double. We can also solve the Hamiltonian of this kind of boundary model by the Hopf tensor network states. For this construction, all internal edges are set as the Hopf algebra $H$, and internal faces the Hopf algebra $\hat{H}$. The boundary edge is set as $W$ and the boundary face is set as $J$.
In this way, we obtain the corresponding Hopf tensor network states of the model,
	\begin{equation}
	\begin{aligned}
&	|\Psi_{C(\Sigma\setminus\partial \Sigma ); C(\partial \Sigma) }(\{x_e\},\{\phi_f\}; \{y_{e_b}\},\varphi_{f_b})\rangle\\
	    =&\sum_{(x_e),(y_{e_b}) } \mathrm{ttr}_{C(\Sigma\setminus\partial \Sigma ); C(\partial \Sigma) } (\{x_e^{(2)}\},\{\phi_f\}; \{y_{e_b}^{(2)}\},\varphi_{f_b})    \bigotimes_{e\in E(\Sigma\setminus\partial \Sigma )} |x_e^{(1)}\rangle  \bigotimes _{e_{b} \in E(\partial \Sigma)} |y_{e_b}^{(1)}\rangle.
	\end{aligned}
	\end{equation}
The ground state of the model is the one for which we choose all elements as Haar integrals.
	
\begin{proposition}[Ground state on a disk] Suppose that $\Sigma$ is a disk, then the ground state is unique.
The ground state of the  extended generalized quantum double model is the Hopf tensor network state with all edge and face elements chosen as Haar integrals
    \begin{equation}\label{eq:gstate}
        |\Psi_{GS}\rangle= |\Psi_{C(\Sigma\setminus\partial \Sigma ); C(\partial \Sigma) }(\{h_{H,e}\},\{\varphi_{\hat{H},f}\}; \{h_{K,e_b}\},\{ \varphi_{J,f_b} \})\rangle. 
    \end{equation}
For each bulk site $s$, we have
\begin{equation}\label{BACom1}
    B^{\varphi}(s) A^{g}(s) |\Psi_{GS}\rangle =\varepsilon(g) \varphi(1_H)  |\Psi_{GS}\rangle.
\end{equation}
For the boundary site $s_b$, we have	
\begin{equation}\label{BACom2}
    B^{\varphi}(s_b) A^{g}(s_b) |\Psi_{GS}\rangle =\varepsilon(g) \varphi(1_H)  |\Psi_{GS}\rangle.
\end{equation}
\end{proposition}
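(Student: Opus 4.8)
\textbf{Proof proposal for the ground state on a disk (generalized quantum double boundary model via Appendix~\ref{sec:bdII}).}

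The plan is to mimic closely the proof of Proposition~\ref{prop:GS-bdd-II}, since the two boundary lattice constructions differ only in how the boundary degrees of freedom are packaged: here the boundary edges carry the Hopf algebra $W$, the boundary faces carry $J$, and the local boundary stabilizers $A^{g}(s_b)$, $B^{\varphi}(s_b)$ are the vertex and face operators of the generalized quantum double induced by the pairing data $(K,J,W)$ of Appendix~\ref{sec:bdII}. First I would invoke Eq.~\eqref{eq:GSDhom}: on a disk there is a single boundary, and $\dim\Hom(\one,L)=1$ for any Lagrangian algebra $L$, so the ground state is unique. Hence it suffices to exhibit \emph{one} state annihilated by all terms of the Hamiltonian, i.e. satisfying $A_v^H|\Psi\rangle = B_f^H|\Psi\rangle=|\Psi\rangle$ in the bulk and the analogous stabilizer conditions on boundary sites, and I claim the Hopf tensor network state \eqref{eq:gstate} with every edge/face variable set to the corresponding Haar integral is that state.

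The bulk part is handled exactly as in Ref.~\cite{Buerschaper2013a}: the key point is that the Hopf trace $\mathrm{ttr}$ built from the pairings is invariant under the gluing moves \eqref{eq:pglue} and \eqref{eq:vglue}, and that acting with $A^{h_H}_v$ (resp. $B^{\varphi_{\hat H}}_f$) amounts to inserting $h_H$ (resp. $\varphi_{\hat H}$) into the tensor network at a vertex (resp. face); idempotency, cocommutativity and centrality of the Haar integrals then collapse the insertion, giving Eq.~\eqref{BACom1} and in particular fixing the state. For the boundary, I would examine the two relevant local configurations — a boundary vertex site $s_b$ with its incident boundary edges ($W$) and bulk edge ($H$), and a boundary face site $s_b$ — and compute $A^{g}(s_b)$ and $B^{\varphi}(s_b)$ on the local reduced state, which near $s_b$ is a Hopf trace of Haar integrals $h_{H,e}$, $\varphi_{\hat H,f}$, $h_{K,e_b}$, $\varphi_{J,f_b}$ paired through the $(K,J,W)$ pairings. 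Because $A^{g}(s_b)$ and $B^{\varphi}(s_b)$ generate a representation of the generalized quantum double attached to $s_b$, and because the Haar-integral assignment is precisely the invariant vector of that representation (the boundary analogue of the bulk computation, using the compatibility/straightening relations among $K$, $J$, $W$ and the pairing together with idempotency and cocommutativity of $h_K$, $\varphi_J$), one gets $B^{\varphi}(s_b)A^{g}(s_b)|\Psi_{GS}\rangle = \varepsilon(g)\varphi(1_H)|\Psi_{GS}\rangle$, which is Eq.~\eqref{BACom2}; specializing $g=h_K$, $\varphi=\varphi_J$ shows $|\Psi_{GS}\rangle$ is stabilized.

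The bookkeeping at the boundary is the only place requiring care: one must track how the $W$-edge variables feed into the $J$-face pairings and how the bulk $H$-edge adjacent to $s_b$ interacts with both, using the coproduct-driven gluing rules \eqref{eq:pglue}, \eqref{eq:vglue} and the straightening relation between the $A$'s and $B$'s that holds on the boundary sites with no essential difference from the bulk. I expect the main obstacle to be organizing this local computation cleanly — in particular verifying that the several Haar-integral identities ($h_K^2=h_K$, $\varphi_J^2=\varphi_J$, cocommutativity, and the compatibility of $h_K$ with the $J$–$W$ pairing) conspire to absorb the inserted $g$ and $\varphi$ in the right order — rather than any conceptual difficulty, since uniqueness on the disk is free and the bulk argument is already established. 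The verification is therefore reduced to a finite, configuration-by-configuration check entirely parallel to the one carried out in Proposition~\ref{prop:GS-bdd-II}.
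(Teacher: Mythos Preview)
Your proposal is correct and follows essentially the same route as the paper: uniqueness on the disk via Eq.~\eqref{eq:GSDhom}, the bulk stabilizer check deferred to Ref.~\cite{Buerschaper2013a}, and the boundary stabilizer check reduced to Haar-integral identities. The paper's actual proof is rather terser than what you outline --- it simply cites $\varphi_J^2=\varphi_J$, $h_W\in\operatorname{Cocom}(W)$ for the face operator and $h_K\in\operatorname{Comm}(K)$, $h_W\in\operatorname{Cocom}(W)$, $kh_W=\varepsilon(k)h_W=h_Wk$ for the vertex operator, then notes that Eqs.~\eqref{BACom1}--\eqref{BACom2} follow immediately from the defining property of Haar integrals --- so your anticipated ``configuration-by-configuration'' bookkeeping is not actually needed in full.
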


\begin{proof}
Since the ground state is unique, we only need to check that the state we construct here satisfies the stabilizer conditions. 
For the bulk part, the proof is the same as given in Ref.~\cite{Buerschaper2013a}. 
We only need to check the boundary stabilizers.
Recall that in our construction, the boundary can be treated as a big face (see Appendix~\ref{sec:bdII}). Using the fact that $\varphi_J \varphi_J=\varphi_J$ and $h_W\in \operatorname{Cocom}(W)$, it is easy to check $B^{\varphi_J} |\Psi_{GS}\rangle =|\Psi_{GS}\rangle$.
For vertex stabilizers, using the fact that $h_K\in \operatorname{Comm}(K)$, $h_W\in \operatorname{Cocom}(W)$ and $k h_W=\varepsilon(k) h_W= h_W k$, we have $A^{h_K} |\Psi_{GS} \rangle =|\Psi_{GS}\rangle$.
Eqs.~\eqref{BACom1} and \eqref{BACom2} are direct consequences of the definition of Haar integral.
\end{proof}

\subsection{Ground state of the domain wall model}
	
Using the Hopf tensor network representation, we can also solve the model with a domain wall $\Sigma_d$ separating two bulks $\Sigma_i$ ($i=1,2$).
The spirit is the same as the construction of the boundary. Consider the domain wall determined by an $H_1|H_2$-bicomodule algebra $\FB$, constructed in Sec.~\ref{sec:domainwall}. For $h_{i,e_i}\in H_{e_i}$, $\phi_{i,f_i},\,\varphi_{i,d_i}\in \hat{H}_{f_i}$ ($i=1,2$), and $x_{e_d}\in\FB$, we have the corresponding Hopf tensor network state 
\begin{equation*}
    \begin{aligned}
        &\quad |\Psi_{C(\Sigma_1),C(\Sigma_d),C(\Sigma_2)}(\{h_{1,e_1}\},\{\phi_{1,f_1}\};\{\varphi_{1,d_1}\},\{x_{e_d}\},\{\varphi_{2,d_2}\};\{h_{2,e_2}\},\{\phi_{2,f_2}\})\rangle  \\
        & = \sum\operatorname{ttr}_{C(\Sigma_1),C(\Sigma_d),C(\Sigma_2)}(\{h_{1,e_1}^{(2)}\},\{\phi_{1,f_1}\};\{\varphi_{1,d_1}\},\{x_{e_d}^{[-1]}\};\{\varphi_{2,d_2}\},\{x_{e_d}^{[1]}\};\{h_{2,e_2}^{(2)}\},\{\phi_{2,f_2}\}) \\
        & \quad \bigotimes_{e_1\in E(\Sigma_1)} |h_{1,e_1}^{(1)}\rangle \bigotimes_{e_d\in E(\Sigma_d)} |x^{[0]}_{e_d}\rangle \bigotimes_{e_2\in E(\Sigma_2)} |h^{(1)}_{2,e_2} \rangle,  
    \end{aligned}
\end{equation*}
where $e_i$ and $f_i$ are the edges and faces respectively in region $i=1,2$, $e_d$ are edges on the wall, and $f_{d_i}$ are wall faces in region $i=1,2$.	
By a similar computation as in Proposition~\ref{prop:GS-bdd-II}, one can show that the ground state $|\Psi_{GS}\rangle$ is given by 
\begin{equation*}
    |\Psi_{C(\Sigma_1),C(\Sigma_d),C(\Sigma_2)}(\{h_{H_1,e_1}\},\{\phi_{\hat{H}_1,f_1}\};\{\varphi_{\hat{H}_1,d_1}\},\{h_{\FB,e_d}\},\{\varphi_{\hat{H}_2,d_2}\};\{h_{H_2,e_2}\},\{\phi_{\hat{H}_2,f_2}\})\rangle. 
\end{equation*}

In Appendix~\ref{sec:app-domain-wall}, we construct a domain wall lattice model based on pairings between Hopf algebras. We can apply the same method to identify the ground state of this type of domain wall model, that is, we put the corresponding Haar integrals over each edge, and all the faces also take the Haar integrals.
The ground state $|\Psi_{GS}\rangle$ is therefore given by
\begin{equation*}
		    |\Psi_{C(\Sigma_1),C(\Sigma_d),C(\Sigma_2) }(\{h_{H_1,e_1}\},\{\varphi_{\hat{H}_1,f_1}\}; \{ \varphi_{J_1,f_{d,1}} \};
		   \{h_{W,e_d}\};
		    \{ \varphi_{J_2,f_{d,2}} \};
		    \{h_{H_2,e_2}\},\{\varphi_{\hat{H}_2,f_2}\} )\rangle. 
\end{equation*}
The generalization to $N$-bulk domain wall is straightforward.	
Notice that these states are local quasi-product states \cite{jia2020entanglement}, thus they satisfy the entanglement area law.
Recently, it is discovered (in group algebra case) that the entanglement entropy of the quantum double phase is sensitive to the existence of the boundary and boundary types \cite{chen2018entanglement,lou2019ishibashiI,shen2019ishibashiII,hu2019entanglement} (and domain walls \cite{Brown2013topological,Shi2021domain}).
Using this explicit ground state, we can also analyze the entanglement feature of the extended generalized quantum double model in more detail; this will be done in our future work.

\section{Conclusion and discussion} 
	
In this work, we establish the theory of gapped boundary and domain wall for the generalized quantum double model.
A gapped boundary is equivalently characterized by an $H$-comodule algebra or an $H$-module algebra.
A domain wall is characterized by an $H_1|H_2$-bicomodule algebra.
We present the lattice realization and the ground states of these extended quantum double models are solved by using the Hopf tensor network representations of the quantum states.
The boundary and wall excitations and ribbon operators are also discussed in detail.

While significant advancements have been achieved in constructing the gapped boundary and domain wall of the generalized quantum double model using Hamiltonian methods, there are still numerous areas that require further exploration to enhance our comprehension of the theory:

(i) We have successfully constructed the gapped boundary model for a general $H$-comodule algebra $\FA$, demonstrating that the boundary local algebra provides the desired boundary phase. However, further investigation is needed to develop a general construction of ribbon operators that realize anyon condensation.

(ii) We have mentioned that the entanglement features of the generalized quantum double model are sensitive to the existence of types of gapped boundaries and domain walls; this part is left for our future study.
   
(iii) Another interesting aspect is the symmetry-enriched case; some simple cases (cyclic group, Abelian group, etc.) are considered in Refs.~\cite{Barkeshli2019symmetry,barkeshli2020relative,barkeshli2020reflection,williamson2017symmetryenriched,wang2021exactly,Jia2022electric,Heinrich2016symmetry}. However, the general Hopf algebra case is still largely left open.

(iv) The study of higher dimensional Hopf quantum double models holds significant importance in the investigation of higher dimensional topological order. While the $\mathbb{Z}_2$ group case has been extensively explored, particularly in relation to topological defects characterized by higher categories \cite{Levin2005,Hamma2005string,kong2020defects,delcamp2021tensornet}, the more general group cases and Hopf algebra cases still largely remain open topics for research. Further investigations are needed to deepen our understanding of these models and their associated higher-dimensional topological phenomena.
   
These topics will be covered in our future studies. 
	
	\subsection*{Acknowledgments}
	Z.J. acknowledges Liang Kong and Zhenghan Wang for many helpful discussions on topological order, boundary-bulk duality, and the mathematical theory of various topological defects. He also acknowledges Yuting Hu for his help.
	S.T. would like to thank Uli Walther for his constant encouragement and stimulating conversations. He also appreciates Shawn X. Cui and Bowen Yan for many helpful discussions. 
    We also acknowledge Juven C. Wang and Eric Samperton for bringing our attention to several related references.
    All authors are grateful for the referee's valuable suggestions. 
	Z.J. and D.K. are supported by the National Research Foundation and the Ministry of Education in Singapore through the Tier 3 MOE2012-T3-1-009 Grant for Random numbers from quantum processes. S.T. was in part supported by NSF grant DMS-2100288 and by Simons Foundation Collaboration Grant for Mathematicians \#580839.

	\appendix

	\section{$C^*$ Hopf algebra} 
	\label{sec:app_hopf} 
	In this appendix, we will introduce the input data of our construction of lattice realization of $2d$ topological ordered phase, that is, a finite-dimensional $C^*$ Hopf algebra and module coalgebra over it.
	And we will collect some properties of them that will be used in this paper. All vector spaces are over the field of complex numbers $\mathbb{C}$.

	\begin{definition}
	A (complex) Hopf algebra is a complex vector space $H$ equipped with several structure morphisms: multiplication $\mu: H\otimes H\to H$, unit $\eta: \mathbb{C}\to H$, comultiplication $\Delta: H\to H\otimes H$, counit $\varepsilon: H\to \mathbb{C}$ and antipode $S:H\to H$, for which some consistency conditions are satisfied:
	\begin{enumerate}
		\item 	$(H,\mu,\eta)$	is an algebra: $\mu \comp (\mu \otimes \id) =\mu \comp (\id \otimes \mu)$,  and $\mu \comp (\eta \otimes \id)=\id =\mu \comp (\id \otimes \eta)$.
		
		\item $(H,\Delta,\varepsilon)$ is a coalgebra:  $(\Delta \otimes \id)\comp \Delta = (\id \otimes \Delta )\comp \Delta$,  and 
		$ (\varepsilon \otimes \id) \comp \Delta = \id = (\id \otimes \varepsilon)\comp \Delta$. 
		
		\item $(H,\mu,\eta,\Delta,\varepsilon)$ is a bialgebra: $\Delta$ and $\varepsilon$ are algebra homomorphisms (equivalently $\mu$ and $\eta$ are coalgebra homomorphisms).
		
		\item The antipode $S$ satisfies: $ \mu \comp (S\otimes \id) \comp \Delta =\eta \comp \varepsilon = \mu \comp (\id \otimes S)\comp \Delta$.
	\end{enumerate} 
	\end{definition}
	
	We will use the notation $\mu (x\otimes y) =xy$ and $\eta(1)=1_H$ for multiplication and unit. We will also adopt the Sweedler's notation $\Delta(u)=\sum_{(u)}u^{(1)}\otimes u^{(2)}:=\sum_{i}u_i^{(1)}\otimes u_i^{(2)}$.
	The comultiplication law ensures that $(\Delta\otimes \id)\comp \Delta(u) = (\id \otimes \Delta)\comp \Delta (u)=\sum_{(u)} u^{(1)}\otimes u^{(2)}\otimes u^{(3)}$. In general, we define $\Delta_1=\Delta$ and $\Delta_n=(\id \otimes \cdots \otimes \id \otimes \Delta)\comp \Delta_{n-1}$, then $\Delta_n(u)=\sum_{(u)} u^{(1)} \otimes \cdots \otimes u^{(n+1)}$. From the definition above, we have the following useful identity:
	\begin{align}
	    \sum_{(x)}\varepsilon(x^{(1)})x^{(2)}=x&=\sum_{(x)}x^{(1)}\varepsilon(x^{(2)}).
	\end{align}
	The swap operation is defined as $\tau(x\otimes y)=y\otimes x$, by which one denotes  $\mu^{\rm op}=\mu \comp \tau$ and $\Delta^{\rm op} =\tau \comp \Delta$. The opposite Hopf algebra $H^{\rm op}$ is defined as $(H,\mu^{\rm op},\eta,\Delta,\varepsilon, S^{-1})$, and the coopposite weak Hopf algebra $H^{\rm cop}$ is defined as $(H, \mu,\eta, \Delta^{\rm op},\varepsilon,S^{-1})$.

	A Hopf algebra $H$ is called {\it simple} (or {\it irreducible}) if its algebra $(H,\mu,\eta) $ does not have nontrivial ideal, and it is called {\it semisimple} if its algebra can be written as a direct sum of simple algebras.
	From the Larson-Radford theorem, $H$ is semisimple if and only if $S^2=\id$ (also equivalent to the simplicity of the dual Hopf algebra $\hat{H}$).
	A $*$-Hopf algebra $(H,*)$ is a Hopf algebra $H$ equipped with a star structure  $*: H\to H$ such that $\Delta$ is a $*$-homomorphism. That is
	\begin{equation}
		(x^*)^*=x, (x+y)^*=x^*+y^*, (c x)^*=\bar{c} x^*, (xy)^*=y^*x^*,
		\Delta(x)^*=\Delta (x^*),
	\end{equation}
	for all $x,y\in H$ and $c\in \mathbb{C}$. $(H,*)$ is called a $C^*$  Hopf algebra if there exists a faithful $*$-representation $\rho: H\to \mathbf{B}(\mathcal{H})$ for some operator space over a Hilbert space $\mathcal{H}$. Note that a finite-dimensional $C^*$ Hopf algebra is semisimple. 
	It can be proved that 
	\begin{equation}
	    S(S(x^*)^*)=x.
	\end{equation}
	In particular, for semisimple Hopf algebra, this implies that $S(x^*)=S(x)^*$.

	For a finite-dimensional $C^*$ Hopf algebra, the antipode satisfies the following properties:
	\begin{align}
		S(xy)=S(y)S(x), \quad S(1_H)&=1_H, \quad 	S^2=\id,\quad   \varepsilon \comp S =\varepsilon,\\
		\sum_{(S(x))} S(x)^{(1)} \otimes S(x)^{(2)} &= \sum_{(x)} S(x^{(2)}) \otimes S(x^{(1)}), \\
		\sum_{(x)}x^{(1)}S(x^{(2)})=\varepsilon(&x)1_H=\sum_{(x)}S(x^{(1)})x^{(2)}.
	\end{align}
	These properties are useful for our calculation. Note it follows that $S:H^{\rm op}\to H^{\rm cop}$ is an isomorphism of Hopf algebras.

	Another notion we will use is the {\it Haar integral} $h\in H$, which is defined as a normalized two-sided integral.
	A left (resp. right) integral of $H$ is an element $\ell$ (resp. $r$) satisfying $x\ell=\varepsilon(x) \ell$ (resp. $rx=r\varepsilon(x)$) for all $x\in H$. 
	If $h$ is simultaneously left and right integral, it is called a two-sided integral.
	A left (resp. right) integral $\ell$ (resp. $r$) is called normalized if $\varepsilon(\ell)=1$ (resp. $\varepsilon(r)=1$). We see that  normalized integral is idempotent $\ell^2=\varepsilon(\ell) \ell=\ell$ (resp.  $r^2=r\varepsilon(r) =r$).
	A Haar integral of $H$ is a normalized two-sided integral. 
	Notice that a Haar integral, if exists,  must be unique. To see this, suppose that $h,h'$ are two Haar integrals, then $h'=\varepsilon (h)h' = hh'=h\varepsilon(h')=h$. It is easy to see that $S(h)$ is a Haar integral if $h$ is, thus from uniqueness we see that the Haar integral is $S$-invariant, i.e. $S(h)=h$.
	A $C^*$ Hopf algebra always has a unique Haar integral $h$ which satisfies $h^*=h$,  $h^2=h$ and $S(h)=h$. 
	An element $x\in H$ is called cocommutative if $\Delta^{\rm op}(x)=\Delta(x)$; the set of all cocommutative elements in $H$ is denoted as $\operatorname{Cocom}(H)$.  It can be proved that the Haar integral $h$ is always cocommutative $\Delta (h) =\Delta^{\rm op}(h)$.

	For a given  finite-dimensional $C^*$ Hopf algebra $H$, its dual space $\hat{H}:=\Hom (H,\mathbb{C})=H^{\vee}$ has a canonical  finite-dimensional $C^*$ Hopf algebra structure induced by the canonical pairing $\langle\bullet, \bullet \rangle: \hat{H}\times H\to \mathbb{C}$, $\langle \varphi, h\rangle:=\varphi(h)$. More precisely, 
	\begin{align}
		&	\langle \hat{\mu}(\varphi\otimes \psi),x\rangle=(\varphi\otimes \psi) (\Delta(x)),\\
		&	\langle \hat{\eta} (1),x\rangle= \varepsilon(x),    \hat{1}=\varepsilon,\\
		&   \hat{\Delta}(\varphi) (x\otimes y)=\langle \varphi, \mu(x\otimes y) \rangle,\\
		&    \hat{\varepsilon}(\varphi)=\langle \varphi, \eta(1)\rangle=\varphi(1_H),\\
		&    \langle \hat{S}(\varphi),x\rangle =\langle \varphi, S(x)\rangle.
	\end{align}
	The star operation on $\hat{H}$ is defined as 
	\begin{equation}\label{eq:Sstar}
		\langle \varphi^*, x\rangle=\overline{ \langle \varphi, S(x)^*\rangle }. 
	\end{equation}
	It is easily checked that $(H^{\rm op})^{\vee}\cong (H^{\vee})^{\rm cop}$ and  $(H^{\rm cop})^{\vee}\cong (H^{\vee})^{\rm op}$ as Hopf algebras.
	If $H$ is a $C^*$ Hopf algebra, then there exists a unique Haar integral $h$.
	The pairing $\langle \varphi,\psi\rangle:=\langle \varphi^*\psi,h\rangle=:\int_h \varphi^*\psi$ for $\varphi,\psi \in \hat{H}$, is an inner product  making $\hat{H}$ a Hilbert space.
	Since $\hat{H}$ is also a $C^*$ Hopf algebra, there also exists a  unique Haar integral  $\varphi$ which induces a Hilbert space structure on $H$, 
	\begin{equation}\label{eq:inner}
		\langle x ,y \rangle_H:=\langle \varphi, x^*y\rangle
	\end{equation}
	The Haar integral of $\hat{H}$ is also called the Haar measure on $H$.
	
	The quantum double of $H$ is the vector space $(H^{\vee})^{\rm cop} \otimes H$ equipped with a Hopf algebra structure\,\footnote{There are several different constructions of the quantum double, see \cite{majid2000foundations}. Here we choose the one built from $(H^{\vee})^{\rm cop} \otimes H$. See also Sec.~\ref{sec:bdII} for more general construction. }; we denote it as $D(H)=(H^{\vee})^{\rm cop} \Join H$. The multiplication is given by
	\begin{equation} \label{eq:double-prod}
		(\varphi \otimes x)(\psi \otimes y):= \sum_{(x)} \varphi  \psi (S^{-1} (x^{(3)}) \bullet x^{(1)}  )   \otimes  x^{(2)} y,
	\end{equation}
	where ``$\bullet$'' denotes the argument of the function. 
	The other data are given by
	\begin{align}
		&	1_{D(H)} =\hat{1} \otimes 1,\\
		&	\Delta_{D(H)} (\varphi \otimes x) =\sum_{(\varphi)} \sum_{(x)} (\varphi^{(2)} \otimes x^{(1)}  ) \otimes (\varphi^{(1)} \otimes x^{(2)}  ) ,\\
		&\varepsilon_{D(H)}  (\varphi \otimes x) =\varepsilon(x) \varphi ( 1_H),\\
		&S_{D(H)} (\varphi \otimes x) = \sum_{(\varphi)} \sum_{(x)} \langle \varphi^{(1)}  \otimes \varphi^{(3)}, h^{(3)} \otimes S^{-1} (h^{(1)}) \rangle  \hat{S}^{-1}(\varphi^{(2)}) \otimes S(h^{(2)}) .
	\end{align}
    From expression of antipode, we see that the antipode of $D(H)$ is involutive if and only if the antipode of $H$ is involutive. Then using the Larson-Radford theorem, $D(H)$ is semisimple if and only if $H$ is semisimple.
    Both $(H^{\vee})^{\rm cop}, H$ can be embedded in $D(H)$ as Hopf subalgebras by $\varphi\mapsto \varphi\otimes 1$ and $x\mapsto \hat{1}\otimes x$, respectively. 
    For more details, the reader can consult Refs.~\cite{majid2000foundations, drinfel1988quantum,kassel2012quantum,abe2004hopf}. In particular, we have the following useful straightening formula
	\begin{equation} \label{eq:straightening}
	    x\varphi = \sum_{(x)}\varphi(S^{-1}(x^{(3)})\bullet x^{(1)})x^{(2)},
	\end{equation}
	which will be useful in dealing with computation involving ribbon operators.

	The above is the input data for the bulk. For boundaries and domain walls, we need to introduce the notion of $H$-comodule algebra.
	
	\begin{definition}
	Let $H$ be a Hopf algebra and $M$ a vector space. If there is a linear map $\beta_{M}:M\to H\otimes M$ such that 
	\begin{equation}
	    (\Delta \otimes \id_M)\comp \beta_M=(\id_H\otimes \beta_M)\comp \beta_M, \quad (\varepsilon\otimes \id_M)\comp\beta_M=\id_M,
	\end{equation}
	then $M$ is called a left $H$-comodule, and $\beta_M$ is called a left coaction.
	The right $H$-comodule can be defined similarly.
	\end{definition}

	\begin{definition}
	Let $H$ be a Hopf algebra and $A$ an algebra. If $A$ is an $H$-module such that $h\triangleright (xy)=\sum_{(h)}(h^{(1)}\triangleright x)(h^{(2)}\triangleright y)$ and $h\triangleright 1_A=\varepsilon(h)1_A$, then $A$ is called an $H$-module algebra.
	\end{definition}

	\begin{definition}
	Let $H$ be a Hopf algebra and $A$ an algebra. If $A$ is a left $H$-comodule with left coaction $\beta_{A}:A\to H\otimes A$ such that $\beta_A(xy)=\beta_A(x)\beta_A(y)$ and $\beta(1_A)=1_H\otimes 1_A$, then $A$ is called a left $H$-comodule algebra. The right $H$-comodule algebra can be defined similarly.
	\end{definition}

	\begin{theorem}[See \cite{andruskiewitsch2007module}]
	\label{thm:appAMod}
	The indecomposable exact module categories over the UFC $\mathsf{Rep}(H)$ of representations of a finite-dimensional Hopf algebra $H$ are classified by $H$-comodule algebras up to Morita equivalence.
	\end{theorem}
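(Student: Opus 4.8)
The plan is to prove the classification by exhibiting mutually inverse assignments between equivalence classes of indecomposable exact $\mathsf{Rep}(H)$-module categories and equivariant Morita equivalence classes of ($H$-indecomposable, exact) $H$-comodule algebras, and then checking that both assignments are well defined on the respective equivalence classes. Almost all of the needed pieces have already been assembled in Section~\ref{sec:bdd-top-exc}; the proof amounts to organizing them and invoking Ostrik's internal-Hom machinery \cite{etingof2003finite,ostrik2003module} for the harder direction.

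\textbf{From comodule algebras to module categories.} Given a finite-dimensional $H$-comodule algebra $\FA$ with coaction $\beta_{\FA}$, I would first check that $\EM:={_{\FA}}\Mod$ is a $\mathsf{Rep}(H)$-module category: the action of $X\in\mathsf{Rep}(H)$ on $N\in\EM$ is $X\otimes_{\mathbb{C}}N$ with the $\FA$-module structure $\mu_{X\otimes N}=(\mu_X\otimes\mu_N)(\id_H\otimes\tau_{\FA,X}\otimes\id_N)(\beta_{\FA}\otimes\id_X\otimes\id_N)$ written down in Section~\ref{sec:bdd-top-exc}, and the module pentagon/unit axioms follow from the comodule-algebra axioms plus coassociativity and counitality. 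Since $\dim\FA<\infty$ and we work over $\mathbb{C}$, $\EM$ is a finite semisimple abelian category. I would then record the structural dictionary: $\FA$ is $H$-indecomposable (no splitting $\FA=I\oplus J$ into nontrivial two-sided $H$-costable ideals) iff $\EM$ is indecomposable as a $\mathsf{Rep}(H)$-module category, and $\FA$ is exact (in the sense that ${_{\FA}}\Mod_{\FA}$ is exact) iff $\EM$ is an exact module category. Finally, if $\FA$ and $\FB$ are equivariantly Morita equivalent then ${_{\FA}}\Mod\simeq{_{\FB}}\Mod$ as $\mathsf{Rep}(H)$-module categories, so the assignment descends to Morita classes.

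\textbf{From module categories to comodule algebras (reconstruction).} This is the main content. Given an indecomposable exact $\mathsf{Rep}(H)$-module category $\EM$, pick any nonzero simple $M\in\EM$; indecomposability forces $M$ to generate $\EM$. Using the internal Hom in a module category over the finite tensor category $\mathsf{Rep}(H)$, form $\FM:=\underline{\Hom}(M,M)\in\mathsf{Rep}(H)$; I would recall that the evaluation/composition morphisms make $\FM$ an algebra in $\mathsf{Rep}(H)$ (equivalently an $H$-module algebra), that $\mu_{\FM}=\mu_M(\id_{\FM}\otimes\mu_M)\alpha_{\FM,\FM,M}$, and that $N\mapsto\underline{\Hom}(M,N)$ is an equivalence of $\mathsf{Rep}(H)$-module categories $\EM\simeq\mathsf{Mod}_{\FM}(\mathsf{Rep}(H))$ (Ostrik's structure theorem; exactness of $\EM$ gives the needed control of projectives and hence semisimplicity of $\mathsf{Mod}_{\FM}(\mathsf{Rep}(H))$). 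Then I would invoke the smash-product identification: setting $\FA:=\FM^{\rm op}\#H^{\rm cop}$ yields a $\mathsf{Rep}(H)$-linear equivalence $\mathsf{Mod}_{\FM}(\mathsf{Rep}(H))\simeq{_{\FA}}\Mod$, so $\EM\simeq{_{\FA}}\Mod$, and $\FA$ is $H$-indecomposable and exact because $\EM$ is indecomposable and exact. Different generators $M$ (equivalently, Morita-equivalent algebras $\FM$) produce equivariantly Morita equivalent $\FA$, so the output is a well-defined Morita class; the Yan--Zhu stabilizer description $\FM\cong\operatorname{Stab}_{\FA}(M)$ makes this compatibility transparent.

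\textbf{Mutual inverseness and the obstacle.} Composing the two assignments: starting from $\FA$, forming ${_{\FA}}\Mod$, choosing a generator and reconstructing returns a comodule algebra equivariantly Morita equivalent to $\FA$ (again via $\operatorname{Stab}_{\FA}(M)$); starting from $\EM$, the equivalence $\EM\simeq\mathsf{Mod}_{\FM}(\mathsf{Rep}(H))\simeq{_{\FA}}\Mod$ of the previous paragraph shows the round trip recovers $\EM$. Hence the maps are mutually inverse on equivalence classes. I expect the main obstacle to be precisely the reconstruction step: proving that $\underline{\Hom}(M,M)$ is an algebra and that $N\mapsto\underline{\Hom}(M,N)$ is an equivalence onto $\mathsf{Mod}_{\FM}(\mathsf{Rep}(H))$ — which requires exactness to handle projectivity — and then carefully tracking how the $H$-module structure on $\FM$ becomes an $H$-comodule structure on $\FM^{\rm op}\#H^{\rm cop}$ so that the induced equivalence $\mathsf{Mod}_{\FM}(\mathsf{Rep}(H))\simeq{_{\FA}}\Mod$ is genuinely $\mathsf{Rep}(H)$-linear. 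The second delicate point is showing that equivariant Morita equivalence of comodule algebras is exactly the relation that witnesses injectivity, i.e.\ that ${_{\FA}}\Mod\simeq{_{\FB}}\Mod$ forces $\FA$ and $\FB$ to be equivariantly Morita equivalent, which again comes down to comparing their internal endomorphism algebras relative to a common generator.
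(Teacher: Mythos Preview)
The paper does not actually prove this theorem: it is stated in Appendix~\ref{sec:app_hopf} with the citation \cite{andruskiewitsch2007module} and no argument of its own. What the paper does contain, in Section~\ref{sec:bdd-top-exc}, is an expository account of both halves of the correspondence --- the $\mathsf{Rep}(H)$-module structure on ${_{\FA}}\Mod$, the internal-Hom algebra $\FM=\underline{\Hom}(M,M)$, Ostrik's equivalence $\EM\simeq\EC_{\FM}$, the smash-product passage $\FA=\FM^{\rm op}\#H^{\rm cop}$ of Eq.~\eqref{eq:mod-comod}, and the Yan--Zhu stabilizer identification --- but always with citations to \cite{etingof2003finite,ostrik2003module,andruskiewitsch2007module} rather than as a self-contained proof.

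Your proposal is a correct and well-organized synthesis of exactly that material into a proof sketch, and you have identified the two genuinely nontrivial points (Ostrik's structure theorem for the reconstruction direction, and the fact that equivariant Morita equivalence is precisely the relation collapsed by $\FA\mapsto{_{\FA}}\Mod$). So there is nothing to compare against: you have supplied a proof outline that the paper deliberately omits, and your outline follows the same route the paper sketches informally in Section~\ref{sec:bdd-top-exc}.
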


	\begin{definition}[Smash product]
	Let $H$ be a Hopf algebra and $A$ a left $H$-module algebra. The smash product algebra $A\# H$ is defined as follows. 
	\begin{enumerate}
	    \item As a vector space $A\#H=A\otimes H$; we denote $a\# h$ for element $a\otimes h$.
	    \item The product is given by
	    \begin{equation}
	        (a\#h)(b\#g)=\sum_{(h)}a(h^{(1)}b) \# h^{(2)}g,
	    \end{equation}
	    and the identity is $1\#1$.
	\end{enumerate}
	\end{definition}
When $\mathfrak{M}$ is a left $H$-module algebra, $\mathfrak{M}^{\rm op}$ is a left $H^{\rm cop}$-module algebra.
The smash product $\mathfrak{A}=\mathfrak{M}^{\rm op}\# H^{\rm cop}$ is a left $H$-comodule algebra with coaction given by
\begin{equation}
    \beta_{\mathfrak{A}}(a\# h)=\sum_{(h)}h^{(1)} \otimes (a\# h^{(2)}),
\end{equation}
where the comultiplication label is taken in $H$.

	\begin{definition}
A Frobenius algebra is a quintuple $(\mathfrak{F}, \mu,\eta,\Delta, \varepsilon)$ with $\mu:\mathfrak{F}\otimes \mathfrak{F}\to \mathfrak{F}$, $\eta:\mathbb{C}\to \mathfrak{F}$, $\Delta :\mathfrak{F}\to \mathfrak{F}\otimes \mathfrak{F}$ and $\varepsilon: \mathfrak{F}\to \mathbb{C}$, such that  $(\mathfrak{F}, \mu,\eta)	$ forms an algebra and $(\mathfrak{F}, \Delta, \varepsilon)$ forms a coalgebra and
	\begin{equation}
		(\id_{\mathfrak{F}}\otimes \mu)\comp (\Delta\otimes \id_{\mathfrak{F}})=\Delta\comp\mu= (\mu\otimes \id_{\mathfrak{F}})\comp (\id_{\mathfrak{F}} \otimes \Delta).
	\end{equation}
	A Frobenius algebra is called special if there exist $\alpha,\beta \in \mathbb{C}^{\times}$ such that
	\begin{equation}
		\mu \comp \Delta = \alpha  \id_{\mathfrak{F}}, \quad \varepsilon \comp \eta =\beta \id_{\one}.
	\end{equation}
	When $\alpha=1$ and $\beta=\dim \mathfrak{F}$, it is said normalized-special (or separable).
	A Frobenius algebra is said symmetric if 
	\begin{equation}
	    (\varepsilon_{\FF} \otimes \id_{\FF^{\vee}})\comp  (\mu_{\FF} \otimes \id_{\FF^{\vee}})\comp (\id_{\FF} \otimes \operatorname{coev}_{\FF})=    ( \id_{\FF^{\vee}}\otimes \varepsilon_{\FF} )\comp( \id_{\FF^{\vee}} \otimes \mu_{\FF})\comp (\operatorname{coev}_{\FF^{\vee}} \otimes\id_{\FF}  ),
	\end{equation}
	where $\operatorname{coev}_{\FF}$ is the coevaluation map.
\end{definition}

    \section{Geometric objects on $2d$ lattice}
    \label{sec:AA}
	In this appendix, following Ref.~\cite{Bombin2008family}, we provide a detailed discussion of geometric objects that appeared in the quantum double model.
	
	\begin{definition}[Lattice on a surface]
	Let $\Sigma$ be a $2d$ surface, a lattice $C(\Sigma)$ on it is a simple graph embedded in $\Sigma$, where, by simple graph we mean a graph for which no edge starts and ends at the same vertex. 
	\end{definition}
	
	It can be proved that, for an arbitrary $2d$ surface, this kind of lattice embedding always exists.  Another way to understand the lattice is from the perspective of cellulation, and the result coincides with the graph embedding. 
	A typical example is a triangulation, which always exists for topological manifolds with dimensions less than or equal to three \cite{manolescu2016lecture}.
	A slightly more complicated case is to consider the lattice as a ribbon graph embedded in the surface \cite{meusburger2017kitaev,meusburger2021hopf}, but there is no big difference. We can always break the ribbon graph into a simple graph by breaking edges and adding some new edges.
	Notice that for a given surface, different lattices on it have the same topological invariant (e.g., the same Euler characteristics), thus they are topologically equivalent.
	For the surface with boundaries, similar results hold.
	We first make a graph embedding of the boundary, which is nothing but an $m$-vertex cycle. The bulk graph shares these vertices and edges on the boundary.
	
	\begin{definition}
	For a lattice $C(\Sigma)$ on the surface $\Sigma$, we denote its vertex set, edge set, and face set as $V(\Sigma)$, $E(\Sigma)$ and $F(\Sigma)$ respectively.
	The dual lattice $\tilde{C}(\Sigma)$ is the Poincar\'{e} dual of $C(\Sigma)$, for which the role of face set and vertex set interchange.
	We also introduce a direction of each edge $e\in E(\Sigma)$, and the direction of the dual edge is obtained by rotating the direct edge counterclockwise with $\pi/2$. The inverse edge $\tilde{e}$ is obtained by reversing the direction of edge $e$.
	Two ends of edge $e$ are denoted as $\partial_i e$ with $i=0,1$ indicating the starting and terminal ends.
	A direct path $p$ is a list $p=(v_0,e_1,\cdots,e_n,v_n)$ for which $\partial_0 e_k=v_{k-1}$ and $\partial e_k=v_k$. The dual path is thus a list $\tilde{p}=(f_0,\tilde{e}_1,\cdots,\tilde{e}_n,f_n)$ such that $\partial_0 e_k=f_{k-1}$ and $\partial e_k=f_k$.
	\end{definition}

    \begin{definition}
    A site is a pair $s=(v,f)$ with adjacent $v\in V(\Sigma)$ and $f\in F(\Sigma)$.
    A direct triangle $\tau=(s_0,s_1,e)$ consists of two sites $s_0, s_1$ and a direct edge $e$, for which $s_0$ and $s_1$ share a common face. A dual triangle $\tilde{\tau} =(s_0,s_1,\tilde{e})$ consists of two sites $s_0,s_1$ and a dual edge $\tilde{e}$, for which $s_0$ and $s_1$ share a common vertex. The two sites of the triangle are denoted as $\partial_i \tau$ with $i=0,1$ indicating the starting and terminal ends.
    Notice that each edge has its direction, which may or may not match the direction of the triangle. 
    A left-handed (right-handed) triangle is one for which the edge is on the left-hand (right-hand) side when we pass through the triangle along its positive direction.
    Two triangles overlap if there share some common area. Two direct (dual) triangles overlap only if they are the same triangle. A direct triangle overlaps with a dual triangle when their direct edge and dual edge coincide.
    \end{definition}

    \begin{definition}
    A strip is an alternating sequence of triangles $\rho=(\tau_1,\cdots, \tau_n)$ such that $\partial_1\tau_k=\partial_0 \tau_{k+1}$. The starting and terminal site are denoted as $\partial_0\rho =\partial_0 \tau_1$ and $\partial_1 \rho =\partial_1 \tau_n$.
    Notice that a strip may have self-overlapping.
    A strip without self-overlapping is called a ribbon.
    \begin{itemize}
        \item A strip (ribbon) is called empty if its length is zero (no triangle).
        \item A strip  (ribbon)  is called direct (dual) if it consists only of direct (dual) triangles. 
        \item A strip  (ribbon)  is called proper if it contains both direct and dual triangles. For a proper strip  (ribbon), there is an associated direct path $p_{\rho}$ and a dual path $\tilde{p}_{\rho}$.
        \item A strip  (ribbon) is called closed if $\partial_0\rho=\partial_1 \rho $, \emph{viz.}, its starting site and terminal site are the same site. The only end site of the closed strip (ribbon) is denoted as $\partial \rho$.
        \item A strip  (ribbon) is called open if
        $\partial_0 \rho$ and $\partial_1 \rho$ have no overlap ($v_{\partial_0 \rho } \neq v_{\partial_1 \rho }$ and $v_{\partial_0 \rho } \neq v_{\partial_1 \rho }$).
     \end{itemize}
    \end{definition}

    \begin{definition}
    A directed strip (ribbon) is called type-A (resp. type-B) if all the direct triangles in it are left-handed (resp. right-handed).
    \end{definition}

    \begin{definition}
    A direct closed ribbon is a closed ribbon consisting of only direct triangles. A dual closed ribbon is a closed ribbon consisting of only dual triangles.
    \end{definition}

	\section{Properties of ribbon operators}
	\label{sec:AB}
	
	In this appendix, we provide some detailed discussion and calculations of ribbon operators.
	\subsection{Triangle operators}
	
	The building blocks for the ribbon operators are eight triangle operators given in Eqs.~\eqref{eq:tri1}-\eqref{eq:tri16}.
	We see that they are determined by the left-handed edge operators $T_{\pm}$, $L_{\pm}$, and right-handed edge operators $\tilde{T}_{\pm}$ and $\tilde{L}_{\pm}$.
	Recall that the inner product on the Hopf algebra $H$ is given by $\langle x,y\rangle=\varphi_{\hat{H}}(x^*y)$, where $\varphi_{\hat{H}}$ is the Haar integral of $\hat{H}$.
	
	The left-handed operator and right-handed operator are independent, and they won't appear in a given ribbon simultaneously. Only when considering the overlap of two ribbon operators with different chiralities, need we be concerned with the relation between them.
	
	It is easy to verify that the left-handed operators satisfy
	\begin{equation} \label{eq:LTS}
		L^h_{-}=S\comp L_+^h \comp S^{-1}, \quad T_-^{\phi} =S^{-1}\comp T_+^{\phi}\comp S.
	\end{equation}
	These edge operators are also compatible with  $*$-structures of $H$ and $\hat{H}$:
	\begin{align} \label{eq:TL}
	S^{\dagger}=S,\quad
	(L_{\pm}^{h})^{\dagger} = L_{\pm}^{h^*},\quad 	(T_{\pm}^{\varphi})^{\dagger} = T_{\pm}^{\varphi^*}.
	\end{align}
	For completeness, here we give proof of Eq.~\eqref{eq:TL}. 
	First consider $\langle S^{\dagger}(x),y\rangle=\varphi_{\hat{H}} (x^* S(y)) = S(\varphi_{\hat{H}}) (S(x^* S(y)) =\varphi_{\hat{H}} (S(x^*) y)=\langle S(x),y\rangle$, thus $S^{\dagger}=S$.
	The proof of $(L^h_{\pm})^{\dagger}=L_{\pm}^{h^*}$ is almost straightforward (see Sec.~III B of \cite{Buerschaper2013a} for a detailed proof).  We will prove the $T_{\pm}^{\phi}$ part, which is relatively technical. Namely, we have 

         \begin{align}
            \langle x, T^{\phi}_{+} y\rangle &= \sum_{(y)} \varphi_{\hat{H}} (x^* y^{(1)} \phi( y^{(2)})) \nonumber \\
             &=\sum_{(y),(\phi),(x^*)} \varphi_{\hat{H}}( {x^*}^{(1)} \varepsilon({x^*}^{(2)}) \phi^{(2)}(1_H)  
             \phi^{(1)}(y^{(2)}) y^{(1)}) \nonumber \\
             &=\sum_{(y),(\phi),(x^*)} \varphi_{\hat{H}}( {x^*}^{(1)} y^{(1)}   1_{\hat{H}} ( {x^*}^{(2)})  \hat{\varepsilon}(\phi^{(2)}) \phi^{(1)}( y^{(2)})  ) \nonumber \\
             &=\sum_{(y),(\phi),(x^*)} \varphi_{\hat{H}}( {x^*}^{(1)} y^{(1)}  [\phi^{(2)} S(\phi^{(3)}) ]({x^*}^{(2)})    \phi^{(1)}( y^{(2)})  ) \nonumber \\
             &=\sum_{(y),(\phi),(x^*)} \varphi_{\hat{H}}( {x^*}^{(1)} y^{(1)}  \phi^{(2)}({x^*}^{(2)}) S(\phi^{(3)}) ({x^*}^{(3)})  \phi^{(1)}( y^{(2)})  )  \\
            &=\sum_{(y),(\phi),(x^*)} \varphi_{\hat{H}}( (y{x^*}^{(1)})^{(1)} \phi^{(1)}((y{x^*}^{(1)} )^{(2)})   S(\phi^{(2)}) ({x^*}^{(2)}) ) \nonumber \\
            &=\sum_{(y),(\phi),(x^*)} \varphi_{\hat{H}}( (y{x^*}^{(1)})^{(1)} \phi^{(1)}((y{x^*}^{(1)} )^{(2)})   S(\phi^{(2)}) ({x^*}^{(2)}) ) \nonumber \\
            &=\sum_{(\phi),(x^*)} \varphi_{\hat{H}}( {x^*}^{(1)} y \hat{\varepsilon}(\phi^{(1)})  S(\phi^{(2)}) ({x^*}^{(2)}) ) \nonumber \\
            &=\langle T^{\phi^*}_+ x,y\rangle, \nonumber 
         \end{align}
	where we have used $\varphi_{\hat{H}} \phi^{(1)} =\hat{\varepsilon} (\phi^{(1)})\varphi_{\hat{H}}$ and Eq.~\eqref{eq:Sstar}. Using Eq.~\eqref{eq:LTS}, we  have $(T^{\phi}_{-})^{\dagger}= S^{\dagger} \comp T^{\phi^*}_{+} \comp S^{\dagger} = T^{\phi^*}_{-}$.

	Using the definition of these operators, it is not difficult to work out the commutation relations among these operators,
	\begin{align}
		&T^{\varphi}_+ L_+^h =  \sum_{(h)} L_+^{h^{(1)} }  T_+^{\varphi (h^{(2)}  \bullet )  },\quad T^{\varphi}_+ L_-^h = \sum_{(S(h))} L_{-}^{S(h)^{(1)}} T_+^{\varphi(\bullet S(h)^{(2)})}, \label{eq:TL-comm1}\\
		& T_{-}^{\varphi} L_+^h = \sum_{(h)}L_+^{h^{(2)}}   T_-^{\varphi ( \bullet S(h^{(1)}   )    )}, \quad 
		T_{-}^{\varphi} L_{-}^h =\sum_{(h)} L_{-}^{h^{(2)} } T_{-}^{\varphi(h^{(2)} \bullet)},\label{eq:TL-comm2}
	\end{align}
	where ``$\bullet$'' denotes the argument of the function.

	The right-handed edge operators and left-handed edge operators are related via antipode by
	\begin{equation}\label{eq:LR-rel}
	    \tilde{L}_{\pm}^h=L^{S(h)}_{\mp}, \quad \tilde{T}^{\phi}_{\pm}=T^{\hat{S}(\phi)}_{\mp}.
	\end{equation}
	Thus from the properties of left-handed edge operators, we similarly have:
	\begin{equation} \label{eq:rLTS}
		\tilde{L}^h_{+}=S\comp \tilde{L}_{-}^h \comp S^{-1}, \quad \tilde{T}_{+}^{\phi} =S\comp \tilde{T}_{-}^{\phi}\comp S^{-1},
	\end{equation}
	and
	\begin{align} \label{eq:rTL}
	(\tilde{L}_{\pm}^{h})^{\dagger} = \tilde{L}_{\pm}^{h^*},\quad 	(\tilde{T}_{\pm}^{\varphi})^{\dagger} = \tilde{T}_{\pm}^{\varphi^*}.
	\end{align}
	The commutation relations for right-handed edge operators (and the commutation relations between left-handed and right-handed edge operators) can also be obtained from Eqs.~\eqref{eq:LR-rel}, \eqref{eq:TL-comm1} and \eqref{eq:TL-comm2}.
	
	It is easy to prove that
        \begin{equation}
           \begin{aligned}
	    L^h_{\pm}L^g_{\pm}=L^{hg}_{\pm}, \quad T_{\pm}^{\phi}T_{\pm}^{\psi}=T_{\pm}^{\phi \psi},\\
	    \tilde{L}^h_{\pm}\tilde{L}^g_{\pm}=\tilde{L}^{gh}_{\pm}, \quad \tilde{T}^{\phi}_{\pm}T^{\psi}_{\pm}=\tilde{T}^{ \psi \phi}_{\pm}.
	\end{aligned} 
        \end{equation}
	From the above equalities and the definition of triangle operators, it is clear that 
	\begin{align}
	    F^{h,\phi}(\tau_L) F^{g,\psi}(\tau_L) = F^{hg,\psi \phi}(\tau_L) , \quad F^{h,\phi}(\tilde{\tau}_R) F^{g,\psi}(\tilde{\tau}_R)= F^{hg,\psi \phi}(\tilde{\tau}_R),\\
	    F^{h,\phi}(\tau_R) F^{g,\psi}(\tau_R)= F^{gh,\phi \psi }(\tau_R), \quad F^{h,\phi}(\tilde{\tau}_L) F^{g,\psi}(\tilde{\tau}_L)= F^{gh,\phi \psi }(\tilde{\tau}_L),
	\end{align}
	This means that, as algebras, the triangle operator algebras satisfy $\mathcal{A}_{\tau_L} \cong \mathcal{A}_{\tilde{\tau}_R} \cong H\otimes \hat{H}^{\rm op}$, and $\mathcal{A}_{\tau_R} \cong \mathcal{A}_{\tilde{\tau}_L} \cong H^{\rm op} \otimes \hat{H}$.

		\subsection{Vertex and face operators}

	The type-B construction of the quantum double model is based on the left-module structure of $H$.
	The vertex operators ${A}^h(s)$ are built from $L_{\pm}$ in counterclockwise order, and the face operators ${B}^{\varphi}(s)$ are built from $T_{\pm}$ in a counterclockwise order.
	It is easy to verify that 
	\begin{equation}
	    {A}^h(s) {B}^{\varphi}(s)= \sum_{(h)}{B}^{\varphi(S^{-1}(h^{(3)}) \bullet h^{(1)} )}(s) {A}^{h^{(2)}}(s).
	\end{equation}
	See Sec.~III A in Ref.~\cite{Buerschaper2013a}. They form a representation of $D_B(H)=\hat{H}^{\rm cop} \Join  H $.
	
	The type-A construction of the quantum double model is based on the right-module structure of $H$.
	The vertex operators $\tilde{A}^h(s)$ are built from $\tilde{L}_{\pm}$ in a clockwise order, and the face operators $\tilde{B}^{\varphi}(s)$ are built from $\tilde{T}_{\pm}$ in a clockwise order.
	It is easy to verify that 
	\begin{equation}
	    \tilde{A}^h(s) \tilde{B}^{\varphi}(s)=\sum_{(h)} \tilde{B}^{\varphi(h^{(3)} \bullet S^{-1}(h^{(1)}) )}(s) \tilde{A}^{h^{(2)}}(s).
	\end{equation}
	They form a representation of $D_A(H)=\hat{H} \Join H^{\rm cop}\simeq D_B(H)^{\rm cop}$.

\subsection{Properties of ribbon operators}
    
    There are several crucial properties for open ribbon operators that will be used for constructing topological excitations of the model.
    
    As we have mentioned before, the topological excitations are given at two ends of the ribbon operators, thus the commutation relations between the vertex and face operators and ribbon operators are crucial:  
    \begin{enumerate}
        \item At the starting points of ribbons $\rho_A$ and $\rho_B$, we have 
        \begin{align}
            A^g(s_0)F^{h,\varphi}(\rho_A)&=\sum_{(g)}F^{g^{(1)}hS(g^{(3)}),\varphi(S(g^{(2)})\bullet)}(\rho_A)A^{g^{(4)}}(s_0),\label{eq:comm1}\\
            A^g(s_0)F^{h,\varphi}(\rho_B)&=\sum_{(g)}F^{g^{(2)}hS(g^{(4)}),\varphi(S(g^{(3)})\bullet)}(\rho_B)A^{g^{(1)}}(s_0), \label{eq:comm2} \\
            B^{\psi}(s_0)F^{h,\varphi}(\rho_A)&=\sum_{(h)}F^{h^{(2)},\varphi}(\rho_A)B^{\psi(\bullet S(h^{(1)}))}(s_0), \label{eq:comm3} \\
            B^{\psi}(s_0)F^{h,\varphi}(\rho_B)&=\sum_{(h)}F^{h^{(2)},\varphi}(\rho_B)B^{\psi(S(h^{(1)})\bullet)}(s_0).  \label{eq:comm4} 
        \end{align}
        \item Similarly, at the ending points, we have 
        \begin{align}
            A^g(s_1)F^{h,\varphi}(\rho_A)&=\sum_{(g)}F^{h,\varphi(\bullet g^{(2)})}(\rho_A)A^{g^{(1)}}(s_1), \label{eq:comm5} \\
            A^g(s_1)F^{h,\varphi}(\rho_B)&=\sum_{(g)}F^{h,\varphi(\bullet g^{(1)})}(\rho_B)A^{g^{(2)}}(s_1), \label{eq:comm6} \\
            B^{\psi}(s_1)F^{h,\varphi}(\rho_A)&=\sum_{(h)}\sum_{k,(k)}\varphi(k^{(2)})F^{h^{(1)},\hat{k}}(\rho_A)B^{\psi(S(k^{(3)})h^{(2)}k^{(1)}\bullet)}(s_1), \label{eq:comm7} \\
            B^{\psi}(s_1)F^{h,\varphi}(\rho_B)&=\sum_{(h)}\sum_{k,(k)}\varphi(k^{(2)})F^{h^{(1)},\hat{k}}(\rho_B)B^{\psi(\bullet S(k^{(3)})h^{(2)}k^{(1)})}(s_1).  \label{eq:comm8}  
        \end{align}
    \end{enumerate}
     See \cite{chen2021ribbon} for detailed proofs (although the convention we use here is different from the one in  \cite{chen2021ribbon}, but the same result can be obtained).

\begin{proposition}
 Let $\rho$ be a closed ribbon with end site $s=\partial\rho$.
    \begin{itemize}
        \item[1.] If $\rho=\rho_A$ is of type-A, we have
            \begin{align}
               A^h(s)F^{g,\psi}(\rho_A)&=\sum_{(h)}F^{h^{(1)}gS(h^{(3)}),\psi(S(h^{(2)})\bullet h^{(5)})}(\rho_A)A^{h^{(4)}}(s), \\  
               B^{\varphi}(s)F^{g,\psi}(\rho_A)&= \sum_{(g)}F^{g^{(2)},\psi}(\rho_A)B^{\varphi(g^{(3)}\bullet S(g^{(1)}))}(s).          \label{eq:b_closed}
            \end{align}
        \item[2.] If $\rho=\rho_B$ is of type-B, we have
            \begin{align}
               A^h(s)F^{g,\psi}(\rho_B)&=\sum_{(h)}F^{h^{(3)}gS(h^{(5)}),\psi(S(h^{(4)})\bullet h^{(1)})}(\rho_B)A^{h^{(2)}}(s), \label{eq:a_closed}\\
               B^{\varphi}(s)F^{g,\psi}(\rho_B)&=\sum_{(g)}F^{g^{(2)},\psi}(\rho_B)B^{\varphi(S(g^{(1)})\bullet g^{(3)})}(s).           \label{eq:b_closed1}
            \end{align}
    \end{itemize}
\end{proposition}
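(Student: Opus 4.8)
The plan is to reduce each of the four identities to the open-ribbon commutation relations \eqref{eq:comm1}--\eqref{eq:comm8} established in this appendix, together with the recursive definition \eqref{eq:ribb_op} of the ribbon operator on a composite ribbon and the coassociativity of the dual quantum double $D_B(H)^{\vee}=H^{\rm op}\otimes\hat H$ (type B) or $D_A(H)^{\vee}=H\otimes\hat H^{\rm op}$ (type A). The key structural point is that a closed ribbon $\rho$ has a single end site $s=\partial_0\rho=\partial_1\rho$, so $F^{g,\psi}(\rho)$ must be commuted past a vertex or face operator sitting simultaneously at its origin and its terminus; this is exactly what the coproduct of the dual double is for — it distributes the Sweedler pieces of the label of $A^h(s)$ (resp.\ $B^{\varphi}(s)$) between the two ends.

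Concretely, first I would fix an interior site $s'$ of $\rho$ and split $\rho=\rho_1\cup\rho_2$ with $\partial_0\rho_1=s$, $\partial_1\rho_1=\partial_0\rho_2=s'$ and $\partial_1\rho_2=s$, so that \eqref{eq:ribb_op} gives $F^{g,\psi}(\rho)=\sum F^{(g\otimes\psi)^{(1)}}(\rho_1)\,F^{(g\otimes\psi)^{(2)}}(\rho_2)$. Since $s'\neq s$, both $A^h(s)$ and $B^{\varphi}(s)$ commute with everything supported at $s'$, so it suffices to move the local operator first across $F(\rho_1)$ with the \emph{starting}-site relation ($s$ is the origin of $\rho_1$) and then across $F(\rho_2)$ with the \emph{ending}-site relation ($s$ is the terminus of $\rho_2$). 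For the type-A vertex identity this means applying \eqref{eq:comm1} and then \eqref{eq:comm5}; for the type-B vertex identity \eqref{eq:comm2} then \eqref{eq:comm6}; for the type-A face identity \eqref{eq:comm3} then \eqref{eq:comm7}; for the type-B face identity \eqref{eq:comm4} then \eqref{eq:comm8}. Each application produces a Sweedler sum over $\Delta(h)$ (resp.\ $\Delta(\varphi)$), and coassociativity lets me relabel so that exactly one summand $A^{h^{(4)}}(s)$ passes through on the far right in the vertex case, and nothing survives on the right in the face case.

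What remains is to recombine $F(\rho_1)F(\rho_2)$ back into $F(\rho)$. The essential observation is that the pattern ``$\sum_{k}\cdots\otimes\hat k$, with $S(k^{(3)})\bullet k^{(1)}$ on the magnetic slot and $\bullet\,(k^{(2)}\bullet)$ on the electric slot'' appearing in the ending-site relations \eqref{eq:comm7}--\eqref{eq:comm8} is precisely the comultiplication \eqref{eq:coprod-dual} of $D(H)^{\vee}$; matching it against the coproduct that was used to split the label lets the two block operators fuse back, by coassociativity, into a single $F^{\bullet,\bullet}(\rho)$. The conjugation on the magnetic label recombines to $h^{(1)}gS(h^{(3)})$ because $\Delta$ is an algebra homomorphism (so $\Delta(h^{(1)}gS(h^{(3)}))=\Delta(h^{(1)})\Delta(g)\Delta(S(h^{(3)}))$), and the leftover electric factors $S(h^{(2)})$ (from the start relation on $\rho_1$) and $h^{(5)}$ (from the end relation on $\rho_2$) merge into $\psi(S(h^{(2)})\bullet h^{(5)})$; dually, in the face identities the leftover $g$-factors are pushed into the argument of $B$ as $\varphi(g^{(3)}\bullet S(g^{(1)}))$. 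I expect the main obstacle to be the Sweedler bookkeeping across this recombination: one must check that the pieces of $\Delta(h)$ assigned to the ``start block'' and the ``end block'' of the vertex operator occur in the order dictated by the counterclockwise edge ordering around $v$, and that the $\{k,\hat k\}$-basis sums coming from the ending-site relations telescope correctly with those coming from splitting the ribbon. Once the type-A vertex identity is verified in full, the other three follow by the same computation under the substitutions $L\leftrightarrow\tilde L$, $T\leftrightarrow\tilde T$, $D_A(H)^{\vee}\leftrightarrow D_B(H)^{\vee}$, so I would present one case in detail and merely indicate the modifications for the remaining three.
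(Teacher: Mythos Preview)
Your proposal is correct and follows essentially the same route as the paper: split the closed ribbon at an interior site, push the local operator through the first leg using the starting-site relation and through the second leg using the ending-site relation, then telescope the dual-basis sums and use coassociativity of $D(H)^{\vee}$ to reassemble a single ribbon operator. One small slip: in the face case a $B^{\varphi(\cdots)}(s)$ does survive on the right (it is the argument of $\varphi$ that gets twisted by Sweedler pieces of $g$, not $\varphi$ itself that is split), so your remark that ``nothing survives on the right'' should be corrected, but this does not affect the mechanics of the computation.
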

	
	\begin{proof}
	   Decompose $\rho$ into $\rho=\rho_1\cup\rho_2$ with $\partial_1\rho_1=\partial_0\rho_2$ and $\partial_0\rho_1=\partial_1\rho_2=s$. Assume $\rho$ is of type-B first. Using Eqs.~\eqref{eq:comm2} and \eqref{eq:comm6}, we have 
	   \begin{align}
	      &\ \ \ \ A^h(s)F^{g,\psi}(\rho_B) \nonumber \\  &=\sum_{k,(k),(g)}A^h(s)F^{g^{(1)},\hat{k}}(\rho_1)F^{S(k^{(3)})g^{(2)}k^{(1)},\psi(k^{(2)}\bullet)}(\rho_2)\nonumber\\
	      &=\sum_{k,(k),(g)}\sum_{(h)}F^{h^{(2)}g^{(1)}S(h^{(4)}),\hat{k}(S(h^{(3)})\bullet)}(\rho_1)A^{h^{(1)}}(s)F^{S(k^{(3)})g^{(2)}k^{(1)},\psi(k^{(2)}\bullet)}(\rho_2)\nonumber\\
	      &=\sum_{k(k),(g)}\sum_{(h)}F^{h^{(3)}g^{(1)}S(h^{(5)}),\hat{k}(S(h^{(4)})\bullet)}(\rho_1)F^{S(k^{(3)})g^{(2)}k^{(1)},\psi(k^{(2)}\bullet h^{(1)})}(\rho_2)A^{h^{(2)}}(s)\nonumber\\
	      &=\sum_{k,(k),(g)}\sum_{(h)}\sum_{j}F^{h^{(3)}g^{(1)}S(h^{(5)}),\hat{k}(S(h^{(4)})j)\hat{j}}(\rho_1)F^{S(k^{(3)})g^{(2)}k^{(1)},\psi(k^{(2)}\bullet h^{(1)})}(\rho_2)A^{h^{(2)}}(s)\nonumber\\
	      &= \sum_{(h)}\sum_{j,(j),(g)}F^{h^{(3)}g^{(1)}S(h^{(7)}),\hat{j}}(\rho_1)F^{S(j^{(3)})h^{(4)}g^{(2)}S(h^{(6)})j^{(1)},\psi(S(h^{(5)})j^{(2)}\bullet h^{(1)})}(\rho_2)A^{h^{(2)}}(s)\nonumber\\
	      &=\sum_{(h)}F^{h^{(3)}gS(h^{(5)}),\psi(S(h^{(4)})\bullet h^{(1)})}(\rho_B)A^{h^{(2)}}(s). 
	   \end{align}
	Using straightening formula, 
Eq.~\eqref{eq:comm8} can be rewritten as 
\begin{equation}
B^{\psi}(s_1)F^{h,\varphi}(\rho_B)=\sum_{(h)}F^{h^{(1)},\varphi}(\rho_B)B^{\psi(\bullet h^{(2)})}(s_1). 
\end{equation}
Thus, by Eqs.~\eqref{eq:comm4} and the above, we have 
    	\begin{align}
	       &\ \ \ \ B^\varphi(s)F^{g,\psi}(\rho_B) \nonumber\\ 
	       &= \sum_{k,(k),(g)}B^\varphi(s)F^{g^{(1)},\hat{k}}(\rho_1)F^{S(k^{(3)})g^{(2)}k^{(1)},\psi(k^{(2)}\bullet)}(\rho_2) \nonumber \\
	       &= \sum_{k,(k),(g)} F^{g^{(2)},\hat{k}}(\rho_1)B^{\varphi(S(g^{(1)})\bullet)}(s) F^{S(k^{(3)})g^{(3)}k^{(1)},\psi(k^{(2)}\bullet)}(\rho_2) \nonumber\\
	       &= \sum_{k,(k),(g)} F^{g^{(2)},\hat{k}}(\rho_1)F^{S(k^{(5)})g^{(3)}k^{(1)},\psi(k^{(3)}\bullet)}(\rho_2)B^{\varphi(S(g^{(1)})\bullet S(k^{(4)})g^{(4)}k^{(2)})}(s) \nonumber \\
	       &=\sum_{(g)}\sum_{k,(k),(g^{(2)})}F^{g^{(2)},\hat{k}}(\rho_1)F^{S(k^{(3)})g^{(3)}k^{(1)},\psi(k^{(2)}\bullet)}(\rho_2)B^{\varphi(S(g^{(1)})\bullet g^{(4)})}(s) \nonumber \\
	       &=\sum_{(g)}F^{g^{(2)},\psi}(\rho)B^{\varphi(S(g^{(1)})\bullet g^{(3)})}(s). 
    	\end{align}
	Here, in the fourth equality, we apply the straightening formula. 
	This completes the proof for the assertion of closed ribbons of type-B. The proof for the assertion of closed ribbons of type-A is similar. 
	\end{proof}

	\begin{proposition}\label{prop:ribb_local_op}
	The ribbon operator $F^{g,\psi}(\rho)$ for an open ribbon $\rho$ commutes with all operators $A_{v}^{H}=A^{h_H} (s)$ and $B^{\hat{H}}_f=B^{\varphi_{\hat{H}}}(s)$ for $s\neq\partial_0 \rho,\partial_1 \rho$, where $h_H$ and $\varphi_{\hat{H}}$ are Haar integrals.
	    
	\begin{proof}
	   We prove the assertion for type-B ribbons. Denote $h=h_H$ and $\varphi=\varphi_{\hat{H}}$ for simplicity. Decompose $\rho$ into $\rho=\tau_1\cup\tau_2$, with $\partial_1\tau_1=\partial_0\tau_2=s$: 
	   \begin{equation*}
	        \begin{aligned}
				\begin{tikzpicture}
					\draw[-latex,black] (-2,0) -- (0,0);						
					\draw[-latex,black] (0,0) -- (2,0); 
					\draw[line width=0.5pt, red] (0,0) -- (1,1);
					\draw[line width=0.5pt, black] (1,1) -- (2,0);
					\draw[line width=0.5pt, black] (0,0) -- (-1,1);
					\draw[line width=0.5pt, black] (-2,0) -- (-1,1);
					\draw[line width=0.5pt, black] (2,0) -- (3,1);
					\draw[-latex, dashed, black] (1,1) -- (-1,1);
					\draw[-latex, dashed, black] (3,1) -- (1,1);
					\draw [fill = black] (0,0) circle (1.2pt);
					\draw [fill = black] (1,1) circle (1.2pt);
					\node[ line width=0.2pt, dashed, draw opacity=0.5] (a) at (0.3,0.6){$s$};	
					\draw[-stealth,gray, line width=3pt] (1.2,0.5) -- (2,0.5);
					\draw[-stealth,gray, line width=3pt] (-1,0.5) -- (-0.2,0.5);
					\node[ line width=0.2pt, dashed, draw opacity=0.5] (a) at (-0.5,0.2){$\tau_1$};
					\node[ line width=0.2pt, dashed, draw opacity=0.5] (a) at (1.5,0.2){$\tau_2$};
				    \node[ line width=0.2pt, dashed, draw opacity=0.5] (a) at (-2.2,0.5){$\cdots$};
					\node[ line width=0.2pt, dashed, draw opacity=0.5] (a) at (3.2,0.5){$\cdots$};
				\end{tikzpicture}
			\end{aligned}
		\end{equation*}
	Using Eqs.~\eqref{eq:comm2} and \eqref{eq:comm6}, one gets 
	    \begin{align}
	         &\quad A^h(s)F^{g,\psi}(\rho) \nonumber \\
          & = \sum_{k,(k),(g)}A^h(s)F^{g^{(1)},\hat{k}}(\tau_1)F^{S(k^{(3)})g^{(2)}k^{(1)},\psi(k^{(2)}\bullet)}(\tau_2) \nonumber \\
          & = \sum_{k,(k),(g)}\sum_{(h)}F^{g^{(1)},\hat{k}(\bullet h^{(1)})}(\tau_1)A^{h^{(2)}}(s)F^{S(k^{(3)})g^{(2)}k^{(1)},\psi(k^{(2)}\bullet)}(\tau_2) \nonumber \\
          & = \sum_{k,(k),(g)}\sum_{(h)}F^{g^{(1)},\hat{k}(\bullet h^{(1)})}(\tau_1)F^{h^{(3)}S(k^{(3)})g^{(2)}k^{(1)}S(h^{(5)}),\psi(k^{(2)}S(h^{(4)})\bullet)}(\tau_2)A^{h^{(2)}}(s) \nonumber \\
          & = \sum_{k,(k),(g)}\sum_{(h)}\sum_jF^{g^{(1)},\hat{k}(jh^{(1)})\hat{j}}(\tau_1)F^{h^{(3)}S(k^{(3)})g^{(2)}k^{(1)}S(h^{(5)}),\psi(k^{(2)}S(h^{(4)})\bullet)}(\tau_2)A^{h^{(2)}}(s) \nonumber \\
          & = \sum_{j,(j),(g)}\sum_{(h)}F^{g^{(1)},\hat{j}}(\tau_1)F^{h^{(5)}S(h^{(3)})S(j^{(3)})g^{(2)}j^{(1)}h^{(1)}S(h^{(7)}),\psi(j^{(2)}h^{(2)}S(h^{(6)})\bullet)}(\tau_2)A^{h^{(4)}}(s) \nonumber \\
          & = \sum_{j,(j),(g)}\sum_{(h)}F^{g^{(1)},\hat{j}}(\tau_1)F^{h^{(4)}S(h^{(2)})S(j^{(3)})g^{(2)}j^{(1)},\psi(j^{(2)}h^{(1)}S(h^{(5)})\bullet)}(\tau_2)A^{h^{(3)}}(s) \nonumber \\
          & = \sum_{j,(j),(g)}\sum_{(h)}F^{g^{(1)},\hat{j}}(\tau_1)F^{h^{(3)}S(h^{(1)})S(j^{(3)})g^{(2)}j^{(1)},\psi(j^{(2)}\bullet)}(\tau_2)A^{h^{(2)}}(s) \nonumber \\
          & = \sum_{j,(j),(g)}F^{g^{(1)},\hat{j}}(\tau_1)F^{S(j^{(3)})g^{(2)}j^{(1)},\psi(j^{(2)}\bullet)}(\tau_2)A^{h}(s) \nonumber \\
          & = F^{g,\psi}(\rho)A^h(s). 
	    \end{align}
     Here, in the fourth equality, we use $\bullet=\sum_j\hat{j}(\bullet)j$ for an orthogonal basis $\{j\}$; in the fifth equality, we use $jh^{(1)}=\sum_k\hat{k}(jh^{(1)})k$; in the sixth equality, we use the fact that $h=h_H$ is cocommutative which enable us to rotate $h^{(7)}$ to $h^{(1)}$ and $h^{(i)}$ to $h^{(i+1)}$ for $1\leq i \leq 6$; the seventh and eighth equalities use the same trick as in the sixth.

	 For the second statement, using Eqs.~\eqref{eq:comm4} and \eqref{eq:comm8}, one has
\begin{align}
    &\quad  B^\varphi(s)F^{g,\psi}(\rho) \nonumber \\
    & = \sum_{k,(k),(g)} B^\varphi(s) F^{g^{(1)},\hat{k}}(\rho_1)F^{S(k^{(3)})g^{(2)}k^{(1)},\psi(k^{(2)}\bullet)}(\rho_2) \nonumber  \\
    & = \sum_{k,(k),(g)}\sum_{j,(j)}\hat{k}(j^{(2)})  F^{g^{(1)},\hat{j}}(\rho_1) B^{\varphi(\bullet S(j^{(3)})g^{(2)}j^{(1)})}(s) F^{S(k^{(3)})g^{(3)}k^{(1)},\psi(k^{(2)}\bullet)}(\rho_2) \nonumber  \\
    & = \sum_{k,(k),(g)}\sum_{j,(j)}\hat{k}(j^{(2)})  F^{g^{(1)},\hat{j}}(\rho_1)  F^{S(k^{(4)})g^{(4)}k^{(2)},\psi(k^{(3)}\bullet)}(\rho_2)B^{\varphi(S(k^{(1)})S(g^{(3)})k^{(5)}\bullet S(j^{(3)})g^{(2)}j^{(1)})}(s)  \nonumber \\
    & = \sum_{(g)}\sum_{j,(j)}  F^{g^{(1)},\hat{j}}(\rho_1)  F^{S(j^{(5)})g^{(4)}j^{(3)},\psi(j^{(4)}\bullet)}(\rho_2)B^{\varphi(S(j^{(2)})S(g^{(3)})j^{(6)}\bullet S(j^{(7)})g^{(2)}j^{(1)})}(s) \nonumber  \\
    & = \sum_{(g)}\sum_{j,(j)}  F^{g^{(1)},\hat{j}}(\rho_1)  F^{S(j^{(5)})g^{(4)}j^{(3)},\psi(j^{(4)}\bullet)}(\rho_2)B^{\varphi(S(j^{(7)})g^{(2)}j^{(1)}S(j^{(2)})S(g^{(3)})j^{(6)}\bullet )}(s)  \nonumber \\
    & = \sum_{(g)}\sum_{j,(j)}  F^{g^{(1)},\hat{j}}(\rho_1)  F^{S(j^{(3)})g^{(2)}j^{(1)},\psi(j^{(2)}\bullet)}(\rho_2)B^{\varphi}(s)  \nonumber \\
    & = F^{g,\psi}(\rho)B^\varphi(s). 
\end{align}
	 Here, in the fifth equality, we use the fact that $\varphi$ is cocommutative.
	 This completes the proof of the assertion for type-B ribbons. The assertion for type-A ribbons can be proved similarly \cite{chen2021ribbon}.
	  \end{proof}
	\end{proposition}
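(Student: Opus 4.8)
The statement to prove is Proposition~\ref{prop:ribb_local_op}: for an open ribbon $\rho$, the ribbon operator $F^{g,\psi}(\rho)$ commutes with every $A_v^H = A^{h_H}(s)$ and $B_f^H = B^{\varphi_{\hat H}}(s)$ whenever $s\neq \partial_0\rho,\partial_1\rho$. The plan is to reduce to the two-triangle case and then exploit the special (Haar) properties of $h_H$ and $\varphi_{\hat H}$ — namely cocommutativity and idempotency — to collapse the ``twisting'' contributions that appear for a generic element.

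First I would fix a site $s$ that is an \emph{interior} site of $\rho$, i.e.\ $s=\partial_1\tau_1=\partial_0\tau_2$ for the decomposition $\rho=\tau_1\cup\tau_2$ (the case where $s$ is disjoint from $\rho$ is trivial, since then $F^{g,\psi}(\rho)$ acts on edges not touched by $A^{h_H}(s)$ or $B^{\varphi_{\hat H}}(s)$). For such a site, $s$ is the \emph{ending} site of $\tau_1$ and the \emph{starting} site of $\tau_2$, so I can apply the commutation relations already established in the excerpt: Eq.~\eqref{eq:comm6} (resp.~\eqref{eq:comm5}) for $A^g$ at the end of a ribbon and Eq.~\eqref{eq:comm2} (resp.~\eqref{eq:comm1}) for $A^g$ at the start, together with \eqref{eq:comm8}/\eqref{eq:comm4} for $B^\psi$. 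I would treat type-B ribbons in detail and invoke the symmetric argument for type-A. Using the recursive definition \eqref{eq:ribb_op} to write $F^{g,\psi}(\rho)=\sum_{k}\sum_{(k),(g)} F^{g^{(1)},\hat k}(\tau_1)F^{S(k^{(3)})g^{(2)}k^{(1)},\psi(k^{(2)}\bullet)}(\tau_2)$, I push $A^{h_H}(s)$ through both triangle operators, picking up a twist by $h_H^{(i)}$ on each side.

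The heart of the computation is then bookkeeping of Sweedler indices: moving $A^{h_H}(s)$ from the left of $F^{g^{(1)},\hat k}(\tau_1)$ to its right produces (by \eqref{eq:comm6}) a factor $\hat k(\bullet\, h_H^{(1)})$; continuing past $F^{\cdots}(\tau_2)$ (by \eqref{eq:comm2}) conjugates the $H$-label by $h_H^{(3)}\cdots S(h_H^{(5)})$ and the $\hat H$-label by $\cdots S(h_H^{(4)})\bullet$, leaving $A^{h_H^{(2)}}(s)$ on the far right. I then rewrite $\bullet=\sum_j \hat j(\bullet)j$ and $j h_H^{(1)}=\sum_k \hat k(jh_H^{(1)})k$ to repackage the sum over $k$ into a sum over a fresh basis $\{j\}$, exactly as in the closed-ribbon proof above. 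At this point the expression contains $h_H^{(1)},\dots,h_H^{(7)}$ (after splitting further). The key step: because $h_H\in\operatorname{Cocom}(H)$ one may cyclically rotate the coproduct legs $h_H^{(7)}\mapsto h_H^{(1)}$, $h_H^{(i)}\mapsto h_H^{(i+1)}$, so that pairs $h_H^{(i)}S(h_H^{(i+2)})$ telescope via $\sum h^{(1)}S(h^{(2)})=\varepsilon(h)1$; iterating kills all the twists and leaves precisely $F^{g,\psi}(\rho)A^{h_H}(s)$. The $B^{\varphi_{\hat H}}$ case is dual: using the straightening form of \eqref{eq:comm8} together with cocommutativity of $\varphi_{\hat H}$ (which holds since Haar integrals are central/cocommutative, see Appendix~\ref{sec:app_hopf}), the analogous telescoping eliminates all the $\psi$- and $g$-dependent twists.

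The main obstacle I anticipate is purely combinatorial: correctly tracking the roughly seven-fold coproducts of $h_H$ (and of $\varphi_{\hat H}$ in the face case) through two successive applications of the start/end commutation relations, and identifying which consecutive pairs of legs are in the right cyclic position to telescope. Getting the cocommutativity rotation to line up with an antipode identity $\sum h^{(i)}S(h^{(i+2)})=\varepsilon(h^{(i+1)})\cdot$ is delicate and is where a careless index shift would break the argument; everything else is a direct application of Eqs.~\eqref{eq:comm1}--\eqref{eq:comm8}. I would therefore carry out the type-B $A$-operator case with all indices written explicitly (as in the displayed chain of equalities for closed ribbons), then remark that the type-B $B$-operator case is obtained by the electric-magnetic dual substitution, and that type-A ribbons follow by the antipode relations \eqref{eq:LR-rel} exactly as in \cite{chen2021ribbon}.
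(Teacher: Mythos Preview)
Your proposal is correct and follows essentially the same approach as the paper's proof: decompose $\rho=\tau_1\cup\tau_2$ at the interior site $s$, apply Eqs.~\eqref{eq:comm6} then \eqref{eq:comm2} (resp.\ \eqref{eq:comm8} then \eqref{eq:comm4}) to push the stabilizer through, use the dual-basis substitutions $\bullet=\sum_j\hat{j}(\bullet)j$ and $jh^{(1)}=\sum_k\hat{k}(jh^{(1)})k$, then exploit cocommutativity of the Haar integral to cyclically rotate the Sweedler legs and telescope via $\sum h^{(i)}S(h^{(i+1)})=\varepsilon(h)1$. One small remark: idempotency of $h_H$ is not actually needed here---cocommutativity alone does all the work---so you can drop that from your list of hypotheses invoked.
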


	\begin{proposition}
	   The ribbon operator $F^{h,\phi}(\rho)$ of a  closed ribbon $\rho$ commutes with all stabilizers $A_v^H=A^{h_H}(s)$ and $B_{f}^{\hat{H}}=B^{\varphi_H}(s)$ with $s\neq\partial\rho$.
	\end{proposition}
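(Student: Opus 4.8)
The plan is to reduce the statement for a closed ribbon to the already-established facts about open ribbons, exactly as was done in Proposition~\ref{prop:ribb_local_op}. For a site $s \neq \partial\rho$, the site lies in the ``interior'' of the closed ribbon, so I can decompose $\rho = \rho_1 \cup \rho_2$ where both pieces have $s$ as a common endpoint: $\partial_1\rho_1 = \partial_0\rho_2 = s$ and $\partial_0\rho_1 = \partial_1\rho_2 = \partial\rho$. The recursive definition of the ribbon operator then gives $F^{h,\phi}(\rho) = \sum F^{(h\otimes\phi)^{(1)}}(\rho_1) F^{(h\otimes\phi)^{(2)}}(\rho_2)$, and the point is that at the shared site $s$ the relevant commutation relations are the \emph{open-ribbon} ones (Eqs.~\eqref{eq:comm2},\eqref{eq:comm6} at the end of $\rho_1$ and the start of $\rho_2$, or their type-A analogues), since $s$ is an ordinary endpoint of each piece.

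First I would treat the type-B case and the vertex stabilizer $A^{h_H}(s)$. The computation is essentially identical in structure to the one in the proof of Proposition~\ref{prop:ribb_local_op}: push $A^{h_H}(s)$ past $F^{(h\otimes\phi)^{(1)}}(\rho_1)$ using Eq.~\eqref{eq:comm6} (the ending-point relation for $\rho_1$), then past $F^{(h\otimes\phi)^{(2)}}(\rho_2)$ using Eq.~\eqref{eq:comm2} (the starting-point relation for $\rho_2$), reorganize the dual-basis sums with $\bullet = \sum_j \hat{j}(\bullet)j$, and finally use that the Haar integral $h_H$ is cocommutative and idempotent to rotate and collapse all the comultiplication factors $h_H^{(1)},\dots,h_H^{(n)}$ back into a single $h_H$. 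The cocommutativity is what lets the ``twisting by $h^{(i)}$'' factors telescope away; the idempotency $h_H^2 = h_H$ disposes of the leftover. The result is $A^{h_H}(s)F^{h,\phi}(\rho) = F^{h,\phi}(\rho)A^{h_H}(s)$. For the face stabilizer $B^{\varphi_{\hat H}}(s)$ I would proceed dually, using Eqs.~\eqref{eq:comm4} and \eqref{eq:comm8} (the latter rewritten via the straightening formula \eqref{eq:straightening} into the cleaner form $B^{\psi}(s_1)F^{h,\varphi}(\rho_B) = \sum F^{h^{(1)},\varphi}(\rho_B)B^{\psi(\bullet h^{(2)})}(s_1)$), and invoking cocommutativity of $\varphi_{\hat H}$. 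The type-A statements follow by the same argument with the type-A commutation relations \eqref{eq:comm1},\eqref{eq:comm3},\eqref{eq:comm5},\eqref{eq:comm7}, or alternatively by the antipode-duality between type-A and type-B constructions.

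A subtlety to be careful about: independence of the decomposition $\rho = \rho_1 \cup \rho_2$ — but this is guaranteed by coassociativity of $D(H)^\vee$ (resp.\ $D(H)^{\vee,\mathrm{op}}$), already noted after Eq.~\eqref{eq:ribb_op}, so any valid interior splitting at $s$ works. One should also check the edge cases where $s$ is adjacent to $\partial\rho$ but still distinct from it, and where $\rho_1$ or $\rho_2$ degenerates to a single triangle; these are handled by the base-case triangle-operator relations and present no real difficulty. The main obstacle is purely bookkeeping: keeping track of the proliferating Sweedler indices through the double application of the commutation relations and correctly deploying cocommutativity of the Haar integrals to collapse them — this is where an error is most likely to creep in, and it is worth writing out the type-B/$A^{h_H}$ chain of equalities in full detail, mirroring the display in Proposition~\ref{prop:ribb_local_op}, and then remarking that the other three cases are analogous.
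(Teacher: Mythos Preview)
Your proposal is correct but takes a longer route than the paper. You split the closed ribbon at $s$ itself, making $s$ an endpoint of both $\rho_1$ and $\rho_2$, and then redo the Sweedler-index computation from the proof of Proposition~\ref{prop:ribb_local_op} to show the Haar-integral stabilizers pass through. This works, but it is unnecessary duplication of that earlier calculation.

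The paper instead chooses the split point $s_1$ \emph{away from} $s$: pick any $s_1 \neq s$ on $\rho$ and write $\rho = \rho_1 \cup \rho_2$ with $\partial_0\rho_1 = \partial_1\rho_2 = \partial\rho$ and $\partial_1\rho_1 = \partial_0\rho_2 = s_1$. Now $s$ lies in the interior of both open ribbons $\rho_1$ and $\rho_2$, so Proposition~\ref{prop:ribb_local_op} applies directly to each piece: $A_v^H$ and $B_f^{\hat H}$ commute with every $F^{(h\otimes\phi)^{(i)}}(\rho_i)$, hence with $F^{h,\phi}(\rho)$ by the recursive formula. The entire proof is two sentences, with no Sweedler bookkeeping and no need to invoke cocommutativity of the Haar integrals again. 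The moral: rather than reproving the interior-commutation lemma, arrange the decomposition so that you can simply quote it.
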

	
	\begin{proof}
	    Denote $s_0=\partial\rho$ and $s=(v,f)$. We only need to consider the case that $s$ lies on $\rho$. Note that we can always find an $s_1$ in $\rho$ such that $\rho=\rho_1\cup\rho_2$ with $\partial_0\rho_1=\partial_1\rho_2=s_0, \partial_1\rho_1=\partial_0\rho_2=s_1$ and $s\neq s_1$. Then $\rho_1$ and $\rho_2$ are open ribbons and $s$ is not an end site of them. It follows from Proposition \ref{prop:ribb_local_op} that $A^{H}_v$ and $B_f^{\hat{H}}$ commute with $F^{(h\otimes\varphi)^{(1)}}(\rho_1)$ and $F^{(h\otimes\varphi)^{(2)}}(\rho_2)$, and hence with $F^{h,\varphi}(\rho)$ by the recursive formula of ribbon operator.
	\end{proof}

\subsection*{Proof of the commutation relations (\ref{eq:bdd-comm3}-\ref{eq:bdd-comm8})}

We will prove the commutation relations for type-A ribbons. The proof for type-B ribbons is similar.

$\bullet$ \eqref{eq:bdd-comm3} for short ribbon: 

\begin{equation*}
    \begin{tikzpicture}
        \draw[line width = 0.5pt, red] (0.75,0.75) -- (0,0);
        \draw[line width = 0.5pt, red] (1.5,0) -- (0.75,0.75);
        \draw[-latex,black] (1.5,0) -- (0,0);
        \draw[-latex,line width=1.6pt,black] (0,0) -- (0,1.5);
        % \draw[-latex,black] (-1.5,0) -- (0,0);
        \draw[-latex,line width=1.6pt,black] (0,-1.2) -- (0,0);
        \draw[-latex,black] (0,1.5) -- (1.5,1.5);
        \draw[-latex,black] (1.5,1.5) -- (1.5,0);
        \draw[-latex,black] (1.5,-1.2) -- (1.5,0);
        % \draw[-latex,black] (-1.5,1.5) -- (0,1.5);
        % \draw[-latex,black] (-1.5,-1.2) -- (-1.5,0);
        % \draw[-latex,black] (1.5,-1.2) -- (1.5,0);
        \draw[-stealth,gray, line width=2pt] (0.95,0.3) -- (0.45,0.3); 
        \node[ line width=0.2pt, dashed, draw opacity=0.5] (a) at (-0.3,0.75){$a_1$};
        % \node[ line width=0.2pt, dashed, draw opacity=0.5] (a) at (-0.75,-0.25){$a_2$};
        \node[ line width=0.2pt, dashed, draw opacity=0.5] (a) at (0.8,-0.25){$a_3$};
        % \node[ line width=0.2pt, dashed, draw opacity=0.5] (a) at (1.3,0.45){$s_0$};
        \node[ line width=0.2pt, dashed, draw opacity=0.5] (a) at (0.2,0.45){$s_b$};
        \node[ line width=0.2pt, dashed, draw opacity=0.5] (a) at (-0.3,-0.75){$a_2$};
    \end{tikzpicture}
\end{equation*}
\begin{align*}
    &\quad A^k(s_b)F^{[h],\varphi}(\rho_A) |a_1,a_2,a_3\rangle \\ 
    & = \sum A^k(s_1)\varepsilon(h)\varphi(S(a_3^{(2)}))|a_1,a_2,a_3^{(1)}\rangle \\ 
    & = \sum \varepsilon(h)\varphi(S(a_3^{(2)}))|a_1S(k^{(1)}),k^{(2)}a_3,k^{(3)}a_3^{(1)}\rangle \\ 
    & = \sum \varepsilon(h)\varphi(S(S(k^{(5)})k^{(4)}a_3^{(2)}))|a_1S(k^{(1)}),k^{(2)}a_2,k^{(3)}a_3^{(1)}\rangle \\ 
    & = \sum \varepsilon(h)\varphi(S(k^{(4)}a^{(2)}_3)k^{(5)})|a_1S(k^{(1)}),k^{(2)}a_2,k^{(3)}a_3^{(1)}\rangle \\ 
    & = \sum F^{[h],\varphi(\bullet k^{(4)})}(\rho) |a_1S(k^{(1)}),k^{(2)}a_2,k^{(3)}a_3\rangle \\  
    & = \sum F^{[h],\varphi(\bullet k^{(2)})}(\rho_A)A^{k^{(1)}}(s_b) |a_1,a_3,a_3\rangle. 
\end{align*}

$\bullet$ \eqref{eq:bdd-comm4} for short ribbon: 

\begin{equation*}
    \begin{tikzpicture}
        \draw[line width = 0.5pt, red] (0.75,0.75) -- (0,0);
        \draw[line width = 0.5pt, red] (1.5,0) -- (0.75,0.75);
        \draw[line width = 0.5pt, red] (1.5,0) -- (2.25,0.75);
        \draw[black,dashed] (0.75,0.75) -- (2.25,0.75);
        \draw[-latex,line width=1.6pt,black] (1.5,0) -- (0,0);
        \draw[-latex,line width=1.6pt,black] (1.5,-1.2) -- (1.5,0);
        \draw[-latex,black] (1.5,1.5) -- (1.5,0);
        \draw[-latex,black] (0,1.5) -- (1.5,1.5);
        \draw[-latex,black] (3,0) -- (1.5,0);
        % \draw[-latex,black] (1.5,1.5) -- (3,1.5);
        % \draw[-latex,black] (3,1.5) -- (3,0);
        \draw[-latex,black] (0,0) -- (0,1.5);
        \draw[-stealth,gray, line width=2pt] (1.4,0.35) -- (0.9,0.35); 
        \node[ line width=0.2pt, dashed, draw opacity=0.5] (a) at (-0.2,0.75){$a_4$};
        \node[ line width=0.2pt, dashed, draw opacity=0.5] (a) at (0.8,-0.25){$a_1$};
        % \node[ line width=0.2pt, dashed, draw opacity=0.5] (a) at (2.1,0.35){$s_0$};
        \node[ line width=0.2pt, dashed, draw opacity=0.5] (a) at (0.2,0.45){$s_b$};
        \node[ line width=0.2pt, dashed, draw opacity=0.5] (a) at (0.75,1.68){$a_3$};
        \node[ line width=0.2pt, dashed, draw opacity=0.5] (a) at (1.8,0.96){$a_2$};
    \end{tikzpicture}
\end{equation*}
\begin{align*}
    &\quad B^\psi(s_1)F^{[h],\varphi}(\rho) |a_1,a_2,a_3,a_4 \rangle \\  
    & = \sum B^\psi(s_1) F^{[h^{(1)}],\hat{k}}(\rho_1)F^{[S(k^{(3)})h^{(2)}k^{(1)}],\varphi(k^{(2)}\bullet)}(\rho_2) |a_1,a_2,a_3,a_4 \rangle \\  
    & = \sum B^\psi(s_1) \varepsilon(\hat{k})\varepsilon(S(k^{(3)})h^{(2)}k^{(1)})\varphi(k^{(2)}S(a_1^{(2)})) |a_1^{(1)},h^{(1)}h_Ka_2,a_3,a_4 \rangle \\  
    & = \sum B^\psi(s_1)\varphi(S(a_1^{(2)})) |a_1^{(1)},hh_Ka_2,a_3,a_4 \rangle \\  
    & = \sum \varphi(S(a_1^{(3)}))\psi(a_1^{(2)}h^{(2)}h_K^{(2)}a_2^{(2)}a_3^{(2)}a_4^{(2)}) |a_1^{(1)},h^{(1)}h_K^{(1)}a_2^{(1)},a_3^{(1)},a_4^{(1)} \rangle \\
    & = \sum \varphi(S(a_1^{(3)}))\psi(a_1^{(2)}h^{(2)}h_K^{(2)}h_Ka_2^{(2)}a_3^{(2)}a_4^{(2)}) |a_1^{(1)},h^{(1)}h_K^{(1)}a_2^{(1)},a_3^{(1)},a_4^{(1)} \rangle \\
    & = \sum \varphi(S(a_1^{(3)})) \psi(a_1^{(2)}h^{(2)}a_2^{(2)}a_3^{(2)}a_4^{(2)})|a_1^{(1)},h^{(1)}h_Ka_2^{(1)},a_3^{(1)},a_4^{(1)} \rangle \\  
    & = \sum \varphi(S(a_1^{(3)})) \psi(a_1^{(2)}h^{(2)}S(a_1^{(4)})a_1^{(5)}a_2^{(2)}a_3^{(2)}a_4^{(2)}) \\
    & \quad\quad |a_1^{(1)},h^{(1)}h_Ka_2^{(1)},a_3^{(1)},a_4^{(1)} \rangle \\  
    & = \sum \varphi(k^{(2)})\hat{k}(S(a_1^{(2)}))\psi(S(k^{(3)})h^{(2)}k^{(1)}a_1^{(3)}a_2^{(2)}a_3^{(2)}a_4^{(2)})  \\
    & \quad\quad |a_1^{(1)},h^{(1)}h_Ka_2^{(1)},a_3^{(1)},a_4^{(1)} \rangle \\  
    & = \sum \varphi(k^{(2)})\hat{k}(j^{(2)}S(a_1^{(2)}))\psi(S(k^{(3)})h^{(3)}k^{(1)}a_1^{(3)}a_2^{(2)}a_3^{(2)}a_4^{(2)}) \\
    & \quad\quad \varepsilon(S(j^{(3)}))\varepsilon(h^{(2)})\varepsilon(j^{(1)})\varepsilon(\hat{j})|a_1^{(1)},h^{(1)}h_Ka_2^{(1)},a_3^{(1)},a_4^{(1)} \rangle \\  
    & = \sum \varphi(k^{(2)}) F^{[h^{(1)}],\hat{j}}(\rho_1)F^{[S(j^{(3)})h^{(2)}j^{(1)}],\hat{k}(j^{(2)}\bullet)}(\rho_2)\\
    &\quad \quad \psi(S(k^{(3)})h^{(3)}k^{(1)}a_1^{(3)}a_2^{(2)}a_3^{(2)}a_4^{(2)}) |a_1^{(1)},a_2^{(1)},a_3^{(1)},a_4^{(1)} \rangle \\ 
    & = \sum \varphi(k^{(2)}) F^{[h^{(1)}],\hat{k}}(\rho)B^{\psi(S(k^{(3)})h^{(2)}k^{(1)}\bullet)}(s_1)|a_1,a_2,a_3,a_4 \rangle. 
\end{align*}
Note that for any $h\in H$, since $h = hh_K+h(1-h_K)$ and $(1-h_K)\in K^+$, $[h]=[hh_K]$ in $H/HK^+$, and since $\psi\in(H/HK^+)^\vee$ only depends on the equivalence classes, the fifth and sixth equalities follow. 

Let us prove the commutation relations for long ribbons using the recursive formula of ribbon operators. Let $\rho_A= \rho_1\cup\rho_2$ be a decomposition, where $\rho_2$ is a short ribbon such that the relations \eqref{eq:bdd-comm3} and \eqref{eq:bdd-comm4} hold. Note that there is no overlap for $\rho_1$ and $s_b=\partial_1\rho_A$. Then one computes 
\begin{align*}
    &\quad A^k(s_b)F^{[h],\varphi}(\rho_A) \\ 
    & = \sum A^k(s_b)F^{[h^{(1)}],\hat{\ell}}(\rho_1)F^{[S(\ell^{(3)})h^{(2)}\ell^{(1)}],\varphi(\ell^{(2)}\bullet)}(\rho_2) \\
    & = \sum F^{[h^{(1)}],\hat{\ell}}(\rho_1)A^k(s_b)F^{[S(\ell^{(3)})h^{(2)}\ell^{(1)}],\varphi(\ell^{(2)}\bullet)}(\rho_2) \\ 
    & = \sum F^{[h^{(1)}],\hat{\ell}}(\rho_1)F^{[S(\ell^{(3)})h^{(2)}\ell^{(1)}],\varphi(\ell^{(2)}\bullet k^{(2)})}(\rho_2) A^{k^{(1)}}(s_b)\\
    & = \sum F^{[h],\varphi(\bullet k^{(2)})}(\rho_A) A^{k^{(1)}}(s_b).  
\end{align*}
Similarly, one has  
\begin{align*}
    & \quad B^\psi(s_b)F^{[h],\varphi}(\rho_A) \\ 
    & = \sum  B^\psi(s_b)F^{[h^{(1)}],\hat{k}}(\rho_1)F^{[S(k^{(3)})h^{(2)}k^{(1)}],\varphi(k^{(2)}\bullet)}(\rho_2) \\ 
    & = \sum F^{[h^{(1)}],\hat{k}}(\rho_1)B^\psi(s_b)F^{[S(k^{(3)})h^{(2)}k^{(1)}],\varphi(k^{(2)}\bullet)}(\rho_2) \\ 
    & = \sum \varphi(k^{(3)}j^{(2)})F^{[h^{(1)}],\hat{k}}(\rho_1)F^{[S(k^{(5)})h^{(2)}k^{(1)}],\hat{j}}(\rho_2)B^{\psi(S(j^{(3)})S(k^{(4)})h^{(3)}k^{(2)}j^{(1)}\bullet)}(s_b) \\ 
    & = \sum \varphi(\ell^{(2)}) F^{[h^{(1)}],\hat{k}}(\rho_1)F^{[S(k^{(5)})h^{(2)}k^{(1)}],\hat{\ell}(k^{(2)}j)\hat{j}}(\rho_2)B^{\psi(S(\ell^{(3)})h^{(3)}\ell^{(1)}\bullet)}(s_b) \\
    & = \sum \varphi(\ell^{(2)}) F^{[h^{(1)}],\hat{k}}(\rho_1)F^{[S(k^{(5)})h^{(2)}k^{(1)}],\hat{\ell}(k^{(2)}\bullet)}(\rho_2)B^{\psi(S(\ell^{(3)})h^{(3)}\ell^{(1)}\bullet)}(s_b) \\
    & = \sum \varphi(\ell^{(2)}) F^{[h^{(1)}],\hat{\ell}}(\rho)B^{\psi(S(\ell^{(3)})h^{(3)}\ell^{(1)}\bullet)}(s_b). 
\end{align*}
Here $\{k\}$, $\{j\}$ and $\{\ell\}$ are bases of $K$. We used $k^{(2)}j = \sum_\ell \hat{\ell}(k^{(2)}j)\ell$ in the fourth equality, and used $\bullet = \hat{j}(\bullet)j$ in the fifth equality.

\section{Gapped boundary lattice based on generalized quantum double}
\label{sec:bdII}	
%\subsubsection{Gapped boundary theory II}
 % \label{sec:bdII}
	
Let us now introduce a general construction of a gapped boundary model based on a generalized quantum double.
The cellulation of the surface for this construction is slightly different from the one given above.
As shown in Fig.~\ref{fig:bd}, the boundary is drawn as a bold solid line. The boundary vertices are vertices on the line, which we denote as $v_b$. The boundary face is the right end of the dual boundary edges. There is only one boundary face for each boundary and we denote it as $f_b$. There are two kinds of boundary sites: the one $s_b=(v_b,f_b)$ consists of boundary vertex and boundary face, which we call the outer boundary site; the one $s_b=(v_b,f)$ consists of boundary vertex and inner face, which we call inner boundary site.

\begin{definition}
     A pairing $\lambda=\langle \bullet, \bullet \rangle: J\otimes K\to \mathbb{C}$ between two Hopf algebras $J,K$ is a bilinear map  satisfying
     \begin{align}
       &  \langle hg,a\rangle = \sum_{(a)} \langle h,a^{(1)}\rangle \langle g,a^{(2)}\rangle,\\
    &     \langle h,ab\rangle =\sum_{(h)} \langle h^{(1)} ,a\rangle \langle h^{(2)},b\rangle,\\
     &    \langle 1_J,a\rangle =\varepsilon_K(a),\quad
         \langle h,1_K\rangle =\varepsilon_J(h).
     \end{align}
\end{definition}

There is a convolution algebra structure over $C:=\Hom(J\otimes K,\mathbb{C})$, with the convolution defined as 
	\begin{equation}
	\begin{split}
	    (\varphi * \psi)(h\otimes a) :&= (\mu_{\mathbb{C}}\comp (\varphi\otimes \psi)\comp \Delta_{J\otimes K})(h\otimes a)\\
	    &=\sum_{(h),(a)} \varphi(h^{(1)} \otimes a^{(1)}) \psi(h^{(2)} \otimes a^{(2)}).
	\end{split}
	\end{equation}
The multiplication and unit of $C$ are defined as 
	\begin{align}
	    \mu_C(\varphi\otimes \psi)= \varphi *\psi,\quad \eta_{C}=\eta_{\mathbb{C}} \comp \varepsilon_{J\otimes K}=\varepsilon_J \varepsilon_K.
	\end{align} 
	It is easy to verify that (for semisimple Hopf algebras), the convolutional inverse of $\lambda$ is
	\begin{equation}
	    \lambda^{-1}(h\otimes a)
	    =\langle S(h),a\rangle
	    =\langle h,S(a)\rangle
	    =\langle S^{-1}(h),a \rangle
	    =\langle h,S^{-1}(a)\rangle.
	\end{equation}
	Hereinafter, we assume that the pairing is also consistent with the $C^*$ structure in the sense that $\lambda(\varphi^*\otimes h )=\overline{\lambda(\varphi\otimes S(h)^*)}$.

	\begin{definition}[Generalized quantum double \cite{majid2000foundations}]
	For a given pairing $\lambda: J\otimes K\to \mathbb{C}$, the generalized quantum double $D_{\lambda} (J^{\rm cop},K):=J^{\rm cop}\Join_{\lambda} K$ is a Hopf algebra built on $J^{\rm cop}\otimes K$ with Hopf algebra multiplication given by 
	\begin{equation}
	    (h\otimes a)(g\otimes b)= \sum_{(a),(g)} hg^{(2)} \otimes a^{(2)} b \lambda(g^{(1)}\otimes a^{(1)}) \lambda^{-1} (g^{(3)}\otimes a^{(3)}),
	\end{equation}
	where the comultiplication of $g$ is taken in $J^{\rm cop}$.
	\end{definition}
	
When taking $J=\hat{H}$, $K=H$ and $\lambda(\varphi\otimes x)=\varphi(x)$, one recovers the Drinfeld quantum double $D(H)=\hat{H}^{\rm cop}\Join H$ as shown in Appendix \ref{sec:app_hopf}. 

To construct the topological boundary of the Hopf quantum double model with the bulk determined by a Hopf algebra $H$, we need to impose some conditions on $J,K$ such that they are related to $H$.
Here we will assume that there exists a left $H$-comodule algebra $\FA$ such that the category of representations of $J^{\rm cop}\Join_{\lambda} K$ is equivalent to the category of $H$-covariant $\FA|\FA$-bimodules, that is, $\mathsf{Rep}(J^{\rm cop}\Join_{\lambda} K)\simeq {_{ \FA}^H}\mathsf{Mod}_{\FA}$.
The physical meaning of this restriction has been illustrated in Sec.~\ref{sec:bdd-top-exc}.
If $\FA$ is a left $H$-comodule algebra with coaction $\beta_{\FA}$, by $\bar{\FA}$ we mean the right $H$-comodule structure with opposite underlying algebra $\FA^{\rm op}$ and coaction $\bar{\beta}_{\bar{\FA}}(x)=\sum_{(x)}x^{[0]}\otimes S^{-1}(x^{[-1]})$ (where $\beta_{\FA}(x)=\sum_{(x)} x^{[-1]}\otimes x^{[0]}$).
If $\FA$ is a right $H$-comodule algebra, similarly, $\bar{\FA}$ is a left $H$-comodule algebra.
For a right $H$-comodule $(M,\beta_M)$ and a left $H$-comodule $(N,\beta_N)$, the cotensor product $M\Box_H N$ is the equalizer of the morphisms $\beta_M\otimes \id_N$ and $\id_{M}\otimes \beta_{N}$, in the sense as follows: 
\begin{equation}\label{eq:boxtensor}
    M\Box_H N=\left\{\sum_{i}m_i\otimes n_i~\bigg|~\sum_i \beta_M(m_i)\otimes n_i=\sum_i m_i\otimes \beta_N(n_i)\right\}.
\end{equation}
When $\FA$ is a Galois-Hopf extension, the representation category is equivalent to the left module category over $\bar{\FA}\Box_H \FA$. This will be helpful for constructing boundary ribbon operators.
As we will see later, the above restriction for $J,K$ is equivalent to that the representation category of the generalized quantum double is equivalent to the $H$-covariant $\FA|\FA$-bimodule category.
We would like to stress that this restriction is not artificial, since for any such $J^{\rm cop}\Join_{\lambda} K$, starting from the representation category  $\mathsf{Rep}(J^{\rm cop}\Join_{\lambda} K)$, we can construct such an $H$-comodule algebra $\FA$ which satisfies the above assumption. Details have been given Sec.~\ref{sec:bdd-top-exc}.

	\begin{figure}[t]
		\centering
		\includegraphics[width=7cm]{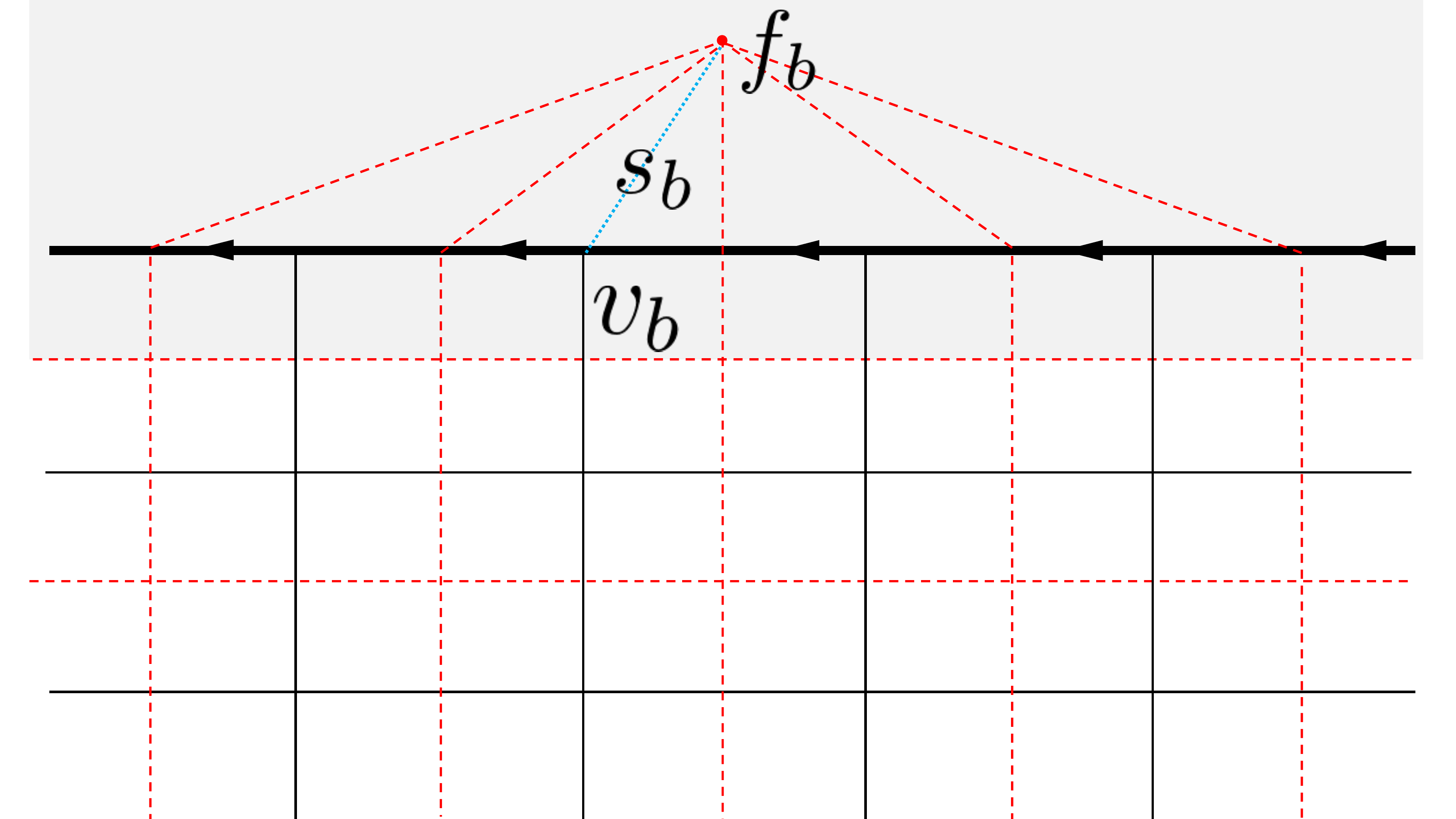}
		\caption{The depiction of the boundary, boundary face, boundary vertex, and boundary site. \label{fig:bd}}
	\end{figure}
	
	Now we are ready to present the construction of the boundary Hamiltonian. The boundary is characterized by the following collection of data: the bulk Hopf algebra $H$ and the boundary Hopf algebras $J,K,W$. They satisfy the following conditions:
	\begin{itemize}
	    \item Two finite-dimensional $C^*$ Hopf algebras $K$ and $J$ for which there exists a pairing $\lambda: J\otimes K \to \mathbb{C}$. And there is a pairing between $\hat{H}$ and $K$, $\gamma: \Hhat \otimes K\to \mathbb{C}$.
	    
	    \item A finite-dimensional $C^*$ Hopf algebra $W$, which is a $K$-bimodule coalgebra. 
	    The left and right $K$-actions are denoted as $h\triangleright w$ and $w \triangleleft h$ respectively for $w\in W,~h\in K$. The right action induces a left $K$-module structure by $w \triangleleft S(h)$. We denote the operators corresponding to these two actions as
	    \begin{equation}
	         L_+^{K,h}w=   h \triangleright w=hw, \quad L_{-}^{K,h}w= w \triangleleft  S(h)=wS(h).
	    \end{equation}
	    We also assume that the antipode law holds: $S(h\triangleright w)=S(w) \triangleleft S(h)$ and $S(w \triangleleft h) =S(h)\triangleright S(w)$;
	    for $C^*$ structure, we assume $(h\triangleright w)^*= w^{*} \triangleleft h^*$ and $(w \triangleleft h)^*=h^{*}\triangleright w^*$; and for Haar integral $h_W\in W$ and $k\in K$, we assume $kh_W=\varepsilon(k)h_W=h_W k$.

	    \item $H$ is a $K$-bimodule coalgebra that satisfies the antipode law.

	   \item There is a pairing $\zeta: J\otimes W\to \mathbb{C}$ between $J$ and $W$.
	   The pairing induces the $J$-bimodule structures of $W$: left module $\varphi \rightharpoonup w=\sum_{(w)}w^{(1)} \zeta (\varphi\otimes w^{(2)})$, and right module $w\leftharpoonup \varphi =\sum_{(w)}w^{(2)}\zeta(\varphi \otimes  w^{(1)})$. The right module structure induces a left module structure by $w\leftharpoonup S(\varphi)$.
	   The corresponding two left module structure operators are denoted as
	\begin{align} \label{eq:pairT}
	\begin{aligned}
	 T^{J,\varphi}_+ &w=\varphi \rightharpoonup w=\sum_{(w)}w^{(1)} \zeta(\varphi\otimes w^{(2)}), \\  T^{J,\varphi}_{-}w&=w\leftharpoonup S(\varphi) =\sum_{(w)}w^{(2)} \zeta^{-1}(\varphi\otimes w^{(1)}).
	 \end{aligned}
	\end{align}
	For the Haar integral $\varphi_{\hat{W}}$, we assume that $\sum_{(x)}\langle \varphi_{\hat{W}} ,x^{(1)}\rangle \zeta(
	\psi,x^{(2)}) =\varepsilon_{J}(\psi) \langle \varphi_{\hat{W}},x\rangle$ for all $\psi\in J$ and $x\in W$.

	\item There is a pairing $\alpha: \hat{H} \otimes W\to \mathbb{C}$ which also induces an $\hat{H}$-bimodule structure over $W$ in a similar way as in Eq.~\eqref{eq:pairT}.
	It is easy to verify that $W$ is also an $\hat{H}$-module algebra, and this module algebra structure induces an $H$-comodule algebra over $W$.
	For the Haar integral $\varphi_{\hat{W}}$, we assume that $\sum_{(x)}\langle \varphi_{\hat{W}} ,x^{(1)}\rangle \alpha(
	\psi,x^{(2)}) =\varepsilon_{\hat{H}}(\psi) \langle \varphi_{\hat{W}},x\rangle$ for all $\psi \in \hat{H}$ and $x\in W$.

	\item Different pairings are consistent when acting on two Hopf algebras for which there is a module structure. For example,  $\lambda: J\otimes K \to \mathbb{C}$ and $\zeta: J\otimes W\to \mathbb{C}$  are two pairings and $W$ has left $K$-module structures $h\triangleright w$ and $w \triangleleft S(h)$, then the consistency conditions read
	\begin{align}
	    \sum_{(\varphi)}\lambda(\varphi^{(1)}\otimes h )\zeta(\varphi^{(2)}\otimes w)&=\zeta(\varphi\otimes (h\triangleright w)), \\
	    \sum_{(\varphi)}\lambda(\varphi^{(1)}\otimes h )\zeta^{-1}(\varphi^{(2)}\otimes w)&=\zeta(\varphi\otimes (w\triangleleft S(h))).
	\end{align}
	\end{itemize}
	
    \begin{remark} \label{rmk:right-mod-str}
        As in the bulk, we can also construct the boundary using the right-module structure; the generalization is straightforward.	
    \end{remark} 
	
	For each boundary edge $e_b$, we assign a Hopf algebra $\mathcal{H}_{e_b}=W$. The total boundary space becomes $ \mathcal{H}(\partial \Sigma)=\otimes _{e_b\in E(\partial \Sigma)} \mathcal{H}_{e_b}$, and therefore, the total Hilbert space becomes $\mathcal{H}_{tot}=\mathcal{H}(\Sigma\setminus \partial \Sigma) \otimes \mathcal{H}(\partial \Sigma)$.
    As shown in Fig.~\ref{fig:bd}, the boundary corresponds to the shaded region, and there is only one boundary face outside a given boundary.
	For boundary site $s_b=(v_b,f_b)$ and $h\in K$, we can define (in counterclockwise order),
	\begin{equation}
		A^{h}(s_b) = \sum_{(h)}   L^{K,h^{(1)}} (j_1,v_b) \otimes \cdots \otimes L^{K,h^{(n)}} (j_n,v_b),
	\end{equation}
	where, for bulk edges, $L^{K,h^{(\ell)}}(j_\ell,v_b)$ act on $H$, and for boundary edges, act on $W$. The convention to choose $L_{+},L_{-}$ is the same as in the bulk.

	For the outer boundary site $s_b$ and $\varphi \in J$, we introduce an operator which acts non-trivially only on all boundary edges (in counterclockwise order),
	\begin{equation}
	    B^{\varphi}(s_b)=\sum_{(\varphi)}T^{J,\varphi^{(1)}}(j_1,f_b)\otimes T^{J,\varphi^{(2)}}(j_2,f_b)\otimes \cdots \otimes T^{J,\varphi^{(n)}}(j_n,f_b).
	\end{equation}
	The convention to choose $T_{+},T_{-}$ is also the same as in the bulk. For example,
		\begin{equation}
		\begin{aligned}
			\begin{tikzpicture}
				\draw[-latex,black,line width=1.5pt] (1,0) -- (0,0); 
				\draw[-latex,black,line width=1.5pt] (0,0) -- (-0.5,0.5);
				\draw[-latex,black,line width=1.5pt] (1.5,0.5) -- (1,0);
				\draw[-latex,black,line width=1.5pt] (1,1) -- (1.5,0.5);
				\draw[-latex,black,line width=1.5pt] (0,1) -- (1,1);
				\draw[dotted,line width=1.0pt, red] (0,0) -- (0.5,0.5);
				\draw [fill = black] (0,0) circle (1.2pt);
				\draw [fill = black] (0.5,0.5) circle (1.2pt);
				\draw[-latex,black,line width=0.5pt] (0,-1) -- (0,0); 
				\draw[-latex,black,line width=0.5pt] (1,-1) -- (1,0); 
				\draw[-latex,black,line width=0.5pt] (1.5,-0.5) -- (1.5,0.5); 
				\node[ line width=0.2pt, dashed, draw opacity=0.5] (a) at (0.2,0.3){$s_b$};
				\node[ line width=0.2pt, dashed, draw opacity=0.5] (a) at (-0.2,-0.2){$v_b$};
				\node[ line width=0.2pt, dashed, draw opacity=0.5] (a) at (0.7,0.5){$f_b$};
				\node[ line width=0.2pt, dashed, draw opacity=0.5] (a) at (0.5,-0.3){$w_1$};
				\node[ line width=0.2pt, dashed, draw opacity=0.5] (a) at (1.3,-0.2){$w_2$};
				\node[ line width=0.2pt, dashed, draw opacity=0.5] (a) at (1.4,0.9){$w_3$};
				\node[ line width=0.2pt, dashed, draw opacity=0.5] (a) at (0.5,1.2){$w_4$};
				\node[ line width=0.2pt, dashed, draw opacity=0.5] (a) at (-0.2,-1.1){$x_1$};
				\node[ line width=0.2pt, dashed, draw opacity=0.5] (a) at (0.8,-1.1){$x_2$};
				\node[ line width=0.2pt, dashed, draw opacity=0.5] (a) at (1.5,-0.8){$x_3$};
				\node[ line width=0.2pt, dashed, draw opacity=0.5] (a) at (-0.6,0.1){$w_n$};
				\node[ line width=0.2pt, dashed, draw opacity=0.5] (a) at (-0.3,0.75){\begin{turn}{90}$\ddots$
				\end{turn}
				};
			\end{tikzpicture}
		\end{aligned}
		\quad 
		\begin{aligned}
			&B^{\varphi}(s_b)|w_1,\cdots, w_n\rangle\\
			=&\sum_{(w_1),\cdots,(w_n),(\varphi)}\zeta(\varphi^{(1)}\otimes w_1^{(2)} ) \cdots \zeta(\varphi^{(n)}\otimes w_n^{(2)} ) |w_1^{(1)},\cdots,w_n^{(1)}\rangle \\
			=& \sum_{(w_1),\cdots,(w_n)}\zeta (\varphi \otimes w_1^{(2)}\cdots w_n^{(2)})|w_1^{(1)},\cdots,w_n^{(1)}\rangle,
		\end{aligned}\label{eq:boundarycycle}
	\end{equation}
	where comultiplication components $\varphi^{(i)}$ are taken in $J$.

	\begin{proposition}\label{prop:double}
	    For the outer boundary site $s_b$, the operators $A^{h}(s_b)$ and $B^{\varphi}(s_b)$ satisfy
	   \begin{equation}\label{eq:bicrossproduct}
	       A^h(s_b)B^{\varphi}(s_b)= \sum_{(h),(\varphi)}\lambda(\varphi^{(3)} \otimes h^{(1)}) \lambda^{-1} (\varphi^{(1)}\otimes h^{(3)}) B^{\varphi^{(2)}}(s_b) A^{h^{(2)}}(s_b).
	   \end{equation}
	   Therefore, by mapping $\varphi \otimes h \in J^{\rm cop}\Join_{\lambda} K$ to $B^{\varphi}(s_b)A^{h}(s_b)$, we obtain a representation of $J^{\rm cop}\Join_{\lambda} K$ over $\mathcal{H}
	   (s_b)=\otimes_{e\in \partial s_b} \mathcal{H}_e$. 
	\end{proposition}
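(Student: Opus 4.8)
\textbf{Proof proposal for Proposition~\ref{prop:double}.}

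The plan is to reduce the commutation relation \eqref{eq:bicrossproduct} to a local statement about the two edges on which $A^h(s_b)$ and $B^\varphi(s_b)$ interact nontrivially, exactly as in the bulk computation of Section~\ref{sec:bulkQD}. First I would observe that, just as a bulk vertex operator and face operator at a common site share at most two edges, the boundary vertex operator $A^h(s_b)$ and the outer boundary face operator $B^\varphi(s_b)$ act simultaneously only on the boundary edges incident to both $v_b$ and $f_b$; on all other edges one of the two operators acts trivially (on bulk edges $B^\varphi(s_b)$ is the identity, since it is built entirely from the $T^{J,\bullet}$ operators acting on $W$-labelled boundary edges). Using counitality of $\Delta$ in $K$ and $J$ together with the fact that $L^{K,\bullet}_\pm$ and $T^{J,\bullet}_\pm$ on disjoint edges commute, the whole identity collapses to the statement on the shared boundary edge(s). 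So the first step is this reduction, and it is essentially bookkeeping.

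Next I would carry out the local computation on a single boundary edge labelled $w\in W$ that is the shared edge, i.e. analyze how $L^{K,h}_{\pm}$ and $T^{J,\varphi}_{\pm}$ fail to commute on $W$. Here the essential inputs are: (i) the consistency conditions between the pairings $\lambda: J\otimes K\to\mathbb{C}$ and $\zeta: J\otimes W\to\mathbb{C}$ stated in the last bullet of the data list, which say precisely that $\sum_{(\varphi)}\lambda(\varphi^{(1)}\otimes h)\zeta(\varphi^{(2)}\otimes w) = \zeta(\varphi\otimes (h\triangleright w))$ and the analogous identity with $\zeta^{-1}$ and $w\triangleleft S(h)$; and (ii) the fact that $W$ is a $K$-bimodule coalgebra, so that the $K$-actions are coalgebra maps and interact correctly with the comultiplications used to define $A^h$ and $B^\varphi$. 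Combining these, moving a factor $T^{J,\varphi^{(i)}}_{\pm}$ past a factor $L^{K,h^{(j)}}_{\pm}$ on the same edge produces exactly a pairing factor $\lambda(\varphi^{(?)}\otimes h^{(?)})$ or $\lambda^{-1}(\varphi^{(?)}\otimes h^{(?)})$, with the comultiplication indices (taken in $J^{\rm cop}$ for $\varphi$, in $K$ for $h$) arranged to reproduce the multiplication rule of the generalized quantum double $J^{\rm cop}\Join_\lambda K$. Tracking which of $T_+/T_-$ and $L_+/L_-$ occurs — which is governed by the orientation conventions fixed in Section~\ref{sec:bulkQD} and by whether the edge is oriented into or out of $v_b$ — produces the specific placement of $\varphi^{(1)},\varphi^{(3)}$ and $h^{(1)},h^{(3)}$ in \eqref{eq:bicrossproduct}. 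Once this is in hand, the second assertion is immediate: the map $\varphi\otimes h\mapsto B^\varphi(s_b)A^h(s_b)$ respects the product because \eqref{eq:bicrossproduct} is precisely the straightening relation defining the algebra structure of $J^{\rm cop}\Join_\lambda K$ (compare the straightening formula \eqref{eq:straightening} in the Drinfeld double case), and it sends the unit $\varepsilon_J\otimes 1_K$ to $B^{\varepsilon_J}(s_b)A^{1_K}(s_b)=\mathrm{id}$; one then checks it is an algebra homomorphism on products by using \eqref{eq:bicrossproduct} to normal-order, exactly as $\Phi$ was handled in Section~\ref{sec:bulkQD}.

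The main obstacle I expect is purely combinatorial rather than conceptual: correctly matching the orientation-dependent choice of $T_+$ versus $T_-$ and $L_+$ versus $L_-$ around the boundary site with the order in which the comultiplication legs $\varphi^{(1)},\dots,\varphi^{(n)}$ (in $J^{\rm cop}$, hence reversed) and $h^{(1)},\dots,h^{(n)}$ (in $K$) are distributed along the edges, so that the accumulated pairing factors assemble into $\lambda(\varphi^{(3)}\otimes h^{(1)})\lambda^{-1}(\varphi^{(1)}\otimes h^{(3)})$ and not some other permutation. This is where sign-of-antipode bookkeeping (the identities $S(h\triangleright w)=S(w)\triangleleft S(h)$ and the assumed compatibility $\zeta^{-1}(\varphi\otimes w)=\langle S(\varphi),w\rangle_\zeta$) must be used carefully; everything else — counitality, the commutation of operators on disjoint edges, and the final homomorphism check — is routine and parallels the bulk case already treated in Section~\ref{sec:bulkQD} and in Ref.~\cite{Buerschaper2013a}.
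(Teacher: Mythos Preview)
Your proposal is correct and takes essentially the same approach as the paper: a direct computation on the state $|w_1,\dots,w_n\rangle|x_1\rangle$ around the outer boundary site, using the $K$-bimodule coalgebra structure of $W$ together with the consistency conditions between the pairings $\lambda$ and $\zeta$ to absorb the extra $h$-factors and recover exactly $\lambda(\varphi^{(3)}\otimes h^{(1)})\lambda^{-1}(\varphi^{(1)}\otimes h^{(3)})$. The paper does not explicitly phrase it as a ``reduce to the two shared boundary edges $w_1,w_n$, then compute locally'' step, but its global computation amounts to precisely that; your identification of the main obstacle as orientation/comultiplication bookkeeping is accurate and is where the paper's proof spends its effort as well.
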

	\begin{proof}
	    Consider the setting as in Eq.~\eqref{eq:boundarycycle}, we see that 
	    \begin{equation}
	    \begin{aligned}
	         & A^h(s_b)B^{\varphi}(s_b) |w_1,\cdots,w_n\rangle |x_1\rangle\\
	          =&  \sum_{(w_i),(h)} \zeta (\varphi \otimes w_1^{(2)}\cdots w_n^{(2)})|h^{(3)}w_1^{(1)},\cdots,w_n^{(1)}S(h^{(1)})\rangle |h^{(2)}x_1\rangle.
	    \end{aligned}
	    \end{equation}
	  On the other hand, we have
	    \begin{equation}
	        \begin{aligned}
	          & \sum_{(h),(\varphi)} \lambda(\varphi^{(3)} \otimes h^{(1)}) \lambda^{-1} (\varphi^{(1)}\otimes h^{(3)}) B^{\varphi^{(2)}}(s_b) A^{h^{(2)}}(s_b)|w_1,\cdots,w_n\rangle |x_1\rangle\\
	            =&\sum_{(w_i),(h),(\varphi)}  \lambda(\varphi^{(3)} \otimes h^{(1)}) \lambda^{-1} (\varphi^{(1)}\otimes h^{(5)}) B^{\varphi^{(2)}}(s_b) |h^{(4)} w_1,\cdots,w_nS(h^{(2)})\rangle |h^{(3)} x_1\rangle\\
	            =&\sum_{(w_i),(h),(\varphi)} \lambda(\varphi^{(3)} \otimes h^{(1)}) \lambda^{-1} (\varphi^{(1)}\otimes h^{(5)}) \zeta(\varphi^{(2)}\otimes  (h^{(4)} w_1)^{(2)} w_2^{(2)}\cdots (w_nS(h^{(2)}))^{(2)}  )\\
	            &|(h^{(4)} w_1)^{(1)}, w_2^{(1)},\cdots, (w_nS(h^{(2)}))^{(1)}  )\rangle |h^{(3)} x_1\rangle \\
	           =&\sum_{(w_i),(h)}  \zeta(\varphi \otimes S^{-1}(h^{(7)}) h^{(6)} w_1^{(2)} w_2^{(2)} \cdots w_n^{(2)} S(h^{(2)}) h^{(1)}) \\
	           &| h^{(5)} w_1^{(1)}, w_2^{(1)},\cdots, w_n^{(1)} S(h^{(3)}) \rangle |h^{(4)} x_1\rangle\\
	           =&\sum_{(w_i),(h)} \zeta (\varphi \otimes w_1^{(2)}\cdots w_n^{(2)})|h^{(3)}w_1^{(1)},\cdots,w_n^{(1)}S(h^{(1)})\rangle |h^{(2)}x_1\rangle.
	        \end{aligned}
	    \end{equation}
	    This completes the proof of Eq.~\eqref{eq:bicrossproduct}.
	    Using Eq.~\eqref{eq:bicrossproduct}, it is easy to check that $B^{\varphi}(s_b)A^h(s_b)$ is a representation of $J^{\rm cop}\Join K$.
	\end{proof}

	Since $K$ and $J$ are both $C^*$ Hopf algebras, they both have unique Haar integrals $h_K$ and $\varphi_{J}$. Then we can define the boundary stabilizer operators as
	\begin{equation}
	    A^K_{v_b}=A^{h_K}(s_b), \quad B^J_{f_b}=B^{\varphi_J}(s_b).
	\end{equation}
	Since both of $h_K$ and $\varphi_J$ are cocommutative, the vertex and face operators only depend on their respective vertices and faces.
	
	\begin{proposition}
	All boundary stabilizer operators are projectors, and satisfy 
	\begin{align}
	   ( A^{h_K}_{v_b})^{\dagger} = A^{h_K^*}_{v_b}= A^{h_K}_{v_b},\quad  (A^{h_K}_{v_b})^2= A^{h_K^2}_{v_b}= A^{h_K}_{v_b},\\
	   (B^{\varphi_J}_{f_b})^{\dagger} =B^{\varphi_J^*}_{f_b}=B^{\varphi_J}_{f_b}, \quad  (B^{\varphi_J}_{f_b})^2=B^{\varphi_J^2}_{f_b}=B^{\varphi_J}_{f_b}.
	\end{align}
	 All boundary stabilizer operators commute with each other and with all bulk stabilizer operators. 
	\end{proposition}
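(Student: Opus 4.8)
The plan is to verify each assertion separately, using the structural results already established. First, for the self-adjointness and idempotency of the boundary stabilizers: since $h_K$ is the Haar integral of the $C^*$ Hopf algebra $K$, it satisfies $h_K^* = h_K$ and $h_K^2 = h_K$, and likewise $\varphi_J^* = \varphi_J$, $\varphi_J^2 = \varphi_J$. The key inputs are that the edge operators $L_{\pm}^{K,h}$ on $W$ and $H$, and $T_{\pm}^{J,\varphi}$ on $W$, are compatible with the $C^*$ structures (this is the assumed compatibility $(h\triangleright w)^* = w^*\triangleleft h^*$ etc., together with the consistency of the pairing $\zeta$ with the $C^*$ structure). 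From these one gets $(A^h(s_b))^\dagger = A^{h^*}(s_b)$ and $(B^\varphi(s_b))^\dagger = B^{\varphi^*}(s_b)$ exactly as in the bulk computations recalled in Appendix~\ref{sec:AB} (Eqs.~\eqref{eq:TL}, \eqref{eq:rTL}). The identities $(A^{h}(s_b))(A^{g}(s_b)) = A^{hg}(s_b)$ and $(B^{\varphi}(s_b))(B^{\psi}(s_b)) = B^{\varphi\psi}(s_b)$ follow from the coalgebra–module compatibility and the multiplicativity of the edge operators; applying them with $h = g = h_K$ and $\varphi = \psi = \varphi_J$ and using idempotency of the Haar integrals gives $(A^{h_K}_{v_b})^2 = A^{h_K}_{v_b}$ and $(B^{\varphi_J}_{f_b})^2 = B^{\varphi_J}_{f_b}$. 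Combining self-adjointness with idempotency shows these operators are orthogonal projectors.

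Next, I would address commutativity among the stabilizers. The mutual commutativity of distinct vertex operators $[A^{h_K}_{v_b}, A^{h_K}_{v'_b}] = 0$ follows the bulk argument: two vertex operators at different boundary vertices share at most one edge (with opposite induced directions), and $[L_+^{K,x}, L_-^{K,y}] = 0$ for all $x, y \in K$; similarly for two face operators via $[T_+^{J,\varphi}, T_-^{J,\psi}] = 0$. For a boundary vertex operator and a boundary face operator at the same outer boundary site, one invokes Proposition~\ref{prop:double}: $A^h(s_b)$ and $B^\varphi(s_b)$ satisfy the straightening relation \eqref{eq:bicrossproduct}, and specializing $h = h_K$, $\varphi = \varphi_J$ and using that $h_K$ is in the center of $K$ and $\varphi_J$ is cocommutative in $J$ (so that $\lambda(\varphi_J^{(3)} \otimes h_K^{(1)})\lambda^{-1}(\varphi_J^{(1)} \otimes h_K^{(3)})$ collapses against $\Delta(\varphi_J)$ and $\Delta(h_K)$) forces the cross terms to vanish, giving $[A^{h_K}_{v_b}, B^{\varphi_J}_{f_b}] = 0$. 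For an outer face operator $B^{\varphi_J}_{f_b}$ and a vertex operator $A^{h_K}_{v_b}$ at a non-adjacent or inner-boundary site, the operators act on disjoint edge sets (or on edges in a way that factors), so they commute trivially.

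Finally, commutativity with the bulk stabilizers: a boundary vertex operator $A^{h_K}_{v_b}$ involves some bulk edges $e$ adjacent to $v_b$ on which it acts via $L_\pm^{K,h_K^{(\ell)}}$ on $H$, using the $K$-bimodule coalgebra structure of $H$. A bulk vertex operator $A^H_v$ adjacent to $v_b$ acts on the same edge via $L_\pm^{h_H}$ with $h_H \in H$; since $L_+$ and $L_-$ commute and the two operators share at most one edge with opposite orientations, they commute. For a bulk face operator $B^H_f$ that shares boundary edges with $B^{\varphi_J}_{f_b}$: one uses that $H$ (and $W$) is an $\hat H$-module algebra via the pairing $\alpha$, that $\varphi_{\hat H}$ and $\varphi_J$ are cocommutative, and the assumed consistency conditions relating the pairings $\lambda, \gamma, \zeta, \alpha$; this is the exact boundary analogue of the bulk computation $[A^h(s), B^\varphi(s')] = 0$ for cocommutative $h, \varphi$ and $s \neq s'$. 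I expect the main obstacle to be the last point: carefully matching the edge operators of the bulk face operator $B^H_f$ acting on a boundary edge (where $\hat H$ acts on $W$ through the pairing $\alpha$) against those of $B^{\varphi_J}_{f_b}$ (where $J$ acts on $W$ through $\zeta$), and verifying that the compatibility hypotheses on the various pairings are precisely what is needed to make these two actions commute. Once the edge-local commutation $[L^{K,h}_\pm, L^{h_H}_\pm]$ and the pairing-consistency version of $[T^{J,\varphi}, T^{\hat H, \psi}]$ are in place, the global statement follows edge by edge.
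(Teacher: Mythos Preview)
Your proposal is correct and follows essentially the same approach as the paper: self-adjointness of the edge operators from the $C^*$ compatibility assumptions, idempotency from the algebra homomorphism of Proposition~\ref{prop:double} together with $h_K^2=h_K$, $\varphi_J^2=\varphi_J$, and commutativity from cocommutativity of the Haar integrals. The paper's proof actually spells out the computation of $(T_+^{J,\varphi})^\dagger = T_+^{J,\varphi^*}$ in detail (using the pairing $\zeta$ and the Haar-integral property assumed for $\varphi_{\hat W}$) while being terser than you on the commutativity steps, but the overall structure is the same.
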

	\begin{proof}
	 The inner product of $W$ is determined by the Haar integral $\varphi_{\hat{W}}$, i.e., $\langle x,y\rangle:=\varphi_{\hat{W}}(x^*y)$.
	 We see that $\langle x,L_+^{h} y\rangle=\varphi_{\hat{W}}(x^*(h\triangleright y))=\varphi_{\hat{W}}((h^*\triangleright x)^*y) =\langle L_+^{h^*} x,y\rangle$, thus $(L_+^{h})^{\dagger}=L_+^{h^*}$. Similarly, $(L^{h}_-)^{\dagger}=L^{h^*}_-$.
	 This implies that $(A^{h_K}_{v_b})^{\dagger}=\sum_{(h_K)}\otimes_j (L^{h^{(j)}_K}(j,v_b))^{\dagger}=\sum_{(h_K)}\otimes_j L^{{h^{(j)}_K}^*}(j,v_b)=A^{h_K^*}_{v_b}=A^{h_K}_{v_b}$. 
	 
     Similarly, it follows that
     \begin{equation}
         \begin{aligned}
            \langle x, T^{\varphi}_{+} y\rangle &=\sum_{(y)} \varphi_{\hat{W}} (x^* y^{(1)} \zeta (\varphi \otimes y^{(2)}))\\
             &=\sum_{(y),(x^*),(\varphi)}\varphi_{\hat{W}}( {x^*}^{(1)} \zeta(1_J\otimes {x^*}^{(2)}) \varepsilon_J(\varphi^{(2)}) \zeta(\varphi^{(1)}\otimes y^{(2)}) y^{(1)})\\
             &=\sum_{(y),(x^*),(\varphi)}\varphi_{\hat{W}}( {x^*}^{(1)} \zeta(\varphi^{(2)} S(\varphi^{(3)})  \otimes {x^*}^{(2)})  \zeta(\varphi^{(1)}\otimes y^{(2)}) y^{(1)} )\\
             &=\sum_{(y),(x^*),(\varphi)}\varphi_{\hat{W}}(
             {x^*}^{(1)}y^{(1)} \zeta (\varphi^{(2)}\otimes {x^*}^{(2)}) \zeta (S(\varphi ^{(3)})\otimes {x^*}^{(3)}) \zeta(\varphi^{(1)}\otimes y^{(2)}) )\\
             &=\sum_{(y),(x^*),(\varphi)}\varphi_{\hat{W}}(
            y^{(1)} {x^*}^{(1)}\zeta (\varphi^{(1)}\otimes y^{(2)} {x^*}^{(2)})\zeta (S(\varphi ^{(2)})\otimes {x^*}^{(3)}))\\
             &= \sum_{(x^*),(y),(\varphi)}\varphi_{\hat{W}}(
             (y{x^*}^{(1)})^{(1)} \zeta (\varphi^{(1)} \otimes (y{x^*}^{(1)})^{(2)}  ) \zeta (S(\varphi ^{(2)})\otimes {x^*}^{(3)}) )\\
             &= \sum_{(x^*)}\varphi_{\hat{W}}( {x^*}^{(1)} y \zeta(S(\varphi)\otimes {x^*}^{(2)} ) )=\langle T_+^{\varphi^*} x ,y\rangle. 
         \end{aligned}
     \end{equation}
     This means that $(T^{\varphi}_+)^{\dagger}=T^{\varphi^*}_+$. We can similarly show that $(T^{\varphi}_-)^{\dagger}=T^{\varphi^*}_-$.
     Thus we see that $(B^{\varphi_J}_{f_b})^{\dagger} =B^{\varphi_J^*}_{f_b}=B^{\varphi_J}_{f_b}$.
     
	 Using Proposition \ref{prop:double} and the fact that $h_K^2=h_K$ and $\varphi_J^2=\varphi_J$, we see that $A^{h_K}_{v_b}$ and $B_{f_b}^{\varphi_J}$ are idempotent.
	 Further using the fact that $h_K$ and $\varphi_J$ are cocommutative, we can show that all stabilizers are commutative.
	\end{proof}

	The boundary Hamiltonian is given by the local commuting projector Hamiltonian
	\begin{equation} \label{eq:BdII}
		H_{K,{J}}(\partial \Sigma)= \sum_{v_b\in V(\partial \Sigma)}(I- A_{v_b}^K) +\sum_{f_b\in F(\partial \Sigma)} (I-B^{J}_{f_b}).
	\end{equation}
	The total Hamiltonian is thus
	\begin{equation}
		H_{H}^{K,J}=H_H(\Sigma\setminus \partial \Sigma) + H_{K,J} (\partial \Sigma).
	\end{equation}
	This model will be called the extended Hopf algebraic quantum double model.

	The ground state degeneracy now depends on both the topology of the $2d$ surface $\Sigma$ and its boundaries. It is still independent of the choice of the cellulation $C(\Sigma)$. 
	
	\begin{equation}
		\operatorname{GSD}=\operatorname{Tr} (\prod_{v\in V(\Sigma \setminus \partial \Sigma) } A^H_v \prod_{f\in F(\Sigma \setminus \partial \Sigma)} B_f^{\hat{H}}\prod_{v_b\in V(\partial \Sigma ) } A^K_{v_b} \prod_{f_b\in F(\partial\Sigma )} B_{f_b}^{{J}}  )
	\end{equation}
	When $\Sigma$ is a sphere, by cutting boundaries, the ground state space will become degenerate.
	
	To shed further light on this construction of boundaries, let us consider two examples.
\begin{example}
    Consider the case that $W=H$, $J \leq \hat{H}$ and $K\leq H$ are Hopf subalgebras. We see that they satisfy all the requirements for constructing the boundary. In this kind of model, it is more convenient to discuss boundary condensation and confinement using the ribbon operators.
	\end{example}
    
    \begin{example}
    For finite group algebra $H=\mathbb{C}[G]$,
    take $W=\mathbb{C}[G]$, $K=\mathbb{C}[M]$,  $J=\mathbb{C}^{G/N}$ with $M$ a subgroup and $N$ a normal subgroup of $G$. We can also obtain a corresponding generalized model.    
    This model is similar to the one given in \cite{Bombin2008family}.
    \end{example}

\emph{Ribbon operator.} ---
Let us briefly discuss the construction of ribbon operators. Assume that $J\leq\hat{H}$ and $K\leq H$ are subalgebras. %, such that the pairing $\lambda:J\otimes K\to \mathbb{C}$ is induced from the canonical pairing between $\hat{H}$ and $H$. 
The comultiplication of the dual $D_\lambda(J^{\rm cop},K)^\vee\cong \hat{J}\otimes \hat{K}$ is given by 
\begin{equation*}
    \Delta(h\otimes\varphi) = \sum_{k,j}\sum_{(k),(j),(h)} (h^{(1)}\otimes \hat{k})\otimes (\hat{j}\otimes \varphi(k^{(2)}\bullet)) \lambda(j^{(3)}\otimes k^{(1)})\lambda^{-1}(j^{(1)}\otimes k^{(3)})h^{(2)}(j^{(2)}),
\end{equation*}
for $h\in\hat{J}$ and $\varphi\in\hat{K}$, where $\{k\}\subset K$ and $\{j\}\subset J$ are bases. Define the operators $L^{\varphi}_\pm,\,T^h_\pm:W\to W$ by 
\begin{equation*}
    L^{\hat{k}}_\pm:=L^{K,k}_\pm,\quad T^{\hat{j}}_\pm: = T^{J,j}_\pm,
\end{equation*}
on bases and then by linearity. Since $J\subset \hat{H}$ and $K\subset H$, one can define the operators $L^{\varphi}_\pm,\,T^h_\pm:H\to H$ by 
\begin{equation*}
     L^{\hat{k}}_\pm:=L^{k}_\pm,\quad T^{\hat{j}}_\pm: = T^{j}_\pm,
\end{equation*}
on bases and then by linearity. One can also defines $\tilde{L}^\varphi_\pm,\tilde{T}^h_\pm$ using the right module structures in Remark~\ref{rmk:right-mod-str}. 

Having defined these operators, we can define the triangle operators similar to Eqs.~\eqref{eq:tri1}-\eqref{eq:tri16}. For a bulk-to-boundary ribbon $\rho=\rho_\downarrow=\rho_1\cup\rho_2$, the ribbon operator $F^{h,\varphi}(\rho)$ for $h\in \hat{J}$ and $\varphi\in\hat{K}$ is defined by 
\begin{equation*}
    F^{h,\varphi}(\rho) = \sum_{k,j}\sum_{(k),(j),(h)} \lambda(j^{(3)}\otimes k^{(1)})\lambda^{-1}(j^{(1)}\otimes k^{(3)})h^{(2)}(j^{(2)})F^{h^{(1)}, \hat{k}}(\rho_1)F^{\hat{j},\varphi(k^{(2)}\bullet)}(\rho_2).  
\end{equation*}
This is independent of the choice of the decomposition $\rho=\rho_1\cup\rho_2$. 
One can compute the commutation relations of the local operators and the ribbon operators with the same strategy as we have discussed before.
	
\section{Domain wall lattice based on generalized quantum double}	\label{sec:app-domain-wall}

Suppose that two $2d$ quantum double phases, which are respectively characterized by Hopf algebras $H_1$ and $H_2$, are separated by a $1d$ domain wall, see Fig.~\ref{fig:wall} (a).
	To construct the domain wall Hamiltonian, we need the following input data (all Hopf algebras here are finite-dimensional $C^*$ Hopf algebras):
	\begin{itemize}
	    \item Two domain wall face Hopf algebras $J_1$, $J_2$ and one domain wall vertex Hopf algebra $K$, and there exist pairings, $\lambda_1:J_1\otimes K \to \mathbb{C}$ and  $\lambda_2:J_2\otimes K \to \mathbb{C}$.
	    \item A domain wall Hopf algebra $W$ which is a $K$-bimodule. The left actions of $K$ on $W$ induced by $h\triangleright w$ and $w \triangleleft S(h)$ are denoted as $L^{K,h}_+$ and $L_{-}^{K,h}$ respectively. 
	    \item Two bulk Hopf algebras $H_1$, $H_2$ which are $K$-bimodules. The bimodule structures also induce two inequivalent left module structures on $H_i$. 
	    \item There are pairings $\zeta_1: J_1\otimes W\to \mathbb{C}$ and $\zeta_2: J_2\otimes W\to \mathbb{C}$. 
	    These two pairings induce the $J_1$- and $J_2$-bimodule structures on $W$: $\phi\rightharpoonup w=\sum_{(w)} w^{(1)} \zeta_i(\phi\otimes w^{(2)})$ and $w \leftharpoonup \phi=\sum_{(w)} w^{(2)} \zeta_i(\phi\otimes w^{(1)})$.
	    This further induces two left $J_i$-module structure operators
	    \begin{equation}
	         T^{J_i,\phi}_+ w=\phi \rightharpoonup w, \quad  T^{J_i,\phi}_- w=w\leftharpoonup S(\phi).  
	    \end{equation}
	    \item There are pairings $\xi_{1} : J_1\otimes H_1\to \mathbb{C}$ and $\xi_{2} : J_2\otimes H_2\to \mathbb{C}$, which also induce two inequivalent left $J_i$-module structures on $H_i$.
	\end{itemize}
    Similar to the gapped boundary construction, we will also assume some consistency conditions for these pairings and actions.
	
	For each edge in the first and second bulks $C(\Sigma_1)$ and $C(\Sigma_2)$, we assign the Hopf algebras $H_1$ and $H_2$ respectively. Two bulk total Hilbert spaces are thus $\mathcal{H}(\Sigma_i)=\otimes_{e\in C(\Sigma_i)} H_i$ with $i=1,2$.
	For each domain wall edge $e_{d} \in C
	(\Sigma_d)$ (where $\Sigma_d=\partial \Sigma_1=\partial \Sigma_2$), we attach the Hopf algebra $W$ to it.
	The total domain wall Hilbert space is $\mathcal{H}(\Sigma_d) =\otimes _{e_d\in C(\Sigma_d)} W$.
	We define the domain wall vertex as the vertex on the wall and denote them as $v_d$. The domain wall face on the first bulk is the face near the wall in the first bulk, and it is denoted as $f_{d,1}$. Similarly, we denote $f_{d,2}$ for the domain wall face on the second bulk.

	For a domain wall site  $s_d=(v_{d},f_d)$ and $h\in K$, we defined (in counterclockwise order)
	\begin{equation}
		A^{h}(s_d) = \sum_{(h)}   L^{K,h^{(1)}} (j_1,v_d) \otimes \cdots \otimes L^{K,h^{(n)}} (j_n,v_d),
	\end{equation}
	where, for first and second bulk edges, $L^{K,h^{(\ell)}} (j_\ell,v_d)$ act on $H_i$, and for domain wall edges, they act on $W$. The convention for choosing $L_+,L_-$ is the same as in the bulk.
	
	For the domain wall site $s_{d,i}$ on the $i$-th bulk and $\varphi \in J_i^{\rm cop}$, we defined the face operator (in counterclockwise order)
	\begin{equation}
	    B^{\varphi}(s_{d,i})=\sum_{(\varphi)}T^{J_i,\varphi^{(1)}}(j_1,f_{d,i})\otimes T^{J_i,\varphi^{(2)}}(j_2,f_{d,i})\otimes \cdots \otimes T^{J_i,\varphi^{(n)}}(j_n,f_{d,i}),
	\end{equation}
   	where the comultiplication is taken in $J_i$. The convention for choosing $T_+,T_-$ is the same as in the bulk.

	\begin{proposition}
	   For the domain wall site $s_{d,i}$, the operators $A^h(s_{d,i})$ and $B^{\varphi}(s_{d,i})$ satisfy
	     \begin{equation}\label{eq:wallbicrossproduct}
	       A^h(s_{d,i})B^{\varphi}(s_{d,i})= \sum_{(h),(\varphi)}\lambda_i(\varphi^{(3)} \otimes h^{(1)}) \lambda_i^{-1} (\varphi^{(1)}\otimes h^{(3)}) B^{\varphi^{(2)}}(s_{d,i}) A^{h^{(2)}}(s_{d,i}).
	   \end{equation}
	   Therefore, by mapping $\varphi \otimes h \in J_i^{\rm cop}\Join_{\lambda_i} K$ to $B^{\varphi}(s_{d,i})A^{h}(s_{d,i})$, we obtain a representation of $J^{\rm cop}_i\Join_{\lambda_i} K$ over $\mathcal{H}_i
	   (s_{d,i})=\otimes_{e\in \partial s_{d,i}} \mathcal{H}_e$. 
	\end{proposition}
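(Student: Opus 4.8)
The plan is to reduce this proposition to the corresponding statement for the gapped boundary construction, namely Proposition~\ref{prop:double}, by observing that a domain wall site $s_{d,i}$ sitting on the $i$-th bulk is, from the local point of view, structurally identical to an outer boundary site for the boundary data $(H_i; K, J_i, W)$. Indeed, the operator $A^h(s_{d,i})$ is built from the $K$-bimodule-coalgebra actions on the edges of $H_i$ and $W$ incident to $v_d$, exactly as $A^h(s_b)$ was; and $B^\varphi(s_{d,i})$ is built only from the $J_i$-module-coalgebra actions $T^{J_i,\varphi^{(k)}}_\pm$ on the domain wall edges around $f_{d,i}$, exactly as $B^\varphi(s_b)$ was built from $T^{J,\cdot}_\pm$. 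So the computation underlying Eq.~\eqref{eq:bicrossproduct} applies verbatim with $J\rightsquigarrow J_i$, $\lambda\rightsquigarrow\lambda_i$, $\zeta\rightsquigarrow\zeta_i$, yielding Eq.~\eqref{eq:wallbicrossproduct}.

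Concretely I would proceed in three steps. First, fix a local configuration of the domain wall site as in the figure preceding Eq.~\eqref{eq:boundarycycle} (one $K$-bulk edge $x_1$ of $H_i$ acted on by $L_\pm^{K,h}$, together with the cycle of wall edges $w_1,\dots,w_n$ of $W$ around $f_{d,i}$ acted on by $T^{J_i,\varphi}_\pm$), and write out $A^h(s_{d,i})B^\varphi(s_{d,i})$ acting on $|w_1,\dots,w_n\rangle\otimes|x_1\rangle$ using the definitions of $L^{K,h}_\pm$ and $T^{J_i,\varphi}_\pm$. Second, expand the right-hand side of Eq.~\eqref{eq:wallbicrossproduct} and use the consistency conditions between $\lambda_i$ and $\zeta_i$ (the domain-wall analogues of the conditions imposed in the boundary construction of Appendix~\ref{sec:bdII}), plus the pairing axioms $\lambda_i(\varphi g,h)=\sum\lambda_i(\varphi,h^{(1)})\lambda_i(g,h^{(2)})$, $\lambda_i(\varphi,hg)=\sum\lambda_i(\varphi^{(1)},h)\lambda_i(\varphi^{(2)},g)$, and the counit/antipode relations $\sum\lambda_i(\varphi^{(1)},h)\lambda_i^{-1}(\varphi^{(2)},h')=\dots$, to collapse it to the same expression. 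Since the argument of $\zeta_i$ telescopes through the product $w_1^{(2)}\cdots w_n^{(2)}$ exactly as in the boundary case, and the surviving $\lambda_i,\lambda_i^{-1}$ factors cancel against the inner $S$-applied pieces of $h$, one arrives at the left-hand side. Third, from Eq.~\eqref{eq:wallbicrossproduct} one checks directly that $\Phi_{s_{d,i}}:\varphi\otimes h\mapsto B^\varphi(s_{d,i})A^h(s_{d,i})$ respects the multiplication of $J_i^{\rm cop}\Join_{\lambda_i}K$ given in the definition of the generalized quantum double: the straightening relation Eq.~\eqref{eq:wallbicrossproduct} is precisely the defining relation of the bicrossed product, so compatibility with products is automatic once one verifies it sends $1\otimes 1\mapsto\id$ (immediate, since $A^{1_K}=\id$ and $B^{\varepsilon_{J_i}}=\id$).

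The only genuine content beyond bookkeeping is the bulk-edge bookkeeping in step two: unlike the boundary case where the non-wall edge $x_1$ lived in $H$, here it lives in $H_i$, and one must confirm that the $K$-bimodule-coalgebra structure on $H_i$ (which is part of the domain wall input data) enters the identity in exactly the way the $K$-bimodule-coalgebra structure on $H$ did in Proposition~\ref{prop:double}. This is the step I expect to be the main obstacle, though it is a mild one: it amounts to checking that the antipode law and the relation $k h_{W}=\varepsilon(k)h_W=h_W k$ — assumed for $W$ — are not actually needed for $H_i$ at this stage (they are only needed later for the stabilizer/projector statements), so the derivation goes through with only the $K$-bimodule-coalgebra axioms on $H_i$ and the pairing consistency conditions. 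I would then remark that the proof for the domain wall vertex operator $A^\lambda(s_{d_i})$ commuting with face operators, and the remaining stabilizer properties, follow by the same reduction to Appendix~\ref{sec:bdII} together with the folding trick of Section~\ref{sec:domainwall}.
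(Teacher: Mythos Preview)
Your proposal is correct and follows exactly the paper's approach: the paper's own proof simply reads ``The proof is similar to the one in Proposition~\ref{prop:double}, so we omit it here,'' and your reduction with the substitutions $J\rightsquigarrow J_i$, $\lambda\rightsquigarrow\lambda_i$, $\zeta\rightsquigarrow\zeta_i$, $H\rightsquigarrow H_i$ is precisely what is intended. One minor geometric refinement: in the domain wall setup the face $f_{d,i}$ is an ordinary face of the $H_i$-bulk lattice adjacent to the wall, so its edge cycle contains both wall edges in $W$ and bulk edges in $H_i$ (with $T^{J_i}$ acting on the latter via the pairing $\xi_i$) rather than a single outer cycle of pure $W$-edges as in the boundary case, but since $H_i$ carries the same $K$-bimodule and $J_i$-pairing structure as $W$ the algebraic computation is unchanged.
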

\begin{proof}
    The proof is similar to the one in Proposition \ref{prop:double}, so we omit it here.
\end{proof}

	Now by choosing the Haar integrals $h_K\in K$ and $\varphi_{J_i}\in J_i$, we define domain wall stabilizers as
		\begin{equation}
	    A^K_{v_d}=A^{h_K}(s_d), \quad B^{J_i}_{f_{d,i}}=B^{\varphi_{J_i}}(s_{d,i}).
	\end{equation}
		All domain wall stabilizer operators are projectors and satisfy 
	\begin{align}
	   ( A^{h_K}_{v_b})^{\dagger} = A^{h_K^*}_{v_b}&= A^{h_K}_{v_b},\quad  (A^{h_K}_{v_b})^2= A^{h_K^2}= A^{h_K}_{v_b},\\
	   (B^{\varphi_{J_i}}_{f_{d,i}})^{\dagger} =B^{\varphi_{J_i}^*}_{f_{d,i}}&=B^{\varphi_{J_i}}_{f_{d,i}}, \quad  (B^{\varphi_{J_i}}_{f_{d,i}})^2=B^{\varphi_{J_i}^2}_{f_{d,i}}=B^{\varphi_{J_i}}_{f_{d,i}}.
	\end{align}
	 All boundary stabilizer operators commute with each other and with all bulk stabilizer operators. 
	The domain wall Hamiltonian can be constructed as 
	\begin{equation}
	  H^{K,J_1,J_2}(\Sigma_d)=\sum_{v_d} (I-A^K_{v_d}) +\sum_{f_{d,1}}(I- B^{J_1}_{f_{d,1}})+\sum_{f_{d,2}} (I-B^{J_2}_{f_{d,2}}).
	\end{equation}
	The total Hamiltonian is thus
	\begin{equation}
	    H_{H_1,H_2}^{K,J_1,J_2}=H_{H_1}(\Sigma_1)+H_{H_2}(\Sigma_2)+ H^{K,J_1,J_2}(\Sigma_d).
	\end{equation}
	
	\begin{example}
    To take a closer look at the construction, let us consider a simple example.
    Choose $J_1\subseteq \Hhat_1$,$J_2\subseteq\Hhat_2$ and $W=H_1\otimes H_2$.
    The pairings are set as
    \begin{align}
        \langle \varphi_1, h_1\otimes h_2\rangle &:=\langle \varphi_1,h_1\rangle \langle \varepsilon,h_2\rangle =\varphi_1(h_1) \varepsilon(h_2),\\
        \langle \varphi_2, h_1\otimes h_2\rangle &:=\langle \varepsilon,h_1\rangle \langle \varphi_2,h_2\rangle =\varepsilon(h_1) \varphi_2(h_2),
    \end{align}
    for $\varphi_i\in J_i$. It is easy to verify that these give well-defined pairings.
   For the edge operators induced by these pairings, we have
    \begin{align}
        T^{J_1,\varphi}_{\pm} h_1\otimes h_2= (T^{J_1,\varphi}_{\pm} h_1) \otimes h_2, \\
        T^{J_2,\varphi}_{\pm} h_1\otimes h_2= h_1\otimes (T^{J_2,\varphi}_{\pm} h_2).
    \end{align}
    The action decouples from the tensor product of two phases.
    We could introduce a domain wall Hopf algebra $K$ to sew them together. The actions of $K$ on $H_i$ naturally induce an action of $K$ on $W=H_1\otimes H_2$ as $k\triangleright (h_1\otimes h_2)=\sum_{(k)} k^{(1)} h_1\otimes k^{(2)} h_2$. To simplify the notation, we will denote the action of $K$ on $H_i$ as $kh$ and $h_iS(k)$.
    For example
    \begin{equation}
        \begin{aligned}
			\begin{tikzpicture}
				\draw[-latex,black,line width = 1.6pt] (-1,0) -- (0,0); 
				\draw[-latex,black,line width = 1.6pt] (0,0) -- (1,0); 
				\draw[-latex,black] (0,0) -- (0,1); 
				\draw[-latex,black] (0,-1) -- (0,0); 
				\draw[line width=0.5pt, red] (0,0) -- (0.5,0.5);
				\draw [fill = black] (0,0) circle (1.2pt);
				\draw [fill = black] (0.5,0.5) circle (1.2pt);
				\node[ line width=0.2pt, dashed, draw opacity=0.5] (a) at (0.2,0.6){$s$};
				\node[ line width=0.2pt, dashed, draw opacity=0.5] (a) at (-0.3,-0.3){$v_d$};
				\node[ line width=0.2pt, dashed, draw opacity=0.5] (a) at (0.9,0.7){$f_{d,1}$};
				\node[ line width=0.2pt, dashed, draw opacity=0.5] (a) at (-1.6,0){$h_3\otimes g_2$};
				\node[ line width=0.2pt, dashed, draw opacity=0.5] (a) at (0,-1.2){$g_3$};
				\node[ line width=0.2pt, dashed, draw opacity=0.5] (a) at (0.8,-0.3){$h_1\otimes g_1$};
				\node[ line width=0.2pt, dashed, draw opacity=0.5] (a) at (-0.2,0.5){$h_2$};
			\end{tikzpicture}
		\end{aligned} \quad 
		\begin{aligned}
		    	&	A^{k} |h_1\otimes g_1,h_2,h_3\otimes g_2,g_3 \rangle\\ 
		    		=&\sum_{(k)} | h_2 S(k^{(1)}), 
		    		k^{(2)} (h_3\otimes g_2), k^{(3)} g_3,
		    		(h_1\otimes g_1) S(k^{(4)})\rangle.
		\end{aligned}
    \end{equation}
    We see that the gapped domain wall is determined by $K,J_1,J_2$.
\end{example}

\emph{$N$-bulk domain wall and folding trick.} ---
The above discussion about the domain wall between two bulks can be generalized to the $N$-bulk scenario.
This $N$-bulk domain wall is crucial for investigating the higher-dimensional Hopf quantum double model.

In this $N$-bulk case, it turns out that we need extra input data, that is the direction for each bulk (denoted as $\sigma(i)$).
If we fix the direction of the domain wall, then using the right-hand screw rule, there is an induced direction $\tilde{\sigma}(i)$ for each bulk.
This induced direction may or may not coincide with the direction of that bulk.
We assign a Hopf algebra $H_i$ (equivalently, a UFC $\EC_{i}=\mathsf{Rep}(H_i)$) for each bulk.
An $N$-comodule algebra $\FB$ can be defined as follows: for arbitrary pair of bulks $H_i,H_j$, if $\sigma(i)=\tilde{\sigma}(i)$ and $\sigma(j)\neq \tilde{\sigma}(j)$, there is an $H_1\otimes H_2^{\rm cop}$-comodule structure over $\FB$, and all other situations can be treated similarly. 
The $N$-bulk domain wall is thus characterized by this $N$-comodule algebra $\FB$.

The lattice construction is similar to the two-bulk case: we assign a Hopf algebra $W$ for each wall edge; a Hopf algebra $K$ for each wall vertex; and a Hopf algebra $J_i$ for boundary face in the $H_i$-bulk.
They satisfy similar compatibility conditions to the two-bulk case.
The wall Hamiltonian becomes 
\begin{equation}
    H_{H_1,\cdots,H_N}^{K,J_1,\cdots ,J_N}(\Sigma_d)=\sum_{v_d}(I-A_{v_d}^K)+\sum_{i}\sum_{f_{d,i}}(I-B^{J_i}_{f_{d,i}}).
\end{equation}
The wall excitations are given by the $N$-module functor category $\mathsf{Fun}_{\EC_1|\cdots|\EC_N}({_{\FB}}\mathsf{Mod}, {_{\FB}}\mathsf{Mod})$.
The folding trick also holds in this case, namely, by stacking all bulk together, the domain wall will become a gapped boundary.

\bibliographystyle{apsrev4-1-title}
\bibliography{mybib.bib}

%merlin.mbs apsrev4-1.bst 2010-07-25 4.21a (PWD, AO, DPC) hacked
%Control: key (0)
%Control: author (72) initials jnrlst
%Control: editor formatted (1) identically to author
%Control: production of article title (1) required
%Control: page (0) single
%Control: year (1) truncated
%Control: production of eprint (0) enabled
\begin{thebibliography}{93}%
\makeatletter
\providecommand \@ifxundefined [1]{%
 \@ifx{#1\undefined}
}%
\providecommand \@ifnum [1]{%
 \ifnum #1\expandafter \@firstoftwo
 \else \expandafter \@secondoftwo
 \fi
}%
\providecommand \@ifx [1]{%
 \ifx #1\expandafter \@firstoftwo
 \else \expandafter \@secondoftwo
 \fi
}%
\providecommand \natexlab [1]{#1}%
\providecommand \enquote  [1]{``#1''}%
\providecommand \bibnamefont  [1]{#1}%
\providecommand \bibfnamefont [1]{#1}%
\providecommand \citenamefont [1]{#1}%
\providecommand \href@noop [0]{\@secondoftwo}%
\providecommand \href [0]{\begingroup \@sanitize@url \@href}%
\providecommand \@href[1]{\@@startlink{#1}\@@href}%
\providecommand \@@href[1]{\endgroup#1\@@endlink}%
\providecommand \@sanitize@url [0]{\catcode `\\12\catcode `\$12\catcode
  `\&12\catcode `\#12\catcode `\^12\catcode `\_12\catcode `\%12\relax}%
\providecommand \@@startlink[1]{}%
\providecommand \@@endlink[0]{}%
\providecommand \url  [0]{\begingroup\@sanitize@url \@url }%
\providecommand \@url [1]{\endgroup\@href {#1}{\urlprefix }}%
\providecommand \urlprefix  [0]{URL }%
\providecommand \Eprint [0]{\href }%
\providecommand \doibase [0]{http://dx.doi.org/}%
\providecommand \selectlanguage [0]{\@gobble}%
\providecommand \bibinfo  [0]{\@secondoftwo}%
\providecommand \bibfield  [0]{\@secondoftwo}%
\providecommand \translation [1]{[#1]}%
\providecommand \BibitemOpen [0]{}%
\providecommand \bibitemStop [0]{}%
\providecommand \bibitemNoStop [0]{.\EOS\space}%
\providecommand \EOS [0]{\spacefactor3000\relax}%
\providecommand \BibitemShut  [1]{\csname bibitem#1\endcsname}%
\let\auto@bib@innerbib\@empty
%</preamble>
\bibitem [{\citenamefont {Wen}\ and\ \citenamefont {Niu}(1990)}]{Wen1990}%
  \BibitemOpen
  \bibfield  {author} {\bibinfo {author} {\bibfnamefont {X.~G.}\ \bibnamefont
  {Wen}}\ and\ \bibinfo {author} {\bibfnamefont {Q.}~\bibnamefont {Niu}},\
  }\bibfield  {title} {\enquote {\bibinfo {title} {Ground-state degeneracy of
  the fractional quantum {H}all states in the presence of a random potential
  and on high-genus riemann surfaces},}\ }\href {\doibase
  10.1103/PhysRevB.41.9377} {\bibfield  {journal} {\bibinfo  {journal} {Phys.
  Rev. B}\ }\textbf {\bibinfo {volume} {41}},\ \bibinfo {pages} {9377}
  (\bibinfo {year} {1990})}\BibitemShut {NoStop}%
\bibitem [{\citenamefont {Wen}(2004)}]{Wen2004}%
  \BibitemOpen
  \bibfield  {author} {\bibinfo {author} {\bibfnamefont {X.-G.}\ \bibnamefont
  {Wen}},\ }\href
  {http://www.oxfordscholarship.com/view/10.1093/acprof:oso/9780199227259.001.0001/acprof-9780199227259}
  {\emph {\bibinfo {title} {Quantum field theory of many-body systems: from the
  origin of sound to an origin of light and electrons}}}\ (\bibinfo
  {publisher} {Oxford University Press on Demand},\ \bibinfo {year}
  {2004})\BibitemShut {NoStop}%
\bibitem [{\citenamefont {Dennis}\ \emph {et~al.}(2002)\citenamefont {Dennis},
  \citenamefont {Kitaev}, \citenamefont {Landahl},\ and\ \citenamefont
  {Preskill}}]{Dennis2002topological}%
  \BibitemOpen
  \bibfield  {author} {\bibinfo {author} {\bibfnamefont {E.}~\bibnamefont
  {Dennis}}, \bibinfo {author} {\bibfnamefont {A.}~\bibnamefont {Kitaev}},
  \bibinfo {author} {\bibfnamefont {A.}~\bibnamefont {Landahl}}, \ and\
  \bibinfo {author} {\bibfnamefont {J.}~\bibnamefont {Preskill}},\ }\bibfield
  {title} {\enquote {\bibinfo {title} {Topological quantum memory},}\ }\href
  {\doibase 10.1063/1.1499754} {\bibfield  {journal} {\bibinfo  {journal}
  {Journal of Mathematical Physics}\ }\textbf {\bibinfo {volume} {43}},\
  \bibinfo {pages} {4452–4505} (\bibinfo {year} {2002})},\ \Eprint
  {http://arxiv.org/abs/quant-ph/0110143} {arXiv:quant-ph/0110143 [quant-ph]}
  \BibitemShut {NoStop}%
\bibitem [{\citenamefont {Terhal}(2015)}]{Terhal2015quantum}%
  \BibitemOpen
  \bibfield  {author} {\bibinfo {author} {\bibfnamefont {B.~M.}\ \bibnamefont
  {Terhal}},\ }\bibfield  {title} {\enquote {\bibinfo {title} {Quantum error
  correction for quantum memories},}\ }\href {\doibase
  10.1103/RevModPhys.87.307} {\bibfield  {journal} {\bibinfo  {journal} {Rev.
  Mod. Phys.}\ }\textbf {\bibinfo {volume} {87}},\ \bibinfo {pages} {307}
  (\bibinfo {year} {2015})},\ \Eprint {http://arxiv.org/abs/1302.3428}
  {arXiv:1302.3428 [quant-ph]} \BibitemShut {NoStop}%
\bibitem [{\citenamefont {Kitaev}(2003)}]{Kitaev2003}%
  \BibitemOpen
  \bibfield  {author} {\bibinfo {author} {\bibfnamefont {A.}~\bibnamefont
  {Kitaev}},\ }\bibfield  {title} {\enquote {\bibinfo {title} {Fault-tolerant
  quantum computation by anyons},}\ }\href {\doibase
  https://doi.org/10.1016/S0003-4916(02)00018-0} {\bibfield  {journal}
  {\bibinfo  {journal} {Annals of Physics}\ }\textbf {\bibinfo {volume}
  {303}},\ \bibinfo {pages} {2 } (\bibinfo {year} {2003})},\ \Eprint
  {http://arxiv.org/abs/quant-ph/9707021} {arXiv:quant-ph/9707021 [quant-ph]}
  \BibitemShut {NoStop}%
\bibitem [{\citenamefont {Freedman}\ \emph {et~al.}(2002)\citenamefont
  {Freedman}, \citenamefont {Larsen},\ and\ \citenamefont
  {Wang}}]{freedman2002modular}%
  \BibitemOpen
  \bibfield  {author} {\bibinfo {author} {\bibfnamefont {M.~H.}\ \bibnamefont
  {Freedman}}, \bibinfo {author} {\bibfnamefont {M.}~\bibnamefont {Larsen}}, \
  and\ \bibinfo {author} {\bibfnamefont {Z.}~\bibnamefont {Wang}},\ }\bibfield
  {title} {\enquote {\bibinfo {title} {A modular functor which is universal for
  quantum computation},}\ }\href
  {https://link.springer.com/article/10.1007/s002200200645} {\bibfield
  {journal} {\bibinfo  {journal} {Communications in Mathematical Physics}\
  }\textbf {\bibinfo {volume} {227}},\ \bibinfo {pages} {605} (\bibinfo {year}
  {2002})},\ \Eprint {http://arxiv.org/abs/quant-ph/0001108}
  {arXiv:quant-ph/0001108 [quant-ph]} \BibitemShut {NoStop}%
\bibitem [{\citenamefont {Nayak}\ \emph {et~al.}(2008)\citenamefont {Nayak},
  \citenamefont {Simon}, \citenamefont {Stern}, \citenamefont {Freedman},\ and\
  \citenamefont {Das~Sarma}}]{Nayak2008}%
  \BibitemOpen
  \bibfield  {author} {\bibinfo {author} {\bibfnamefont {C.}~\bibnamefont
  {Nayak}}, \bibinfo {author} {\bibfnamefont {S.~H.}\ \bibnamefont {Simon}},
  \bibinfo {author} {\bibfnamefont {A.}~\bibnamefont {Stern}}, \bibinfo
  {author} {\bibfnamefont {M.}~\bibnamefont {Freedman}}, \ and\ \bibinfo
  {author} {\bibfnamefont {S.}~\bibnamefont {Das~Sarma}},\ }\bibfield  {title}
  {\enquote {\bibinfo {title} {Non-{A}belian anyons and topological quantum
  computation},}\ }\href {\doibase 10.1103/RevModPhys.80.1083} {\bibfield
  {journal} {\bibinfo  {journal} {Rev. Mod. Phys.}\ }\textbf {\bibinfo {volume}
  {80}},\ \bibinfo {pages} {1083} (\bibinfo {year} {2008})},\ \Eprint
  {http://arxiv.org/abs/0707.1889} {arXiv:0707.1889 [cond-mat.str-el]}
  \BibitemShut {NoStop}%
\bibitem [{\citenamefont {Mesaros}\ and\ \citenamefont
  {Ran}(2013)}]{Mesaros2013classification}%
  \BibitemOpen
  \bibfield  {author} {\bibinfo {author} {\bibfnamefont {A.}~\bibnamefont
  {Mesaros}}\ and\ \bibinfo {author} {\bibfnamefont {Y.}~\bibnamefont {Ran}},\
  }\bibfield  {title} {\enquote {\bibinfo {title} {Classification of symmetry
  enriched topological phases with exactly solvable models},}\ }\href {\doibase
  10.1103/PhysRevB.87.155115} {\bibfield  {journal} {\bibinfo  {journal} {Phys.
  Rev. B}\ }\textbf {\bibinfo {volume} {87}},\ \bibinfo {pages} {155115}
  (\bibinfo {year} {2013})},\ \Eprint {http://arxiv.org/abs/1212.0835}
  {arXiv:1212.0835 [cond-mat.str-el]} \BibitemShut {NoStop}%
\bibitem [{\citenamefont {Bullivant}\ \emph {et~al.}(2017)\citenamefont
  {Bullivant}, \citenamefont {Hu},\ and\ \citenamefont
  {Wan}}]{Bullivant2017twisted}%
  \BibitemOpen
  \bibfield  {author} {\bibinfo {author} {\bibfnamefont {A.}~\bibnamefont
  {Bullivant}}, \bibinfo {author} {\bibfnamefont {Y.}~\bibnamefont {Hu}}, \
  and\ \bibinfo {author} {\bibfnamefont {Y.}~\bibnamefont {Wan}},\ }\bibfield
  {title} {\enquote {\bibinfo {title} {Twisted quantum double model of
  topological order with boundaries},}\ }\href {\doibase
  10.1103/PhysRevB.96.165138} {\bibfield  {journal} {\bibinfo  {journal} {Phys.
  Rev. B}\ }\textbf {\bibinfo {volume} {96}},\ \bibinfo {pages} {165138}
  (\bibinfo {year} {2017})},\ \Eprint {http://arxiv.org/abs/1706.03611}
  {arXiv:1706.03611 [cond-mat.str-el]} \BibitemShut {NoStop}%
\bibitem [{\citenamefont {Hu}\ \emph {et~al.}(2013)\citenamefont {Hu},
  \citenamefont {Wan},\ and\ \citenamefont {Wu}}]{Hu2012twisted}%
  \BibitemOpen
  \bibfield  {author} {\bibinfo {author} {\bibfnamefont {Y.}~\bibnamefont
  {Hu}}, \bibinfo {author} {\bibfnamefont {Y.}~\bibnamefont {Wan}}, \ and\
  \bibinfo {author} {\bibfnamefont {Y.-S.}\ \bibnamefont {Wu}},\ }\bibfield
  {title} {\enquote {\bibinfo {title} {Twisted quantum double model of
  topological phases in two dimensions},}\ }\href {\doibase
  10.1103/PhysRevB.87.125114} {\bibfield  {journal} {\bibinfo  {journal} {Phys.
  Rev. B}\ }\textbf {\bibinfo {volume} {87}},\ \bibinfo {pages} {125114}
  (\bibinfo {year} {2013})},\ \Eprint {http://arxiv.org/abs/1211.3695}
  {arXiv:1211.3695 [cond-mat.str-el]} \BibitemShut {NoStop}%
\bibitem [{\citenamefont {Dijkgraaf}\ and\ \citenamefont
  {Witten}(1990)}]{dijkgraaf1990topological}%
  \BibitemOpen
  \bibfield  {author} {\bibinfo {author} {\bibfnamefont {R.}~\bibnamefont
  {Dijkgraaf}}\ and\ \bibinfo {author} {\bibfnamefont {E.}~\bibnamefont
  {Witten}},\ }\bibfield  {title} {\enquote {\bibinfo {title} {Topological
  gauge theories and group cohomology},}\ }\href
  {https://link.springer.com/article/10.1007/BF02096988} {\bibfield  {journal}
  {\bibinfo  {journal} {Communications in Mathematical Physics}\ }\textbf
  {\bibinfo {volume} {129}},\ \bibinfo {pages} {393} (\bibinfo {year}
  {1990})}\BibitemShut {NoStop}%
\bibitem [{\citenamefont {Kuperberg}(1996)}]{kup1996noninvolutary}%
  \BibitemOpen
  \bibfield  {author} {\bibinfo {author} {\bibfnamefont {G.}~\bibnamefont
  {Kuperberg}},\ }\bibfield  {title} {\enquote {\bibinfo {title} {Noninvolutory
  {H}opf algebras and {$3$}-manifold invariants},}\ }\href {\doibase
  10.1215/S0012-7094-96-08403-3} {\bibfield  {journal} {\bibinfo  {journal}
  {Duke Math. J.}\ }\textbf {\bibinfo {volume} {84}},\ \bibinfo {pages} {83}
  (\bibinfo {year} {1996})},\ \Eprint {http://arxiv.org/abs/q-alg/9712047}
  {arXiv:q-alg/9712047 [math.QA]} \BibitemShut {NoStop}%
\bibitem [{\citenamefont {Levin}\ and\ \citenamefont {Wen}(2005)}]{Levin2005}%
  \BibitemOpen
  \bibfield  {author} {\bibinfo {author} {\bibfnamefont {M.~A.}\ \bibnamefont
  {Levin}}\ and\ \bibinfo {author} {\bibfnamefont {X.-G.}\ \bibnamefont
  {Wen}},\ }\bibfield  {title} {\enquote {\bibinfo {title} {String-net
  condensation: A physical mechanism for topological phases},}\ }\href
  {\doibase 10.1103/PhysRevB.71.045110} {\bibfield  {journal} {\bibinfo
  {journal} {Phys. Rev. B}\ }\textbf {\bibinfo {volume} {71}},\ \bibinfo
  {pages} {045110} (\bibinfo {year} {2005})},\ \Eprint
  {http://arxiv.org/abs/cond-mat/0404617} {arXiv:cond-mat/0404617
  [cond-mat.str-el]} \BibitemShut {NoStop}%
\bibitem [{\citenamefont {Turaev}\ and\ \citenamefont
  {Viro}(1992)}]{turaev1992state}%
  \BibitemOpen
  \bibfield  {author} {\bibinfo {author} {\bibfnamefont {V.~G.}\ \bibnamefont
  {Turaev}}\ and\ \bibinfo {author} {\bibfnamefont {O.~Y.}\ \bibnamefont
  {Viro}},\ }\bibfield  {title} {\enquote {\bibinfo {title} {State sum
  invariants of 3-manifolds and quantum 6$j$-symbols},}\ }\href
  {https://www.sciencedirect.com/science/article/pii/004093839290015A}
  {\bibfield  {journal} {\bibinfo  {journal} {Topology}\ }\textbf {\bibinfo
  {volume} {31}},\ \bibinfo {pages} {865} (\bibinfo {year} {1992})}\BibitemShut
  {NoStop}%
\bibitem [{\citenamefont {Barrett}\ and\ \citenamefont
  {Westbury}(1996)}]{barrett1996invariants}%
  \BibitemOpen
  \bibfield  {author} {\bibinfo {author} {\bibfnamefont {J.}~\bibnamefont
  {Barrett}}\ and\ \bibinfo {author} {\bibfnamefont {B.}~\bibnamefont
  {Westbury}},\ }\bibfield  {title} {\enquote {\bibinfo {title} {Invariants of
  piecewise-linear 3-manifolds},}\ }\href
  {https://www.ams.org/journals/tran/1996-348-10/S0002-9947-96-01660-1/}
  {\bibfield  {journal} {\bibinfo  {journal} {Transactions of the American
  Mathematical Society}\ }\textbf {\bibinfo {volume} {348}},\ \bibinfo {pages}
  {3997} (\bibinfo {year} {1996})},\ \Eprint
  {http://arxiv.org/abs/hep-th/9311155} {arXiv:hep-th/9311155 [hep-th]}
  \BibitemShut {NoStop}%
\bibitem [{\citenamefont {Buerschaper}\ \emph
  {et~al.}(2013{\natexlab{a}})\citenamefont {Buerschaper}, \citenamefont
  {Mombelli}, \citenamefont {Christandl},\ and\ \citenamefont
  {Aguado}}]{Buerschaper2013a}%
  \BibitemOpen
  \bibfield  {author} {\bibinfo {author} {\bibfnamefont {O.}~\bibnamefont
  {Buerschaper}}, \bibinfo {author} {\bibfnamefont {J.~M.}\ \bibnamefont
  {Mombelli}}, \bibinfo {author} {\bibfnamefont {M.}~\bibnamefont
  {Christandl}}, \ and\ \bibinfo {author} {\bibfnamefont {M.}~\bibnamefont
  {Aguado}},\ }\bibfield  {title} {\enquote {\bibinfo {title} {A hierarchy of
  topological tensor network states},}\ }\href {\doibase 10.1063/1.4773316}
  {\bibfield  {journal} {\bibinfo  {journal} {Journal of Mathematical Physics}\
  }\textbf {\bibinfo {volume} {54}},\ \bibinfo {pages} {012201} (\bibinfo
  {year} {2013}{\natexlab{a}})},\ \Eprint {http://arxiv.org/abs/1007.5283}
  {arXiv:1007.5283 [cond-mat.str-el]} \BibitemShut {NoStop}%
\bibitem [{\citenamefont {Chang}(2014)}]{chang2014kitaev}%
  \BibitemOpen
  \bibfield  {author} {\bibinfo {author} {\bibfnamefont {L.}~\bibnamefont
  {Chang}},\ }\bibfield  {title} {\enquote {\bibinfo {title} {Kitaev models
  based on unitary quantum groupoids},}\ }\href
  {https://aip.scitation.org/doi/abs/10.1063/1.4869326} {\bibfield  {journal}
  {\bibinfo  {journal} {Journal of Mathematical Physics}\ }\textbf {\bibinfo
  {volume} {55}},\ \bibinfo {pages} {041703} (\bibinfo {year} {2014})},\
  \Eprint {http://arxiv.org/abs/1309.4181} {arXiv:1309.4181 [math.QA]}
  \BibitemShut {NoStop}%
\bibitem [{\citenamefont {Buerschaper}\ \emph
  {et~al.}(2013{\natexlab{b}})\citenamefont {Buerschaper}, \citenamefont
  {Christandl}, \citenamefont {Kong},\ and\ \citenamefont
  {Aguado}}]{buerschaper2013electric}%
  \BibitemOpen
  \bibfield  {author} {\bibinfo {author} {\bibfnamefont {O.}~\bibnamefont
  {Buerschaper}}, \bibinfo {author} {\bibfnamefont {M.}~\bibnamefont
  {Christandl}}, \bibinfo {author} {\bibfnamefont {L.}~\bibnamefont {Kong}}, \
  and\ \bibinfo {author} {\bibfnamefont {M.}~\bibnamefont {Aguado}},\
  }\bibfield  {title} {\enquote {\bibinfo {title} {Electric--magnetic duality
  of lattice systems with topological order},}\ }\href
  {https://www.sciencedirect.com/science/article/abs/pii/S0550321313004367?via%3Dihub}
  {\bibfield  {journal} {\bibinfo  {journal} {Nuclear Physics B}\ }\textbf
  {\bibinfo {volume} {876}},\ \bibinfo {pages} {619} (\bibinfo {year}
  {2013}{\natexlab{b}})},\ \Eprint {http://arxiv.org/abs/1006.5823}
  {arXiv:1006.5823 [cond-mat.str-el]} \BibitemShut {NoStop}%
\bibitem [{\citenamefont {Yan}\ \emph {et~al.}(2022)\citenamefont {Yan},
  \citenamefont {Chen},\ and\ \citenamefont {Cui}}]{chen2021ribbon}%
  \BibitemOpen
  \bibfield  {author} {\bibinfo {author} {\bibfnamefont {B.}~\bibnamefont
  {Yan}}, \bibinfo {author} {\bibfnamefont {P.}~\bibnamefont {Chen}}, \ and\
  \bibinfo {author} {\bibfnamefont {S.}~\bibnamefont {Cui}},\ }\bibfield
  {title} {\enquote {\bibinfo {title} {Ribbon operators in the generalized
  {K}itaev quantum double model based on {H}opf algebras},}\ }\href
  {https://iopscience.iop.org/article/10.1088/1751-8121/ac552c/meta} {\bibfield
   {journal} {\bibinfo  {journal} {Journal of Physics A: Mathematical and
  Theoretical}\ } (\bibinfo {year} {2022})},\ \Eprint
  {http://arxiv.org/abs/2105.08202} {arXiv:2105.08202 [cond-mat.str-el]}
  \BibitemShut {NoStop}%
\bibitem [{\citenamefont {Bais}\ \emph {et~al.}(2003)\citenamefont {Bais},
  \citenamefont {Schroers},\ and\ \citenamefont {Slingerland}}]{bais2003hopf}%
  \BibitemOpen
  \bibfield  {author} {\bibinfo {author} {\bibfnamefont {A.~F.}\ \bibnamefont
  {Bais}}, \bibinfo {author} {\bibfnamefont {B.~J.}\ \bibnamefont {Schroers}},
  \ and\ \bibinfo {author} {\bibfnamefont {J.~K.}\ \bibnamefont
  {Slingerland}},\ }\bibfield  {title} {\enquote {\bibinfo {title} {Hopf
  symmetry breaking and confinement in (2+1)-dimensional gauge theory},}\
  }\href {https://doi.org/10.1088/1126-6708/2003/05/068} {\bibfield  {journal}
  {\bibinfo  {journal} {Journal of High Energy Physics}\ }\textbf {\bibinfo
  {volume} {2003}},\ \bibinfo {pages} {068} (\bibinfo {year} {2003})},\ \Eprint
  {http://arxiv.org/abs/hep-th/0205114} {arXiv:hep-th/0205114 [hep-th]}
  \BibitemShut {NoStop}%
\bibitem [{\citenamefont {Meusburger}(2017)}]{meusburger2017kitaev}%
  \BibitemOpen
  \bibfield  {author} {\bibinfo {author} {\bibfnamefont {C.}~\bibnamefont
  {Meusburger}},\ }\bibfield  {title} {\enquote {\bibinfo {title} {Kitaev
  lattice models as a {H}opf algebra gauge theory},}\ }\href
  {https://link.springer.com/article/10.1007%2Fs00220-017-2860-7} {\bibfield
  {journal} {\bibinfo  {journal} {Communications in Mathematical Physics}\
  }\textbf {\bibinfo {volume} {353}},\ \bibinfo {pages} {413} (\bibinfo {year}
  {2017})},\ \Eprint {http://arxiv.org/abs/1607.01144} {arXiv:1607.01144
  [math.QA]} \BibitemShut {NoStop}%
\bibitem [{\citenamefont {Meusburger}\ and\ \citenamefont
  {Wise}(2021)}]{meusburger2021hopf}%
  \BibitemOpen
  \bibfield  {author} {\bibinfo {author} {\bibfnamefont {C.}~\bibnamefont
  {Meusburger}}\ and\ \bibinfo {author} {\bibfnamefont {D.~K.}\ \bibnamefont
  {Wise}},\ }\bibfield  {title} {\enquote {\bibinfo {title} {Hopf algebra gauge
  theory on a ribbon graph},}\ }\href
  {https://www.worldscientific.com/doi/abs/10.1142/S0129055X21500161}
  {\bibfield  {journal} {\bibinfo  {journal} {Reviews in Mathematical Physics}\
  ,\ \bibinfo {pages} {2150016}} (\bibinfo {year} {2021})},\ \Eprint
  {http://arxiv.org/abs/1512.03966} {arXiv:1512.03966 [math.QA]} \BibitemShut
  {NoStop}%
\bibitem [{\citenamefont {Buerschaper}\ and\ \citenamefont
  {Aguado}(2009)}]{Buerschaper2009mapping}%
  \BibitemOpen
  \bibfield  {author} {\bibinfo {author} {\bibfnamefont {O.}~\bibnamefont
  {Buerschaper}}\ and\ \bibinfo {author} {\bibfnamefont {M.}~\bibnamefont
  {Aguado}},\ }\bibfield  {title} {\enquote {\bibinfo {title} {Mapping
  {K}itaev's quantum double lattice models to {L}evin and {W}en's string-net
  models},}\ }\href {\doibase 10.1103/PhysRevB.80.155136} {\bibfield  {journal}
  {\bibinfo  {journal} {Phys. Rev. B}\ }\textbf {\bibinfo {volume} {80}},\
  \bibinfo {pages} {155136} (\bibinfo {year} {2009})},\ \Eprint
  {http://arxiv.org/abs/0907.2670} {arXiv:0907.2670 [cond-mat.str-el]}
  \BibitemShut {NoStop}%
\bibitem [{\citenamefont {Hu}\ \emph {et~al.}(2018{\natexlab{a}})\citenamefont
  {Hu}, \citenamefont {Geer},\ and\ \citenamefont {Wu}}]{Hu2018full}%
  \BibitemOpen
  \bibfield  {author} {\bibinfo {author} {\bibfnamefont {Y.}~\bibnamefont
  {Hu}}, \bibinfo {author} {\bibfnamefont {N.}~\bibnamefont {Geer}}, \ and\
  \bibinfo {author} {\bibfnamefont {Y.-S.}\ \bibnamefont {Wu}},\ }\bibfield
  {title} {\enquote {\bibinfo {title} {Full dyon excitation spectrum in
  extended {L}evin-{W}en models},}\ }\href {\doibase
  10.1103/PhysRevB.97.195154} {\bibfield  {journal} {\bibinfo  {journal} {Phys.
  Rev. B}\ }\textbf {\bibinfo {volume} {97}},\ \bibinfo {pages} {195154}
  (\bibinfo {year} {2018}{\natexlab{a}})},\ \Eprint
  {http://arxiv.org/abs/1502.03433} {arXiv:1502.03433 [cond-mat.str-el]}
  \BibitemShut {NoStop}%
\bibitem [{\citenamefont {Wang}\ \emph {et~al.}(2020)\citenamefont {Wang},
  \citenamefont {Li}, \citenamefont {Hu},\ and\ \citenamefont
  {Wan}}]{wang2020electric}%
  \BibitemOpen
  \bibfield  {author} {\bibinfo {author} {\bibfnamefont {H.}~\bibnamefont
  {Wang}}, \bibinfo {author} {\bibfnamefont {Y.}~\bibnamefont {Li}}, \bibinfo
  {author} {\bibfnamefont {Y.}~\bibnamefont {Hu}}, \ and\ \bibinfo {author}
  {\bibfnamefont {Y.}~\bibnamefont {Wan}},\ }\bibfield  {title} {\enquote
  {\bibinfo {title} {Electric-magnetic duality in the quantum double models of
  topological orders with gapped boundaries},}\ }\href
  {https://link.springer.com/article/10.1007%2FJHEP02%282020%29030} {\bibfield
  {journal} {\bibinfo  {journal} {Journal of High Energy Physics}\ }\textbf
  {\bibinfo {volume} {2020}},\ \bibinfo {pages} {1} (\bibinfo {year} {2020})},\
  \Eprint {http://arxiv.org/abs/1910.13441} {arXiv:1910.13441
  [cond-mat.str-el]} \BibitemShut {NoStop}%
\bibitem [{\citenamefont {Kitaev}\ and\ \citenamefont
  {Kong}(2012)}]{Kitaev2012a}%
  \BibitemOpen
  \bibfield  {author} {\bibinfo {author} {\bibfnamefont {A.}~\bibnamefont
  {Kitaev}}\ and\ \bibinfo {author} {\bibfnamefont {L.}~\bibnamefont {Kong}},\
  }\bibfield  {title} {\enquote {\bibinfo {title} {Models for gapped boundaries
  and domain walls},}\ }\href {\doibase 10.1007/s00220-012-1500-5} {\bibfield
  {journal} {\bibinfo  {journal} {Communications in Mathematical Physics}\
  }\textbf {\bibinfo {volume} {313}},\ \bibinfo {pages} {351} (\bibinfo {year}
  {2012})},\ \Eprint {http://arxiv.org/abs/1104.5047} {arXiv:1104.5047
  [cond-mat.str-el]} \BibitemShut {NoStop}%
\bibitem [{\citenamefont {Bravyi}\ and\ \citenamefont
  {Kitaev}(1998)}]{bravyi1998quantum}%
  \BibitemOpen
  \bibfield  {author} {\bibinfo {author} {\bibfnamefont {S.~B.}\ \bibnamefont
  {Bravyi}}\ and\ \bibinfo {author} {\bibfnamefont {A.~Y.}\ \bibnamefont
  {Kitaev}},\ }\bibfield  {title} {\enquote {\bibinfo {title} {Quantum codes on
  a lattice with boundary},}\ }\href@noop {} {\  (\bibinfo {year} {1998})},\
  \Eprint {http://arxiv.org/abs/quant-ph/9811052} {arXiv:quant-ph/9811052
  [quant-ph]} \BibitemShut {NoStop}%
\bibitem [{\citenamefont {Bombin}\ and\ \citenamefont
  {Martin-Delgado}(2008)}]{Bombin2008family}%
  \BibitemOpen
  \bibfield  {author} {\bibinfo {author} {\bibfnamefont {H.}~\bibnamefont
  {Bombin}}\ and\ \bibinfo {author} {\bibfnamefont {M.~A.}\ \bibnamefont
  {Martin-Delgado}},\ }\bibfield  {title} {\enquote {\bibinfo {title} {Family
  of non-{A}belian {K}itaev models on a lattice: Topological condensation and
  confinement},}\ }\href {\doibase 10.1103/PhysRevB.78.115421} {\bibfield
  {journal} {\bibinfo  {journal} {Phys. Rev. B}\ }\textbf {\bibinfo {volume}
  {78}},\ \bibinfo {pages} {115421} (\bibinfo {year} {2008})},\ \Eprint
  {http://arxiv.org/abs/0712.0190} {arXiv:0712.0190 [cond-mat.str-el]}
  \BibitemShut {NoStop}%
\bibitem [{\citenamefont {Freedman}\ and\ \citenamefont
  {Meyer}(2001)}]{freedman2001projective}%
  \BibitemOpen
  \bibfield  {author} {\bibinfo {author} {\bibfnamefont {M.~H.}\ \bibnamefont
  {Freedman}}\ and\ \bibinfo {author} {\bibfnamefont {D.~A.}\ \bibnamefont
  {Meyer}},\ }\bibfield  {title} {\enquote {\bibinfo {title} {Projective plane
  and planar quantum codes},}\ }\href
  {https://link.springer.com/article/10.1007/s102080010013} {\bibfield
  {journal} {\bibinfo  {journal} {Foundations of Computational Mathematics}\
  }\textbf {\bibinfo {volume} {1}},\ \bibinfo {pages} {325} (\bibinfo {year}
  {2001})},\ \Eprint {http://arxiv.org/abs/quant-ph/9810055}
  {arXiv:quant-ph/9810055 [quant-ph]} \BibitemShut {NoStop}%
\bibitem [{\citenamefont {Beigi}\ \emph {et~al.}(2011)\citenamefont {Beigi},
  \citenamefont {Shor},\ and\ \citenamefont {Whalen}}]{Beigi2011the}%
  \BibitemOpen
  \bibfield  {author} {\bibinfo {author} {\bibfnamefont {S.}~\bibnamefont
  {Beigi}}, \bibinfo {author} {\bibfnamefont {P.~W.}\ \bibnamefont {Shor}}, \
  and\ \bibinfo {author} {\bibfnamefont {D.}~\bibnamefont {Whalen}},\
  }\bibfield  {title} {\enquote {\bibinfo {title} {The quantum double model
  with boundary: Condensations and symmetries},}\ }\href {\doibase
  10.1007/s00220-011-1294-x} {\bibfield  {journal} {\bibinfo  {journal}
  {Communications in Mathematical Physics}\ }\textbf {\bibinfo {volume}
  {306}},\ \bibinfo {pages} {663} (\bibinfo {year} {2011})},\ \Eprint
  {http://arxiv.org/abs/1006.5479} {arXiv:1006.5479 [quant-ph]} \BibitemShut
  {NoStop}%
\bibitem [{\citenamefont {Levin}(2013)}]{Levin2013protected}%
  \BibitemOpen
  \bibfield  {author} {\bibinfo {author} {\bibfnamefont {M.}~\bibnamefont
  {Levin}},\ }\bibfield  {title} {\enquote {\bibinfo {title} {Protected edge
  modes without symmetry},}\ }\href {\doibase 10.1103/PhysRevX.3.021009}
  {\bibfield  {journal} {\bibinfo  {journal} {Phys. Rev. X}\ }\textbf {\bibinfo
  {volume} {3}},\ \bibinfo {pages} {021009} (\bibinfo {year} {2013})},\ \Eprint
  {http://arxiv.org/abs/1301.7355} {arXiv:1301.7355 [cond-mat.str-el]}
  \BibitemShut {NoStop}%
\bibitem [{\citenamefont {Cong}\ \emph {et~al.}(2017)\citenamefont {Cong},
  \citenamefont {Cheng},\ and\ \citenamefont {Wang}}]{Cong2017}%
  \BibitemOpen
  \bibfield  {author} {\bibinfo {author} {\bibfnamefont {I.}~\bibnamefont
  {Cong}}, \bibinfo {author} {\bibfnamefont {M.}~\bibnamefont {Cheng}}, \ and\
  \bibinfo {author} {\bibfnamefont {Z.}~\bibnamefont {Wang}},\ }\bibfield
  {title} {\enquote {\bibinfo {title} {Hamiltonian and algebraic theories of
  gapped boundaries in topological phases of matter},}\ }\href {\doibase
  10.1007/s00220-017-2960-4} {\bibfield  {journal} {\bibinfo  {journal}
  {Communications in Mathematical Physics}\ }\textbf {\bibinfo {volume}
  {355}},\ \bibinfo {pages} {645} (\bibinfo {year} {2017})},\ \Eprint
  {http://arxiv.org/abs/1707.04564} {arXiv:1707.04564 [cond-mat.str-el]}
  \BibitemShut {NoStop}%
\bibitem [{\citenamefont {Etingof}\ and\ \citenamefont
  {Ostrik}(2004)}]{etingof2003finite}%
  \BibitemOpen
  \bibfield  {author} {\bibinfo {author} {\bibfnamefont {P.}~\bibnamefont
  {Etingof}}\ and\ \bibinfo {author} {\bibfnamefont {V.}~\bibnamefont
  {Ostrik}},\ }\bibfield  {title} {\enquote {\bibinfo {title} {Finite tensor
  categories},}\ }\href {http://mi.mathnet.ru/mmj167} {\bibfield  {journal}
  {\bibinfo  {journal} {Moscow Mathematical Journal}\ }\textbf {\bibinfo
  {volume} {4}},\ \bibinfo {pages} {627–654} (\bibinfo {year} {2004})},\
  \Eprint {http://arxiv.org/abs/math/0301027} {arXiv:math/0301027 [math.QA]}
  \BibitemShut {NoStop}%
\bibitem [{\citenamefont {Ostrik}(2003)}]{ostrik2003module}%
  \BibitemOpen
  \bibfield  {author} {\bibinfo {author} {\bibfnamefont {V.}~\bibnamefont
  {Ostrik}},\ }\bibfield  {title} {\enquote {\bibinfo {title} {Module
  categories, weak {H}opf algebras and modular invariants},}\ }\href
  {https://link.springer.com/content/pdf/10.1007%2Fs00031-003-0515-6.pdf}
  {\bibfield  {journal} {\bibinfo  {journal} {Transformation Groups}\ }\textbf
  {\bibinfo {volume} {8}},\ \bibinfo {pages} {177} (\bibinfo {year} {2003})},\
  \Eprint {http://arxiv.org/abs/math/0111139} {arXiv:math/0111139 [math.QA]}
  \BibitemShut {NoStop}%
\bibitem [{\citenamefont {Andruskiewitsch}\ and\ \citenamefont
  {Mombelli}(2007)}]{andruskiewitsch2007module}%
  \BibitemOpen
  \bibfield  {author} {\bibinfo {author} {\bibfnamefont {N.}~\bibnamefont
  {Andruskiewitsch}}\ and\ \bibinfo {author} {\bibfnamefont {J.~M.}\
  \bibnamefont {Mombelli}},\ }\bibfield  {title} {\enquote {\bibinfo {title}
  {On module categories over finite-dimensional {H}opf algebras},}\ }\href
  {https://www.sciencedirect.com/science/article/pii/S0021869307002426}
  {\bibfield  {journal} {\bibinfo  {journal} {Journal of Algebra}\ }\textbf
  {\bibinfo {volume} {314}},\ \bibinfo {pages} {383} (\bibinfo {year}
  {2007})},\ \Eprint {http://arxiv.org/abs/math/0608781} {arXiv:math/0608781
  [math.QA]} \BibitemShut {NoStop}%
\bibitem [{\citenamefont {Natale}(2017)}]{natale2017equivalence}%
  \BibitemOpen
  \bibfield  {author} {\bibinfo {author} {\bibfnamefont {S.}~\bibnamefont
  {Natale}},\ }\bibfield  {title} {\enquote {\bibinfo {title} {On the
  equivalence of module categories over a group-theoretical fusion category},}\
  }\href {https://www.emis.de/journals/SIGMA/2017/042/} {\bibfield  {journal}
  {\bibinfo  {journal} {SIGMA. Symmetry, Integrability and Geometry: Methods
  and Applications}\ }\textbf {\bibinfo {volume} {13}},\ \bibinfo {pages} {042}
  (\bibinfo {year} {2017})},\ \Eprint {http://arxiv.org/abs/1608.04435}
  {arXiv:1608.04435 [math.QA]} \BibitemShut {NoStop}%
\bibitem [{\citenamefont {Kong}\ and\ \citenamefont
  {Wen}(2014)}]{kong2014braided}%
  \BibitemOpen
  \bibfield  {author} {\bibinfo {author} {\bibfnamefont {L.}~\bibnamefont
  {Kong}}\ and\ \bibinfo {author} {\bibfnamefont {X.-G.}\ \bibnamefont {Wen}},\
  }\bibfield  {title} {\enquote {\bibinfo {title} {Braided fusion categories,
  gravitational anomalies, and the mathematical framework for topological
  orders in any dimensions},}\ }\href {https://arxiv.org/abs/1405.5858}
  {\bibfield  {journal} {\bibinfo  {journal} {arXiv preprint arXiv:1405.5858}\
  } (\bibinfo {year} {2014})}\BibitemShut {NoStop}%
\bibitem [{\citenamefont {Kong}\ \emph {et~al.}(2017)\citenamefont {Kong},
  \citenamefont {Wen},\ and\ \citenamefont {Zheng}}]{kong2017boundary}%
  \BibitemOpen
  \bibfield  {author} {\bibinfo {author} {\bibfnamefont {L.}~\bibnamefont
  {Kong}}, \bibinfo {author} {\bibfnamefont {X.-G.}\ \bibnamefont {Wen}}, \
  and\ \bibinfo {author} {\bibfnamefont {H.}~\bibnamefont {Zheng}},\ }\bibfield
   {title} {\enquote {\bibinfo {title} {Boundary-bulk relation in topological
  orders},}\ }\href
  {https://www.sciencedirect.com/science/article/pii/S0550321317302183}
  {\bibfield  {journal} {\bibinfo  {journal} {Nuclear Physics B}\ }\textbf
  {\bibinfo {volume} {922}},\ \bibinfo {pages} {62} (\bibinfo {year} {2017})},\
  \Eprint {http://arxiv.org/abs/1702.00673} {arXiv:1702.00673
  [cond-mat.str-el]} \BibitemShut {NoStop}%
\bibitem [{\citenamefont {Haldane}(1995)}]{Haldane1995stability}%
  \BibitemOpen
  \bibfield  {author} {\bibinfo {author} {\bibfnamefont {F.~D.~M.}\
  \bibnamefont {Haldane}},\ }\bibfield  {title} {\enquote {\bibinfo {title}
  {Stability of chiral luttinger liquids and abelian quantum hall states},}\
  }\href {\doibase 10.1103/PhysRevLett.74.2090} {\bibfield  {journal} {\bibinfo
   {journal} {Phys. Rev. Lett.}\ }\textbf {\bibinfo {volume} {74}},\ \bibinfo
  {pages} {2090} (\bibinfo {year} {1995})}\BibitemShut {NoStop}%
\bibitem [{\citenamefont {Kane}\ and\ \citenamefont
  {Fisher}(1997)}]{Kane1997quantized}%
  \BibitemOpen
  \bibfield  {author} {\bibinfo {author} {\bibfnamefont {C.~L.}\ \bibnamefont
  {Kane}}\ and\ \bibinfo {author} {\bibfnamefont {M.~P.~A.}\ \bibnamefont
  {Fisher}},\ }\bibfield  {title} {\enquote {\bibinfo {title} {Quantized
  thermal transport in the fractional quantum {H}all effect},}\ }\href
  {\doibase 10.1103/PhysRevB.55.15832} {\bibfield  {journal} {\bibinfo
  {journal} {Phys. Rev. B}\ }\textbf {\bibinfo {volume} {55}},\ \bibinfo
  {pages} {15832} (\bibinfo {year} {1997})},\ \Eprint
  {http://arxiv.org/abs/cond-mat/9603118} {arXiv:cond-mat/9603118 [cond-mat]}
  \BibitemShut {NoStop}%
\bibitem [{\citenamefont {Kapustin}\ and\ \citenamefont
  {Saulina}(2011)}]{kapustin2011topological}%
  \BibitemOpen
  \bibfield  {author} {\bibinfo {author} {\bibfnamefont {A.}~\bibnamefont
  {Kapustin}}\ and\ \bibinfo {author} {\bibfnamefont {N.}~\bibnamefont
  {Saulina}},\ }\bibfield  {title} {\enquote {\bibinfo {title} {Topological
  boundary conditions in abelian chern--simons theory},}\ }\href
  {https://linkinghub.elsevier.com/retrieve/pii/S0550321310006723} {\bibfield
  {journal} {\bibinfo  {journal} {Nuclear Physics B}\ }\textbf {\bibinfo
  {volume} {845}},\ \bibinfo {pages} {393} (\bibinfo {year} {2011})},\ \Eprint
  {http://arxiv.org/abs/1008.0654} {arXiv:1008.0654 [hep-th]} \BibitemShut
  {NoStop}%
\bibitem [{\citenamefont {Barkeshli}\ \emph {et~al.}(2013)\citenamefont
  {Barkeshli}, \citenamefont {Jian},\ and\ \citenamefont
  {Qi}}]{Barkeshli2013theory}%
  \BibitemOpen
  \bibfield  {author} {\bibinfo {author} {\bibfnamefont {M.}~\bibnamefont
  {Barkeshli}}, \bibinfo {author} {\bibfnamefont {C.-M.}\ \bibnamefont {Jian}},
  \ and\ \bibinfo {author} {\bibfnamefont {X.-L.}\ \bibnamefont {Qi}},\
  }\bibfield  {title} {\enquote {\bibinfo {title} {Theory of defects in abelian
  topological states},}\ }\href {\doibase 10.1103/PhysRevB.88.235103}
  {\bibfield  {journal} {\bibinfo  {journal} {Phys. Rev. B}\ }\textbf {\bibinfo
  {volume} {88}},\ \bibinfo {pages} {235103} (\bibinfo {year} {2013})},\
  \Eprint {http://arxiv.org/abs/1305.7203} {arXiv:1305.7203 [cond-mat.str-e]}
  \BibitemShut {NoStop}%
\bibitem [{\citenamefont {Kong}(2014)}]{Kong2014}%
  \BibitemOpen
  \bibfield  {author} {\bibinfo {author} {\bibfnamefont {L.}~\bibnamefont
  {Kong}},\ }\bibfield  {title} {\enquote {\bibinfo {title} {Anyon condensation
  and tensor categories},}\ }\href {\doibase
  https://doi.org/10.1016/j.nuclphysb.2014.07.003} {\bibfield  {journal}
  {\bibinfo  {journal} {Nuclear Physics B}\ }\textbf {\bibinfo {volume}
  {886}},\ \bibinfo {pages} {436 } (\bibinfo {year} {2014})},\ \Eprint
  {http://arxiv.org/abs/1307.8244} {arXiv:1307.8244 [cond-mat.str-el]}
  \BibitemShut {NoStop}%
\bibitem [{\citenamefont {Wang}\ and\ \citenamefont
  {Wen}(2015)}]{Wang2015boundary}%
  \BibitemOpen
  \bibfield  {author} {\bibinfo {author} {\bibfnamefont {J.~C.}\ \bibnamefont
  {Wang}}\ and\ \bibinfo {author} {\bibfnamefont {X.-G.}\ \bibnamefont {Wen}},\
  }\bibfield  {title} {\enquote {\bibinfo {title} {Boundary degeneracy of
  topological order},}\ }\href {\doibase 10.1103/PhysRevB.91.125124} {\bibfield
   {journal} {\bibinfo  {journal} {Phys. Rev. B}\ }\textbf {\bibinfo {volume}
  {91}},\ \bibinfo {pages} {125124} (\bibinfo {year} {2015})},\ \Eprint
  {http://arxiv.org/abs/1212.4863} {arXiv:1212.4863 [cond-mat.str-e]}
  \BibitemShut {NoStop}%
\bibitem [{\citenamefont {Lan}\ \emph {et~al.}(2015)\citenamefont {Lan},
  \citenamefont {Wang},\ and\ \citenamefont {Wen}}]{Lan2015gapped}%
  \BibitemOpen
  \bibfield  {author} {\bibinfo {author} {\bibfnamefont {T.}~\bibnamefont
  {Lan}}, \bibinfo {author} {\bibfnamefont {J.~C.}\ \bibnamefont {Wang}}, \
  and\ \bibinfo {author} {\bibfnamefont {X.-G.}\ \bibnamefont {Wen}},\
  }\bibfield  {title} {\enquote {\bibinfo {title} {Gapped domain walls, gapped
  boundaries, and topological degeneracy},}\ }\href {\doibase
  10.1103/PhysRevLett.114.076402} {\bibfield  {journal} {\bibinfo  {journal}
  {Phys. Rev. Lett.}\ }\textbf {\bibinfo {volume} {114}},\ \bibinfo {pages}
  {076402} (\bibinfo {year} {2015})},\ \Eprint {http://arxiv.org/abs/1408.6514}
  {arXiv:1408.6514 [cond-mat.str-e]} \BibitemShut {NoStop}%
\bibitem [{\citenamefont {Seiberg}\ and\ \citenamefont
  {Witten}(2016)}]{seiberg2016gapped}%
  \BibitemOpen
  \bibfield  {author} {\bibinfo {author} {\bibfnamefont {N.}~\bibnamefont
  {Seiberg}}\ and\ \bibinfo {author} {\bibfnamefont {E.}~\bibnamefont
  {Witten}},\ }\bibfield  {title} {\enquote {\bibinfo {title} {Gapped boundary
  phases of topological insulators via weak coupling},}\ }\href
  {https://academic.oup.com/ptep/article/2016/12/12C101/2624087} {\bibfield
  {journal} {\bibinfo  {journal} {Progress of Theoretical and Experimental
  Physics}\ }\textbf {\bibinfo {volume} {2016}},\ \bibinfo {pages} {12C101}
  (\bibinfo {year} {2016})},\ \Eprint {http://arxiv.org/abs/1602.04251}
  {arXiv:1602.04251 [cond-mat.str-e]} \BibitemShut {NoStop}%
\bibitem [{\citenamefont {Hu}\ \emph {et~al.}(2018{\natexlab{b}})\citenamefont
  {Hu}, \citenamefont {Luo}, \citenamefont {Pankovich}, \citenamefont {Wan},\
  and\ \citenamefont {Wu}}]{hu2018boundary}%
  \BibitemOpen
  \bibfield  {author} {\bibinfo {author} {\bibfnamefont {Y.}~\bibnamefont
  {Hu}}, \bibinfo {author} {\bibfnamefont {Z.-X.}\ \bibnamefont {Luo}},
  \bibinfo {author} {\bibfnamefont {R.}~\bibnamefont {Pankovich}}, \bibinfo
  {author} {\bibfnamefont {Y.}~\bibnamefont {Wan}}, \ and\ \bibinfo {author}
  {\bibfnamefont {Y.-S.}\ \bibnamefont {Wu}},\ }\bibfield  {title} {\enquote
  {\bibinfo {title} {Boundary hamiltonian theory for gapped topological phases
  on an open surface},}\ }\href
  {https://link.springer.com/article/10.1007/JHEP01(2018)134} {\bibfield
  {journal} {\bibinfo  {journal} {Journal of High Energy Physics}\ }\textbf
  {\bibinfo {volume} {2018}},\ \bibinfo {pages} {1} (\bibinfo {year}
  {2018}{\natexlab{b}})},\ \Eprint {http://arxiv.org/abs/1706.03329}
  {arXiv:1706.03329 [cond-mat.str-el]} \BibitemShut {NoStop}%
\bibitem [{\citenamefont {Lan}\ \emph {et~al.}(2020)\citenamefont {Lan},
  \citenamefont {Wen}, \citenamefont {Kong},\ and\ \citenamefont
  {Wen}}]{Lan2020Gapped}%
  \BibitemOpen
  \bibfield  {author} {\bibinfo {author} {\bibfnamefont {T.}~\bibnamefont
  {Lan}}, \bibinfo {author} {\bibfnamefont {X.}~\bibnamefont {Wen}}, \bibinfo
  {author} {\bibfnamefont {L.}~\bibnamefont {Kong}}, \ and\ \bibinfo {author}
  {\bibfnamefont {X.-G.}\ \bibnamefont {Wen}},\ }\bibfield  {title} {\enquote
  {\bibinfo {title} {Gapped domain walls between 2+1d topologically ordered
  states},}\ }\href {\doibase 10.1103/PhysRevResearch.2.023331} {\bibfield
  {journal} {\bibinfo  {journal} {Phys. Rev. Res.}\ }\textbf {\bibinfo {volume}
  {2}},\ \bibinfo {pages} {023331} (\bibinfo {year} {2020})},\ \Eprint
  {http://arxiv.org/abs/1911.08470} {arXiv:1911.08470 [cond-mat.str-e]}
  \BibitemShut {NoStop}%
\bibitem [{\citenamefont {Freed}\ and\ \citenamefont
  {Teleman}(2021)}]{freed2021gapped}%
  \BibitemOpen
  \bibfield  {author} {\bibinfo {author} {\bibfnamefont {D.~S.}\ \bibnamefont
  {Freed}}\ and\ \bibinfo {author} {\bibfnamefont {C.}~\bibnamefont
  {Teleman}},\ }\bibfield  {title} {\enquote {\bibinfo {title} {Gapped boundary
  theories in three dimensions},}\ }\href
  {https://link.springer.com/article/10.1007/s00220-021-04192-x} {\bibfield
  {journal} {\bibinfo  {journal} {Communications in Mathematical Physics}\
  }\textbf {\bibinfo {volume} {388}},\ \bibinfo {pages} {845} (\bibinfo {year}
  {2021})},\ \Eprint {http://arxiv.org/abs/2006.10200} {arXiv:2006.10200
  [math.QA]} \BibitemShut {NoStop}%
\bibitem [{\citenamefont {Ganeshan}\ and\ \citenamefont
  {Levin}(2021)}]{ganeshan2021ungappable}%
  \BibitemOpen
  \bibfield  {author} {\bibinfo {author} {\bibfnamefont {S.}~\bibnamefont
  {Ganeshan}}\ and\ \bibinfo {author} {\bibfnamefont {M.}~\bibnamefont
  {Levin}},\ }\bibfield  {title} {\enquote {\bibinfo {title} {Ungappable edge
  theories with finite dimensional {H}ilbert spaces},}\ }\href@noop {} {\
  (\bibinfo {year} {2021})},\ \Eprint {http://arxiv.org/abs/2109.11539}
  {arXiv:2109.11539 [cond-mat.str-el]} \BibitemShut {NoStop}%
\bibitem [{\citenamefont {Cowtan}\ and\ \citenamefont
  {Majid}(2022)}]{cowtan2021quantum}%
  \BibitemOpen
  \bibfield  {author} {\bibinfo {author} {\bibfnamefont {A.}~\bibnamefont
  {Cowtan}}\ and\ \bibinfo {author} {\bibfnamefont {S.}~\bibnamefont {Majid}},\
  }\bibfield  {title} {\enquote {\bibinfo {title} {Quantum double aspects of
  surface code models},}\ }\href {\doibase 10.1063/5.0063768} {\bibfield
  {journal} {\bibinfo  {journal} {Journal of Mathematical Physics}\ }\textbf
  {\bibinfo {volume} {63}},\ \bibinfo {pages} {Paper No. 042202, 49} (\bibinfo
  {year} {2022})},\ \Eprint {http://arxiv.org/abs/2107.04411} {arXiv:2107.04411
  [quant-ph]} \BibitemShut {NoStop}%
\bibitem [{\citenamefont {Balsam}\ and\ \citenamefont
  {Kirillov}(2012)}]{balsam2012kitaevs}%
  \BibitemOpen
  \bibfield  {author} {\bibinfo {author} {\bibfnamefont {B.}~\bibnamefont
  {Balsam}}\ and\ \bibinfo {author} {\bibfnamefont {A.}~\bibnamefont
  {Kirillov}},\ }\href@noop {} {\enquote {\bibinfo {title} {Kitaev's lattice
  model and {T}uraev-{V}iro {TQFT}s},}\ } (\bibinfo {year} {2012}),\ \Eprint
  {http://arxiv.org/abs/1206.2308} {arXiv:1206.2308 [math.QA]} \BibitemShut
  {NoStop}%
\bibitem [{\citenamefont {Gelaki}\ and\ \citenamefont
  {Nikshych}(2008)}]{gelaki2008nilpotent}%
  \BibitemOpen
  \bibfield  {author} {\bibinfo {author} {\bibfnamefont {S.}~\bibnamefont
  {Gelaki}}\ and\ \bibinfo {author} {\bibfnamefont {D.}~\bibnamefont
  {Nikshych}},\ }\bibfield  {title} {\enquote {\bibinfo {title} {Nilpotent
  fusion categories},}\ }\href
  {https://www.sciencedirect.com/science/article/pii/S0001870807002290}
  {\bibfield  {journal} {\bibinfo  {journal} {Advances in Mathematics}\
  }\textbf {\bibinfo {volume} {217}},\ \bibinfo {pages} {1053} (\bibinfo {year}
  {2008})},\ \Eprint {http://arxiv.org/abs/math/0610726} {arXiv:math/0610726
  [math.QA]} \BibitemShut {NoStop}%
\bibitem [{\citenamefont {Burciu}(2012)}]{burciu2012irreducible}%
  \BibitemOpen
  \bibfield  {author} {\bibinfo {author} {\bibfnamefont {S.}~\bibnamefont
  {Burciu}},\ }\bibfield  {title} {\enquote {\bibinfo {title} {On the
  irreducible representations of generalized quantum doubles},}\ }\href@noop {}
  {\  (\bibinfo {year} {2012})},\ \Eprint {http://arxiv.org/abs/1202.4315}
  {arXiv:1202.4315 [math.QA]} \BibitemShut {NoStop}%
\bibitem [{\citenamefont {Burciu}(2017)}]{burciu2017grothendieck}%
  \BibitemOpen
  \bibfield  {author} {\bibinfo {author} {\bibfnamefont {S.}~\bibnamefont
  {Burciu}},\ }\bibfield  {title} {\enquote {\bibinfo {title} {On the
  grothendieck rings of generalized drinfeld doubles},}\ }\href
  {https://www.sciencedirect.com/science/article/pii/S0021869317303046}
  {\bibfield  {journal} {\bibinfo  {journal} {Journal of Algebra}\ }\textbf
  {\bibinfo {volume} {486}},\ \bibinfo {pages} {14} (\bibinfo {year}
  {2017})}\BibitemShut {NoStop}%
\bibitem [{\citenamefont {Montgomery}(1993)}]{montgomery1993hopf}%
  \BibitemOpen
  \bibfield  {author} {\bibinfo {author} {\bibfnamefont {S.}~\bibnamefont
  {Montgomery}},\ }\href {https://bookstore.ams.org/cbms-82} {\emph {\bibinfo
  {title} {Hopf algebras and their actions on rings}}},\ \bibinfo {number}
  {82}\ (\bibinfo  {publisher} {American Mathematical Soc.},\ \bibinfo {year}
  {1993})\BibitemShut {NoStop}%
\bibitem [{\citenamefont {Majid}(2000)}]{majid2000foundations}%
  \BibitemOpen
  \bibfield  {author} {\bibinfo {author} {\bibfnamefont {S.}~\bibnamefont
  {Majid}},\ }\href
  {https://www.cambridge.org/core/books/foundations-of-quantum-group-theory/BDBBAB645399E72AA1A01BDECAFC7E8C#}
  {\emph {\bibinfo {title} {Foundations of quantum group theory}}}\ (\bibinfo
  {publisher} {Cambridge university press, Cambridge},\ \bibinfo {year}
  {2000})\ pp.\ \bibinfo {pages} {x+607}\BibitemShut {NoStop}%
\bibitem [{\citenamefont {Fuchs}\ and\ \citenamefont
  {Stigner}(2009)}]{fuchs2009frobenius}%
  \BibitemOpen
  \bibfield  {author} {\bibinfo {author} {\bibfnamefont {J.}~\bibnamefont
  {Fuchs}}\ and\ \bibinfo {author} {\bibfnamefont {C.}~\bibnamefont
  {Stigner}},\ }\bibfield  {title} {\enquote {\bibinfo {title} {On {F}robenius
  algebras in rigid monoidal categories},}\ }\href@noop {} {\bibfield
  {journal} {\bibinfo  {journal} {Arabian J. Sci. Eng.}\ }\textbf {\bibinfo
  {volume} {33-2C}},\ \bibinfo {pages} {175} (\bibinfo {year} {2009})},\
  \Eprint {http://arxiv.org/abs/0901.4886} {arXiv:0901.4886 [math.CT]}
  \BibitemShut {NoStop}%
\bibitem [{\citenamefont {Jia}\ \emph {et~al.}(2023)\citenamefont {Jia},
  \citenamefont {Tan}, \citenamefont {Kaszlikowski},\ and\ \citenamefont
  {Chang}}]{jia2023weak}%
  \BibitemOpen
  \bibfield  {author} {\bibinfo {author} {\bibfnamefont {Z.}~\bibnamefont
  {Jia}}, \bibinfo {author} {\bibfnamefont {S.}~\bibnamefont {Tan}}, \bibinfo
  {author} {\bibfnamefont {D.}~\bibnamefont {Kaszlikowski}}, \ and\ \bibinfo
  {author} {\bibfnamefont {L.}~\bibnamefont {Chang}},\ }\bibfield  {title}
  {\enquote {\bibinfo {title} {On weak {H}opf symmetry and weak {H}opf quantum
  double model},}\ }\href {https://arxiv.org/abs/2302.08131} {\bibfield
  {journal} {\bibinfo  {journal} {arXiv preprint arXiv:2302.08131}\ } (\bibinfo
  {year} {2023})}\BibitemShut {NoStop}%
\bibitem [{\citenamefont {Aguiar}(2000)}]{aguiar2000note}%
  \BibitemOpen
  \bibfield  {author} {\bibinfo {author} {\bibfnamefont {M.}~\bibnamefont
  {Aguiar}},\ }\bibfield  {title} {\enquote {\bibinfo {title} {A note on
  strongly separable algebras},}\ }\href@noop {} {\bibfield  {journal}
  {\bibinfo  {journal} {Bol. Acad. Nac. Cienc.(C{\'o}rdoba)}\ }\textbf
  {\bibinfo {volume} {65}},\ \bibinfo {pages} {51} (\bibinfo {year}
  {2000})}\BibitemShut {NoStop}%
\bibitem [{\citenamefont {Koppen}(2020)}]{koppen2020defects}%
  \BibitemOpen
  \bibfield  {author} {\bibinfo {author} {\bibfnamefont {V.}~\bibnamefont
  {Koppen}},\ }\bibfield  {title} {\enquote {\bibinfo {title} {Defects in
  {K}itaev models and bicomodule algebras},}\ }\href
  {https://arxiv.org/abs/2001.10578} {\bibfield  {journal} {\bibinfo  {journal}
  {arXiv preprint arXiv:2001.10578}\ } (\bibinfo {year} {2020})}\BibitemShut
  {NoStop}%
\bibitem [{\citenamefont {Jia}\ and\ \citenamefont
  {Kaszlikowski}(2022)}]{Jia2022electric}%
  \BibitemOpen
  \bibfield  {author} {\bibinfo {author} {\bibfnamefont {Z.}~\bibnamefont
  {Jia}}\ and\ \bibinfo {author} {\bibfnamefont {D.}~\bibnamefont
  {Kaszlikowski}},\ }\href {\doibase 10.48550/ARXIV.2201.12361} {\enquote
  {\bibinfo {title} {Electric-magnetic duality and $\mathbb{Z}_2$ symmetry
  enriched {A}belian lattice gauge theory},}\ } (\bibinfo {year} {2022}),\
  \Eprint {http://arxiv.org/abs/2201.12361} {arXiv:2201.12361 [quant-ph]}
  \BibitemShut {NoStop}%
\bibitem [{\citenamefont {Etingof}\ \emph {et~al.}(2016)\citenamefont
  {Etingof}, \citenamefont {Gelaki}, \citenamefont {Nikshych},\ and\
  \citenamefont {Ostrik}}]{etingof2016tensor}%
  \BibitemOpen
  \bibfield  {author} {\bibinfo {author} {\bibfnamefont {P.}~\bibnamefont
  {Etingof}}, \bibinfo {author} {\bibfnamefont {S.}~\bibnamefont {Gelaki}},
  \bibinfo {author} {\bibfnamefont {D.}~\bibnamefont {Nikshych}}, \ and\
  \bibinfo {author} {\bibfnamefont {V.}~\bibnamefont {Ostrik}},\ }\href
  {https://bookstore.ams.org/surv-205} {\emph {\bibinfo {title} {Tensor
  categories}}},\ Vol.\ \bibinfo {volume} {205}\ (\bibinfo  {publisher}
  {American Mathematical Soc.},\ \bibinfo {year} {2016})\BibitemShut {NoStop}%
\bibitem [{\citenamefont {Eilenberg}(1960)}]{eilenberg1960abstract}%
  \BibitemOpen
  \bibfield  {author} {\bibinfo {author} {\bibfnamefont {S.}~\bibnamefont
  {Eilenberg}},\ }\bibfield  {title} {\enquote {\bibinfo {title} {Abstract
  description of some basic functors},}\ }\href@noop {} {\bibfield  {journal}
  {\bibinfo  {journal} {J. Indian Math. Soc}\ }\textbf {\bibinfo {volume}
  {24}},\ \bibinfo {pages} {231} (\bibinfo {year} {1960})}\BibitemShut
  {NoStop}%
\bibitem [{\citenamefont {Watts}(1960)}]{watts1960intrinsic}%
  \BibitemOpen
  \bibfield  {author} {\bibinfo {author} {\bibfnamefont {C.~E.}\ \bibnamefont
  {Watts}},\ }\bibfield  {title} {\enquote {\bibinfo {title} {Intrinsic
  characterizations of some additive functors},}\ }\href@noop {} {\bibfield
  {journal} {\bibinfo  {journal} {Proceedings of the American Mathematical
  Society}\ }\textbf {\bibinfo {volume} {11}},\ \bibinfo {pages} {5} (\bibinfo
  {year} {1960})}\BibitemShut {NoStop}%
\bibitem [{\citenamefont {Yan}\ and\ \citenamefont
  {Zhu}(1998)}]{yan1998stabilizer}%
  \BibitemOpen
  \bibfield  {author} {\bibinfo {author} {\bibfnamefont {M.}~\bibnamefont
  {Yan}}\ and\ \bibinfo {author} {\bibfnamefont {Y.}~\bibnamefont {Zhu}},\
  }\bibfield  {title} {\enquote {\bibinfo {title} {Stabilizer for {H}opf
  algebra actions},}\ }\href
  {https://www.tandfonline.com/doi/abs/10.1080/00927879808826382?journalCode=lagb20}
  {\bibfield  {journal} {\bibinfo  {journal} {Communications in Algebra}\
  }\textbf {\bibinfo {volume} {26}},\ \bibinfo {pages} {3885} (\bibinfo {year}
  {1998})}\BibitemShut {NoStop}%
\bibitem [{\citenamefont {Jia}\ and\ \citenamefont {Tan}()}]{jia2023extended}%
  \BibitemOpen
  \bibfield  {author} {\bibinfo {author} {\bibfnamefont {Z.}~\bibnamefont
  {Jia}}\ and\ \bibinfo {author} {\bibfnamefont {S.}~\bibnamefont {Tan}},\
  }\bibfield  {title} {\enquote {\bibinfo {title} {Gapped boundary and domain
  wall theories for extended string-net model: dyonic excitations and
  electromagnetic duality},}\ }\href@noop {} {\bibinfo  {journal} {In
  preparation}\ }\BibitemShut {NoStop}%
\bibitem [{\citenamefont {Caenepeel}\ \emph {et~al.}(2007)\citenamefont
  {Caenepeel}, \citenamefont {Crivei}, \citenamefont {Marcus},\ and\
  \citenamefont {Takeuchi}}]{caenepeel2007morita}%
  \BibitemOpen
\bibfield  {journal} {  }\bibfield  {author} {\bibinfo {author} {\bibfnamefont
  {S.}~\bibnamefont {Caenepeel}}, \bibinfo {author} {\bibfnamefont
  {S.}~\bibnamefont {Crivei}}, \bibinfo {author} {\bibfnamefont
  {A.}~\bibnamefont {Marcus}}, \ and\ \bibinfo {author} {\bibfnamefont
  {M.}~\bibnamefont {Takeuchi}},\ }\bibfield  {title} {\enquote {\bibinfo
  {title} {Morita equivalences induced by bimodules over {H}opf--{G}alois
  extensions},}\ }\href
  {https://www.sciencedirect.com/science/article/pii/S002186930700141X}
  {\bibfield  {journal} {\bibinfo  {journal} {Journal of Algebra}\ }\textbf
  {\bibinfo {volume} {314}},\ \bibinfo {pages} {267} (\bibinfo {year}
  {2007})},\ \Eprint {http://arxiv.org/abs/math/0608572} {arXiv:math/0608572
  [math.RA]} \BibitemShut {NoStop}%
\bibitem [{\citenamefont
  {Schauenburg}(2002{\natexlab{a}})}]{schauenburg02extension}%
  \BibitemOpen
  \bibfield  {author} {\bibinfo {author} {\bibfnamefont {P.}~\bibnamefont
  {Schauenburg}},\ }\bibfield  {title} {\enquote {\bibinfo {title} {Hopf
  algebra extensions and monoidal categories},}\ }\href@noop {} {\bibfield
  {journal} {\bibinfo  {journal} {New directions in Hopf algebras}\ }\textbf
  {\bibinfo {volume} {43}},\ \bibinfo {pages} {321} (\bibinfo {year}
  {2002}{\natexlab{a}})}\BibitemShut {NoStop}%
\bibitem [{\citenamefont
  {Schauenburg}(2002{\natexlab{b}})}]{schauenburg02coquasibialg}%
  \BibitemOpen
  \bibfield  {author} {\bibinfo {author} {\bibfnamefont {P.}~\bibnamefont
  {Schauenburg}},\ }\bibfield  {title} {\enquote {\bibinfo {title} {Hopf
  bimodules, coquasibialgebras, and an exact sequence of {K}ac},}\ }\href
  {https://www.sciencedirect.com/science/article/pii/S000187080192016X}
  {\bibfield  {journal} {\bibinfo  {journal} {Advances in mathematics}\
  }\textbf {\bibinfo {volume} {165}},\ \bibinfo {pages} {194} (\bibinfo {year}
  {2002}{\natexlab{b}})}\BibitemShut {NoStop}%
\bibitem [{\citenamefont {Etingof}\ \emph {et~al.}(2021)\citenamefont
  {Etingof}, \citenamefont {Kinser},\ and\ \citenamefont
  {Walton}}]{etingof2021tensor}%
  \BibitemOpen
  \bibfield  {author} {\bibinfo {author} {\bibfnamefont {P.}~\bibnamefont
  {Etingof}}, \bibinfo {author} {\bibfnamefont {R.}~\bibnamefont {Kinser}}, \
  and\ \bibinfo {author} {\bibfnamefont {C.}~\bibnamefont {Walton}},\
  }\bibfield  {title} {\enquote {\bibinfo {title} {Tensor algebras in finite
  tensor categories},}\ }\href
  {https://academic.oup.com/imrn/article/2021/24/18529/5674136} {\bibfield
  {journal} {\bibinfo  {journal} {International Mathematics Research Notices}\
  }\textbf {\bibinfo {volume} {2021}},\ \bibinfo {pages} {18529} (\bibinfo
  {year} {2021})},\ \Eprint {http://arxiv.org/abs/1906.02828} {arXiv:1906.02828
  [math.QA]} \BibitemShut {NoStop}%
\bibitem [{\citenamefont {Kong}(2017)}]{Kong2019private}%
  \BibitemOpen
  \bibfield  {author} {\bibinfo {author} {\bibfnamefont {L.}~\bibnamefont
  {Kong}},\ }\bibfield  {title} {\enquote {\bibinfo {title} {A short course on
  tensor categories and topological orders},}\ }\href
  {https://kongliang.wordpress.com/2017/08/11/a-short-course-on-tensor-categories-and-topological-orders}
  {\bibfield  {journal} {\bibinfo  {journal}
  {https://kongliang.wordpress.com/2017/08/11/a-short-course-on-tensor-categories-and-topological-orders}\
  } (\bibinfo {year} {2017})}\BibitemShut {NoStop}%
\bibitem [{\citenamefont {Turaev}\ and\ \citenamefont
  {Virelizier}(2017)}]{turaev2017monoidal}%
  \BibitemOpen
  \bibfield  {author} {\bibinfo {author} {\bibfnamefont {V.}~\bibnamefont
  {Turaev}}\ and\ \bibinfo {author} {\bibfnamefont {A.}~\bibnamefont
  {Virelizier}},\ }\href
  {https://link.springer.com/book/10.1007/978-3-319-49834-8} {\emph {\bibinfo
  {title} {Monoidal categories and topological field theory}}},\ \bibinfo
  {series} {Progress in Mathematics}, Vol.\ \bibinfo {volume} {322}\ (\bibinfo
  {year} {2017})\ pp.\ \bibinfo {pages} {xii+523}\BibitemShut {NoStop}%
\bibitem [{\citenamefont {Jia}\ \emph {et~al.}(2020)\citenamefont {Jia},
  \citenamefont {Wei}, \citenamefont {Wu}, \citenamefont {Guo},\ and\
  \citenamefont {Guo}}]{jia2020entanglement}%
  \BibitemOpen
  \bibfield  {author} {\bibinfo {author} {\bibfnamefont {Z.-A.}\ \bibnamefont
  {Jia}}, \bibinfo {author} {\bibfnamefont {L.}~\bibnamefont {Wei}}, \bibinfo
  {author} {\bibfnamefont {Y.-C.}\ \bibnamefont {Wu}}, \bibinfo {author}
  {\bibfnamefont {G.-C.}\ \bibnamefont {Guo}}, \ and\ \bibinfo {author}
  {\bibfnamefont {G.-P.}\ \bibnamefont {Guo}},\ }\bibfield  {title} {\enquote
  {\bibinfo {title} {Entanglement area law for shallow and deep quantum neural
  network states},}\ }\href
  {https://iopscience.iop.org/article/10.1088/1367-2630/ab8262/meta} {\bibfield
   {journal} {\bibinfo  {journal} {New Journal of Physics}\ }\textbf {\bibinfo
  {volume} {22}},\ \bibinfo {pages} {053022} (\bibinfo {year}
  {2020})}\BibitemShut {NoStop}%
\bibitem [{\citenamefont {Chen}\ \emph {et~al.}(2018)\citenamefont {Chen},
  \citenamefont {Hung}, \citenamefont {Li},\ and\ \citenamefont
  {Wan}}]{chen2018entanglement}%
  \BibitemOpen
  \bibfield  {author} {\bibinfo {author} {\bibfnamefont {C.}~\bibnamefont
  {Chen}}, \bibinfo {author} {\bibfnamefont {L.-Y.}\ \bibnamefont {Hung}},
  \bibinfo {author} {\bibfnamefont {Y.}~\bibnamefont {Li}}, \ and\ \bibinfo
  {author} {\bibfnamefont {Y.}~\bibnamefont {Wan}},\ }\bibfield  {title}
  {\enquote {\bibinfo {title} {Entanglement entropy of topological orders with
  boundaries},}\ }\href
  {https://link.springer.com/article/10.1007/JHEP06(2018)113} {\bibfield
  {journal} {\bibinfo  {journal} {Journal of High Energy Physics}\ }\textbf
  {\bibinfo {volume} {2018}},\ \bibinfo {pages} {1} (\bibinfo {year} {2018})},\
  \Eprint {http://arxiv.org/abs/1804.05725} {arXiv:1804.05725 [hep-th]}
  \BibitemShut {NoStop}%
\bibitem [{\citenamefont {Lou}\ \emph {et~al.}(2019)\citenamefont {Lou},
  \citenamefont {Shen},\ and\ \citenamefont {Hung}}]{lou2019ishibashiI}%
  \BibitemOpen
  \bibfield  {author} {\bibinfo {author} {\bibfnamefont {J.}~\bibnamefont
  {Lou}}, \bibinfo {author} {\bibfnamefont {C.}~\bibnamefont {Shen}}, \ and\
  \bibinfo {author} {\bibfnamefont {L.-Y.}\ \bibnamefont {Hung}},\ }\bibfield
  {title} {\enquote {\bibinfo {title} {{I}shibashi states, topological orders
  with boundaries and topological entanglement entropy. {P}art {I}},}\
  }\href@noop {} {\bibfield  {journal} {\bibinfo  {journal} {Journal of High
  Energy Physics}\ }\textbf {\bibinfo {volume} {2019}},\ \bibinfo {pages} {1}
  (\bibinfo {year} {2019})},\ \Eprint {http://arxiv.org/abs/1901.08238}
  {arXiv:1901.08238 [hep-th]} \BibitemShut {NoStop}%
\bibitem [{\citenamefont {Shen}\ \emph {et~al.}(2019)\citenamefont {Shen},
  \citenamefont {Lou},\ and\ \citenamefont {Hung}}]{shen2019ishibashiII}%
  \BibitemOpen
  \bibfield  {author} {\bibinfo {author} {\bibfnamefont {C.}~\bibnamefont
  {Shen}}, \bibinfo {author} {\bibfnamefont {J.}~\bibnamefont {Lou}}, \ and\
  \bibinfo {author} {\bibfnamefont {L.-Y.}\ \bibnamefont {Hung}},\ }\bibfield
  {title} {\enquote {\bibinfo {title} {Ishibashi states, topological orders
  with boundaries and topological entanglement entropy. {P}art {II}. {C}utting
  through the boundary},}\ }\href
  {https://link.springer.com/article/10.1007/JHEP11(2019)168} {\bibfield
  {journal} {\bibinfo  {journal} {Journal of High Energy Physics}\ }\textbf
  {\bibinfo {volume} {2019}},\ \bibinfo {pages} {1} (\bibinfo {year} {2019})},\
  \Eprint {http://arxiv.org/abs/1908.07700} {arXiv:1908.07700 [hep-th]}
  \BibitemShut {NoStop}%
\bibitem [{\citenamefont {Hu}\ and\ \citenamefont
  {Wan}(2019)}]{hu2019entanglement}%
  \BibitemOpen
  \bibfield  {author} {\bibinfo {author} {\bibfnamefont {Y.}~\bibnamefont
  {Hu}}\ and\ \bibinfo {author} {\bibfnamefont {Y.}~\bibnamefont {Wan}},\
  }\bibfield  {title} {\enquote {\bibinfo {title} {Entanglement entropy,
  quantum fluctuations, and thermal entropy in topological phases},}\ }\href
  {https://link.springer.com/article/10.1007/JHEP05(2019)110} {\bibfield
  {journal} {\bibinfo  {journal} {Journal of High Energy Physics}\ }\textbf
  {\bibinfo {volume} {2019}},\ \bibinfo {pages} {1} (\bibinfo {year} {2019})},\
  \Eprint {http://arxiv.org/abs/1901.09033} {arXiv:1901.09033
  [cond-mat.str-el]} \BibitemShut {NoStop}%
\bibitem [{\citenamefont {Brown}\ \emph {et~al.}(2013)\citenamefont {Brown},
  \citenamefont {Bartlett}, \citenamefont {Doherty},\ and\ \citenamefont
  {Barrett}}]{Brown2013topological}%
  \BibitemOpen
  \bibfield  {author} {\bibinfo {author} {\bibfnamefont {B.~J.}\ \bibnamefont
  {Brown}}, \bibinfo {author} {\bibfnamefont {S.~D.}\ \bibnamefont {Bartlett}},
  \bibinfo {author} {\bibfnamefont {A.~C.}\ \bibnamefont {Doherty}}, \ and\
  \bibinfo {author} {\bibfnamefont {S.~D.}\ \bibnamefont {Barrett}},\
  }\bibfield  {title} {\enquote {\bibinfo {title} {Topological entanglement
  entropy with a twist},}\ }\href {\doibase 10.1103/PhysRevLett.111.220402}
  {\bibfield  {journal} {\bibinfo  {journal} {Phys. Rev. Lett.}\ }\textbf
  {\bibinfo {volume} {111}},\ \bibinfo {pages} {220402} (\bibinfo {year}
  {2013})},\ \Eprint {http://arxiv.org/abs/1303.4455} {arXiv:1303.4455
  [quant-ph]} \BibitemShut {NoStop}%
\bibitem [{\citenamefont {Shi}\ and\ \citenamefont
  {Kim}(2021)}]{Shi2021domain}%
  \BibitemOpen
  \bibfield  {author} {\bibinfo {author} {\bibfnamefont {B.}~\bibnamefont
  {Shi}}\ and\ \bibinfo {author} {\bibfnamefont {I.~H.}\ \bibnamefont {Kim}},\
  }\bibfield  {title} {\enquote {\bibinfo {title} {Domain wall topological
  entanglement entropy},}\ }\href {\doibase 10.1103/PhysRevLett.126.141602}
  {\bibfield  {journal} {\bibinfo  {journal} {Phys. Rev. Lett.}\ }\textbf
  {\bibinfo {volume} {126}},\ \bibinfo {pages} {141602} (\bibinfo {year}
  {2021})},\ \Eprint {http://arxiv.org/abs/2008.11794} {arXiv:2008.11794
  [cond-mat.str-el]} \BibitemShut {NoStop}%
\bibitem [{\citenamefont {Barkeshli}\ \emph {et~al.}(2019)\citenamefont
  {Barkeshli}, \citenamefont {Bonderson}, \citenamefont {Cheng},\ and\
  \citenamefont {Wang}}]{Barkeshli2019symmetry}%
  \BibitemOpen
  \bibfield  {author} {\bibinfo {author} {\bibfnamefont {M.}~\bibnamefont
  {Barkeshli}}, \bibinfo {author} {\bibfnamefont {P.}~\bibnamefont
  {Bonderson}}, \bibinfo {author} {\bibfnamefont {M.}~\bibnamefont {Cheng}}, \
  and\ \bibinfo {author} {\bibfnamefont {Z.}~\bibnamefont {Wang}},\ }\bibfield
  {title} {\enquote {\bibinfo {title} {Symmetry fractionalization, defects, and
  gauging of topological phases},}\ }\href {\doibase
  10.1103/PhysRevB.100.115147} {\bibfield  {journal} {\bibinfo  {journal}
  {Phys. Rev. B}\ }\textbf {\bibinfo {volume} {100}},\ \bibinfo {pages}
  {115147} (\bibinfo {year} {2019})},\ \Eprint {http://arxiv.org/abs/1410.4540}
  {arXiv:1410.4540 [cond-mat.str-el]} \BibitemShut {NoStop}%
\bibitem [{\citenamefont {Barkeshli}\ and\ \citenamefont
  {Cheng}(2020)}]{barkeshli2020relative}%
  \BibitemOpen
  \bibfield  {author} {\bibinfo {author} {\bibfnamefont {M.}~\bibnamefont
  {Barkeshli}}\ and\ \bibinfo {author} {\bibfnamefont {M.}~\bibnamefont
  {Cheng}},\ }\bibfield  {title} {\enquote {\bibinfo {title} {Relative
  anomalies in (2+1)d symmetry enriched topological states},}\ }\href
  {https://scipost.org/SciPostPhys.8.2.028} {\bibfield  {journal} {\bibinfo
  {journal} {SciPost Physics}\ }\textbf {\bibinfo {volume} {8}},\ \bibinfo
  {pages} {028} (\bibinfo {year} {2020})},\ \Eprint
  {http://arxiv.org/abs/1906.10691} {arXiv:1906.10691 [cond-mat.str-el]}
  \BibitemShut {NoStop}%
\bibitem [{\citenamefont {Barkeshli}\ \emph {et~al.}(2020)\citenamefont
  {Barkeshli}, \citenamefont {Bonderson}, \citenamefont {Cheng}, \citenamefont
  {Jian},\ and\ \citenamefont {Walker}}]{barkeshli2020reflection}%
  \BibitemOpen
  \bibfield  {author} {\bibinfo {author} {\bibfnamefont {M.}~\bibnamefont
  {Barkeshli}}, \bibinfo {author} {\bibfnamefont {P.}~\bibnamefont
  {Bonderson}}, \bibinfo {author} {\bibfnamefont {M.}~\bibnamefont {Cheng}},
  \bibinfo {author} {\bibfnamefont {C.-M.}\ \bibnamefont {Jian}}, \ and\
  \bibinfo {author} {\bibfnamefont {K.}~\bibnamefont {Walker}},\ }\bibfield
  {title} {\enquote {\bibinfo {title} {Reflection and time reversal symmetry
  enriched topological phases of matter: path integrals, non-orientable
  manifolds, and anomalies},}\ }\href
  {https://link.springer.com/article/10.1007/s00220-019-03475-8} {\bibfield
  {journal} {\bibinfo  {journal} {Communications in Mathematical Physics}\
  }\textbf {\bibinfo {volume} {374}},\ \bibinfo {pages} {1021} (\bibinfo {year}
  {2020})},\ \Eprint {http://arxiv.org/abs/1612.07792} {arXiv:1612.07792
  [cond-mat.str-el]} \BibitemShut {NoStop}%
\bibitem [{\citenamefont {Williamson}\ \emph {et~al.}(2017)\citenamefont
  {Williamson}, \citenamefont {Bultinck},\ and\ \citenamefont
  {Verstraete}}]{williamson2017symmetryenriched}%
  \BibitemOpen
  \bibfield  {author} {\bibinfo {author} {\bibfnamefont {D.~J.}\ \bibnamefont
  {Williamson}}, \bibinfo {author} {\bibfnamefont {N.}~\bibnamefont
  {Bultinck}}, \ and\ \bibinfo {author} {\bibfnamefont {F.}~\bibnamefont
  {Verstraete}},\ }\href@noop {} {\enquote {\bibinfo {title} {Symmetry-enriched
  topological order in tensor networks: Defects, gauging and anyon
  condensation},}\ } (\bibinfo {year} {2017}),\ \Eprint
  {http://arxiv.org/abs/1711.07982} {arXiv:1711.07982 [quant-ph]} \BibitemShut
  {NoStop}%
\bibitem [{\citenamefont {Wang}\ and\ \citenamefont
  {Cheng}(2021)}]{wang2021exactly}%
  \BibitemOpen
  \bibfield  {author} {\bibinfo {author} {\bibfnamefont {Q.-R.}\ \bibnamefont
  {Wang}}\ and\ \bibinfo {author} {\bibfnamefont {M.}~\bibnamefont {Cheng}},\
  }\href@noop {} {\enquote {\bibinfo {title} {Exactly solvable models for
  {U}(1) symmetry-enriched topological phases},}\ } (\bibinfo {year} {2021}),\
  \Eprint {http://arxiv.org/abs/2103.13399} {arXiv:2103.13399
  [cond-mat.str-el]} \BibitemShut {NoStop}%
\bibitem [{\citenamefont {Heinrich}\ \emph {et~al.}(2016)\citenamefont
  {Heinrich}, \citenamefont {Burnell}, \citenamefont {Fidkowski},\ and\
  \citenamefont {Levin}}]{Heinrich2016symmetry}%
  \BibitemOpen
  \bibfield  {author} {\bibinfo {author} {\bibfnamefont {C.}~\bibnamefont
  {Heinrich}}, \bibinfo {author} {\bibfnamefont {F.}~\bibnamefont {Burnell}},
  \bibinfo {author} {\bibfnamefont {L.}~\bibnamefont {Fidkowski}}, \ and\
  \bibinfo {author} {\bibfnamefont {M.}~\bibnamefont {Levin}},\ }\bibfield
  {title} {\enquote {\bibinfo {title} {Symmetry-enriched string nets: Exactly
  solvable models for set phases},}\ }\href {\doibase
  10.1103/PhysRevB.94.235136} {\bibfield  {journal} {\bibinfo  {journal} {Phys.
  Rev. B}\ }\textbf {\bibinfo {volume} {94}},\ \bibinfo {pages} {235136}
  (\bibinfo {year} {2016})},\ \Eprint {http://arxiv.org/abs/1606.07816}
  {arXiv:1606.07816 [cond-mat.str-el]} \BibitemShut {NoStop}%
\bibitem [{\citenamefont {Hamma}\ \emph {et~al.}(2005)\citenamefont {Hamma},
  \citenamefont {Zanardi},\ and\ \citenamefont {Wen}}]{Hamma2005string}%
  \BibitemOpen
  \bibfield  {author} {\bibinfo {author} {\bibfnamefont {A.}~\bibnamefont
  {Hamma}}, \bibinfo {author} {\bibfnamefont {P.}~\bibnamefont {Zanardi}}, \
  and\ \bibinfo {author} {\bibfnamefont {X.-G.}\ \bibnamefont {Wen}},\
  }\bibfield  {title} {\enquote {\bibinfo {title} {String and membrane
  condensation on three-dimensional lattices},}\ }\href {\doibase
  10.1103/PhysRevB.72.035307} {\bibfield  {journal} {\bibinfo  {journal} {Phys.
  Rev. B}\ }\textbf {\bibinfo {volume} {72}},\ \bibinfo {pages} {035307}
  (\bibinfo {year} {2005})},\ \Eprint {http://arxiv.org/abs/cond-mat/0411752}
  {arXiv:cond-mat/0411752 [cond-mat.str-el]} \BibitemShut {NoStop}%
\bibitem [{\citenamefont {Kong}\ \emph {et~al.}(2020)\citenamefont {Kong},
  \citenamefont {Tian},\ and\ \citenamefont {Zhang}}]{kong2020defects}%
  \BibitemOpen
  \bibfield  {author} {\bibinfo {author} {\bibfnamefont {L.}~\bibnamefont
  {Kong}}, \bibinfo {author} {\bibfnamefont {Y.}~\bibnamefont {Tian}}, \ and\
  \bibinfo {author} {\bibfnamefont {Z.-H.}\ \bibnamefont {Zhang}},\ }\bibfield
  {title} {\enquote {\bibinfo {title} {Defects in the 3-dimensional toric code
  model form a braided fusion 2-category},}\ }\href
  {https://link.springer.com/article/10.1007/JHEP12(2020)078} {\bibfield
  {journal} {\bibinfo  {journal} {Journal of High Energy Physics}\ }\textbf
  {\bibinfo {volume} {2020}},\ \bibinfo {pages} {1} (\bibinfo {year} {2020})},\
  \Eprint {http://arxiv.org/abs/2009.06564} {arXiv:2009.06564
  [cond-mat.str-el]} \BibitemShut {NoStop}%
\bibitem [{\citenamefont {Delcamp}\ and\ \citenamefont
  {Schuch}(2021)}]{delcamp2021tensornet}%
  \BibitemOpen
  \bibfield  {author} {\bibinfo {author} {\bibfnamefont {C.}~\bibnamefont
  {Delcamp}}\ and\ \bibinfo {author} {\bibfnamefont {N.}~\bibnamefont
  {Schuch}},\ }\bibfield  {title} {\enquote {\bibinfo {title} {On tensor
  network representations of the $(3+1)d$ toric code},}\ }\href
  {https://quantum-journal.org/papers/q-2021-12-16-604/} {\bibfield  {journal}
  {\bibinfo  {journal} {Quantum}\ }\textbf {\bibinfo {volume} {5}},\ \bibinfo
  {pages} {604} (\bibinfo {year} {2021})},\ \Eprint
  {http://arxiv.org/abs/2012.15631} {arXiv:2012.15631 [cond-mat.str-el]}
  \BibitemShut {NoStop}%
\bibitem [{\citenamefont {Drinfel'd}(1988)}]{drinfel1988quantum}%
  \BibitemOpen
  \bibfield  {author} {\bibinfo {author} {\bibfnamefont {V.~G.}\ \bibnamefont
  {Drinfel'd}},\ }\bibfield  {title} {\enquote {\bibinfo {title} {Quantum
  groups},}\ }\href {https://link.springer.com/article/10.1007/BF01247086}
  {\bibfield  {journal} {\bibinfo  {journal} {Journal of Soviet mathematics}\
  }\textbf {\bibinfo {volume} {41}},\ \bibinfo {pages} {898} (\bibinfo {year}
  {1988})}\BibitemShut {NoStop}%
\bibitem [{\citenamefont {Kassel}(1995)}]{kassel2012quantum}%
  \BibitemOpen
  \bibfield  {author} {\bibinfo {author} {\bibfnamefont {C.}~\bibnamefont
  {Kassel}},\ }\href {https://link.springer.com/book/10.1007/978-1-4612-0783-2}
  {\emph {\bibinfo {title} {Quantum groups}}},\ \bibinfo {series} {Graduate
  Texts in Mathematics}, Vol.\ \bibinfo {volume} {155}\ (\bibinfo  {publisher}
  {Springer-Verlag, New York},\ \bibinfo {year} {1995})\ pp.\ \bibinfo {pages}
  {xii+531}\BibitemShut {NoStop}%
\bibitem [{\citenamefont {Abe}(2004)}]{abe2004hopf}%
  \BibitemOpen
  \bibfield  {author} {\bibinfo {author} {\bibfnamefont {E.}~\bibnamefont
  {Abe}},\ }\href@noop {} {\emph {\bibinfo {title} {Hopf algebras}}},\ \bibinfo
  {series} {Cambridge Tracts in Mathematics}, Vol.~\bibinfo {volume} {74}\
  (\bibinfo  {publisher} {Cambridge University Press},\ \bibinfo {year}
  {2004})\ pp.\ \bibinfo {pages} {xii+284}\BibitemShut {NoStop}%
\bibitem [{\citenamefont {Manolescu}(2016)}]{manolescu2016lecture}%
  \BibitemOpen
  \bibfield  {author} {\bibinfo {author} {\bibfnamefont {C.}~\bibnamefont
  {Manolescu}},\ }\bibfield  {title} {\enquote {\bibinfo {title} {Lectures on
  the triangulation conjecture},}\ }\href@noop {} {\  (\bibinfo {year}
  {2016})},\ \Eprint {http://arxiv.org/abs/1607.08163} {arXiv:1607.08163
  [math.GT]} \BibitemShut {NoStop}%
\end{thebibliography}%

\end{document}